\relax
\documentclass[letterpaper]{article} 
\usepackage{aaai22}  
\usepackage{times}  
\usepackage{helvet}  
\usepackage{courier}  
\usepackage[hyphens]{url}  
\usepackage{graphicx} 
\urlstyle{rm} 
\usepackage{natbib}  
\usepackage{caption} 
\DeclareCaptionStyle{ruled}{labelfont=normalfont,labelsep=colon,strut=off} 
\frenchspacing  
\setlength{\pdfpagewidth}{8.5in}  
\setlength{\pdfpageheight}{11in}  
%
\usepackage{float}
\newcommand*{\Scale}[2][4]{\scalebox{#1}{$#2$}}%

%
\usepackage{newfloat}
\usepackage{listings}
\lstset{%
	basicstyle={\footnotesize\ttfamily},
	numbers=left,numberstyle=\footnotesize,xleftmargin=2em,
	aboveskip=0pt,belowskip=0pt,%
	showstringspaces=false,tabsize=2,breaklines=true}
\floatstyle{ruled}
\newfloat{listing}{tb}{lst}{}
\floatname{listing}{Listing}
\usepackage{amsfonts}
\usepackage{amsthm}
\usepackage{amsmath}
\usepackage{subcaption}     
\usepackage{adjustbox}
\usepackage{multirow}
\usepackage{booktabs}       
 \usepackage[ruled,linesnumbered]{algorithm2e}

\usepackage{thmtools,thm-restate}

\newcommand{\ignore}[1]{}

\usepackage{tcolorbox}
%
%
\pdfinfo{
/Title (AAAI Press Formatting Instructions for Authors Using LaTeX -- A Guide)
/Author (AAAI Press Staff, Pater Patel Schneider, Sunil Issar, J. Scott Penberthy, George Ferguson, Hans Guesgen, Francisco Cruz, Marc Pujol-Gonzalez)
/TemplateVersion (2022.1)
}

\setcounter{secnumdepth}{2} 

%


\title{Calibrated Nonparametric Scan Statistics for Anomalous Pattern Detection in Graphs}


\author {
    Chunpai Wang,\textsuperscript{\rm 1}
    Daniel B. Neill, \textsuperscript{\rm 2}
    Feng Chen \textsuperscript{\rm 3}
}
\affiliations {
    \textsuperscript{\rm 1} University at Albany - SUNY\\
    \textsuperscript{\rm 2} New York University\\
    \textsuperscript{\rm 3} The University of Texas at Dallas\\
    cwang25@albany.edu, daniel.neill@nyu.edu, feng.chen@utdallas.edu
}

\usepackage{bibentry}

\begin{document}

\maketitle

\begin{abstract}
We propose a new approach, the calibrated nonparametric scan statistic (\texttt{CNSS}), for more accurate detection of anomalous patterns in large-scale, real-world graphs.
Scan statistics identify connected subgraphs that are interesting or unexpected through maximization of a likelihood ratio statistic; in particular, nonparametric scan statistics (NPSSs) identify subgraphs with a higher than expected proportion of individually significant nodes.  However, we show that recently proposed NPSS methods are \emph{miscalibrated}, failing to account for the maximization of the statistic over the multiplicity of subgraphs.  This results in both reduced detection power for subtle signals, and low precision of the detected subgraph even for stronger signals. Thus we develop a new statistical approach to recalibrate NPSSs, correctly adjusting for multiple hypothesis testing and taking the underlying graph structure into account.  While the recalibration, based on randomization testing, is computationally expensive, we propose both an efficient (approximate) algorithm and new, closed-form lower bounds (on the expected maximum proportion of significant nodes for subgraphs of a given size, under the null hypothesis of no anomalous patterns). These advances, along with the integration of recent core-tree decomposition methods, enable \texttt{CNSS} to scale to large real-world graphs, with substantial improvement in the accuracy of detected subgraphs.  Extensive experiments on both semi-synthetic and real-world datasets are demonstrated to validate the effectiveness of our proposed methods, in comparison with state-of-the-art counterparts. 

\end{abstract}

\section{Introduction}
\label{sec:intro}

Detecting ``hotspots'' or anomalous patterns in graphs is an important but challenging problem, with numerous critical applications in areas such as epidemiology, law enforcement, finance, and security. Among the powerful and widely used methods, the paradigm of scan statistics is one of the few that has a sound and general statistical basis (for related surveys see~\citet{glaz2009scan,akoglu2015graph}; and~\citet{cadena2018graph}). Graph-based scan statistics~\cite{speakman2015scalable, speakman2013dynamic, chen2014non,cadena2019near} identify connected subgraphs that are interesting or unexpected through maximization of a likelihood ratio statistic. The connectivity constraint is important because it ensures that subgraphs reflect changes due to localized anomalous processes (e.g., disease outbreaks, water pollution events). In particular, nonparametric scan statistic (NPSS) methods~\cite{neill2007nonparametric,mcfowland2013fgss,chen2014non} are designed without assuming any known background process on the graph. These approaches use historical data (assuming no anomalous patterns are present) to compute an empirical p-value for each graph node, and then compare the observed and expected number of significantly low p-values contained in connected subgraphs. Those with the largest scores are returned as the most anomalous subgraphs. However, as we show, NPSSs fail to account for the multiple hypothesis testing effects of searching over the huge space of connected subgraphs, reducing detection performance. In this work, we conduct a systematic study of this challenging problem and make the following \textbf{key contributions:}

\begin{itemize}
    \item We show that recently proposed NPSS methods are \emph{miscalibrated}, failing to account for the maximization of the statistic over the multiplicity of subgraphs.  This results in both reduced detection power for subtle signals, and low precision of the detected subgraph. 
    \item We develop a new statistical approach to recalibrate NPSS, correctly adjusting for multiple hypothesis testing and taking the underlying graph structure into account, substantially improving detection performance.  
    \item We propose an efficient (approximate) algorithm and new, closed-form lower bounds on the expected maximum proportion of significant nodes for subgraphs of a given size, under the null hypothesis of no anomalous patterns. These advances, along with integration of recent core-tree decomposition methods, enable the \texttt{CNSS} approach to scale to large real-world graphs, with  substantial improvement in the accuracy of detected subgraphs.
    \item Extensive experiments on semi-synthetic and real-world datasets show that our methods can detect anomalous subgraphs more accurately than state-of-the-art counterparts, while maintaining comparable time efficiency. 
\end{itemize}

\section{Related Work}
\label{sec:related}
As anomaly detection in graphs has a large literature, we refer to~\citet{akoglu2015graph} and~\citet{cadena2018graph} for comprehensive surveys on this topic. For brevity, we will discuss the methods based on scan statistics for detecting anomalous connected subgraphs, including those based on parametric scan statistics and NPSSs. 

\textit{Parametric scan statistics} are defined as the likelihood ratio statistics of the hypothesis test, where under the null hypothesis $\mathcal{H}_0$, the attribute data of nodes within a candidate connected subgraph $\mathcal{S}$ are generated by a parameterized background process. Under the alternative hypothesis $\mathcal{H}_1(\mathcal{S})$, the attribute data are generated by a different parameterized distribution (a localized anomalous process). Depending on the assumptions on these two distributions, a variety of scan statistics are formulated, such as Positive Elevated
Mean~\cite{qian2014connected} and Expectation-based Poisson and Gaussian~\cite{neill2009expectation}, in addition to the Kulldorff Scan Statistic~\cite{kulldorff1997spatial}. While these methods have been shown to achieve high detection power across a variety of spatio-temporal graph datasets, they make strong parametric model assumptions, and performance degrades when these models are incorrect. 

In comparison, \textit{NPSSs} use historical data with no anomalous patterns to calibrate an empirical p-value for each node and are defined as likelihood ratio statistics of the nonparametric hypothesis test. Under the null hypothesis of no anomalous patterns ($\mathcal{H}_0$), the empirical p-values of nodes within a candidate connected subgraph ($\mathcal{S}$) follow a uniform distribution between 0 and 1.  Under the alternative hypothesis ($\mathcal{H}_1$), the empirical p-values follow a different distribution. Depending on the specific form of the distribution under $\mathcal{H}_1$, different NPSSs are formulated, such as the Berk-Jones~\cite{berk1979goodness}, Higher Criticism~\cite{donoho2004higher}, Kolmogorov–Smirnov~\cite{massey1951kolmogorov}, and Anderson-Darling scan statistics~\cite{eicker1979asymptotic}. 

Optimizing scan statistics is challenging in the presence of connectivity constraints. A number of heuristic algorithms have been proposed for parametric scan statistics, such as additive GraphScan based on shortest paths in graphs~\cite{speakman2013dynamic}, Steiner tree heuristics~\cite{rozenshtein2014event}, and simulated annealing~\cite{duczmal2004simulated}. \citet{qian2014connected} used linear matrix inequalities as a way to characterize the connectivity constraint and designed efficient iterative algorithms with convergence guarantees to optimize scan statistics (e.g., Positive Elevated Mean) that are convex after relaxation. \citet{sharpnack2015detecting} proposed a computationally tractable algorithm with consistency guarantees. 
Several heuristics have been proposed for NPSSs, such as greedy growth~\cite{chen2014non} and Steiner tree heuristics based on approximation of the underlying
graph with trees~\cite{wu2016efficient}. A depth-first-search based algorithm, named GraphScan, was proposed to exactly identify the most anomalous connected subgraphs for scan statistics (e.g., Kulldorff, Berk-Jones) that satisfy the ``linear time subset scanning'' (LTSS) property~\cite{neill2012fast}, 
but has an exponential time complexity in the worst case~\cite{speakman2015scalable}. An approximate algorithm built based on the color-coding technique~\cite{alon1995color} was designed for a large class of scan statistics with rigorous guarantees~\cite{cadena2019near}. Although it has the performance bound of $1-\epsilon$, its run time scales exponentially with the size of the most anomalous connected subgraphs. 

Recent work by~\citet{reyna2021} and~\citet{chitra2021} demonstrates the miscalibration of the Gaussian scan statistic and presents a Gaussian mixture modeling approach to reduce this bias.  As discussed in Appendix~\ref{app:npss-differences}, the nonparametric scan statistics that we consider here differ fundamentally from the Gaussian scan, both in their assumptions about the true signal (distribution of p-values under $\mathcal{H}_1$) and in their maximization over a range of significance levels $\alpha$.

\section{Limitations of Nonparametric Scan}
\label{sec:npss}

\begin{figure*}[!h]
    \centering
    \begin{subfigure}[t]{0.245\textwidth}
        \centering
        \includegraphics[width=\linewidth]{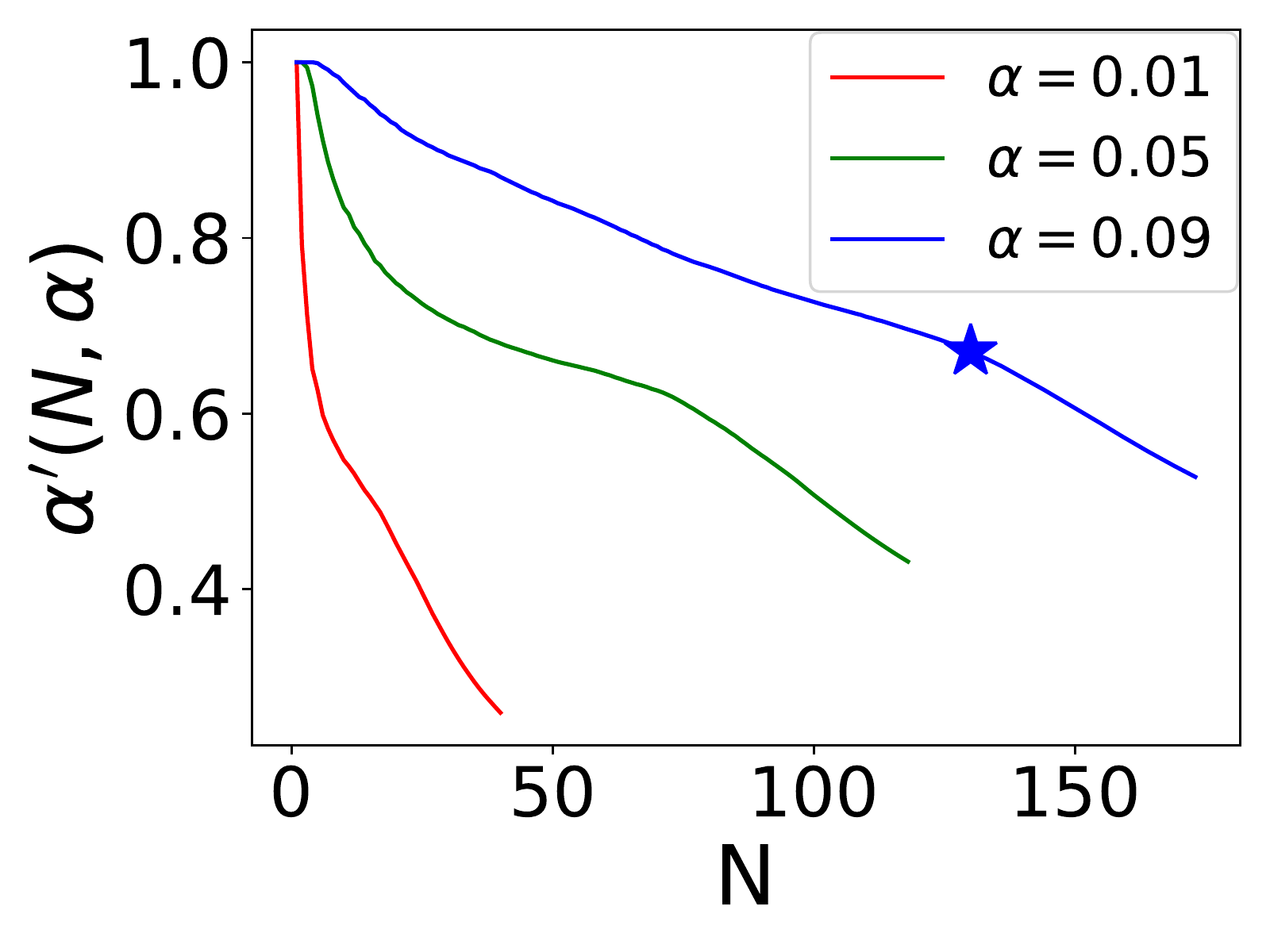}
        \vspace{-6mm}
        \caption{$|\mathcal{V}|=1000, p=0.01$. }
    \end{subfigure}
    \begin{subfigure}[t]{0.245\textwidth}
        \centering
        \includegraphics[width=\linewidth]{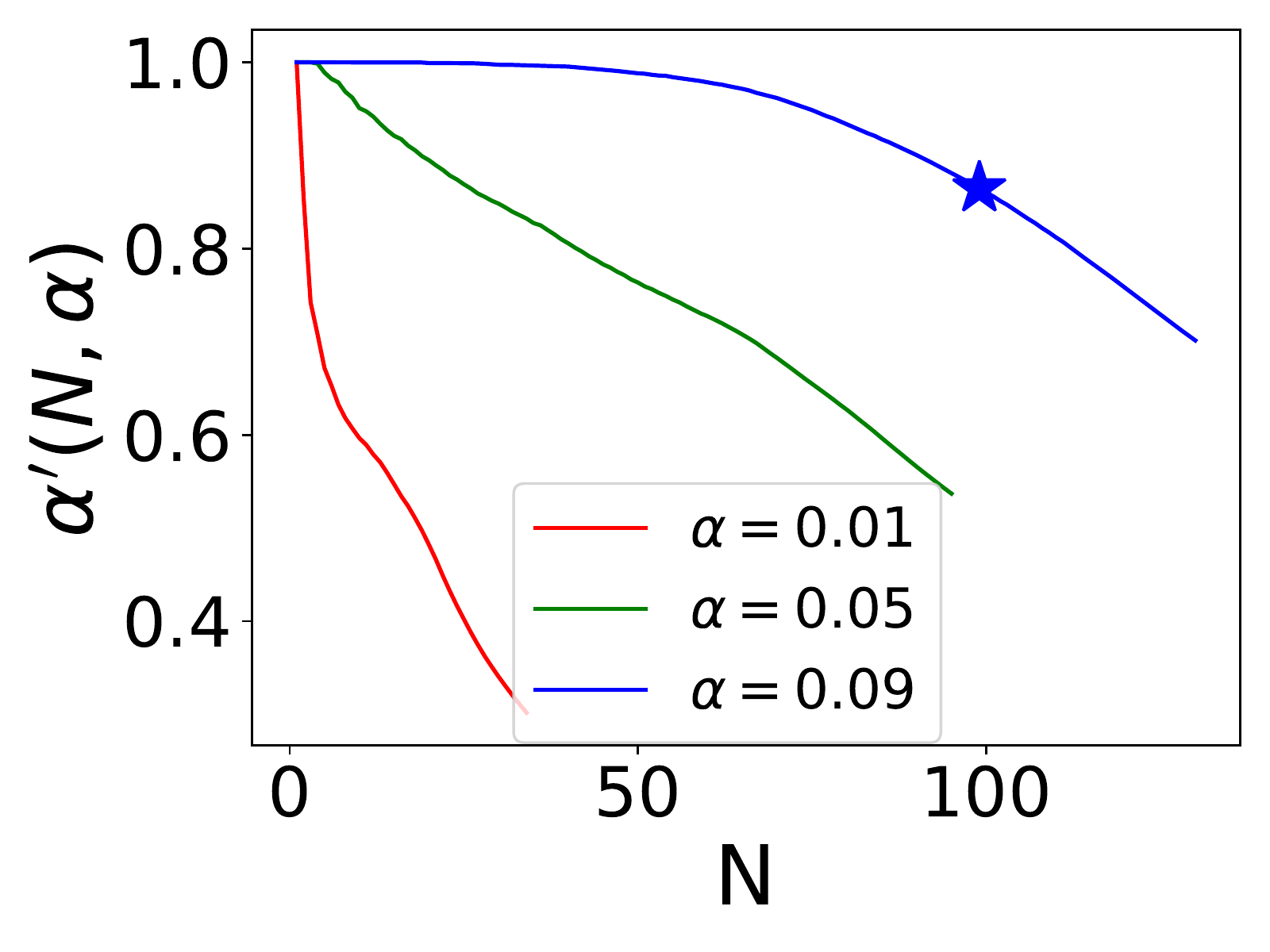}
        \vspace{-6mm}
        \caption{$|\mathcal{V}|=1000, p=0.02$. }
    \end{subfigure}
    \begin{subfigure}[t]{0.245\textwidth}
        \centering
        \includegraphics[width=\linewidth]{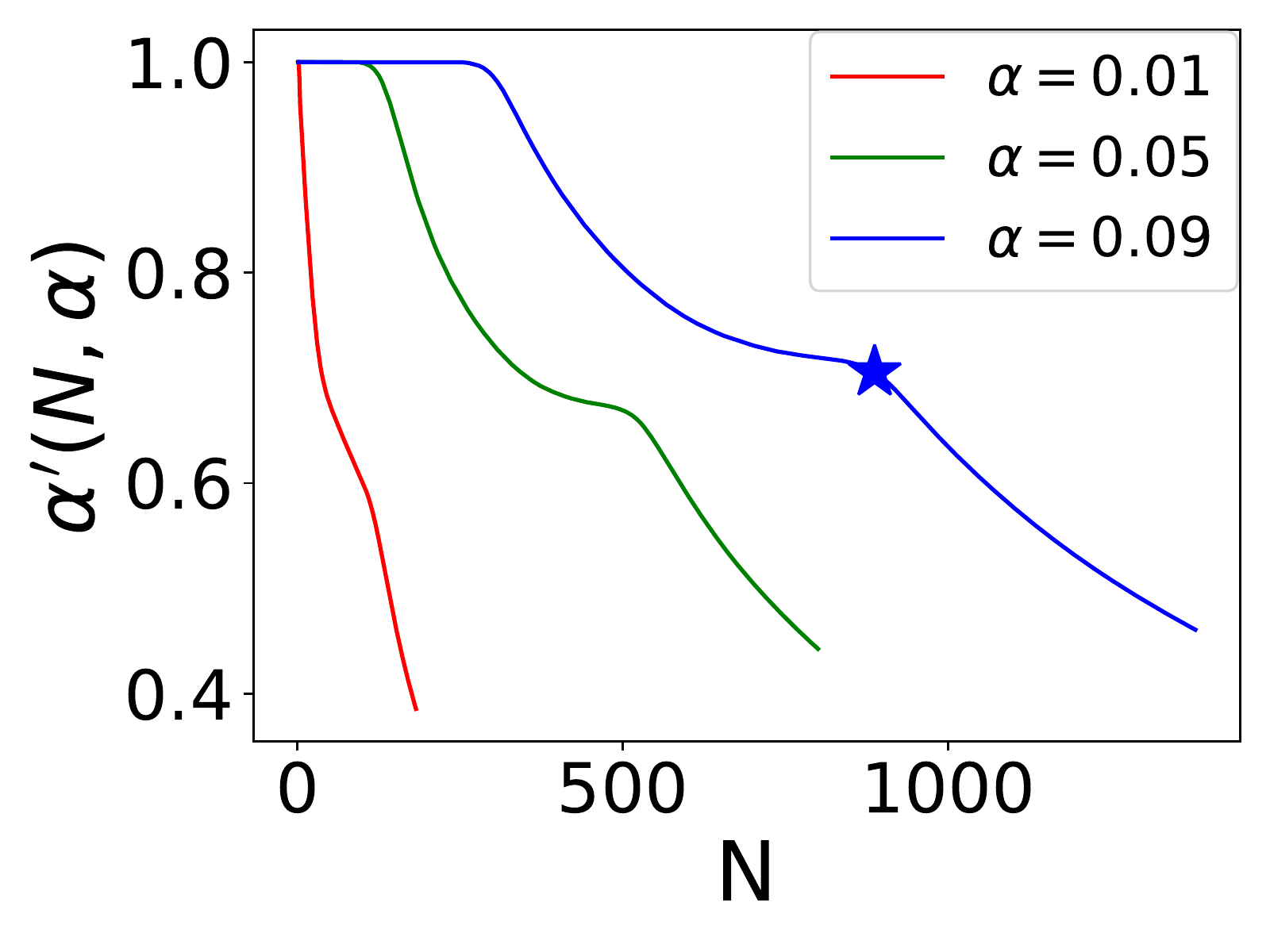}
        \vspace{-6mm}
        \caption{WikiVote.}
    \end{subfigure}
    \begin{subfigure}[t]{0.245\textwidth}
        \centering
        \includegraphics[width=\linewidth]{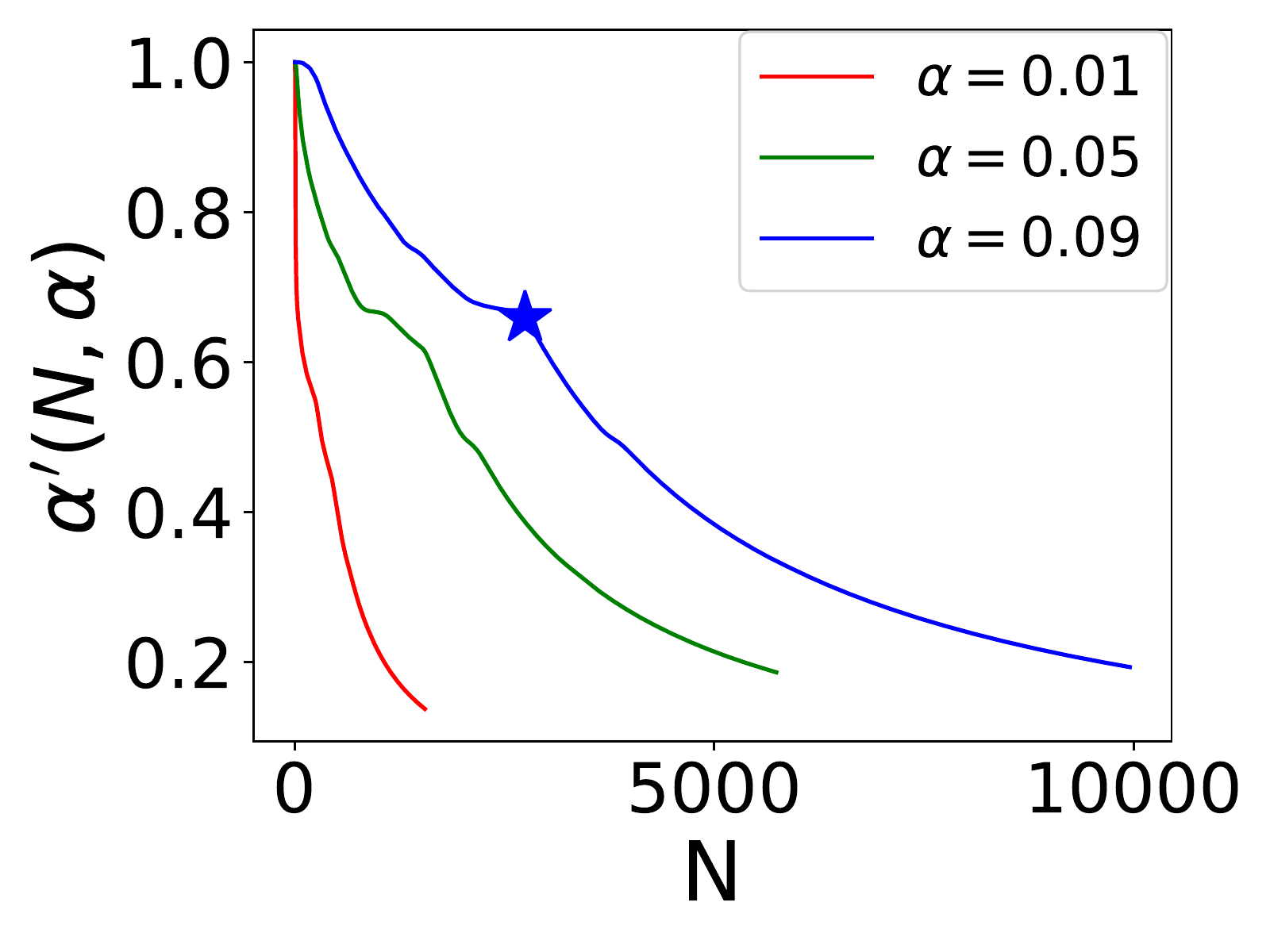}
        \vspace{-6mm}
        \caption{CondMat.}
    \end{subfigure}
    \caption{Simulation results (a) and (b) on Erdos-Renyi graphs and (c) and (d) on real world graphs under $\mathcal{H}_0$. Each curve ends at the value of $N$ such that all significant nodes in the graph are included. The starred point is the combination of $N$ and $\alpha$ for which $N \times \texttt{KL}(\alpha', \alpha)$ is maximized. Descriptive statistics of the \texttt{WikiVote} and \texttt{CondMat} datasets can be viewed in Appendix~\ref{app:datasets}.
    }
    \label{fig:erdos-renyi-simulations}
    \vspace{-1mm}
\end{figure*}

Given a graph $\mathbb{G} = (\mathcal{V}, \mathcal{E})$, with $\mathcal{V}$ being a set of $n$ vertices and $\mathcal{E}$ being a set of $m$ edges. 
Each node $v_i \in \mathcal{V}$ is associated with a feature vector ${\bf x}_i \in \mathbb{R}^{N}$ and its historical observations $\{\mathbf{x}_i^{(1)}, \cdots, \mathbf{x}_i^{(T)}\}$. We use the historical observations to convert the feature vector of each node ($\mathbf{x}_i$) to a single empirical p-value ($p_i$), using the two-stage empirical calibration procedure introduced in~\citet{chen2014non}.  Additional details on the computation of empirical p-values are provided in Appendix~\ref{app:npss-pvals}. Critically, under the null hypothesis $\mathcal{H}_0$, the current observations are assumed to be exchangeable with the null distribution of interest, and thus the p-values (computed by ranking the current observation against the historical observations and then normalizing the ranks) are asymptotically uniform on [0,1] under the null.

For instance, the graph could be a geospatial network, in which each node represents a county, two nodes are connected via an edge if they are spatially adjacent, and each node has a single feature, $x_i \in \mathbb{R}$, that is the number of confirmed Covid-19 disease cases for the current week. The goal is to detect the most anomalous cluster or connected subgraph (representing a potential Covid-19 outbreak). In this case, the empirical p-value $p_i$ is simply the proportion of the historical observations with case counts that are greater than or equal to the current observation. \ignore{We assume one-tailed p-values, aiming to answer the question: under the null hypothesis ($\mathcal{H}_0$) of no anomalous pattern, what is the probability that an observed value of a randomly selected sample would be greater than or equal to the current observation.}

We denote by $\mathbb{G}_{\mathcal{S}} = (\mathcal{S}, \mathcal{E}_{\mathcal{S}})$ the subgraph induced by the subset $\mathcal{S} \subseteq \mathcal{V}$ and $\mathbb{M} = \{\mathcal{S} \:|\: \mathcal{S} \subseteq \mathcal{V}, \mathbb{G}_{\mathcal{S}} \text{ is connected in } \mathbb{G}\}$ the set of all possible connected subsets. The problem of NPSS-based anomalous pattern detection is defined as the connected subgraph optimization problem: 
\begin{equation}
    \begin{split}
    \max_{\mathcal{S} \in \mathbb{M}} F(\mathcal{S})&=\max_{\mathcal{S} \in \mathbb{M}}\max_{\alpha \leq \alpha_{\max }} \Phi\left(\alpha, N_{\alpha}(\mathcal{S}), N(\mathcal{S})\right)\\
    &= \max_{\alpha \leq \alpha_{\max }} \max_{\mathcal{S} \in \mathbb{M}} \Phi\left(\alpha, N_{\alpha}(\mathcal{S}), N(\mathcal{S})\right)
    \end{split}
\end{equation}
where $F(\mathcal{S}):=\max_{\alpha \leq \alpha_{\max }} \Phi\left(\alpha, N_{\alpha}(\mathcal{S}), N(\mathcal{S})\right)$ refers to the general form of NPSS defined by~\citet{mcfowland2013fgss}, $\mathcal{S} \subseteq \mathcal{V}$ is a connected set of nodes, $N_{\alpha}(\mathcal{S})= \sum_{v\in \mathcal{S}} \mathbf{1}\{p(v)\leq \alpha\}$ refers to the number of p-values in subset $\mathcal{S}$ that are significant at level $\alpha$, and $N(\mathcal{S}) = \sum_{v\in \mathcal{S}} 1$ refers to the total number of p-values in subset $\mathcal{S}$. The function $\Phi\left(\alpha, N_{\alpha}(\mathcal{S}), N(\mathcal{S})\right)$ compares the observed number of significant p-values $N_\alpha(\mathcal{S})$ at level $\alpha$ to the expected number of significant p-values $\mathbb{E} \left[ N_{\alpha}(\mathcal{S}) \right] = \alpha N(\mathcal{S})$ under the null hypothesis  $\mathcal{H}_0$. Critically, NPSSs optimize the significance level $\alpha$ between $0$ and some constant $\alpha_{\max} < 1$. Maximization over a range of $\alpha$ values allows accurate detection of signals with either a small number of highly significant p-values or a larger number of moderately significant p-values. In practice, rather than considering all $\alpha \le \alpha_{\max}$, we consider a discrete set of $\alpha$ values, $\mathcal{L} = \{0.001, \cdots, 0.009, 0.01, \cdots, 0.09\}$, and solve the constrained optimization $\max_{S\in \mathbb{M}} \Phi\left(\alpha, N_{\alpha}(\mathcal{S}), N(\mathcal{S})\right)$ for each $\alpha$ in $\mathcal{L}$ to find the most anomalous subgraphs.

Here we focus on the Berk-Jones (BJ) nonparametric scan statistic, without loss of generalization to other NPSSs (see Appendix~\ref{app:npss-variants}). The BJ statistic is defined as
\begin{equation}
    \Phi_{BJ}\left(\alpha, N_{\alpha}(\mathcal{S}), N(\mathcal{S})\right)=N(\mathcal{S}) \times \verb+KL+\left(\frac{N_{\alpha}(\mathcal{S})}{N(\mathcal{S})}, \alpha\right),
\end{equation}
where \verb+KL+ is the Kullback-Liebler divergence between the observed and expected proportions of significant p-values: $\verb+KL+(a, b)=a \log \left(a/b\right)+(1-a) \log \left((1-a)/(1-b)\right)$.
The BJ statistic is the log-likelihood ratio statistic for testing whether the empirical p-values follow the \texttt{Uniform}[0, 1] distribution or a piecewise constant distribution where $\text{Pr}(p < \alpha) > \alpha$. 
Please see Appendix~\ref{app:npss-assumptions} for details.

Despite their effectiveness for anomalous pattern detection in graphs, NPSSs were originally designed without taking into consideration the multiple hypothesis testing effect resulting from the multiplicity of subgraphs. In particular, it follows from the assumption of uniform p-values under $\mathcal{H}_0$ made by NPSSs that the expected proportion of individually significant nodes within a connected subset $\mathcal{S}$ under $\mathcal{H}_0$ is $\mathbb{E}[N_\alpha(\mathcal{S})/N(\mathcal{S})] = \alpha$. However, this is true for a randomly selected connected subset, but not for connected subsets $\mathcal{S}$ that are identified by maximizing the NPSS score.  Even when the null hypothesis $\mathcal{H}_0$ holds, and p-values are uniform on [0,1], we find that the expected proportion of individually significant nodes within the highest-scoring connected subsets, denoted as $\alpha^\prime$, is typically much larger than $\alpha$, which we refer to as \emph{miscalibration}.  More precisely, we define $\alpha^\prime$ as the expected maximum proportion of significant nodes for all connected subgraphs of a given size $N$: $\alpha^\prime (N, \alpha) =\mathbb{E}[\max_{\mathcal{S} \in \mathbb{M}, |\mathcal{S}| = N} N_\alpha(\mathcal{S}) / N]$.
 
To illustrate the relationship between $\alpha^\prime$ and $\alpha$, we conduct simulations on both Erdos-Renyi (ER) random graphs and two real world graphs: WikiVote ($|\mathcal{V}| = 7066, 
|\mathcal{E}| = 100736, \text{density} = 0.004$) and CondMat ($|\mathcal{V}| = 21363, 
|\mathcal{E}| = 91286, \text{density} = 0.0004$).  First, we generate $100$ random graphs with $|\mathcal{V}|=1,000$ and edge probabilities $p\in \{0.01, 0.02\}$, respectively. We simulate p-values under $\mathcal{H}_0$ for each ER random graph, and calculate the average $\alpha^\prime$ for each $N\in \{1, 2, ..., |\mathcal{V}|\}$ and $\alpha \in \{.01,.05,.09\}$ among all the 100 random graphs, as shown in Figure~\ref{fig:erdos-renyi-simulations}(a)-(b).
We simulate p-values under $\mathcal{H}_0$ on the graphs WikiVote and CondMat for 100 times and calculate the average $\alpha^\prime$ for each $N$ and $\alpha$, as shown in Figure~\ref{fig:erdos-renyi-simulations}(c)-(d). 
The results indicate that the expected maximum proportion $\alpha^\prime(N, \alpha)$ for given values of $N$ and $\alpha$ is much higher than the expected proportion $\mathbb{E}[N_\alpha(\mathcal{S}) / N]=\alpha$. The implication is that, even when the null hypothesis $\mathcal{H}_0$ holds and there are no true subgraphs of interest, there exist subgraphs $\mathcal{S}$ with $N_{\alpha}(\mathcal{S}) \gg \alpha N(\mathcal{S})$, and thus very high NPSS scores. The amount of difference between $\alpha^\prime(N, \alpha)$ and $\alpha$ is a function of $N$, $\alpha$, and the graph structure.  
We observe that $\alpha^\prime(N, \alpha)$ decreases with $N$ but remains much higher than $\alpha$ for the entire range of $N$.

The results of this discrepancy between $\alpha^\prime$ and $\alpha$ are threefold.  First, the maximum NPSS score under the null hypothesis $\mathcal{H}_0$ will be large.  To see this, we compute the expected score $N \times \texttt{KL}(\alpha',\alpha)$ for each combination of $\alpha$ and $N$ for each of Figure~\ref{fig:erdos-renyi-simulations}(a)-(d), and show the highest-scoring combination on each graph as a star icon.  The corresponding scores range from 131.7 for Figure~\ref{fig:erdos-renyi-simulations}(a) to 2677 for Figure~\ref{fig:erdos-renyi-simulations}(d).  These large scores under the null result in \emph{reduced detection power}, since the NPSS score of the true anomalous subgraph must exceed a larger threshold (i.e., the 95th percentile of the maximum NPSS scores under $\mathcal{H}_0$) to be considered significant.  Second, NPSS will be \emph{biased toward detecting clusters at larger $\alpha$ thresholds}, even if the true signal is for a much smaller $\alpha$.  We observe that, for all four graphs in Figure~\ref{fig:erdos-renyi-simulations}, the null score is maximized at the largest of the three $\alpha$ values considered.  Third, NPSS will identify overly large clusters which include many nodes that have significant p-values just by chance, resulting in \emph{reduced precision} of the detected cluster.
We observe that, for all four graphs in Figure~\ref{fig:erdos-renyi-simulations}, the null score is maximized for a large value of $N$, where almost all of the significant nodes in the graph are included in the detected cluster. This observation is also supported by low precision (and low $F$-scores) for all uncalibrated scan methods in our evaluation results below.  An additional, concrete example of miscalibration for the (uncalibrated) BJ statistic is provided in Appendix~\ref{app:correctness}.

\section{Calibrated Nonparametric Scan Statistics}
\label{sec:cnss}

Thus we have shown that uncalibrated NPSS methods discover 
large, high-scoring anomalous connected subsets even under the null hypothesis $\mathcal{H}_0$, resulting in reduced detection power and precision. This observation motivates us to develop a new approach to recalibrating the non-parametric scan statistic that accounts for multiple testing (and the resulting, large differences between $\alpha^\prime$ and $\alpha$), for improved detection performance. Hence, we propose \textit{Calibrated Nonparametric Scan Statistics (\texttt{CNSS})}, where $F(\mathcal{S}) =  \max_{\alpha \le \alpha_{\max}} \Phi\left(\alpha, N_{\alpha}(\mathcal{S}), N(\mathcal{S})\right)$ as above, 
but the expected proportion of significant p-values ($\mathbb{E}[N_\alpha(\mathcal{S})/N(\mathcal{S})] = \alpha$) used in $\Phi(\cdot)$
is replaced with the expected \emph{maximum} proportion of significant p-values $\alpha^\prime(N,\alpha)$ over all subgraphs of size $N$ under the null hypothesis $\mathcal{H}_0$.
For example, our proposed 
 Calibrated Berk Jones (CBJ) statistic is defined as
\begin{equation}
    \Scale[0.95]{
    \Phi_{\texttt{CBJ}}\left(\alpha, N_{\alpha}(\mathcal{S}), N(\mathcal{S})\right)=N(\mathcal{S}) \times \texttt{KL}\left(\frac{N_{\alpha}(\mathcal{S})}{N(\mathcal{S})}, \alpha^{\prime}(N(\mathcal{S}),\alpha)\right)
    }
\end{equation}
where 
\begin{equation}
    \alpha^{\prime}(N,\alpha) = \frac{\mathbb{E}\left[\max _{\mathcal{S}\in \mathbb{M}, |\mathcal{S}|=N} N_{\alpha}(\mathcal{S})\right]}{N}.
\end{equation}
Critically, this approach guarantees that, under the null hypothesis $\mathcal{H}_0$ that current and historical observations are exchangeable, for any $N$ and $\alpha$, the expected ratio $N_\alpha(\mathcal{S})/N$ for the highest-scoring subgraph $\mathcal{S}$ of size $N$ is equal to $\alpha'$, thus adjusting for the multiplicity of subgraphs and correctly calibrating across $N$ and $\alpha$.  See Appendix~\ref{app:correctness} for more explanation on the correctness of the calibration approach.

As shown in Figure~\ref{fig:erdos-renyi-simulations}, the expected maximum proportion of significant nodes $\alpha^\prime$ depends on the subgraph size $N$, the significance level $\alpha$, and the graph structure. To estimate $\alpha^\prime(N,\alpha)$ for a given graph, we use a \emph{randomization test} to
estimate $\mathbb{E}[\max _{\mathcal{S} \in \mathbb{M}, |\mathcal{S}|=N} N_{\alpha}(\mathcal{S})]$ for each $N$ and $\alpha$. In particular, we create a large number ($K=200$) of replica datasets under the null hypothesis $\mathcal{H}_0$, where each node of the input graph $\mathbb{G}$ has its p-value redrawn uniformly at random from $[0,1]$.
We then apply the efficient approximate algorithm proposed in Section~\ref{sect:efficient-algorithm} to solve the constrained set optimization problem $\max _{\mathcal{S} \in \mathbb{M}, |\mathcal{S}|=N} N_{\alpha}(\mathcal{S})$ for each combination $(N, \alpha)\in \{1, \cdots, |\mathcal{V}|\}\times \mathcal{L}$. Based on the $K$ replica datasets, for each combination of $N$ and $\alpha$, we collect $K$ samples of the maximum number of significant nodes $\max_{\mathcal{S} \in \mathbb{M}, |\mathcal{S}|=N} N_{\alpha}(\mathcal{S})$ and use the samples to estimate $\alpha^\prime(N,\alpha)$ under $\mathcal{H}_0$. The same algorithm is applied to the original dataset to identify the most significant subgraph $\max_{\mathcal{S} \in \mathbb{M}, |\mathcal{S}|=N} N_{\alpha}(\mathcal{S})$ for each $(N,\alpha) \in \{1, \cdots, |\mathcal{V}|\}\times \mathcal{L}$, and then the subgraph with the highest score 
$F(\mathcal{S}) = \max_{\alpha \le \alpha_{\max}} \Phi_{\texttt{CBJ}}(\alpha,N_\alpha(\mathcal{S}),N(\mathcal{S}))$ is returned.  More details are provided in Algorithm~\ref{alg:cnss} in Appendix~\ref{appendix:CNSS}.  

\subsection{An Efficient Approximate Algorithm}
\label{sect:efficient-algorithm}
The fundamental computational challenge of \texttt{CNSS} is to find the maximum number of significant nodes, $\max _{\mathcal{S} \in \mathbb{M}, |\mathcal{S}|=N} N_{\alpha}(\mathcal{S})$, for connected subgraphs of every size $N \in \{1, \cdots, |\mathcal{V}|\}$.  
One approach for doing so would be, for each $N$ and each $\alpha \in \mathcal{L}$, to separately run a prize-collecting Steiner tree (PCST) algorithm to identify the maximum $N_\alpha$. 
The PCST is NP-hard but can be approximated in $O(|\mathcal{V}|^2 \log |\mathcal{V}|)$  time; however, computing the PCST for each $N$ would then result in an insufficiently scalable $O(|\mathcal{V}|^3 \log |\mathcal{V}|)$ algorithm.  As an alternative, we propose a novel algorithm which approximates the maximum $N_\alpha$ for each $N \in \{1, \ldots, |\mathcal{V}|\}$, for a given value of $\alpha$, in a single, efficient run.  This process must then be repeated for each value of $\alpha$ under consideration. 

The pseudocode of estimating the maximum $N_{\alpha}$ for each $N$ under a given significance threshold $\alpha$ is described in Algorithm \ref{alg:merge} in Appendix~\ref{app:alg:merge}. 
The approach is based on repeated merging of nodes with the highest proportion of significant p-values. Given a graph with node-level p-values, we first merge all adjacent significant nodes, and maintain a list $\mathcal{Z}$ of merged nodes sorted by significance ratio $N_{\alpha}(\mathcal{S})/N(\mathcal{S})$.  We repeatedly choose the merged node with highest significance ratio and perform one of the following three merge steps: (1) add an adjacent node which contains some or all significant p-values; (2) add an adjacent non-significant node that is also adjacent to at least one other significant node; or (3) add the highest-degree non-significant neighbor.  At each merge step, our method will try all three options and utilize the one leading to a merged node with the highest $N_{\alpha}(\mathcal{S})/N(\mathcal{S})$ ratio; this is repeated until the list $\mathcal{Z}$ only contains a single merged node.  The advantage of this merging process is that we can keep track of the maximum $N_{\alpha}(\mathcal{S})$ for each $N(\mathcal{S})$ and iteratively update these values throughout the entire merging process.
In the end, we have a list of estimated max $N_{\alpha}(\mathcal{S})$ for $N(\mathcal{S}) \in \{1, \cdots, |\mathcal{V}|\}$. 

The overall time complexity of this algorithm is $\mathcal{O}(k|\mathcal{V}| + |\mathcal{V}|\log|\mathcal{V}|)$ where $k$ is the largest degree of a node in the graph. See Appendix~\ref{app:time_complexity} for a more detailed analysis.

\subsection{Lower Bounds for the Expected Maximum Proportion of Significant Nodes, $\alpha^\prime(N, \alpha)$}
\label{sec:lower_bounds}

\begin{figure*}[t]
    \centering
    \begin{subfigure}[t]{0.3\textwidth}
        \centering
        \includegraphics[width=\linewidth]{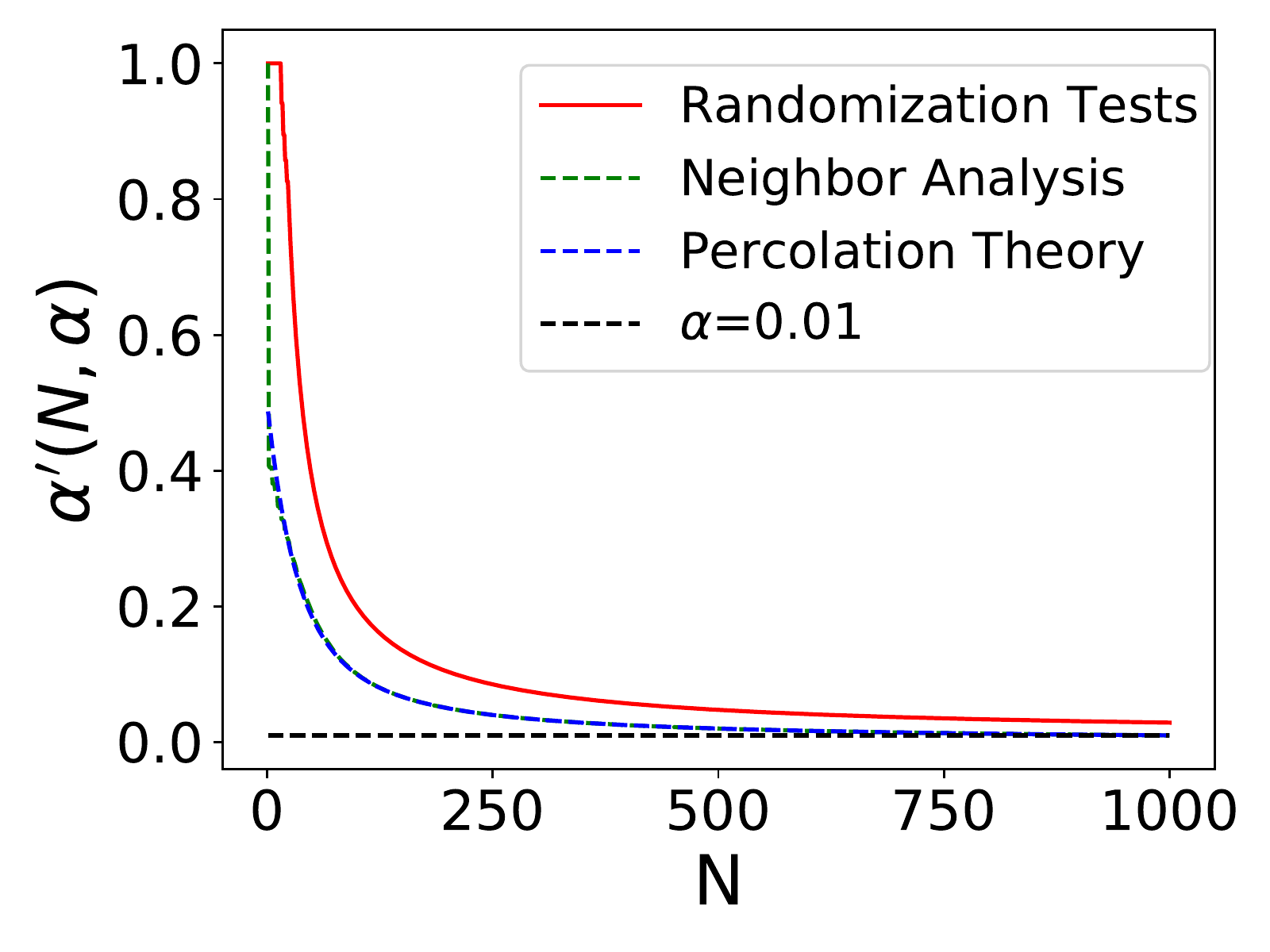}
        \vspace{-6mm}
        \caption{ER, $p=0.05, \alpha=0.01$}
    \end{subfigure}
    \begin{subfigure}[t]{0.3\textwidth}
        \centering
        \includegraphics[width=\linewidth]{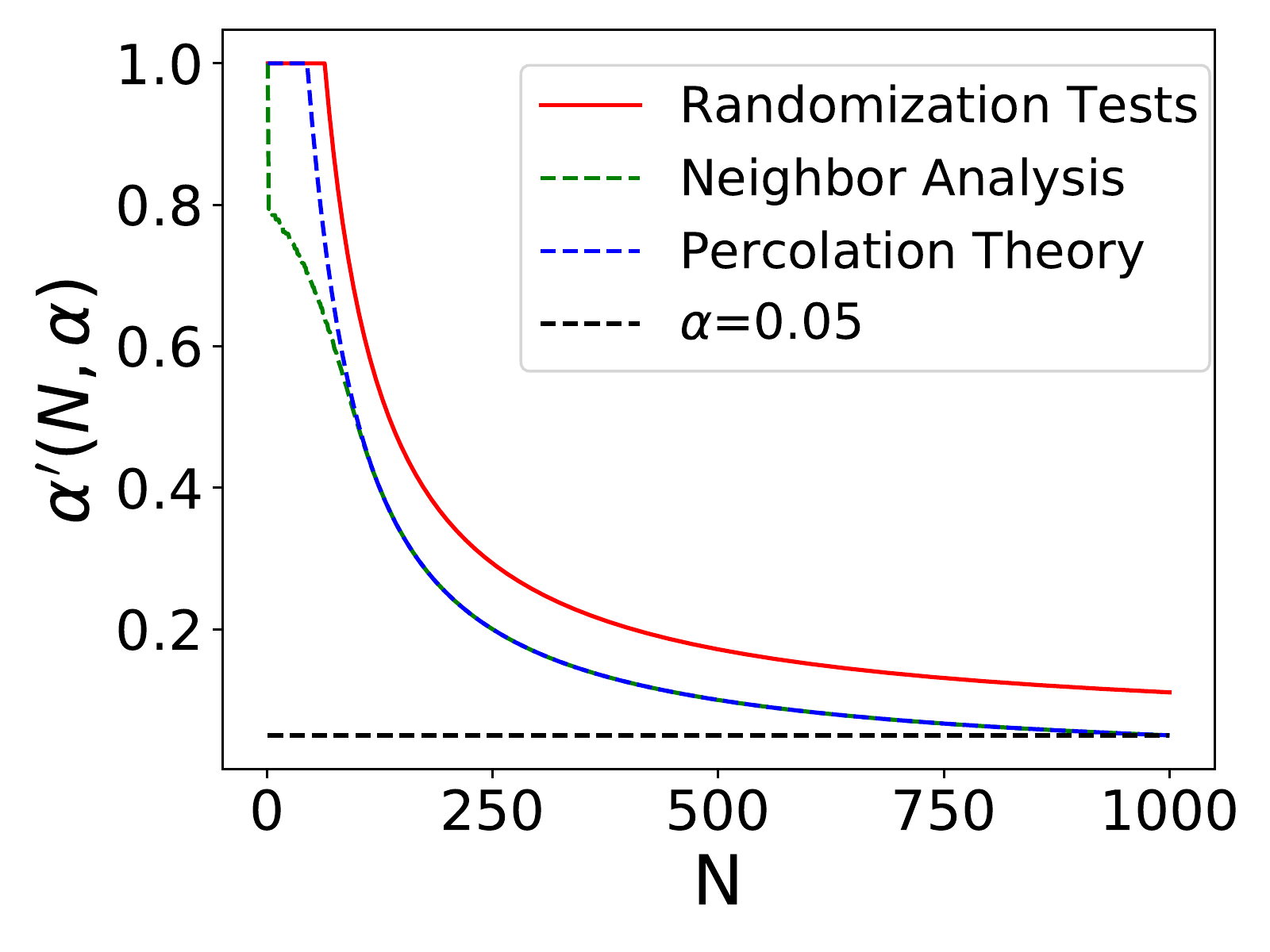}
        \vspace{-6mm}
        \caption{ER, $p=0.05, \alpha=0.05$}
    \end{subfigure}
    \begin{subfigure}[t]{0.3\textwidth}
        \centering
        \includegraphics[width=\linewidth]{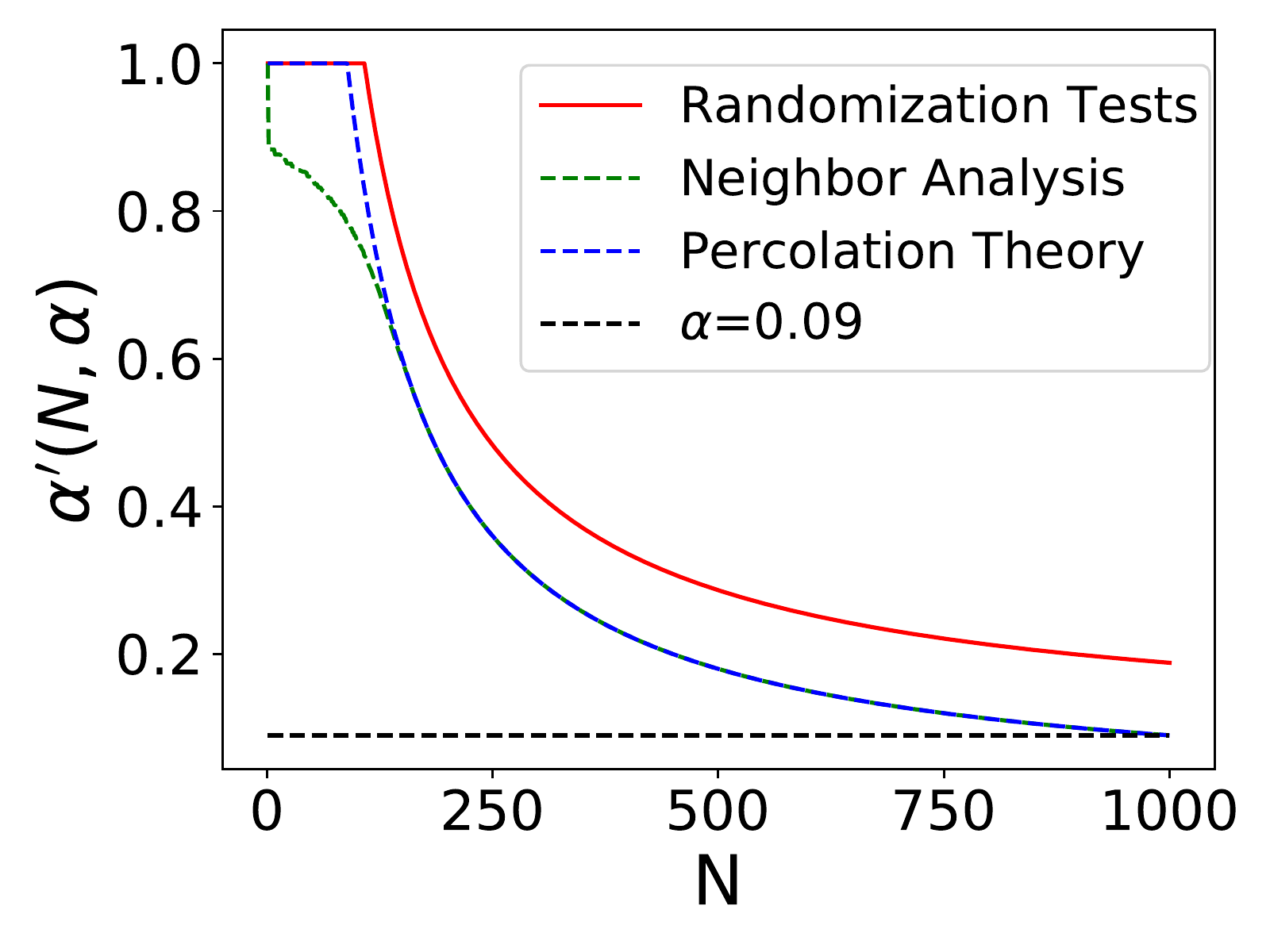}
        \vspace{-6mm}
        \caption{ER, $p=0.05, \alpha=0.09$}
    \end{subfigure}
    \caption{Lower Bounds of $\alpha'$ Compared with Empirical Distribution by Randomization Tests.}
    \label{fig:lower-bounds}
    \vspace{-2mm}
\end{figure*}

One limitation of our proposed calibration method is that it requires randomization tests to calibrate $\alpha^\prime(N, \alpha)$ which are time-consuming for large graphs. Here we explore two strategies for obtaining closed-form lower bounds of $\alpha^\prime(N, \alpha)$, thus avoiding the time-consuming randomization.

\subsubsection{Lower Bound from Network Neighborhood Analysis}
The first approach is based on neighborhood analysis, and we denote the obtained lower bound of $\alpha'$ as $\alpha_1'$. 
We lower bound the maximum number of significant nodes $N_{\alpha}$ for any given subgraph size $N$ under $\mathcal{H}_0$, by identifying a subgraph of size $N$ with expected number of significant nodes $\mathbb{E}[N_{\alpha}]$.  Given any subgraph $\mathcal{S}$, let the exterior (``ext'') degree of $\mathcal{S}$ be the number of edges between vertices $v_i\in \mathcal{S}$ and $v_i'\not\in \mathcal{S}$.

\begin{restatable}{thm}{FirstThm}
\label{thm:thm1}
For each $c \in \{1, ..., |\mathcal{V}|\}$, let $k_c$ be the largest ext-degree of a connected subgraph of size $c$.  Then for any $N\in \{1, ..., |\mathcal{V}|\}$ such that $c \le N \le c + k_c$, a lower bound for $\mathbb{E}[\max _{\mathcal{S} \in \mathbb{M}, |\mathcal{S}|=N} N_{\alpha}(\mathcal{S})]$ is: $c\alpha + \min(k_c\alpha, N-c).$
\end{restatable}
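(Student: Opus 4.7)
The plan is to lower bound $\mathbb{E}\bigl[\max_{\mathcal{S}\in\mathbb{M},\,|\mathcal{S}|=N}N_\alpha(\mathcal{S})\bigr]$ by exhibiting one specific, data-dependent connected subgraph $\mathcal{S}^\ast$ of size exactly $N$ and computing $\mathbb{E}[N_\alpha(\mathcal{S}^\ast)]$. Since $\max_{\mathcal{S}} N_\alpha(\mathcal{S})\ge N_\alpha(\mathcal{S}^\ast)$ pointwise, taking expectations yields the desired lower bound.

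First I would fix $C^\ast$ to be a connected subgraph of size $c$ that attains the maximum ext-degree $k_c$, and let $E=\mathrm{Ext}(C^\ast)=\{v\notin C^\ast:v\text{ adjacent to }C^\ast\}$. Note $|E|\le k_c$, with equality in the natural case where no exterior vertex has multiple edges into $C^\ast$; I would run the main argument under this assumption and treat the multi-edge case separately. Next I would construct $\mathcal{S}^\ast=C^\ast\cup A$, where $A\subseteq E$ with $|A|=N-c$ is chosen adaptively from the realized $p$-values: greedily include every significant vertex in $E$, up to $N-c$ of them, and pad the remainder with non-significant vertices from $E$ if fewer than $N-c$ are significant. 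The hypothesis $N\le c+k_c$ guarantees that $E$ contains enough vertices to fill $A$, and $\mathcal{S}^\ast$ is automatically connected because every vertex of $E$ is adjacent to $C^\ast$.

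The expectation then decomposes by linearity:
\begin{equation*}
\mathbb{E}[N_\alpha(\mathcal{S}^\ast)]=\mathbb{E}[N_\alpha(C^\ast)]+\mathbb{E}[\min(Y,N-c)]=c\alpha+\mathbb{E}[\min(Y,N-c)],
\end{equation*}
where $Y=|E\cap\{v:p_v\le\alpha\}|$ is $\mathrm{Binomial}(|E|,\alpha)$ under $\mathcal{H}_0$ and the $c\alpha$ term uses that $p$-values are uniform on $[0,1]$. The theorem then reduces to the claim $\mathbb{E}[\min(Y,N-c)]\ge\min(k_c\alpha,N-c)=\min(\mathbb{E}[Y],N-c)$.

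This final inequality is the main obstacle, and it is subtle: because $x\mapsto\min(x,N-c)$ is concave, Jensen's inequality gives the \emph{reverse} direction, so no direct convexity argument works. To push it through I would attempt a two-regime analysis driven by binomial concentration. In the regime $N-c\le k_c\alpha$, I would argue $\mathbb{P}[Y\ge N-c]$ is close to one, so that $\mathbb{E}[\min(Y,N-c)]$ is close to $N-c$; in the complementary regime $N-c>k_c\alpha$ I would bound the overshoot $\mathbb{E}\bigl[(Y-(N-c))^+\bigr]$ by a Chernoff-type upper-tail estimate on $Y$, forcing $\mathbb{E}[\min(Y,N-c)]$ close to $\mathbb{E}[Y]=k_c\alpha$. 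Turning these approximations into a genuine closed-form lower bound valid for every admissible $k_c$, and not just in a concentration regime, is the point where the proof must do real work.
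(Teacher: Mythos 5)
Your construction is exactly the one the paper uses: fix a size-$c$ connected subgraph $C^\ast$ attaining ext-degree $k_c$, adjoin $N-c$ of its exterior neighbors (preferring significant ones), and lower-bound the expected maximum by the expectation over this single subgraph. The difference is that you are more careful than the paper about the last step. The paper's proof simply asserts that when $N-c \le k_c\alpha$ one can ``choose only significant nodes from the neighbors,'' i.e., it silently replaces the random number of significant exterior neighbors $Y$ (which is $\texttt{Binomial}(k_c,\alpha)$ under $\mathcal{H}_0$) by its expectation $k_c\alpha$, and concludes $\mathbb{E}[N_\alpha]=c\alpha+\min(k_c\alpha,N-c)$. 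You correctly identify that the rigorous content of this step is the inequality $\mathbb{E}[\min(Y,N-c)]\ge\min(\mathbb{E}[Y],N-c)$, and you correctly observe that Jensen gives the opposite direction because $y\mapsto\min(y,N-c)$ is concave. That gap is genuine, and the concentration program you sketch cannot close it: the inequality is false in general. Concretely, take a star with $k\ge 2$ leaves, $c=1$ (the center has $k_1=k$) and $N=2$; every connected $2$-subgraph is center-plus-leaf, so $\max_{|\mathcal{S}|=2}N_\alpha(\mathcal{S})=\mathbf{1}\{p_0\le\alpha\}+\mathbf{1}\{Y\ge 1\}$ and $\mathbb{E}[\min(Y,1)]=1-(1-\alpha)^{k}$, which is strictly below $\min(k\alpha,1)$ for every $0<\alpha<1$. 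So the stated closed form can hold only approximately, in regimes where $Y$ concentrates (large $k_c\alpha$) -- which is precisely the ``real work'' you flag at the end and do not carry out.

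In short, your setup and decomposition coincide with the paper's, but neither your proposal nor the paper's own proof establishes the claimed bound as an exact inequality: the paper's argument is a mean-field heuristic that treats $Y$ as deterministic, and your attempt correctly exposes, but does not repair, that heuristic. What your construction does rigorously deliver is $\mathbb{E}[\max_{\mathcal{S}\in\mathbb{M},|\mathcal{S}|=N}N_\alpha(\mathcal{S})]\ge c\alpha+\mathbb{E}[\min(Y,N-c)]$, which is computable in closed form from the binomial law and is the honest version of the theorem; alternatively one could state the paper's bound with an explicit concentration error term. Your side remark that the number of distinct exterior neighbors may be smaller than $k_c$ (since ext-degree as defined counts edges, not neighbors) is also a legitimate, if minor, issue that the paper's proof does not address.
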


\begin{proof}
See Appendix \ref{app:proofs}.
\end{proof}
Given that high ext-degree subgraphs are more likely to connect more significant nodes, we first select the highest ext-degree node for $c=1$ and count the its neighbors as $k_c$, and continue the process by increasing $c$ and adding the highest ext-degree neighbor (i.e., with the highest number of neighbors not in $\mathcal{S}$) into $\mathcal{S}$.  This approximation potentially underestimates $k_c$ but cannot overestimate it, thus remaining a lower bound.  For each $N$, we obtain multiple lower bounds $\alpha_1'$ due to all the values of $c$ under consideration, and we choose the largest (tightest) lower bound for each $N$.

\subsubsection{Lower Bound from Percolation Theory}

The second lower bound, $\alpha_2'$, is based on percolation theory on Erdos-Renyi (ER) graphs~\cite{erdHos1960evolution, achlioptas2009explosive}. Given a large ER graph with $|\mathcal{V}|$ nodes and edge probability $p$, the average node degree is $\langle k\rangle = (|\mathcal{V}|-1)p$.  Percolation theory states that if a sufficiently large fraction of the graph nodes, $\rho > \frac{1}{\langle k \rangle}$, are ``marked'', then with high probability, there exists a connected subgraph $\mathcal{S}$ consisting of only marked nodes, with $|\mathcal{S}|$ equal to a constant fraction $P_\infty$ of $|\mathcal{V}|$.  More precisely, 
as shown by~\citet{erdHos1960evolution} and \citet{bollobas1976cliques},
$P_\infty$ is the solution to the equation, $P_\infty = \rho(1-\exp(-\langle k\rangle P_\infty)$. We apply this result by ``marking'' both significant and (as needed) non-significant graph nodes to reach the percolation threshold, allowing us to prove:
\begin{restatable}{thm}{SecondThm}
\label{thm:thm2}
For an Erdos-Renyi $(|\mathcal{V}|,p)$ graph with average degree $\langle k \rangle = (|\mathcal{V}|-1)p$, with high probability,
\[ \alpha^\prime \ge \min\left(1, \frac{\alpha|\mathcal{V}|}{N}\left(1-\exp\left(-\langle k \rangle\frac{N}{|\mathcal{V}|}\right)\right)\right).\]
\end{restatable}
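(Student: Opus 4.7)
The plan is to construct an explicit (random) connected subgraph of size $N$ whose expected number of significant nodes already matches the claimed lower bound, and to invoke percolation theory on the ER graph to guarantee that such a subgraph exists with high probability.

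First I would couple significance with an auxiliary randomization. Under $\mathcal{H}_0$, each vertex is independently significant with probability $\alpha$. For a parameter $\rho \in [\alpha,1]$ to be chosen below, mark every significant vertex and additionally mark each non-significant vertex independently with probability $(\rho-\alpha)/(1-\alpha)$. The marking indicators $M_v$ are then i.i.d.\ Bernoulli($\rho$), and a marked vertex is significant with conditional probability $\alpha/\rho$. The subgraph of $\mathbb{G}$ induced on the marked vertices is itself an Erdos-Renyi graph on roughly $\rho|\mathcal{V}|$ vertices with edge probability $p$ and mean degree $\rho\langle k\rangle$. By the percolation result of Erdos-Renyi and Bollobas cited in the excerpt, whenever $\rho\langle k\rangle>1$ this induced graph has, with high probability, a giant connected component of size $P_\infty|\mathcal{V}|$ where $P_\infty = \rho(1-\exp(-\langle k\rangle P_\infty))$.

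Now I would tune $\rho$ so that $P_\infty = N/|\mathcal{V}|$, i.e., so the giant component has size exactly $N$. Solving the percolation equation gives $\rho = (N/|\mathcal{V}|)/(1-\exp(-\langle k\rangle N/|\mathcal{V}|))$, which is a valid probability precisely when the right-hand side of the theorem is at most $1$; otherwise the outer $\min$ with $1$ already delivers the statement trivially. Since significance is i.i.d.\ Bernoulli($\alpha/\rho$) inside the marked set, the giant component of size $N$ contains in expectation $(\alpha/\rho) N$ significant nodes. Substituting the chosen $\rho$ yields $(\alpha/\rho)N = \alpha|\mathcal{V}|(1-\exp(-\langle k\rangle N/|\mathcal{V}|))$, and dividing by $N$ gives exactly the claimed lower bound on $\alpha'(N,\alpha)$, since the expectation of $\max_{\mathcal{S}\in\mathbb{M},|\mathcal{S}|=N} N_\alpha(\mathcal{S})$ dominates the expected count of significant nodes in any particular feasible subgraph of size $N$.

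The main obstacle is making the ``with high probability'' statement precise. The percolation identity $P_\infty=\rho(1-\exp(-\langle k\rangle P_\infty))$ is asymptotic in $|\mathcal{V}|$, so the finite-graph version requires a concentration argument (e.g.\ via the exploration-process coupling) for the size of the giant component, together with a Chernoff bound for the binomial count of significant nodes inside it. A further minor technicality is that the giant component has size exactly $N$ only in expectation; any discrepancy can be absorbed by trimming a few vertices or extending along edges between the giant component and the rest of $\mathbb{G}$, which exist in abundance because the expected external degree from a size-$N$ set is $p(|\mathcal{V}|-N)$. These are standard but somewhat delicate steps that turn the heuristic percolation picture into the finite-sample bound stated in the theorem.
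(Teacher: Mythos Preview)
Your proposal is correct and follows essentially the same argument as the paper: mark all significant nodes plus an independent fraction of non-significant nodes so that the marked-node density $\rho$ is tuned to make the giant component of the induced ER graph have size exactly $N$, then read off the fraction $\alpha/\rho$ of significant nodes inside it. Your $\rho$ is precisely the paper's $1/Z$, and the algebra matches. One small point: the case where the right-hand side exceeds $1$ is not quite ``trivial'' as you say---the bound then asserts $\alpha' \ge 1$, which still requires an argument (set $\rho=\alpha$, mark only the significant nodes, and check via the same percolation equation that the resulting giant component already has size at least $N$); the paper handles this explicitly as its Case~1, and you should too.
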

\begin{proof}
See Appendix \ref{app:proofs}.
\end{proof}

We show averaged $\alpha^\prime$ lower bounds on $100$ Erdos-Renyi graphs with size $1000$ and $p=0.05$ in Figure \ref{fig:lower-bounds} for $\alpha\in [0.01, 0.05, 0.09]$. 
Compared to the true $\alpha'$ obtained from randomization testing, we observe empirically that the lower bounds $\alpha^\prime_2$ from percolation theory are tighter than the lower bounds $\alpha^\prime_1$ from neighbor analysis. However, we do not have theoretical results on the tightness of these bounds. We also note that the percolation bound is only guaranteed to be a lower bound on $\alpha'$ when the graph is Erdos-Renyi, while the neighbor analysis guarantees a lower bound for all graphs.

\subsection{Core-Tree Decomposition}
\label{sec:core_tree}

Core-whiskers (or core-periphery) structure commonly exists in many real-world networks, such as social networks, transportation networks, and the World Wide Web~\cite{rombach2014core, leskovec2009community}. 
That is, real-world networks can be viewed as a set of low tree-width periphery surrounding a core consisting of a small fraction of nodes. The core tends to be an expander graph and has similar properties to random graphs~\cite{leskovec2008statistical}. 
We first apply core-tree decomposition~\cite{maehara2014vldb} to decompose the graph into a small, dense core and a low-treewidth periphery. One benefit is that the small core keeps the general skeleton and connectivity of the entire graph, enabling adjacent, significant nodes from the whiskers to be incorporated into the detected subgraph. Thus we apply tree-node compression which 
merges the significant nodes in each single tree into an adjacent core node for follow-up optimization in a smaller core. If multiple core nodes are adjacent to a significant tree node, then we compress the significant tree node into the most significant (lowest p-value) core node. See Appendix \ref{app:core-tree} for details of the compression procedure.

\vspace{-1mm}

\section{Experiments}

\begin{figure*}[!t]
    \centering
    \includegraphics[width=1.0\linewidth]{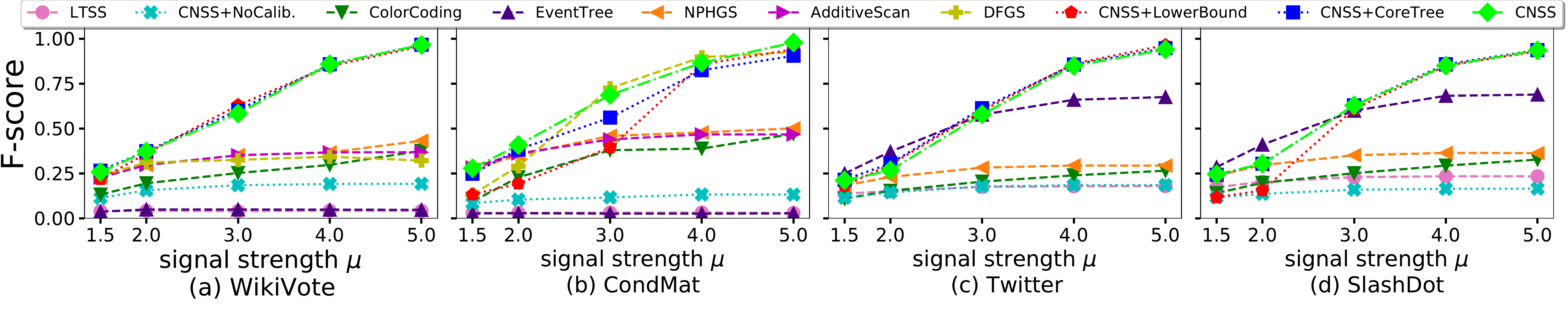}
    \vspace{-5mm}
    \caption{$F$-score of each method under different signal strengths and network structures (best viewed in color). Experiments of \texttt{AdditiveScan} and \texttt{DFGS} on \texttt{Twitter} and \texttt{SlashDot} datasets took over 2 weeks of clock time to run on $250$ CPUs, therefore we do not report them.  See Appendix~\ref{app:results}, Table~\ref{table:real_world_results}, for the corresponding numerical results with significance tests. 
    }
    \vspace{-2mm}
    \label{fig:results_fscores}
\end{figure*}

\label{sec:experiments}
In this section, we investigate four main research questions:

\textbf{Q1. Subgraph Detection:} Does our proposed \texttt{CNSS} have a better performance than state-of-the-art baselines on the task of anomalous subgraph detection?

\textbf{Q2. Calibration:} How does calibration affect detection performance, as a function of signal strength and graph structure?

\textbf{Q3. Lower Bounds:} How does the use of lower bounds of $\alpha'$, instead of $\alpha'$ obtained via randomization tests, affect detection performance?

\textbf{Q4. Core Tree Decomposition:} How does integrating core-tree decomposition into \texttt{CNSS} affect the detection performance and run time? 

\subsection{Experiment Setup}
\textbf{Datasets}: We use five semi-synthetic datasets from the Stanford Network Analysis Project (SNAP~\footnote{\url{https://snap.stanford.edu/data/}}), including 1) \texttt{WikiVote}; 2) \texttt{CondMat}; 3) \texttt{Twitter}; 4) \texttt{Slashdot}; and 5) \texttt{DBLP}. 
We leverage the graph structure of these five networks, and simulate the true subgraph $\mathcal{S}$ using a random walk with size $\approx 0.01|\mathcal{V}|$. We generate the p-value of each graph node assuming Gaussian signals, $x_i \sim N(\mu_i, 1)$ and $p_i = 1 - \texttt{CDF}(x_i)$, where $\mu_i = \mu$ for $v_i \in \mathcal{S}$, and $\mu_i = 0$ (and thus $p_i \sim \texttt{Uniform}[0,1]$) for $v_i \in \mathcal{V}\setminus \mathcal{S}$.  Here $\mu\in[1.5, 2, 3, 4, 5]$ is the signal strength and $\texttt{CDF}(\cdot)$ assumes the standard normal distribution.
We report the average performance over 50 runs of simulations of true subgraphs and p-values on each network structure.  See Appendix~\ref{app:datasets} for 
more details of datasets, and Appendix~\ref{app:piecewise} for simulation results with piecewise constant p-values rather than Gaussian signals (i.e., assuming the BJ model is correctly specified).

\noindent\textbf{Baseline Methods}: We compare \texttt{CNSS} with 6 baselines, including 1) Linear Time Subset Scanning (\texttt{LTSS})~\cite{neill2012fast}; 2) \texttt{EventTree}~\cite{rozenshtein2014event}; 3) Non-parametric Heterogeneous Graph Scan (\texttt{NPHGS})~\cite{chen2014non}; 4) \texttt{AdditiveScan}~\cite{speakman2013dynamic} 5) Depth First Graph Scan (\texttt{DFGS})~\cite{speakman2015scalable}; and 6) \texttt{ColorCoding}~\cite{cadena2019near}. 
We summarize the limitations and time complexity of each competing method in Appendix~\ref{app:baselines}.

\noindent\textbf{Ablation Study}: We also validate the effectiveness of our proposed components by comparing \texttt{CNSS} with methods: 1) \texttt{CNSS+NoCalib}, which removes calibration from \texttt{CNSS}, performing the same search but using the original $\alpha$ instead of $\alpha'$ in the score function; 2) \texttt{CNSS+LowerBound}, which replaces the randomization test with the tightest lower bound, $\max(\alpha_1',\alpha_2')$, of $\alpha'$; and 3) \texttt{CNSS+CoreTree}, which integrates core-tree decomposition into \texttt{CNSS}.

\noindent\textbf{Evaluation Metrics}: We evaluate the detection performance of the \texttt{CNSS} and competing methods on running time, detection power, precision, recall, and $F$-score (see definitions in Appendix~\ref{app:metrics}).
We report the results of $F$-score in the paper, and the remaining metrics are shown in Appendix~\ref{app:results}.

\subsection{Results}
\textbf{Subgraph Detection:} As shown in Figure~\ref{fig:results_fscores}, \texttt{CNSS} outperforms all baselines in terms of $F$-score when the event signal is strong, and it consistently has good and stable performance for different strengths of event signal and network structures.  Competing methods have low precision, and thus low $F$-score, even when the signal is strong.  In addition, we average the $F$-score over all signal strengths and network structures under consideration for each method, and we observe the following performance order: \texttt{CNSS} $>$ \texttt{CNSS+CoreTree} $>$ \texttt{CNSS+LowerBound} $>$ \texttt{DFGS} $>$ \texttt{AdditiveScan} $>$ \texttt{NPHGS} $>$ \texttt{EventTree} $>$ \texttt{ColorCoding} $>$ \texttt{CNSS+NoCalib} $>$ \texttt{LTSS}. The average $F$-score of \texttt{CNSS} is 0.603, while the best-performing baseline method DFGS has average $F$-score 0.451. See Appendix~\ref{app:results} for additional performance results. Appendix~\ref{app:piecewise} shows very similar results for piecewise constant signals.

\noindent\textbf{Calibration:}
Calibration significantly improves detection performance across different signal strengths and various network structures.
Specifically, the calibrated BJ scan statistic helps to pinpoint the true cluster as the strength of signal increases.  On the contrary, all baselines, as well as the uncalibrated version of \texttt{CNSS}, fail to achieve accurate detection (as measured by $F$-score) for all network structures under consideration.  These results demonstrate that calibration, rather than the search procedure for detecting anomalous subgraphs, is driving the difference in performance between methods.  Our proposed search procedure simply enables calibration by making it computationally feasible to find $\max_{\mathcal{S}:|\mathcal{S}|=N} N_\alpha(\mathcal{S})$ for each combination of $N$ and $\alpha$.

\noindent\textbf{Lower Bounds:}
Based on the empirical results on real-world networks, we find that our derived lower bounds provide substantial performance improvement on real-world networks, as shown as \texttt{CNSS+LowerBound} in Figure~\ref{fig:results_fscores}. 
Overall performance of the lower bound is lower than that of the randomization testing-based \texttt{CNSS} approach, particularly for low signal strengths, but \texttt{CNSS+LowerBound} substantially outperforms the baselines with respect to precision and $F$-score, particularly for stronger signals. Most importantly, computing lower bounds of $\alpha'$ is much faster than computing $\alpha'$ using randomization tests, resulting in a $400$x to $2200$x speedup for the various network structures under consideration. See Table~\ref{table:run_time_calibration} in Appendix~\ref{app:results}.

\noindent\textbf{Core Tree Decomposition:}
Core-tree decomposition substantially reduces run time for all datasets and does not significantly change detection performance. We see that \texttt{CNSS+CoreTree} is 2x faster on WikiVote dataset and 20x faster on CondMat dataset than \texttt{CNSS}. With core-tree decomposition, \texttt{CNSS} is more scalable than baseline methods including \texttt{ColorCoding}, \texttt{NPHGS}, \texttt{AdditiveScan}, and \texttt{DFGS}. While it is still more computationally expensive than \texttt{LTSS} and \texttt{EventTree}, our proposed method has much better detection performance. See Appendix~\ref{app:results} for details.

\begin{table*}[!t]
\caption{COVID-19 Case Study: Top-3 Detected Subgraphs for Each Method}\label{table:covid19_results}
\centering
\resizebox{1.0\textwidth}{!}{
\begin{tabular}{c|c|c|c|c|c|c|c}
\toprule
 & \begin{tabular}[c]{@{}c@{}}\# of weeks\\ detected\end{tabular}& \begin{tabular}[c]{@{}c@{}}avg. \# of counties\\ detected per week\end{tabular} & \begin{tabular}[c]{@{}c@{}}avg. population of\\ detected counties\end{tabular} & \begin{tabular}[c]{@{}c@{}}avg. confirmed \\cases per week\end{tabular} & \begin{tabular}[c]{@{}c@{}}avg. deaths per \\ week (2 weeks lag) \end{tabular} &\begin{tabular}[c]{@{}c@{}}avg. confirmed\\ cases rate $\times 10^{-5}$ \end{tabular} & \begin{tabular}[c]{@{}c@{}} avg. death rate\\ (2 weeks lag) $\times 10^{-5}$ \end{tabular} \\
 \cline{3-6}
 \hline
\texttt{CNSS} 1st &16  &294.19  &49369759.69 	&86596.81 	&4166.44 &175 &$\mathbf{8.44}$\\
\hline
\texttt{CNSS} 2nd &15 &60.67 &10151920.33 	&14001.60 	&520.6 &138 &5.13\\
\hline
\texttt{CNSS} 3rd &13	&7.69	&4480384.39 	&10877.31 &207 &$243$ &4.62\\
\hline\hline
\texttt{LTSS} 1st &17 &632.24 &111861408.00 &138212.47 &5986 &124 &5.35\\
\hline
\texttt{LTSS} 2nd &14 	&5.14 &802079.71 &678.43 &8.71 &85 &1.09\\
\hline
\texttt{LTSS} 3rd &4 	&9.25 &2505224.25 &1935.50 &34.25 &77 &1.37 \\
\hline\hline
\texttt{EventTree} 1st &16 &566.13 &96492336.44 &134612.50 &5739.69 &140 &5.95\\
\hline
\texttt{EventTree} 2nd &7 &2.14 &762258.57 &579.43 &32.14 &76 &4.22\\
\hline
\texttt{EventTree} 3rd &1 	&2 &299612.00 &262 &13 &87 &4.34\\
\bottomrule
\end{tabular}
}
\end{table*}

\subsection{Case Studies}
We now compare the anomalous subgraphs detected by our \texttt{CNSS} method to those identified by two of the competing methods (\texttt{LTSS} and \texttt{EventTree}) on two real-world datasets, \texttt{COVID-19} infection rates and Twitter data related to the Black Lives Matter movement.  We note that the \texttt{ColorCoding}, \texttt{NPHGS}, \texttt{AdditiveScan}, and \texttt{DFGS} approaches were not able to scale to these large real-world datasets. We show the \texttt{COVID-19} case study in the paper and \texttt{BlackLivesMatter} case study in Appendix~\ref{app:blm_case}.

\subsubsection{COVID-19 Confirmed Cases Subgraph Discovery}
We study our proposed method on \texttt{COVID-19} data\footnote{\url{https://usafacts.org/visualizations/coronavirus-covid-19-spread-map/}} to discover significant infected regions over time. This dataset contains the daily confirmed cases for 3,234 counties in the USA across over 25 weeks from January 22-July 8, 2020. We build a spatial-temporal graph with 80,850 nodes and 850,725 edges based on the weekly confirmed cases and county adjacency (see Appendix~\ref{app:covid19_case} for more details), where each node represents a county in one week. In addition to the edges that represent adjacency between counties (which are identical for each week $t$), we add an undirected temporal edge from each node $i$ in week $t$ to node $i$ in week $t+1$ as well as undirected edges from each node $i$ in week $t$ to all neighboring nodes $j$ in week $t+1$. The p-value of each node is generated based on the rank of the weekly confirmed cases to county population ratio divided by the total number of nodes in the graph. Therefore, a higher ratio of the number of weekly confirmed cases to the county population indicates a higher rank and thus a smaller p-value. 

We apply our proposed method on this spatial-temporal graph and discover three subgraphs that are significant (as identified using randomization tests on 100 runs under the null hypothesis).  The statistics of these three discovered subgraphs are shown in Table \ref{table:cnss_covid19} in Appendix~\ref{app:covid19_case}.

As shown in Table \ref{table:covid19_results}, our \texttt{CNSS} method detects a significant connected subgraph of counties that have a 42\% higher death rate two weeks later, as compared with the top-1 subgraphs detected by \texttt{LTSS} and \texttt{EventTree}. 
The use of the two-week-lagged death rate as an evaluation metric better identifies the anomaly in true \texttt{COVID-19} cases than the confirmed cases rate, which was highly affected in many areas by insufficient testing resources. (Note that death rate data is not provided to the detection algorithms.) 
The visualization of the highest-scoring subgraphs detected by different methods is shown in Appendix~\ref{app:covid19_case}. 
We see that the baseline methods cannot discover a cohesive subgraph due to the poorly calibrated objective function, instead showing a dispersed pattern across much of the country.  In contrast, our method is capable of detecting more impacted geographic regions, for better targeting of needed health resources.

\section{Limitations and Conclusions}
\label{sec:limitations}

While \texttt{CNSS} achieves state of the art performance for anomalous pattern detection on graphs, it has two main limitations. First, the randomization test-based calibration approach is time-consuming, particularly for large-scale graphs. Although our proposed closed-form lower bounds of $\alpha^\prime(N,\alpha)$ avoid the need for randomization tests and hence reduce the time cost of \texttt{CNSS} significantly, detection power is reduced when the anomalous signal strength is low, as shown in Figure~\ref{fig:results_fscores}. Second, our proposed efficient algorithm is heuristic rather than exact, and thus is not guaranteed to discover the maximum number of significant nodes $N_\alpha$ for each subgraph size $N$. However, as discussed in Section~\ref{sec:related}, subgraph detection is very challenging in the presence of connectivity constraints and no methods exist that have rigorous guarantees and at the same time are scalable to large graphs. For the calibrated scan, the computational problem is even more difficult: we must identify the subgraph with the largest number of significant p-values $N_\alpha$ for each subgraph size $N$ and significance level $\alpha$, which prevents us from using previous methods that search for a single highest-scoring subgraph.
Finally, since the problem of pattern detection in graphs is general, detection approaches could be used for negative as well as beneficial social impacts, such as monitoring of social media by an oppressive government.

In summary, we demonstrated that existing nonparametric scan statistic methods are miscalibrated for anomalous pattern detection in graphs, and developed a new statistical approach to recalibrate NPSSs to account for the multiple hypothesis testing effect of the graph structure. We proposed a more efficient algorithm and new, closed-form lower bounds, and integrated recent core-tree decomposition methods, to enable our proposed \texttt{CNSS} approach to scale to large, real-world graphs. We observed outstanding performance of our method compared with six state-of-the-art baselines on five real-world datasets under various signal strengths and network structures. 
Finally, we applied \texttt{CNSS} to two real-world applications, and found more meaningful subgraphs compared with competing methods.

\section*{Acknowledgements}
The work of Feng Chen is supported by the National Science Foundation (NSF) under Grant Number $\#1815696$ and $\#1750911$.

\bibliography{aaai22.bib}

\begin{thebibliography}{31}
\providecommand{\natexlab}[1]{#1}

\bibitem[{Achlioptas, D'Souza, and Spencer(2009)}]{achlioptas2009explosive}
Achlioptas, D.; D'Souza, R.~M.; and Spencer, J. 2009.
\newblock Explosive percolation in random networks.
\newblock \emph{Science}, 323(5920): 1453--1455.

\bibitem[{Akoglu, Tong, and Koutra(2015)}]{akoglu2015graph}
Akoglu, L.; Tong, H.; and Koutra, D. 2015.
\newblock Graph based anomaly detection and description: a survey.
\newblock \emph{Data Mining and Knowledge Discovery}, 29(3): 626--688.

\bibitem[{Alon, Yuster, and Zwick(1995)}]{alon1995color}
Alon, N.; Yuster, R.; and Zwick, U. 1995.
\newblock Color-coding.
\newblock \emph{Journal of the ACM}, 42(4): 844--856.

\bibitem[{Berk and Jones(1979)}]{berk1979goodness}
Berk, R.~H.; and Jones, D.~H. 1979.
\newblock Goodness-of-fit test statistics that dominate the Kolmogorov
  statistics.
\newblock \emph{Zeitschrift f{\"u}r Wahrscheinlichkeitstheorie und verwandte
  Gebiete}, 47(1): 47--59.

\bibitem[{Bollob{\'a}s and Erdos(1976)}]{bollobas1976cliques}
Bollob{\'a}s, B.; and Erdos, P. 1976.
\newblock Cliques in random graphs.
\newblock In \emph{Mathematical Proceedings of the Cambridge Philosophical
  Society}, volume~80, 419--427. Cambridge University Press.

\bibitem[{Cadena, Chen, and Vullikanti(2018)}]{cadena2018graph}
Cadena, J.; Chen, F.; and Vullikanti, A. 2018.
\newblock Graph anomaly detection based on Steiner connectivity and density.
\newblock \emph{Proceedings of the IEEE}, 106(5): 829--845.

\bibitem[{Cadena, Chen, and Vullikanti(2019)}]{cadena2019near}
Cadena, J.; Chen, F.; and Vullikanti, A. 2019.
\newblock Near-optimal and practical algorithms for graph scan statistics with
  connectivity constraints.
\newblock \emph{ACM Transactions on Knowledge Discovery from Data}, 13(2):
  1--33.

\bibitem[{Chen and Neill(2014)}]{chen2014non}
Chen, F.; and Neill, D.~B. 2014.
\newblock Non-parametric scan statistics for event detection and forecasting in
  heterogeneous social media graphs.
\newblock In \emph{Proceedings of the 20th ACM SIGKDD International Conference
  on Knowledge Discovery and Data Mining}, 1166--1175. ACM.

\bibitem[{Chitra et~al.(2021)Chitra, Ding, Lee, and Raphael}]{chitra2021}
Chitra, U.; Ding, K.; Lee, J. C.~H.; and Raphael, B.~J. 2021.
\newblock Quantifying and Reducing Bias in Maximum Likelihood Estimation of
  Structured Anomalies.
\newblock In \emph{Proc. 38th Intl. Conf. on Machine Learning, PMLR 139},
  1908--1919.

\bibitem[{Donoho and Jin(2004)}]{donoho2004higher}
Donoho, D.; and Jin, J. 2004.
\newblock Higher criticism for detecting sparse heterogeneous mixtures.
\newblock \emph{The Annals of Statistics}, 32(3): 962--994.

\bibitem[{Duczmal and Assuncao(2004)}]{duczmal2004simulated}
Duczmal, L.; and Assuncao, R. 2004.
\newblock A simulated annealing strategy for the detection of arbitrarily
  shaped spatial clusters.
\newblock \emph{Computational Statistics \& Data Analysis}, 45(2): 269--286.

\bibitem[{Eicker(1979)}]{eicker1979asymptotic}
Eicker, F. 1979.
\newblock The asymptotic distribution of the suprema of the standardized
  empirical processes.
\newblock \emph{The Annals of Statistics}, 116--138.

\bibitem[{Erd{\H{o}}s and R{\'e}nyi(1960)}]{erdHos1960evolution}
Erd{\H{o}}s, P.; and R{\'e}nyi, A. 1960.
\newblock On the evolution of random graphs.
\newblock \emph{Publication of the Mathematical Institute of the Hungarian
  Academy of Sciences}, 5(1): 17--60.

\bibitem[{Glaz, Pozdnyakov, and Wallenstein(2009)}]{glaz2009scan}
Glaz, J.; Pozdnyakov, V.; and Wallenstein, S. 2009.
\newblock \emph{Scan statistics: Methods and applications}.
\newblock Springer Science \& Business Media.

\bibitem[{Kulldorff(1997)}]{kulldorff1997spatial}
Kulldorff, M. 1997.
\newblock A spatial scan statistic.
\newblock \emph{Communications in Statistics-Theory and methods}, 26(6):
  1481--1496.

\bibitem[{Leskovec et~al.(2008)Leskovec, Lang, Dasgupta, and
  Mahoney}]{leskovec2008statistical}
Leskovec, J.; Lang, K.~J.; Dasgupta, A.; and Mahoney, M.~W. 2008.
\newblock Statistical properties of community structure in large social and
  information networks.
\newblock In \emph{Proceedings of the 17th international conference on World
  Wide Web}, 695--704. ACM.

\bibitem[{Leskovec et~al.(2009)Leskovec, Lang, Dasgupta, and
  Mahoney}]{leskovec2009community}
Leskovec, J.; Lang, K.~J.; Dasgupta, A.; and Mahoney, M.~W. 2009.
\newblock Community structure in large networks: Natural cluster sizes and the
  absence of large well-defined clusters.
\newblock \emph{Internet Mathematics}, 6(1): 29--123.

\bibitem[{Maehara et~al.(2014)Maehara, Akiba, Iwata, and
  Kawarabayashi}]{maehara2014vldb}
Maehara, T.; Akiba, T.; Iwata, Y.; and Kawarabayashi, K.-i. 2014.
\newblock Computing Personalized PageRank Quickly by Exploiting Graph
  Structures.
\newblock \emph{Proceedings of the VLDB Endowment}, 7(12): 1023–1034.

\bibitem[{Massey~Jr(1951)}]{massey1951kolmogorov}
Massey~Jr, F.~J. 1951.
\newblock The Kolmogorov-Smirnov test for goodness of fit.
\newblock \emph{Journal of the American Statistical Association}, 46(253):
  68--78.

\bibitem[{{McFowland III}, Speakman, and Neill(2013)}]{mcfowland2013fgss}
{McFowland III}, E.; Speakman, S.; and Neill, D.~B. 2013.
\newblock Fast generalized subset scan for anomalous pattern detection.
\newblock \emph{Journal of Machine Learning Research}, 14: 1533--1561.

\bibitem[{Neill(2009)}]{neill2009expectation}
Neill, D.~B. 2009.
\newblock Expectation-based scan statistics for monitoring spatial time series
  data.
\newblock \emph{International Journal of Forecasting}, 25(3): 498--517.

\bibitem[{Neill(2012)}]{neill2012fast}
Neill, D.~B. 2012.
\newblock Fast subset scan for spatial pattern detection.
\newblock \emph{Journal of the Royal Statistical Society: Series B (Statistical
  Methodology)}, 74(2): 337--360.

\bibitem[{Neill and Lingwall(2007)}]{neill2007nonparametric}
Neill, D.~B.; and Lingwall, J. 2007.
\newblock A nonparametric scan statistic for multivariate disease surveillance.
\newblock \emph{Advances in Disease Surveillance}, 4: 106.

\bibitem[{Qian, Saligrama, and Chen(2014)}]{qian2014connected}
Qian, J.; Saligrama, V.; and Chen, Y. 2014.
\newblock Connected sub-graph detection.
\newblock In \emph{Proceedings of the 17th International Conference on
  Artificial Intelligence and Statistics}, 796--804. PMLR.

\bibitem[{Reyna et~al.(2021)Reyna, Chitra, Elyanow, and Raphael}]{reyna2021}
Reyna, M.~A.; Chitra, U.; Elyanow, R.; and Raphael, B.~J. 2021.
\newblock NetMix: A Network-Structured Mixture Model for Reduced-Bias
  Estimation of Altered Subnetworks.
\newblock \emph{Journal of Computational Biology}, 28(5): 469--484.

\bibitem[{Rombach et~al.(2014)Rombach, Porter, Fowler, and
  Mucha}]{rombach2014core}
Rombach, M.~P.; Porter, M.~A.; Fowler, J.~H.; and Mucha, P.~J. 2014.
\newblock Core-periphery structure in networks.
\newblock \emph{SIAM Journal on Applied Mathematics}, 74(1): 167--190.

\bibitem[{Rozenshtein et~al.(2014)Rozenshtein, Anagnostopoulos, Gionis, and
  Tatti}]{rozenshtein2014event}
Rozenshtein, P.; Anagnostopoulos, A.; Gionis, A.; and Tatti, N. 2014.
\newblock Event detection in activity networks.
\newblock In \emph{Proceedings of the 20th ACM SIGKDD International Conference
  on Knowledge Discovery and Data Mining}, 1176--1185. ACM.

\bibitem[{Sharpnack, Rinaldo, and Singh(2015)}]{sharpnack2015detecting}
Sharpnack, J.; Rinaldo, A.; and Singh, A. 2015.
\newblock Detecting anomalous activity on networks with the graph Fourier scan
  statistic.
\newblock \emph{IEEE Transactions on Signal Processing}, 64(2): 364--379.

\bibitem[{Speakman, McFowland~III, and Neill(2015)}]{speakman2015scalable}
Speakman, S.; McFowland~III, E.; and Neill, D.~B. 2015.
\newblock Scalable detection of anomalous patterns with connectivity
  constraints.
\newblock \emph{Journal of Computational and Graphical Statistics}, 24(4):
  1014--1033.

\bibitem[{Speakman, Zhang, and Neill(2013)}]{speakman2013dynamic}
Speakman, S.; Zhang, Y.; and Neill, D.~B. 2013.
\newblock Dynamic pattern detection with temporal consistency and connectivity
  constraints.
\newblock In \emph{Proceedings of the 13th IEEE International Conference on
  Data Mining}, 697--706. IEEE.

\bibitem[{Wu et~al.(2016)Wu, Chen, Li, Zhou, and
  Ramakrishnan}]{wu2016efficient}
Wu, N.; Chen, F.; Li, J.; Zhou, B.; and Ramakrishnan, N. 2016.
\newblock Efficient nonparametric subgraph detection using tree shaped priors.
\newblock In \emph{Proceedings of the 30th AAAI Conference on Artificial
  Intelligence}, volume~30. The AAAI Press.

\end{thebibliography}




\clearpage
\appendix
\section*{Technical Appendix}

This Technical Appendix, a supplement to ``Calibrated Nonparametric Scan Statistics for Pattern Detection in Graphs'', consists of:
\begin{itemize}
\item Appendix~\ref{app:npss-background}, Additional Background on Nonparametric Scan Statistics
\item Appendix~\ref{app:cnss-details}, Additional Details of \texttt{CNSS} 
\item Appendix~\ref{app:exp}, Additional Experimental Details 
\item Appendix~\ref{app:case}, Case Studies
\end{itemize}

\section{Additional Background on Nonparametric Scan Statistics}
\label{app:npss-background}

Here we present additional background on nonparametric scan statistics~\cite{neill2007nonparametric,mcfowland2013fgss,chen2014non}. 

As we describe in the main paper, the fundamental problem that NPSSs solve is to find a subset of the data $\mathcal{S}$, often subject to additional constraints (such as connectedness in the graph setting), and a corresponding significance level $\alpha$, such that the proportion of significant p-values (at level $\alpha$) in $\mathcal{S}$ is significantly higher than expected. Or equivalently, if p-values are drawn uniformly on [0,1] under the null hypothesis $\mathcal{H}_0$, and under the alternative hypothesis $\mathcal{H}_1(\mathcal{S})$
the p-values in subset $\mathcal{S}$ are drawn with a higher than expected proportion of low (significant) p-values, we wish both to distinguish $\mathcal{H}_0$ from $\mathcal{H}_1$, thus detecting whether a signal is present, and if so, to correctly identify the affected subset $\mathcal{S}$. 

More precisely, NPSSs optimize an objective function 
$F(\mathcal{S})=\max_{\alpha \leq \alpha_{\max}} \Phi\left(\alpha, N_{\alpha}(\mathcal{S}),N(\mathcal{S})\right)$, where 
$\Phi(\cdot)$ compares the observed number of significant p-values $N_\alpha(\mathcal{S})$ at level $\alpha$ to the expected number of significant p-values $\mathbb{E} \left[ N_{\alpha}(\mathcal{S}) \right] = \alpha N(\mathcal{S})$ under the null hypothesis  $\mathcal{H}_0$. The expectation $\mathbb{E} \left[ N_{\alpha}(\mathcal{S}) \right] = \alpha N(\mathcal{S})$ follows because, under $\mathcal{H}_0$, the current data from which the p-values are generated is exchangeable with the historical data against which the current data values are ranked, leading to p-values that are asymptotically uniform on [0,1] under the null.  We discuss the Berk-Jones statistic $\Phi_{BJ}(\cdot)$ in detail in Appendix~\ref{app:npss-assumptions}, and other variants of NPSS in Appendix~\ref{app:npss-variants}.

Critically, NPSSs optimize the significance level $\alpha$ between $0$ and some constant $\alpha_{\max} < 1$.  As noted by~\citet{mcfowland2013fgss}, the purpose of maximizing over a range of $\alpha$ values is to ensure that the statistic can reliably detect either a small number of highly significant p-values or a larger number of moderately significant p-values.  If $\alpha$ was fixed at a high value, the statistic would have poor detection performance in the former case; if $\alpha$ was fixed at a low value, it would perform poorly in the latter case. However, maximization over $\alpha$ also presents a serious drawback for the uncalibrated scan: for large real-world graphs, NPSSs select overly large values of $\alpha$, contributing to their failure to correctly identify the affected subgraph $\mathcal{S}$.

A number of algorithms have been proposed to optimize $F(\mathcal{S})$ over connected subgraphs, including \texttt{DFGS}~\cite{speakman2015scalable}, \texttt{AdditiveScan}~\cite{speakman2013dynamic}, \texttt{NPHGS}~\cite{chen2014non}, and \texttt{ColorCoding}~\cite{cadena2019near}, but none of these approaches are both exact and scalable to large graphs.  Moreover, calibration of NPSSs requires us to identify the subgraph with the largest number of significant p-values $N_\alpha$ for \emph{each subgraph size} $N$ and \emph{each significance level} $\alpha$, rather than a single highest-scoring subgraph, thus necessitating our new (approximate) optimization algorithm described in the main paper and in Appendix~\ref{app:alg:merge} below.

Once the highest scoring subgraph, $\mathcal{S}^\ast = \arg\max_{\mathcal{S} \in \mathbb{M}} F(\mathcal{S})$, has been identified, randomization testing can be used to compute the statistical significance of $\mathcal{S}^\ast$.  To do so, a large number $R$ of replica graphs are generated under the null hypothesis, i.e., each replica graph has the same structure as the original graph but all p-values $p_i$ are drawn i.i.d.~from $\texttt{Uniform}[0,1]$.  The same search procedure is used to identify the highest scoring subgraph $\mathcal{S}^{(r)}$ for each replica graph $r=1\ldots R$, and the score $F(\mathcal{S}^\ast)$ is compared to the distribution of replica scores $F(\mathcal{S}^{(r)})$.  To be significant at the standard $0.05$ significance level, $F(\mathcal{S}^\ast)$ must exceed the 95th percentile of the null distribution.  This standard approach corrects for \emph{multiple testing}, in that the family-wise error rate (probability of detecting any false positive subgraphs if data is generated under the null) is bounded by the nominal level (e.g., 0.05).  However, we note that it does not correct for \emph{miscalibration} across subgraph sizes $N$ and significance levels $\alpha$, in that  large, high-scoring subgraphs $\mathcal{S^\ast}$ are likely to be detected even when the true subset $\mathcal{S}$ is small or no signal is present.

Finally, as is typical in the scan statistics literature, we can perform \emph{multiple cluster detection} by repeated single cluster detection.  That is, after we detect the single highest-scoring subgraph, and test it for statistical significance, we ``remove'' that subgraph from the data in one of two ways, either assigning the p-value of each detected node as 1, or deleting the detected nodes from the network structure.  (The former approach allows secondary clusters to overlap with the primary cluster, while the latter approach does not.)  In either case, we apply the same procedure to the updated network to detect the new highest-scoring subgraph, compare the score of this cluster to the significance threshold determined by randomization testing, and repeat until no further significant clusters are present.  This statistical testing approach is conservative for the secondary clusters, and several variants of the multiple cluster scan have been proposed to increase detection power for secondary clusters (Zhang, Assuncao, and Kulldorff, 2010; Li et al., 2011).

In the remainder of Appendix~\ref{app:npss-background}, we present additional details on the fundamental modeling assumptions of the Berk-Jones nonparametric scan statistic (Appendix~\ref{app:npss-assumptions}), computation of empirical p-values (Appendix~\ref{app:npss-pvals}), other variants of NPSS (Appendix~\ref{app:npss-variants}), and differences between NPSS and the Gaussian scan approach of~\citet{reyna2021} and~\citet{chitra2021} (Appendix~\ref{app:npss-differences}).

\subsection{Fundamental modeling assumptions}
\label{app:npss-assumptions}

In this section, we describe the fundamental modeling assumptions of nonparametric scan statistics, following~\citet{mcfowland2013fgss}, and focusing primarily on the Berk-Jones (BJ) likelihood ratio statistic.  Unlike parametric scan statistics such as the Poisson and Gaussian statistics~\cite{kulldorff1997spatial,neill2009expectation}, NPSSs do not assume that the raw data is drawn from any particular parametric distribution.  Instead, the data is converted to empirical p-values by ranking the current data against a reference distribution (e.g., historical values), as described in Appendix~\ref{app:npss-pvals}.  The assumption under the null hypothesis $\mathcal{H}_0$ is that the current and historical data are exchangeable, and thus, ranking the current data against the historical data (and normalizing) will result in empirical p-values that are uniformly distributed on [0,1].  This also implies that, for any significance level $\alpha$, the probability that a given p-value is significant ($p_i < \alpha$) is equal to $\alpha$.  

Under the alternative hypothesis $\mathcal{H}_1(\mathcal{S})$, NPSSs must make some assumption about how the distribution of p-values in subset $\mathcal{S}$ differs from the uniform distribution on [0,1].  For the BJ statistic, the assumption under $\mathcal{H}_1(\mathcal{S})$ is that there exist some $\alpha$ and $\beta$, where $0 < \alpha < \beta \le 1$, such that the probability that a given p-value $p_i \in \mathcal{S}$ is significant ($p_i < \alpha$) is equal to $\beta$.  This is typically framed as an assumption that the distribution of p-values $p_i$ for $v_i \in \mathcal{S}$ is piecewise constant.  More precisely, we have $\mathcal{H}_0: p_i \sim \texttt{Uniform[0,1]} \: \forall v_i \in \mathcal{V}$.  Under $\mathcal{H}_1(\mathcal{S})$, we have $p_i \sim \texttt{Uniform}[0,\alpha]\ \text{with probability}\ \beta \ \text{and} \ p_i \sim \texttt{Uniform}[\alpha,1]\ \text{with probability}\ 1-\beta, \: \forall v_i\in \mathcal{S}, \ \text{for}\ \beta > \alpha$; and $p_i \sim \texttt{Uniform}[0,1]\ \forall v_i \in \mathcal{V}\setminus\mathcal{S}$. Here the values of both $\alpha$ and $\beta$ are fit by maximum likelihood estimation.

The resulting generalized log-likelihood ratio scan statistic can be written as:
\[ F(\mathcal{S}) = \max_{\beta>\alpha} \log \frac{\beta^{N_\alpha(\mathcal{S})} (1-\beta)^{N(\mathcal{S})-N_\alpha(\mathcal{S})}}{\alpha^{N_\alpha(\mathcal{S})} (1-\alpha)^{N(\mathcal{S})-N_\alpha(\mathcal{S})}} \]
\[ = \max_{\beta>\alpha} N_\alpha(\mathcal{S}) \log \left(\frac{\beta}{\alpha}\right) + (N(\mathcal{S})-N_\alpha(\mathcal{S})) \log \left(\frac{1-\beta}{1-\alpha}\right).\]

Then plugging in the maximum likelihood estimate $\beta = N_\alpha(\mathcal{S})/N(\mathcal{S})$ and simplifying, we obtain:
\[ F(\mathcal{S}) = \max_\alpha N(\mathcal{S}) \: \texttt{KL}\left(\frac{N_\alpha(\mathcal{S})}{N(\mathcal{S})},\alpha\right), \]
where the Kullback-Liebler (\texttt{KL}) divergence is defined as $\texttt{KL}(a,b) = a \log \frac{a}{b} + (1-a) \log \frac{1-a}{1-b}$, if $a>b$, and 0 otherwise. (Note that we use a one-sided form of \texttt{KL} divergence throughout, since we care only about subgraphs with a \emph{higher} than expected proportion of significant p-values.)  Thus we can see that the score $F(\mathcal{S}) = \max_{\alpha} \Phi_{BJ}(\alpha,N_\alpha(\mathcal{S}),N(\mathcal{S}))$ is maximized over both $\alpha$ and $\mathcal{S}$, identifying a subgraph $\mathcal{S}$ and significance level $\alpha$ for which $\mathcal{S}$ has a higher than expected proportion of significant p-values at level $\alpha$.

The assumption of piecewise constant p-values under the alternative hypothesis $\mathcal{H}_1(\mathcal{S})$ is a relatively lightweight assumption, in that all significant p-values at level $\alpha$ are treated identically: for a given $\alpha$, the precise value of each p-value does not impact the score $F(\mathcal{S})$, only whether or not that p-value is less than $\alpha$.  The resulting log-likelihood ratio statistic is equivalent to a log-likelihood ratio defined in terms of the number of significant p-values at level $\alpha$.  That is, the null hypothesis $\mathcal{H}_0$
can be written as $N_\alpha(\mathcal{S}) \sim \texttt{Binomial}(N(\mathcal{S}),\alpha)$ for all $\mathcal{S}$, and the alternative hypothesis $\mathcal{H}_1(\mathcal{S})$ can be written as $N_\alpha(\mathcal{S}) \sim \texttt{Binomial}(N(\mathcal{S}),\beta)$ for $\beta > \alpha$.  Again, we must not only optimize over $\beta$, using the maximum likelihood estimate $\beta = \frac{N_\alpha(\mathcal{S})}{N(\mathcal{S})}$, but also the significance level $\alpha$, to identify the highest-scoring subgraph $\mathcal{S}$.

\subsection{Computation of empirical p-values}
\label{app:npss-pvals}

As noted in Section~\ref{sec:npss} of the main paper, empirical p-values $p_i$ for the nonparametric scan statistic are computed for each graph node $v_i$ using the two-stage empirical calibration process described by~\citet{chen2014non}.  We provide more details on this process and explain why it follows that p-values are asymptotically uniform on [0,1] under the null hypothesis $\mathcal{H}_0$. Assume that node $v_i$ has a current feature vector ${\bf x}_i \in \mathbb{R}^{N}$ and historical feature vectors $\{\mathbf{x}_i^{(1)}, \cdots, \mathbf{x}_i^{(T)}\}$.  Moreover, under the null hypothesis $\mathcal{H}_0$, we assume that the current data is exchangeable with the historical data, i.e., ${\bf x}_i$ and $\mathbf{x}_i^{(1)}, \cdots, \mathbf{x}_i^{(T)}$ are all drawn from the same (unknown) distribution.

We first consider the simplest case, in which each node $v_i$ has only a single feature $x_i$.  In this case, the empirical calibration process reduces to ranking the current feature value $x_i$ against its historical values $x_i^{(1)}, \cdots, x_i^{(T)}$ and normalizing, i.e.,
\[ p_i = \frac{1+\sum_{t=1\ldots T} \mathbf{1}\{x_i^{(t)} \ge x_i\}}{1+T} \]

Here we assume one-sided p-values (i.e., higher values of $x_i$ correspond to lower, more significant p-values), but two-sided p-values, or one-sided p-values where lower values of $x_i$ are more significant, can be easily constructed as well. See~\citet{mcfowland2013fgss} for details.

Under the null hypothesis of exchangeability, it is easy to see that $p_i$ is discrete uniform, taking on values $\frac{1}{1+T}, \frac{2}{1+T}, \cdots, 1$ with equal probabilities, and converges in distribution to \texttt{Uniform}[0,1] as the number of historical observations $T$ becomes large.  An alternative is to use p-value ranges, as proposed by~\citet{mcfowland2013fgss}, which guarantee uniform (rather than asymptotically uniform) p-values under the null.   We also note that an arbitrary reference set can be used in place of historical data for node $v_i$, in which case the assumption under $\mathcal{H}_0$ becomes exchangeability of the current observation with that reference set. See~\citet{chen2014non} for details.

In the more general case where the feature vectors ${\bf x}_i$ and ${\bf x}_i^{(t)}$ have more than one feature, the two-stage empirical calibration process first ranks each feature value $x_{ij}$ against its historical values $x_{ij}^{(t)}$, and computes a ``first-stage p-value'' corresponding to each feature value as above:
\[ p_{ij} = \frac{1+\sum_{t=1\ldots T} \mathbf{1}\{x_{ij}^{(t)} \ge x_{ij}\}}{1+T}. \]
A similar ``first-stage p-value'' is computed for each historical value:
\[ p_{ij}^{(t)} = \frac{1+\mathbf{1}\{x_{ij} \ge x_{ij}^{(t)}\} + \sum_{t'=1\ldots T,t'\ne t} \mathbf{1}\{x_{ij}^{(t')} \ge x_{ij}^{(t)}\}}{1+T}. \]

Next, the two-stage empirical calibration process computes the minimum (most significant) p-value for each feature vector:
\[ p_{i,\min} = \min_j p_{ij}, \]
\[ p_{i,\min}^{(t)} = \min_j p_{ij}^{(t)}. \]
And finally, it computes the ``second-stage p-value'', using the normalized rank of the minimum p-value $p_{i,\min}$ (here, lower is more significant):
\[ p_i = \frac{1+\sum_{t=1\ldots T} \mathbf{1}\{p_{i,\min}^{(t)} \le p_{i,\min}\}}{1+T}. \]

The exchangeability of the first-stage p-values $p_{ij}$ and $p_{ij}^{(t)}$, and the exchangeability of their minima $p_{i,\min}$ and $p_{i,\min}^{(t)}$, follow from the exchangeability of $x_{ij}$ and $x_{ij}^{(t)}$ under the null, and thus the second-stage p-values are asymptotically uniform on [0,1] under $\mathcal{H}_0$ as above. See Theorem 1 of~\citet{chen2014non} and Section 2.2 of~\citet{mcfowland2013fgss} for additional details.

The uniformity of p-values is critical since it follows that the expected proportion of significant p-values for a randomly selected connected subset under $\mathcal{H}_0$ is equal to the significance level $\alpha$. This is the basis for the NPSS approach of comparing the observed proportion of p-values that are significant at level $\alpha$ to the expected proportion of significant p-values $\alpha$.  As we discuss in detail in the main paper and in Appendix~\ref{app:correctness} below, this uncalibrated NPSS approach fails to adjust for the multiplicity of subgraphs, and thus we propose to calibrate $\alpha$ by replacing it with the expected \emph{maximum} proportion of significant p-values $\alpha^\prime(N,\alpha)$.

\subsection{Variants of nonparametric scan}
\label{app:npss-variants}

While we focused on the Berk-Jones (BJ) nonparametric scan statistic, our calibration approach can easily be applied to other NPSSs such as Higher Criticism (HC) and Kolmogorov-Smirnov (KS), since like BJ these statistics can be written as the maximum (over $\alpha$ values from 0 to $\alpha_{\max}$) of a scaled divergence $\Phi(\alpha,N_\alpha(S),N(S))$ between the observed and expected proportion of significant p-values at level $\alpha$.  Some examples are provided below: \\

\noindent\textbf{Berk-Jones}:\\
\begin{equation}
    \Phi_{BJ}\left(\alpha, N_{\alpha}(\mathcal{S}), N(\mathcal{S})\right) = N(\mathcal{S}) \: \texttt{KL}\left(\frac{N_\alpha(\mathcal{S})}{N(\mathcal{S})},\alpha\right)
\end{equation}

\noindent\textbf{Higher Criticism}:\\
\begin{equation}
    \Phi_{HC}\left(\alpha, N_{\alpha}(\mathcal{S}), N(\mathcal{S})\right) = \frac{N_{\alpha}(\mathcal{S}) - \alpha N(\mathcal{S})}{\sqrt{N(\mathcal{S}) \alpha (1-\alpha)}}
\end{equation}

\noindent\textbf{Kolmogorov–Smirnov}:\\ 
\begin{equation}
    \Phi_{KS}\left(\alpha, N_{\alpha}(\mathcal{S}), N(\mathcal{S})\right) = \sqrt{N(\mathcal{S})}\cdot \left(\frac{N_{\alpha}(\mathcal{S})}{N(\mathcal{S})} - \alpha \right) 
\end{equation}

Note that in each of these cases, we use a one-sided divergence, since we only wish to detect subgraphs where the observed proportion of significant p-values $N_\alpha(\mathcal{S})/N(\mathcal{S})$ is \emph{greater} than $\alpha$.  We have defined the one-sided \texttt{KL} divergence in Appendix~\ref{app:npss-assumptions} above, and for the other statistics we simply set them to zero whenever $N_\alpha(\mathcal{S})/N(\mathcal{S}) \le \alpha$.

Once the value of $\alpha^\prime (N, \alpha) =\mathbb{E}[\max_{\mathcal{S}: |\mathcal{S}| = N} N_\alpha(\mathcal{S}) / N]$ has been computed for each $N$ and $\alpha$, this value can be substituted for $\alpha$ in any of the above equations to obtain a calibrated nonparametric scan statistic: \\ 

\noindent\textbf{Calibrated Berk-Jones}:
\begin{equation}
    \begin{split}
    &\quad \Phi_{CBJ}\left(\alpha, N_{\alpha}(\mathcal{S}), N(\mathcal{S})\right)\\
    &=  N(\mathcal{S}) \: \texttt{KL}\left(\frac{N_\alpha(\mathcal{S})}{N(\mathcal{S})},\alpha^\prime(N(\mathcal{S}),\alpha)\right) \\
    \end{split}
\end{equation}

\noindent\textbf{Calibrated Higher Criticism}:
\begin{equation}
    \begin{split}
    &\quad \Phi_{CHC}\left(\alpha, N_{\alpha}(\mathcal{S}), N(\mathcal{S})\right)\\
    &= \frac{N_{\alpha}(\mathcal{S}) - \alpha'(N(\mathcal{S}), \alpha) N(\mathcal{S})}{\sqrt{N(\mathcal{S}) \alpha'(N(\mathcal{S}), \alpha) (1-\alpha'(N(\mathcal{S}), \alpha))}} \\
    \end{split}
\end{equation}

\noindent\textbf{Calibrated Kolmogorov–Smirnov}:
\begin{equation}
    \begin{split}
    &\quad \Phi_{CKS}\left(\alpha, N_{\alpha}(\mathcal{S}), N(\mathcal{S})\right)\\
    &= \sqrt{N(\mathcal{S})}\cdot \left(\frac{N_{\alpha}(\mathcal{S})}{N(\mathcal{S})} - \alpha'(N(\mathcal{S}), \alpha) \right)
    \end{split}
\end{equation}

Our future work will evaluate the impact of calibration on the detection performance of HC, KS, and other nonparametric scan statistics, as well as exploring whether the calibration approach can be adapted to other scan statistics outside the NPSS family.

We note that the HC nonparametric scan statistic, despite its interpretation as a Gaussian approximation of the BJ likelihood ratio statistic, is distinct from the Gaussian scan statistics described in the following subsection.  HC does not assume that individual p-values in $\mathcal{S}$ follow a (transformed) mean-shifted Gaussian distribution under the alternative hypothesis $\mathcal{H}_1(\mathcal{S})$.  The HC statistic (like BJ) is based only on the number of significant p-values at level $\alpha$, which implicitly assumes that the pdf of the p-values is piecewise constant.  Rather, it is the number of significant p-values $N_\alpha(\mathcal{S})$ that is assumed to be Gaussian, as a large sample Gaussian approximation to the Binomial distribution for $N_\alpha(\mathcal{S})$ assumed by BJ. Assuming that the number of p-values $N(\mathcal{S})$ is large, then by the Central Limit Theorem, the number of significant p-values at level $\alpha$, $N_\alpha(\mathcal{S}) \sim \texttt{Binomial}(N(\mathcal{S}), \alpha)$, converges in distribution to $\texttt{Gaussian}(\alpha N(\mathcal{S}),\alpha(1-\alpha) N(\mathcal{S}))$, and the HC statistic is the z-score of $N_\alpha(\mathcal{S})$ given this Gaussian distribution, which is similar to a Wald test.  Thus HC is a large-sample Gaussian approximation to BJ, regardless of whether the individual p-values are Gaussian.

\subsection{Differences between NPSS and Gaussian scan}
\label{app:npss-differences}

As we note in the main paper, two recent papers~\cite{reyna2021, chitra2021} investigate miscalibration of scan statistics in the Gaussian setting, demonstrating that the Gaussian scan statistic tends to identify subgraphs that are much larger than the true anomalous subgraph, and presenting an approach (based on Gaussian mixture modeling) that can reduce this bias.  
In this subsection we explain how our nonparametric scan statistic setting is fundamentally different than the Gaussian setting, resulting in a different source of bias (miscalibration of the parameter $\alpha$) and thus motivating a different approach to correcting this bias, i.e., recalibration of $\alpha$ using $\alpha^\prime(N,\alpha)$.

First, we note that the typical use of the Gaussian scan, assuming that the raw data follows a Gaussian distribution and computing the likelihood ratio statistic based on this assumption, differs from the nonparametric scan setting where the raw data is converted to p-values that (because of the assumption of exchangeability of current and historical observations) will be uniformly distributed on [0,1] under the null hypothesis.  Nonparametric scans do not rely on strong distributional assumptions (like Gaussianity) of the raw data, but rather assume that sufficient reference data (e.g., historical data) are available to convert the raw data to empirical p-values (by ranking it against the reference data and normalizing) as described in Appendix~\ref{app:npss-pvals} above.

However, the Gaussian scan approach of~\citet{reyna2021} and~\citet{chitra2021} differs from this typical use in that the raw data are first converted to p-values $p_i$ by ranking and then converted to Gaussian z-scores $z_i$ by the Gaussian probability integral transform, $z_i = \texttt{CDF}^{-1}(1-p_i)$, where $\texttt{CDF}(\cdot)$ assumes the standard normal distribution.  The null hypothesis $\mathcal{H}_0$ is that $z_i \sim \texttt{Gaussian}(0,1)$ for all vertices $v_i$, while the alternative hypothesis $\mathcal{H}_1(\mathcal{S})$ is that $z_i \sim \texttt{Gaussian}(\mu,1)$ for $v_i \in \mathcal{S}$ and $z_i \sim \texttt{Gaussian}(0,1)$ for $v_i \in \mathcal{V} \setminus \mathcal{S}$.  Converting back to p-values, we can write equivalently that $p_i \sim \texttt{Uniform}[0,1]$ under $\mathcal{H}_0$, as in the nonparametric scan, and $p_i \sim 1-\texttt{CDF}(z_i)$ where $z_i \sim \texttt{Gaussian}(\mu,1)$ for $v_i \in \mathcal{S}$ under the alternative hypothesis $\mathcal{H}_1(\mathcal{S})$, where the parameter $\mu$ is fit by maximum likelihood estimation.  

While this framing of the Gaussian scan leads to identical null hypotheses, with p-values uniformly distributed on [0,1], the NPSS alternative hypothesis $\mathcal{H}_1(\mathcal{S})$ is fundamentally different from the Gaussian setting in two ways.  First, as derived in Appendix~\ref{app:npss-assumptions} above, the nonparametric scan fits two parameters $\alpha$ (significance level of the subgraph) and $\beta$ (fraction of significant nodes in the subgraph) by maximum likelihood estimation, while the Gaussian scan fits only the parameter $\mu$.  The additional parameter $\alpha$ is critical to the NPSS setting for two reasons: (1) maximizing over a range of significance levels $\alpha$ gives the nonparametric scan high power to detect compact signals (a small number of highly significant p-values), dispersed signals (a large number of slightly significant p-values), or anything in between; and (2) as we show, miscalibration of the estimated proportion $N_\alpha/N$ across different $\alpha$ values leads to an incorrect (overly large) choice of $\alpha$, obscuring the true signal, but using the corrected $\alpha^\prime(N,\alpha) = \mathbb{E}[\max_{\mathcal{S} \in \mathbb{M}: |\mathcal{S}| = N} N_\alpha(\mathcal{S}) / N]$ in place of the uncorrected $\alpha = \mathbb{E}[N_\alpha(\mathcal{S}) / N]$ solves this problem.  The previous approaches for calibrating the Gaussian scan cannot solve this issue of miscalibration over $\alpha$, nor is a Gaussian mixture modeling approach appropriate when p-values do not follow a transformed Gaussian under the null. On the other hand, as our experimental results show, our new approach to calibrating NPSSs is effective regardless of whether the signal is a (transformed) Gaussian or piecewise constant p-values.

\begin{figure}[!ht]
\begin{subfigure}{0.23\textwidth}
         \centering
          \includegraphics[width=1.\textwidth]{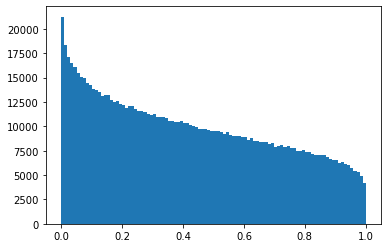}
\end{subfigure}
\begin{subfigure}{0.23\textwidth}
          \centering
          \includegraphics[width=1.\textwidth]{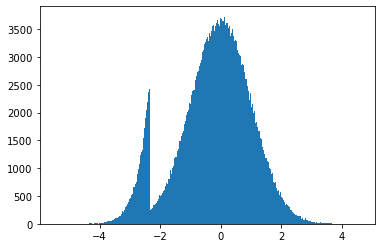}
\end{subfigure}
\caption{Examples of the difference in signal shape between piecewise constant and transformed Gaussian p-values. Left panel: histogram of p-values corresponding to a transformed Gaussian signal ($\mu=0.3$). Right panel: histogram of z-scores corresponding to a piecewise-constant p-value signal.}
\label{fig:Gaussian_mismatch_example}
\end{figure}

A second fundamental difference is that, even for a given value of $\alpha$, the nonparametric scan assumes a qualitatively different signal shape, i.e., piecewise constant rather than transformed Gaussian p-values, under the alternative hypothesis $\mathcal{H}_1(\mathcal{S})$.  As shown in the left panel of Figure~\ref{fig:Gaussian_mismatch_example}, 
if we plot the histogram of p-values corresponding to a transformed mean-shifted Gaussian (with $\mu=0.3$ in our example), we see that they are decreasing on the entire interval [0,1], as opposed to the NPSS assumption of piecewise constant p-values.  Conversely, if we plot the histogram of z-scores corresponding to the alternative hypothesis for the BJ statistic (with 10\% of p-values significant at $\alpha=.01$ in our example), as in the right panel of Figure~\ref{fig:Gaussian_mismatch_example}, we see that the distribution is neither Gaussian nor a mixture of Gaussians, as the mixture component with low p-values is heavily left-skewed, with a sharp cutoff of the right tail at the z-score corresponding to $\alpha$. 

While a performance comparison of nonparametric and (transformed) Gaussian scans is beyond the scope of this paper, we note that the differing alternative hypotheses have important implications as to what types of signals can be detected.  One might expect the transformed Gaussian scan to have somewhat higher power if its modeling assumptions are correct and the signals are in fact Gaussian. However, certain signals (such as cases where the p-value distribution is symmetric around $p=0.5$) would not be detectable in the Gaussian setting, while the nonparametric scans can detect signals as long as there exist some $(\alpha,\beta)$, where $\beta > \alpha$, such that $\mbox{Pr}(p < \alpha) = \beta$.

In summary, the nonparametric scan statistics that we consider here differ fundamentally from the (transformed) Gaussian scans considered by~\citet{reyna2021} and~\citet{chitra2021}, both in their assumptions about the true signal (distribution of p-values under $\mathcal{H}_1$) and in their maximization over a critical parameter, the significance level $\alpha$.  These differences motivate both our new empirical studies to understand and quantify the miscalibration of $\alpha$ for the (uncalibrated) nonparametric scan statistics, as well as our new approach to correctly calibrate $\alpha$. 

\section{Additional Details of CNSS}
\label{app:cnss-details}

\subsection{Correctness of the calibration approach}
\label{app:correctness}

As we discuss in detail in the main paper, the primary source of miscalibration for NPSSs is the discrepancy between $\alpha$ and $\alpha^\prime(N,\alpha)$.  That is, for a given significance level $\alpha$, the expected proportion of significant nodes (at level $\alpha$) for a randomly selected subgraph under $\mathcal{H}_0$ is equal to $\alpha$. However, for a non-random subgraph that is selected by maximizing the number of significant nodes, $\arg\max_{\mathcal{S} \in \mathbb{M}: |\mathcal{S}| = N} N_\alpha(\mathcal{S})$, the expected proportion of significant nodes $\frac{N_\alpha(\mathcal{S})}{N(\mathcal{S})}$ under $\mathcal{H}_0$, called $\alpha^\prime(N,\alpha)$, is much greater than $\alpha$, as shown in Figure~\ref{fig:erdos-renyi-simulations}.  Previous, uncalibrated NPSS approaches do not account for this discrepancy between $\alpha^\prime$ and $\alpha$, causing them to incorrectly detect large, high-scoring subgraphs even when no signal is present, or equivalently, these incorrectly detected subgraphs will obscure a true signal with lower score. This motivates our development of \emph{calibrated} NPSS score functions that compare the observed proportion of significant nodes $\frac{N_{\alpha}(\mathcal{S})}{N(\mathcal{S})}$ to $\alpha^\prime(N,\alpha)$ rather than $\alpha$.  Moreover, we observe that the amount of discrepancy between $\alpha^\prime$ and $\alpha$ varies not only with $N$ and $\alpha$, but also with the graph structure (including its size and sparsity), thus motivating our decision to empirically calibrate $\alpha^\prime(N,\alpha)$ based on randomization testing.

In the discussion below, we consider the correctness of using $\alpha^\prime (N, \alpha) = \mathbb{E}\left[\max_{\mathcal{S} \in \mathbb{M}: |\mathcal{S}| = N} \frac{N_\alpha(\mathcal{S})}{N}\right]$ in place of the expectation $\alpha = \mathbb{E}\left[\frac{N_\alpha(\mathcal{S})}{N(\mathcal{S})}\right]$ in NPSS.  For example, for the Berk-Jones score function $\Phi_{BJ}(\alpha,N_\alpha(\mathcal{S}),N(\mathcal{S})) = N(\mathcal{S}) \: \texttt{KL}\left(\frac{N_\alpha(\mathcal{S})}{N(\mathcal{S})},\alpha\right)$, we instead define the calibrated Berk-Jones score function, $\Phi_{CBJ}(\alpha,N_\alpha(\mathcal{S}),N(\mathcal{S})) = N(\mathcal{S}) \: \texttt{KL}\left(\frac{N_\alpha(\mathcal{S})}{N(\mathcal{S})},\alpha^\prime(N(\mathcal{S}),\alpha)\right)$.  

We now present a more formal argument for the correctness of calibration, followed by a numeric example showing how the uncalibrated Berk-Jones statistic fails and why our calibration approach corrects this issue.  

First, we note that the objective function $\max_{\mathcal{S} \in \mathbb{M}} F(\mathcal{S})$ can be written as $\max_{N\in \{1\ldots|\mathcal{V}|\},\alpha \le \alpha_{\max}} g(N,\alpha)$, where 
\[ g(N,\alpha) = \max_{\mathcal{S} \in \mathbb{M}: |\mathcal{S}|=N} \Phi(\alpha,N_\alpha(\mathcal{S}),N) \]
is the maximum score over all subgraphs of size $N$ at significance level $\alpha$.  Since $\Phi(\alpha,N_\alpha,N)$ is an increasing function of $N_\alpha$, we can rewrite $g(N,\alpha) =  \Phi(\alpha,\max_{\mathcal{S} \in \mathbb{M}: |\mathcal{S}|=N} N_\alpha(\mathcal{S}),N)$.  Now, we would like $g(N,\alpha)$ to satisfy two intuitive properties if the null hypothesis $\mathcal{H}_0$ is true:
(i) $g(N,\alpha)$ should be small for all $N$ and $\alpha$, and (ii) $g(N,\alpha)$ should be similar in magnitude across all values of $N$ and $\alpha$.  The first property makes it possible to differentiate $\mathcal{H}_0$ from $\mathcal{H}_1(\mathcal{S})$, since large values of $g(N,\alpha)$ would obscure the true signal $\mathcal{S}$. The second property provides similar detection power across different subgraph sizes $N$ and significance levels $\alpha$.  

However, the uncalibrated Berk-Jones statistic does not satisfy either of these properties. For brevity, let $h(N,\alpha)$ denote the maximum proportion of significant p-values in a subgraph of size $N$, 
\[ h(N,\alpha) =  \max_{\mathcal{S} \in \mathbb{M}: |\mathcal{S}|=N} \frac{N_\alpha(\mathcal{S})}{N}.
\]
Then under $\mathcal{H}_0$, we have $g(N,\alpha) = N \: \texttt{KL}(h(N,\alpha),\alpha)$.  Replacing $h(N,\alpha)$ with its expectation under the null, $\alpha^\prime(N,\alpha)$, we obtain $g(N,\alpha)
\approx N \: \texttt{KL}(\alpha^\prime(N,\alpha),\alpha)$. As shown in Figure~\ref{fig:erdos-renyi-simulations} and the numeric example below, the discrepancy between $\alpha^\prime$ and $\alpha$, and the resulting values of $g(N,\alpha)$, are large, obscuring the true signal when one is present. Moreover, $g(N,\alpha)$ is much larger for high values of the significance level $\alpha$ and subgraph size $N$, leading to the detection of overly large, incorrect subgraphs. 

On the other hand, for the calibrated Berk-Jones statistic, we have: \[g(N,\alpha) = N \: \texttt{KL}\left(\max_{\mathcal{S} \in \mathbb{M}: |\mathcal{S}|=N} \frac{N_\alpha(\mathcal{S})}{N},\alpha^\prime(N,\alpha)\right) \]
\[ = 
N \: \texttt{KL}(h(N,\alpha),
\mathbb{E}[h(N,\alpha)]),\]
where the expectation assumes that $\mathcal{H}_0$ is true, and is computed by averaging $h(N,\alpha)=\max_{\mathcal{S} \in \mathbb{M}: |\mathcal{S}|=N} \frac{N_\alpha(\mathcal{S})}{N}$ over a large number of instantiations of the graph $\mathbb{G}$ with p-values drawn from the null distribution, $p_i \sim \texttt{Uniform}[0,1] \: \forall v_i \in \mathcal{V}$.  Thus if the null hypothesis $\mathcal{H}_0$ is true, the value of 
 $h(N,\alpha)$ for the real data is drawn from the same distribution as the null data, and then compared (using one-sided \texttt{KL} divergence) to the expectation of that distribution $\alpha^\prime(N,\alpha)$ in order to compute the score $g(N,\alpha)$.  From this, it is clear that the scores $g(N,\alpha)$ will be close to zero under $\mathcal{H}_0$, diverging from zero only when the value of  $h(N,\alpha)$ happens by chance to be greater than its expectation.

To more precisely quantify the impact of the variance of $h(N,\alpha)$ on the score $g(N,\alpha)$, assuming that the null hypothesis $\mathcal{H}_0$ is true, we use a second-order Taylor expansion of the \texttt{KL} divergence to obtain:
\[ \mathbb{E}[g(N,\alpha)] \approx \frac{N \mbox{Var}[h(N,\alpha)]}{2\alpha^\prime(N,\alpha)(1-\alpha^\prime(N,\alpha))}, \]
where the variance is taken over instantiations of the graph $\mathbb{G}$ with p-values drawn under the null distribution.  We observe empirically that, for large $N$, the variance of $h(N,\alpha)$ under the null is approximated well by $\frac{\alpha'(N,\alpha)}{N}$, giving us $\mathbb{E}[g(N,\alpha)] \approx \frac{1}{2} (1-\alpha^\prime(N,\alpha))^{-1}$, which is small and slowly decreases with $N$.  For small $N$, we observe empirically that $\mbox{Var}[h(N,\alpha)]$ is much smaller than $\frac{\alpha'(N,\alpha)}{N}$. For example, $\mbox{Var}[h(N,\alpha)]=0$ when $\alpha^\prime(N,\alpha)=1$. As a result, we observe scores $g(N,\alpha)$ that are close to zero (typically peaking in the low single digits) for all $N$ and $\alpha$, as illustrated in Figure~\ref{fig:wikivote_h0_instantiation_c}.  This observation has two important implications: first, since the null scores are much lower than for the uncalibrated BJ statistic, a true signal $\mathcal{H}_1(\mathcal{S})$ can be more easily detected and the true subgraph $\mathcal{S}$ more accurately identified.  Second, we no longer observe the biases toward large $N$ and $\alpha$ which led the uncalibrated BJ statistic to detect large, incorrect subgraphs.  

Thus the calibrated BJ statistic corrects for the multiplicity of subgraphs of a given size $N$, by comparing the observed maximum value of $N_\alpha(\mathcal{S})$ across all size-$N$ subgraphs to the expectation of that maximum value under the null.  In doing so, it calibrates the statistic across all values of $N$ and $\alpha$, giving similar values of $g(N,\alpha)$ under $\mathcal{H}_0$. However, we note that calibration alone does not correct for the multiple testing resulting from maximization of $g(N,\alpha)$ over all $N\in\{1 \ldots |\mathcal{V}|\}$ and $\alpha \le \alpha_{\max}$. We must still apply the standard randomization testing approach described in Appendix~\ref{app:npss-background} to compare the maximum calibrated score $\max_{\mathcal{S} \in \mathbb{M}} F(\mathcal{S})$ to the distribution of the maximum calibrated score under $\mathcal{H}_0$, thus bounding the overall false positive rate.

Finally, we note that correcting the miscalibration of NPSSs does not require an exact solution to the optimization problem, i.e., maximizing $N_\alpha(\mathcal{S})$ over subgraphs of size $N$.  Under $\mathcal{H}_0$, the current and null data are exchangeable, so for a given $\alpha$, $\widetilde h(N,\alpha) \approx \max_{\mathcal{S} \in \mathbb{M}: |\mathcal{S}|=N}
\frac{N_{\alpha}(\mathcal{S})}{N}$ will be distributed identically for the current and null data, as long as the \emph{same} approximation algorithm is used for the current and null data.  That is, we compare the observed value of the (approximate) maximum proportion of significant p-values, $\widetilde h(N,\alpha)$, to the expectation of $\widetilde h(N,\alpha)$ under $\mathcal{H}_0$, 
$\widetilde \alpha^\prime(N,\alpha)$, so $g(N,\alpha) \approx N \: \texttt{KL}(\widetilde h(N,\alpha),\widetilde \alpha^\prime(N,\alpha))$ remains well-calibrated. The downside of a using an approximate rather than exact search is some potential loss of detection power and accuracy under $\mathcal{H}_1(\mathcal{S})$, but our experiments demonstrate that the approximate algorithm achieves high detection performance across five large real-world datasets, outperforming the uncalibrated scan and baseline methods by a wide margin.

\begin{figure*}[t]
    \centering
    \begin{subfigure}[t]{0.3\textwidth}
        \centering
        \includegraphics[width=\linewidth]{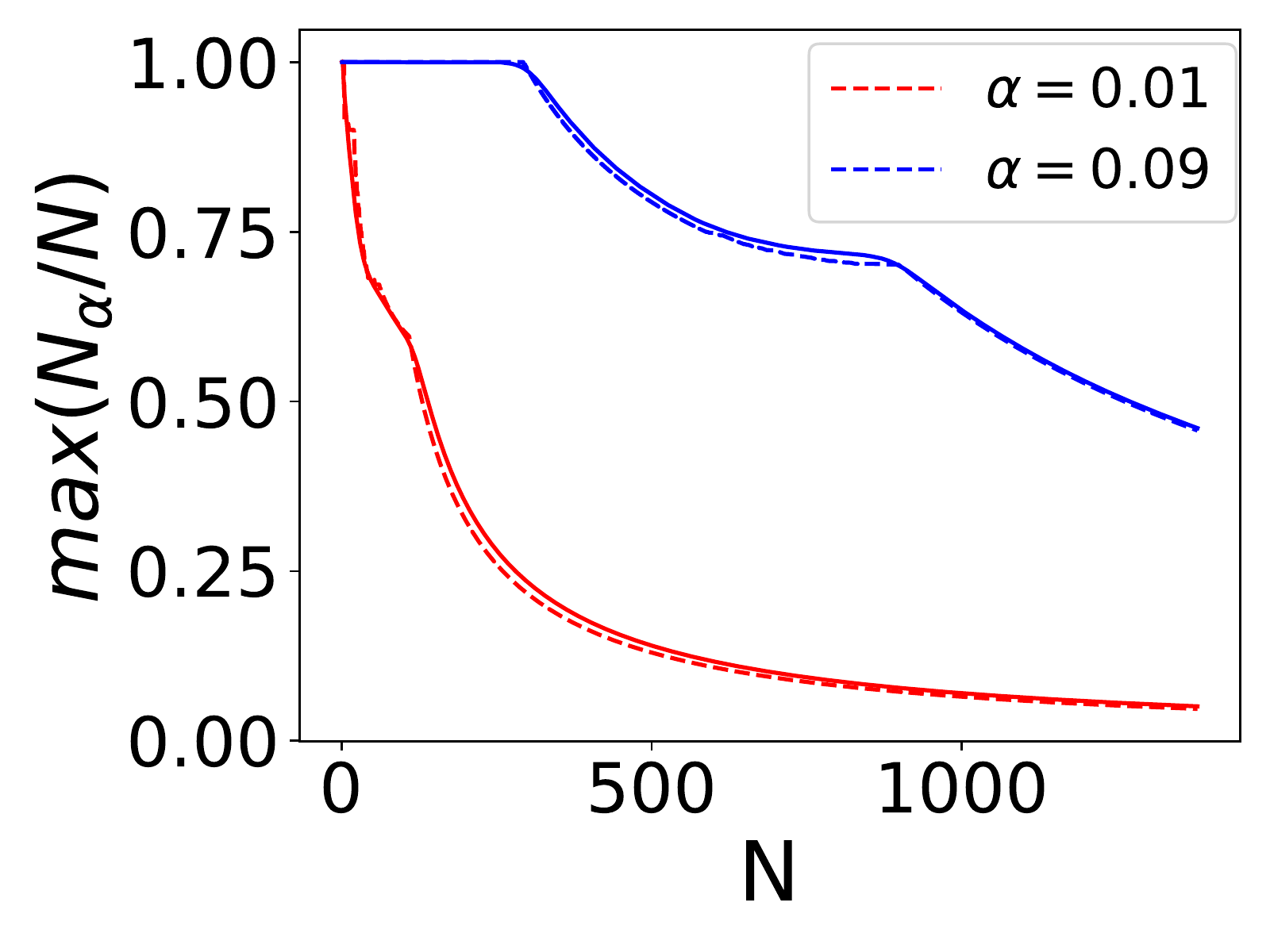}
        \vspace{-6mm}
        \caption{}
    \end{subfigure}
    \begin{subfigure}[t]{0.3\textwidth}
        \centering
        \includegraphics[width=\linewidth]{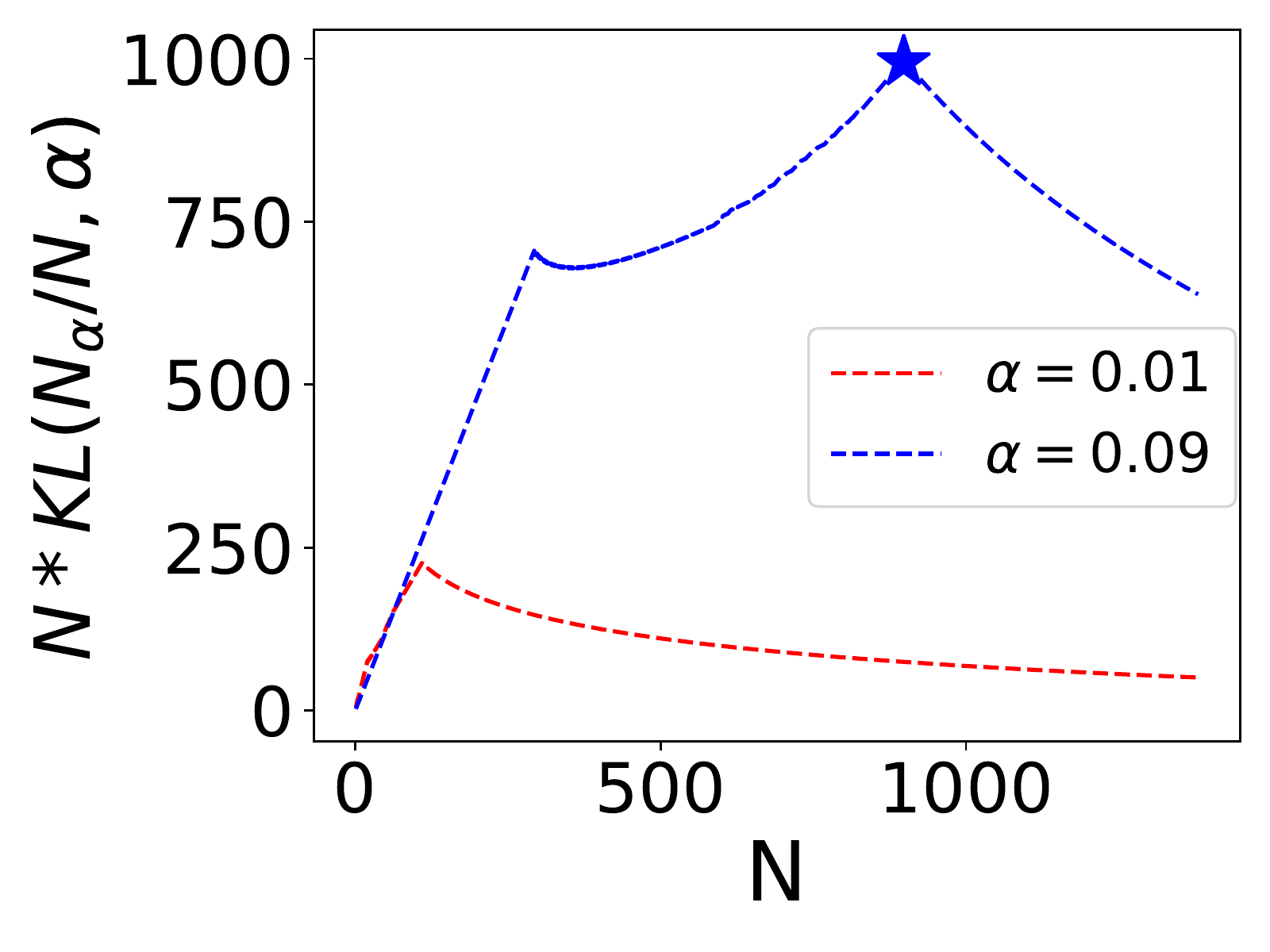}
        \vspace{-6mm}
        \caption{}
    \end{subfigure}
    \begin{subfigure}[t]{0.3\textwidth}
        \centering
        \includegraphics[width=\linewidth]{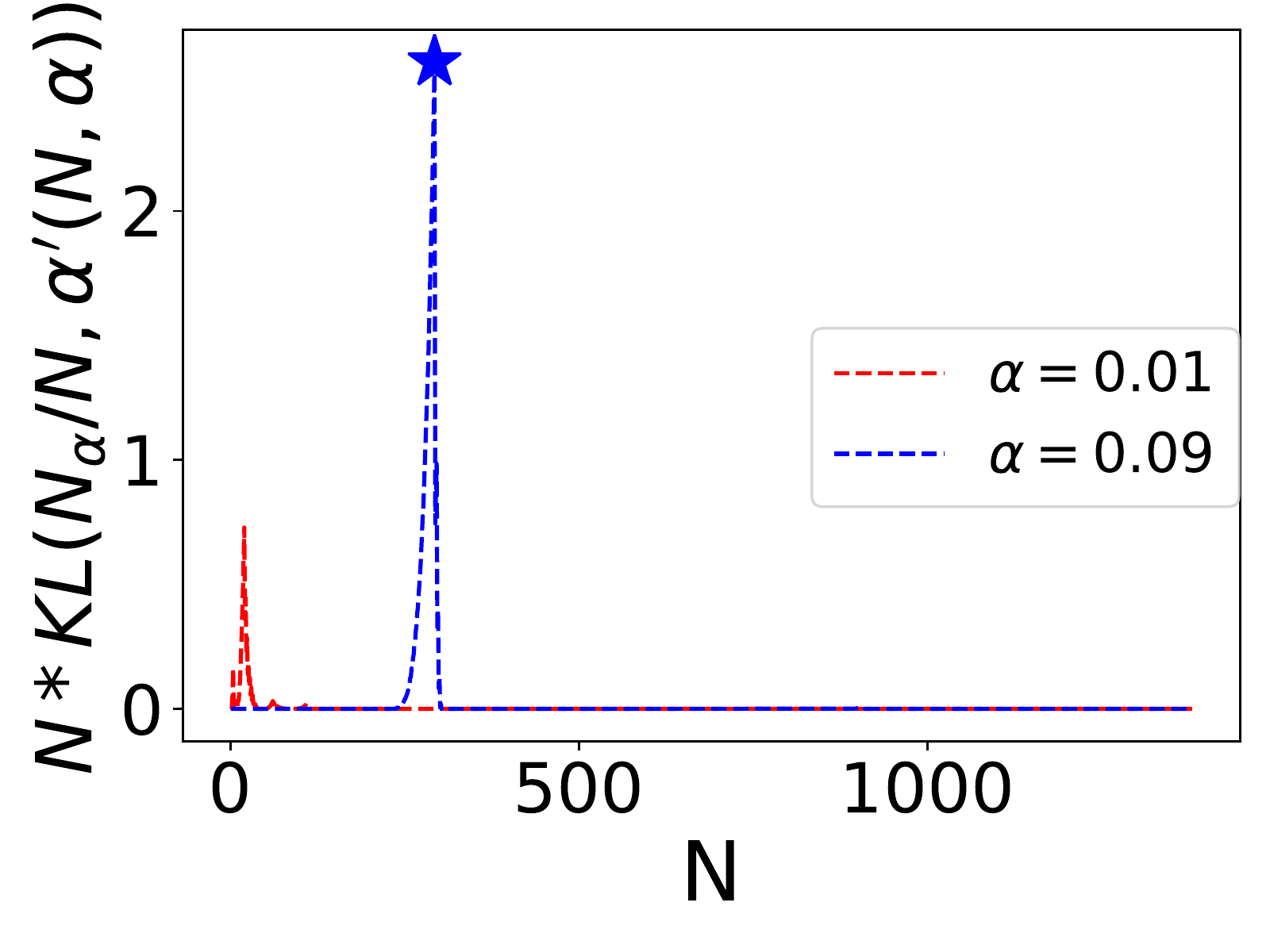}
        \vspace{-6mm}
        \caption{}
        \label{fig:wikivote_h0_instantiation_c}
    \end{subfigure}
    \caption{Example of calibrated versus uncalibrated score computation for the \texttt{WikiVote} graph with p-values generated under $\mathcal{H}_0$. Red lines: $\alpha = .01$. Blue lines: $\alpha=.09$. (a) Observed maximum value of $N_\alpha/N$ (dashed line) and expected maximum value $\alpha^\prime(N,\alpha)$ (solid line). (b) BJ score of the uncalibrated scan statistic, comparing the observed $N_\alpha/N$ to the expected value $\alpha$.
    (c) BJ score of the calibrated scan statistic, comparing the observed $N_\alpha/N$ to the expected maximum value $\alpha^\prime(N,\alpha)$.}
    \label{fig:wikivote_h0_instantiation}
\end{figure*}

\begin{figure*}[t]
    \centering
    \begin{subfigure}[t]{0.3\textwidth}
        \centering
        \includegraphics[width=\linewidth]{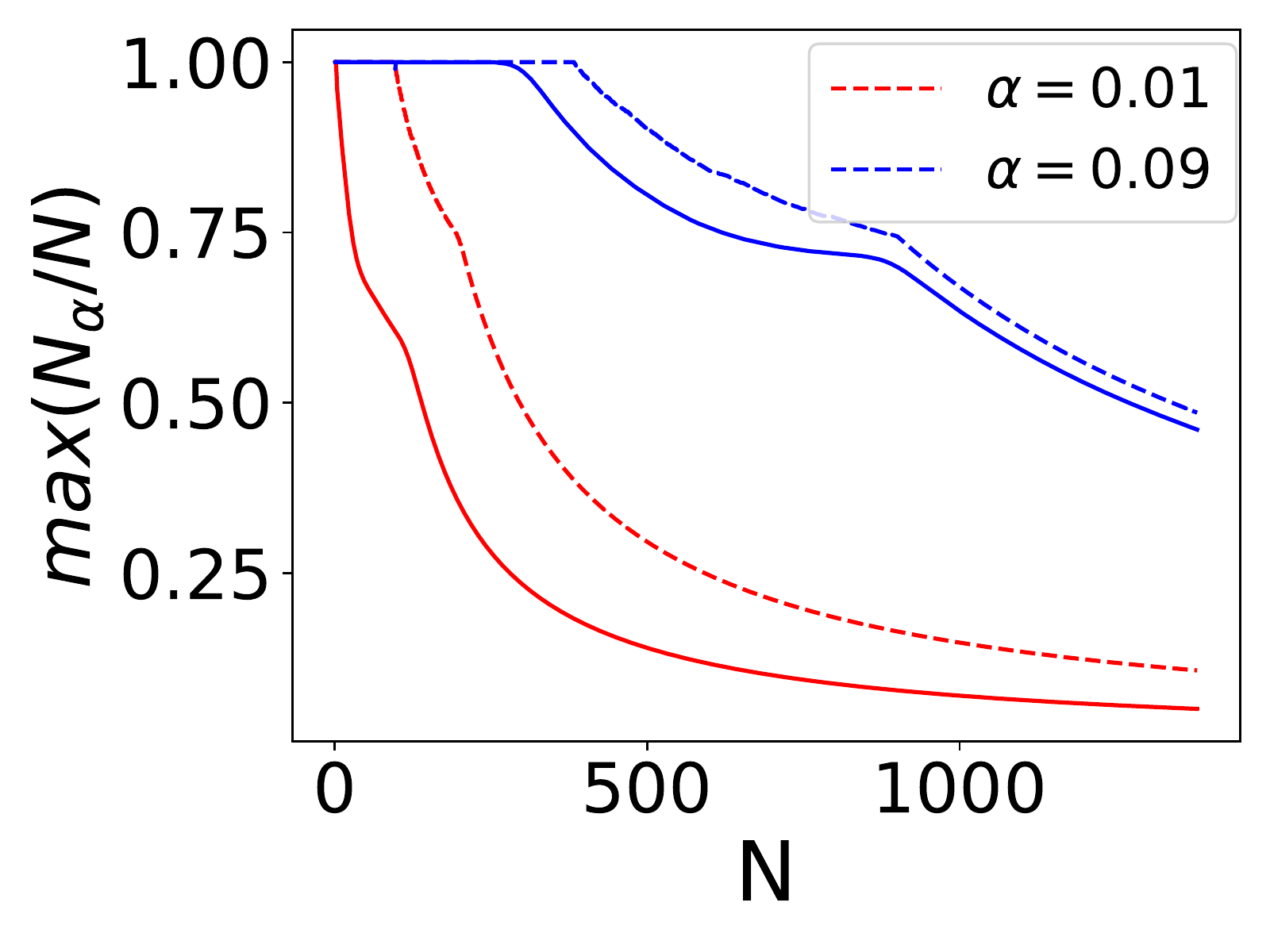}
        \vspace{-6mm}
        \caption{}
    \end{subfigure}
    \begin{subfigure}[t]{0.3\textwidth}
        \centering
        \includegraphics[width=\linewidth]{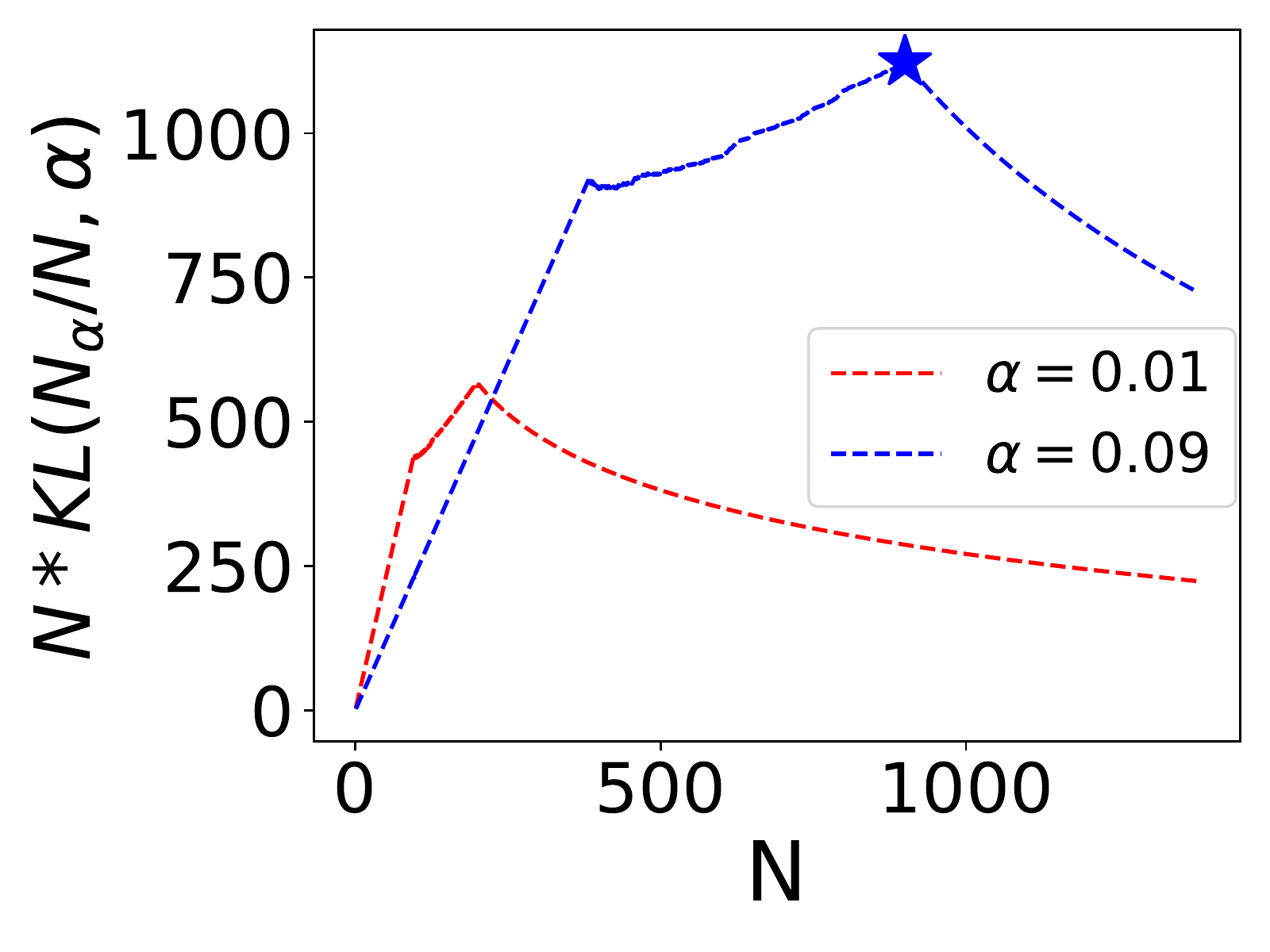}
        \vspace{-6mm}
        \caption{}
    \end{subfigure}
    \begin{subfigure}[t]{0.3\textwidth}
        \centering
        \includegraphics[width=\linewidth]{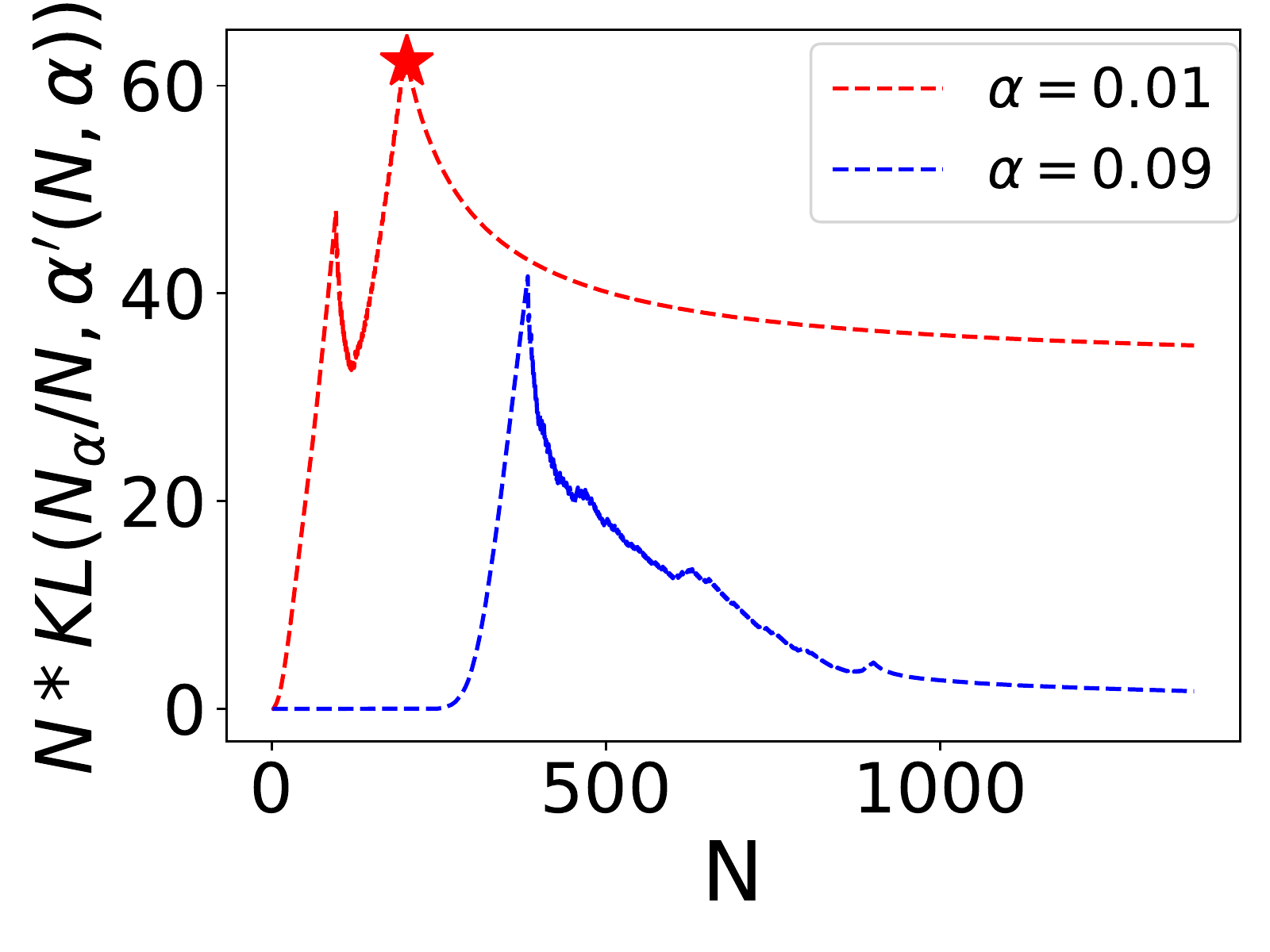}
        \vspace{-6mm}
        \caption{}
    \end{subfigure}
    \caption{Example of calibrated versus uncalibrated score computation for the \texttt{WikiVote} graph with piecewise constant p-values generated under $\mathcal{H}_1(\mathcal{S})$. Red lines: $\alpha = .01$. Blue lines: $\alpha=.09$. (a) Observed maximum value of $N_\alpha/N$ (dashed line) and expected maximum value $\alpha^\prime(N,\alpha)$ (solid line). (b) BJ score of the uncalibrated scan statistic, comparing the observed $N_\alpha/N$ to the expected value $\alpha$.
    (c) BJ score of the calibrated scan statistic, comparing the observed $N_\alpha/N$ to the expected maximum value $\alpha^\prime(N,\alpha)$.}
    \label{fig:wikivote_h1_instantiation}
\end{figure*}

\subsubsection{Motivating Example.} As a concrete example of how the uncalibrated BJ statistic fails, and how calibration solves this problem, let us consider a single instantiation of the \texttt{WikiVote} graph ($|\mathcal{V}|=7,066$) generated under $\mathcal{H}_1(\mathcal{S})$.  The true subgraph $\mathcal{S}$ was generated using a random walk on the graph structure, with $|\mathcal{S}|=100$ and a relatively strong signal injected, such that 75\% of the p-values in $\mathcal{S}$ are significant at $\alpha=.01$.  This corresponds to $q=75$ for the piecewise constant p-value simulations in Appendix~\ref{app:piecewise} below.  In this case, 
the uncalibrated BJ score of the true subgraph at the significance threshold $\alpha=0.01$ can be computed as 
$N(\mathcal{S}) \: \texttt{KL}\left(\frac{N_\alpha(\mathcal{S})}{N(\mathcal{S})}, \alpha\right) = 100 \: \texttt{KL}(0.75, 0.01) \approx 289$.
Thus the true subset $\mathcal{S}$ (into which the signal has been injected) has a high BJ score, corresponding to the true significance threshold $\alpha = 0.01$.  

However, another, much larger subset has an even higher score, corresponding to a high significance threshold $\alpha$.  More precisely, at the highest $\alpha$ value considered ($\alpha = .09$), the uncalibrated scan picks out a subgraph containing nearly all of the approximately 700 significant p-values in the graph, plus some additional nodes needed to connect them, resulting in a 900 node subgraph with $74.4\%$ significant p-values.  If this observed proportion of $74.4\%$ is compared to the $9\%$ of significant p-values that one would expect to see in a random subgraph at $\alpha=.09$, it would have an extremely high BJ score of
$N(\mathcal{S}) \: \texttt{KL}\left(\frac{N_\alpha(\mathcal{S})}{N(\mathcal{S})}, \alpha\right) = 900 \: \texttt{KL}(0.744, 0.09) \approx 1100$, and as a result, the uncalibrated scan incorrectly identifies this subgraph instead of the true subgraph.  However, under $\mathcal{H}_0$, for a graph of this size and structure, one would \emph{expect} to see some subgraph of this size $N = 900$ with about $\alpha'(900,0.09) = 69.9\%$ significant p-values. Then the calibrated score of that incorrect subgraph (comparing the observed $74.4\%$ to the expected $69.9\%$) would be close to zero: $N(\mathcal{S}) \: \texttt{KL}\left(\frac{N_\alpha(\mathcal{S})}{N(\mathcal{S})}, \alpha'(N(\mathcal{S}), \alpha)\right) = 900 \: \texttt{KL}(0.744, 0.699) = 4.47$, allowing a subgraph closer to the true subgraph (with $N=202$ and 73.3\% of p-values significant at $\alpha=.01$, as compared to $\alpha'(202,0.01) =0.347$, for a score of 62.26) to be found instead.  Comparing this subgraph to the true subgraph using the metrics in Appendix~\ref{app:metrics} below, we compute precision = 0.72, recall = 0.69, and 
F-score = 0.70, while the subgraph found by the uncalibrated scan had slightly higher recall (0.75) but much lower precision (0.08) and F-score (0.15). This is why calibration using $\alpha'(N,\alpha)$ in place of $\alpha$ improves detection performance: it prevents the scan from detecting large, incorrect subgraphs whose log-likelihood ratio score exceeds the score of the true subgraph.

For additional clarity, we show examples of score computation for the calibrated and uncalibrated BJ statistics in Figure~\ref{fig:wikivote_h0_instantiation} and~\ref{fig:wikivote_h1_instantiation}.  Figure~\ref{fig:wikivote_h0_instantiation} is for an instantiation of the \texttt{WikiVote} graph with no injected signal, such that all p-values are uniform on [0,1], while Figure~\ref{fig:wikivote_h1_instantiation} is for the instantiation of the \texttt{WikiVote} graph with injected piecewise-constant signal discussed above.  For illustration, we consider only $\alpha=.01$ (red lines) and $\alpha=.09$ (blue lines).  In each graph, the left panel shows the observed value of $h(N,\alpha) = \max_{\mathcal{S} \in \mathbb{M}: |\mathcal{S}| = N} \frac{N_\alpha(\mathcal{S})}{N}$
as a function of $N$, shown as a dashed line, as compared to the expected value $\alpha^\prime(N,\alpha) =\mathbb{E}\left[\max_{\mathcal{S} \in \mathbb{M}: |\mathcal{S}| = N} \frac{N_\alpha(\mathcal{S})}{N}\right]$ under the null hypothesis $\mathcal{H}_0$, shown as a solid line.  As expected, when there is no signal (Figure~\ref{fig:wikivote_h0_instantiation}), the observed maximum matches $\alpha^\prime$ almost exactly (very slight differences are visible when zooming in on the graph), demonstrating that $\alpha^\prime$ is correctly calibrated, and the resulting calibrated BJ score (right panel) is close to zero.  On the other hand, we note that the observed maximum is much larger than $\alpha$, and thus the resulting uncalibrated BJ score (center panel) is very high.  When the signal is present (Figure~\ref{fig:wikivote_h1_instantiation}), we see clear differences between the observed maximum and $\alpha^\prime(N,\alpha)$ (left panel), resulting in a high calibrated score (right panel) that is maximized at the true $\alpha$ value, $\alpha=.01$, for a subgraph that closely matches the true subgraph as described above.  On the other hand, for the uncalibrated scan, the score of the true subgraph at the true $\alpha$ value is exceeded by the score of the much larger subgraph described above (center panel), at the incorrect $\alpha=.09$.

In summary, this example, along with the discussion above, demonstrates how miscalibration can harm the detection performance of uncalibrated NPSSs, resulting in a very large, incorrect detected subgraph, at an incorrect $\alpha$ value, that swamps the true signal. However, when the uncalibrated $\alpha$ value is replaced with the calibrated $\alpha^\prime(N,\alpha)$, the miscalibration issue is solved, leading to substantially 
improved detection performance for the calibrated scan.   

\subsection{Proofs of Theorems 1-2}
\label{app:proofs}

\FirstThm*
\begin{proof}
We consider the size-$N$ subgraph consisting of all $c$ nodes from the subgraph and $N-c$ nodes from the neighbors.  There are two cases.  For $c \leq N \leq c+k_c\alpha$, we choose only significant nodes from the neighbors, so the expected number of significant nodes for a given $N$ is  $c\alpha + (N-c)$.  For $c + k_c\alpha \leq N \leq c+k_c$, all significant neighbors are included, and thus the expected number of significant nodes for the given $N$ is $c\alpha + k_c\alpha$. Thus we have $\mathbb{E}[N_{\alpha}] = c\alpha + \min(k_c\alpha, N-c)$.
\end{proof}

\SecondThm*
\begin{proof}
To find a lower bound on $\alpha^\prime$ given $N$ and $\alpha$, let $Z = \frac{|\mathcal{V}|}{N}(1-\exp(-\langle k \rangle\frac{N}{|\mathcal{V}|}))$. We note that $Z \le \langle k \rangle$, since $1-\exp(-x) \le x$ for all non-negative $x$.  Now there are two cases.  

\textbf{Case 1:} If $\alpha \ge \frac{1}{Z}$, then the fraction of significant nodes $\alpha$ is also greater than $\frac{1}{\langle k \rangle}$. Thus there exists w.h.p.~a giant cluster of size $|\mathcal{V}| P_\infty$, where $P_\infty = \alpha(1-\exp(-\langle k \rangle P_\infty)) \ge  \frac{1}{Z}(1-\exp(-\langle k \rangle P_\infty))$, consisting entirely of significant nodes.  
Now we can see that $P_\infty \ge \frac{N}{|\mathcal{V}|}$, 
since $\frac{N}{|\mathcal{V}|}  =\frac{1}{Z}(1-\exp(-\langle k \rangle \frac{N}{|\mathcal{V}|}))$,
and thus there exists a cluster of size $N$ consisting entirely of significant nodes, i.e., $\alpha^\prime = 1$ for the given $N$ and $\alpha$.

\textbf{Case 2:} If $\alpha < \frac{1}{Z}$, then mark all of the significant nodes and fraction $\frac{1/Z-\alpha}{1-\alpha}$ of non-significant nodes, so that the proportion of marked nodes is $1/Z$, and the probability that a marked node is significant is $Z\alpha$.  Since the fraction of marked nodes $\frac{1}{Z} > \frac{1}{\langle k \rangle}$, there exists w.h.p.~a giant cluster of size $|\mathcal{V}| P_\infty$, where $P_\infty = \frac{1}{Z}(1-\exp(-\langle k\rangle P_\infty)) = \frac{N}{|\mathcal{V}|}$, consisting entirely of marked nodes. Thus there exists a cluster of size $N$ such that the fraction of significant nodes in that cluster is $Z\alpha = \frac{\alpha|\mathcal{V}|}{N}(1-\exp(-\langle k \rangle\frac{N}{|\mathcal{V}|}))$, i.e., $\alpha^\prime \ge \frac{\alpha|\mathcal{V}|}{N}(1-\exp(-\langle k \rangle\frac{N}{|\mathcal{V}|}))$ for the given $N$ and $\alpha$.
Combining these two cases, we obtain the lower bound on $\alpha^\prime$, for all $\alpha$ and $N$.
\end{proof}

\subsection{Algorithm 1}
\label{appendix:CNSS}

As described in Section~\ref{sec:cnss}, Algorithm~\ref{alg:cnss} searches for the highest-scoring connected subgraph, $\arg\max_\mathcal{S} F(\mathcal{S}) = \arg\max_\mathcal{S} \Phi_{\texttt{CBJ}}(\alpha,N_\alpha(\mathcal{S}),N(\mathcal{S}))$.  To do so, it steps over a range of significance thresholds $\alpha \in \mathcal{L}$, calling Algorithm~\ref{alg:merge} for each $\alpha$ value, and collecting the subgraph $\mathcal{S}$ with largest $N_\alpha(\mathcal{S})$ for each $N \in \{1, \cdots, |\mathcal{V}|\}$ and each $\alpha \in \mathcal{L}$.  These subgraphs are then scored with the calibrated Berk-Jones statistic $\Phi_{\texttt{CBJ}}$, and the subgraph with the highest score is returned.

\begin{algorithm}[!ht]
\caption{Anomalous Subgraph Detection via Calibrated Non-parametric Scan Statistics}
\label{alg:cnss}
\KwIn{Graph $\mathbb{G}=(\mathcal{V},\mathcal{E})$, $\mathbf{X}$ which contains all historical feature observations of each node, and a list of $\alpha$ denoted as $\mathcal{L}$, i.e., $\mathcal{L}=[0.001, \cdots, 0.009, 0.01, \cdots, 0.09]$;}
\KwOut{A set of nodes $\mathcal{S}\subseteq \mathcal{V}$ that form a connected subgraph of $\mathbb{G}$;}
Compute empirical p-values $\mathbf{p}$ for all nodes based on $\mathbf{X}$, following the approach described in~\citet{chen2014non}.\\
\For{$\alpha \in \mathcal{L}$}{
Apply Algorithm~\ref{alg:merge} on graph $\mathbb{G}$ with empirical p-values $\mathbf{p}$ and significance threshold $\alpha$, to collect a list of candidate subgraphs $\mathcal{S}$ and corresponding $N_{\alpha}(\mathcal{S})$ and $N(\mathcal{S})$.\\
\For{each collected triplet $(\mathcal{S}, N_{\alpha}(\mathcal{S}), N(\mathcal{S}))$}{
Compute the calibrated BJ scan statistic:
\begin{equation}
    \begin{split}
    &\quad \Phi_{\texttt{CBJ}}\left(\alpha, N_{\alpha}(\mathcal{S}), N(\mathcal{S})\right)\\
    &=N(\mathcal{S}) \times \texttt{KL}\left(\frac{N_{\alpha}(\mathcal{S})}{N(\mathcal{S})}, \alpha^{\prime}(N(\mathcal{S}), \alpha)\right)
    \end{split}
\end{equation}
where \begin{equation}
    \alpha^{\prime}(N, \alpha) = \frac{\mathbb{E}\left[\max _{\mathcal{S}:|\mathcal{S}|=N} N_{\alpha}(\mathcal{S})\right]}{N}
    \label{eq:alpha_prime}
\end{equation} 
}}
\Return $\mathcal{S}$ with the highest $\Phi_{\texttt{CBJ}} \left( \alpha, N_{\alpha}(\mathcal{S}), N(\mathcal{S})\right)$
\end{algorithm}

At line 1, we use the two-stage empirical calibration procedure described in~\cite{chen2014non} to convert the observed node features into a single p-value for each node based on the historical node features. Note that the computation of these p-values is not the focus of the present work; thus, in our simulated experiments on the five datasets, we simulate the p-values directly (as discussed in Section~\ref{sec:experiments}) rather than simulating the node features and computing p-values from them.  This is not only faster, but provides a more natural way to measure the strength of the injected signal.

At line 2, we iterate over a list of significance thresholds $\alpha$. In our experiments, we used significance thresholds $\alpha \in \{0.001, \cdots, 0.009, 0.01, \cdots, 0.09\}$.

At line 5, the algorithm assumes that we have pre-computed the $\alpha^\prime(N, \alpha)$ values for each $N$ and $\alpha$ under consideration.  To do so, there are two options.  First, we could use the randomization tests that apply Algorithm~\ref{alg:merge} on $K$ replicas of datasets under $\mathcal{H}_0$ to collect $K$ number of $N_{\alpha}$ values for each $N\in\{1, \cdots, |\mathcal{V}|\}$ and $\alpha\in\mathcal{L}$. Then we use the averaged $N_{\alpha}$ to compute the $\alpha'(N, \alpha)$.
We could also replace the randomization tests with the lower bounds of $\alpha^\prime(N, \alpha)$ as discussed in Section~\ref{sec:lower_bounds}, which we describe as the method \texttt{CNSS+LowerBound}.

In addition, we can also apply the core-tree decomposition and tree compression steps described in Section~\ref{sec:core_tree} and Appendix~\ref{app:core-tree} to obtain a compressed core $\mathbb{C}$ and corresponding p-values $\mathbf{p}$, and then apply the algorithm on $\mathbb{C}$ and $\mathbf{p}$ to speed up the search. The core-tree decomposition can be done just once for a given graph (between lines 1 and 2), while the tree compression is done separately for each $\alpha$ value (between lines 2 and 3). This method is called \texttt{CNSS+CoreTree}.

\subsection{Algorithm 2}\label{app:alg:merge}

As described in Section~\ref{sect:efficient-algorithm}, for a given graph $\mathbb{G}$ with corresponding empirical p-values $\mathbf{p}$ for each node, and a given significance threshold $\alpha$, Algorithm~\ref{alg:merge} searches for the most significant subgraph
 $\max_{\mathcal{S} \in \mathbb{M}, |\mathcal{S}|=N} N_{\alpha}(\mathcal{S})$ for each $N \in \{1, \cdots, |\mathcal{V}|\}$.  The algorithm is a greedy merging approach, which enables it to scale to large graph sizes but has the drawback of not guaranteeing that the subgraph with maximum $N_\alpha$ will be found.

\begin{algorithm}[!ht]
\caption{Greedily Merge Graph and Estimate $\max N_{\alpha}$ for $N\in \{1,\cdots,|\mathcal{V}|\}$.}
\label{alg:merge}
\KwIn{graph $\mathbb{G}=(\mathcal{V},\mathcal{E})$, $\mathbf{p}\in [0,1]^{|\mathcal{V}|}$, and $\alpha \in [0, 1]$.}
\KwOut{a set of $(\mathcal{S}, N_\alpha(\mathcal{S}),N(\mathcal{S}))$ triplets}
Merge all adjacent significant nodes, and get a list of merged nodes $\mathcal{S}$, denoted as $\mathcal{Z}$.\\
Sort the list $\mathcal{Z}$ by significance ratio $N_\alpha(\mathcal{S})/N(\mathcal{S})$, from highest to lowest, and breaking ties using larger size $N(\mathcal{S})$.\\
Initialize an empty list $\mathcal{P}$ to store $(\mathcal{S}, N_\alpha(\mathcal{S}), N(\mathcal{S}))$ triplets.\\
Get the merged node $\mathcal{S}^T$ with the highest significance ratio in $\mathcal{Z}$.\\
Add $(S^T, N_\alpha(\mathcal{S}^T),N(\mathcal{S}^T))$ of the merged node $\mathcal{S}^T$ to $\mathcal{P}$.\\
\While{length$(\mathcal{Z}) > 1$}{
Select the merged node $\mathcal{S}^T$ with the highest significance ratio in $\mathcal{Z}$ as root.\\
Select and apply the best merge option among the three options described below, to merge another node into $\mathcal{S}^T$.\\
\If{the best merge option is option 1}{
Add $(\mathcal{S}^T, N_\alpha(\mathcal{S}^T),N(\mathcal{S}^T))$ of the merged node $\mathcal{S}^T$ to $\mathcal{P}$.}}
Sort the list $\mathcal{P}$ by $N(\mathcal{S})$ from highest to lowest.\\
$\texttt{previous\_ratio} \leftarrow N_\alpha(\mathcal{S})/N(\mathcal{S})$ of the first element of $\mathcal{P}$.\\
\For{each successive element $(\mathcal{S}, N_\alpha(\mathcal{S}), N(\mathcal{S}))$ in list $\mathcal{P}$}{
$\texttt{current\_ratio}\leftarrow N_\alpha(\mathcal{S})/N(\mathcal{S})$.\\
\uIf{\texttt{current\_ratio} $>$ \texttt{previous\_ratio}}{$\texttt{previous\_ratio} \leftarrow \texttt{current\_ratio}$} 
\Else{Delete element $(\mathcal{S}, N_\alpha(\mathcal{S}), N(\mathcal{S}))$ from $\mathcal{P}$.}
}
\Return $\mathcal{P}$
\end{algorithm}

At line 1: After we merge adjacent significant nodes, the merged nodes have significance ratio equal to 1. Those merged nodes $\mathcal{S}$ could be viewed as candidate detected subgraphs, and will be merged further to create larger candidate subgraphs.

At line 2: We maintain the ordered list $\mathcal{Z}$
throughout the algorithm, where $\mathcal{Z}$ is sorted
by significance ratio first, highest to lowest. If two items in $\mathcal{Z}$ have the same significance ratio, we sort them based on the merged node size $N(\mathcal{S})$, highest to lowest.

At line 4: Initially, all nodes in the list have the same significance ratio of 1, so the merged node representing the largest subgraph of significant nodes will be the initial root $\mathcal{S}^T$.

At line 8: We evaluate three merge options on the root $\mathcal{S}^T$ as follows:
\begin{enumerate}
    \item If there exists a neighbor $n$ of $\mathcal{S}^T$ which contains some or all significant p-values, 
    merge $n$ into $\mathcal{S}^T$. 
    \item If there exists a non-significant neighbor $n$ of $\mathcal{S}^T$ which is also adjacent to at least one other significant node, merge $n$ into $\mathcal{S}^T$.
    \item Merge the highest-degree non-significant neighbor $n$ into $\mathcal{S}^T$. 
\end{enumerate}
Then we select and apply the option that leads to the highest significance ratio for the merged node among these three options. If they result in same significance ratio, we use the priority order $1 > 2 > 3$.

At line 9: We also collect the $(\mathcal{S}, N_\alpha(\mathcal{S}), N(\mathcal{S}))$ triplets of the merged nodes after each merge of option 1, which produces a list of $(\mathcal{S}, N_\alpha(\mathcal{S}), N(\mathcal{S}))$ triplets where $N(\mathcal{S})\in \{1, \cdots,|\mathcal{V}|\}$. Only the largest $N_\alpha$ value for each $N$ must be kept.  Note that we do not need to record the triplets formed after option 2 or option 3 since no significant p-values have been added to $\mathcal{S}^T$.

The purpose of lines 13 - 22
is to remove sub-optimal $(\mathcal{S}, N_\alpha(\mathcal{S}), N(\mathcal{S}))$ from $\mathcal{P}$, as a subgraph with smaller significance ratio $N_\alpha(\mathcal{S})/N(\mathcal{S})$ and smaller size $N(\mathcal{S})$ is guaranteed to have lower score.

We note that, as written, the algorithm only returns $(\mathcal{S}, N_\alpha(\mathcal{S}), N(\mathcal{S}))$ for a subset of $N$ values, $\mathcal{Q} \subseteq \{1,\ldots,|\mathcal{V}|\}$.  For the original graph, these are the only values of $N(\mathcal{S})$ for which the corresponding subgraph $\mathcal{S}$ could have optimal score $F(\mathcal{S})$.  For the replica graphs used in randomization testing, we apply linear interpolation to estimate the $N_{\alpha}$ for the remaining values of $N$, that is, $N \in \{1,\ldots, |\mathcal{V}|\} \setminus \mathcal{Q}$. Also, when we apply Algorithm~\ref{alg:merge} under the null hypothesis for randomization tests, we only record the $(N_\alpha(\mathcal{S}), N(\mathcal{S}))$ pairs, without recording $\mathcal{S}$, to save memory space.

\subsection{Core-Tree Decomposition}
\label{app:core-tree}
We adopt the implementation of core-tree decomposition from~\cite{maehara2014vldb}. 
After we decompose the whole graph into the core part $\mathbb{C}=(\mathcal{V}_C, \mathcal{E}_C)$ and tree part $\mathbb{T}=(\mathcal{V}_T, \mathcal{E}_T)$, we utilize an additional tree-compression step before applying Algorithm~\ref{alg:merge} on $\mathbb{C}$.

Tree compression merges the significant nodes in each single tree into an adjacent core node. We could conceivably optimize over each single tree to identify and merge the highest scoring sub-tree, but this would be time-consuming given the large number of trees resulting from the core-tree decomposition. Instead, we use breadth-first tree search to find and merge each significant sub-tree that is adjacent to the core. If a significant tree node is adjacent to multiple core nodes, then we merge this tree node into the most significant adjacent core node. In order to achieve this, we first sort the core nodes $\mathcal{V}_C$ by p-value (lowest to highest), and also remove all non-significant tree nodes from the graph (this may disconnect some of the significant tree nodes, which are removed from the graph as well).  We then iteratively select the most significant core node in the sorted $\mathcal{V}_C$ as the root of breadth-first tree search until all remaining tree nodes $\mathcal{V}_T$ are explored.

The time complexity of core-tree decomposition is $\mathcal{O}(d|\mathcal{V}| + |\mathcal{E}|)$ where $d$ denotes a user specified tree width, and the time complexity of tree compression is mainly on the sequence of breadth-first tree search, which has $\mathcal{O}(|\mathcal{V}_T| + |\mathcal{E}_T|)$, as well as the sorting of core nodes, which has $\mathcal{O}(|\mathcal{V}_C|\log |\mathcal{V}_C|)$.

The impact of core-tree decomposition and tree compression is to substantially reduce the effective graph size down to the size of the core, while keeping many of the significant p-values in the trees.  However, we note that there is a potential trade-off to this computationally efficient approach.  Any significant tree node that is not adjacent to a core node, and is not connected to the core by a path consisting only of other significant nodes, will not be merged into the core and therefore will not be part of the detected subgraph returned by \texttt{CNSS+CoreTree}.  In practice, however, we find that the loss of accuracy from this approach is minimal, while the speedup in runtime is substantial. 

\subsection{Time Complexity Analysis of CNSS}
\label{app:time_complexity}
Since CNSS Algorithm~\ref{alg:cnss} applies Algorithm~\ref{alg:merge} for each significance threshold $\alpha$ under consideration, we focus on the time complexity of Algorithm~\ref{alg:merge} first. 
Algorithm~\ref{alg:merge} includes three main steps: 
\begin{itemize}
    \item Step 1 (line 1): merge adjacent significant nodes in the graph;
    \item Step 2 (lines 2-12): greedily merge the whole graph according to the three options as described in Appendix~\ref{app:alg:merge} and record $(\mathcal{S}, N_\alpha(\mathcal{S}), N(\mathcal{S}))$ throughout the merge process;
    \item Step 3 (lines 13-22): filter out recorded $(\mathcal{S}, N_\alpha(\mathcal{S}), N(\mathcal{S}))$ with suboptimal $N_{\alpha}(\mathcal{S})$ for the corresponding $N(\mathcal{S})$.
\end{itemize}
For step 1, the algorithm must iterate over all edges and record all significant and non-significant neighbors for each node. Therefore, the time complexity of step 1 is $\mathcal{O}(|\mathcal{V}|+|\mathcal{E}|)$.  
The time complexity of step 2 is mainly based on sorting and searching for the best merge option, which iterates over the root node's neighbors. The sorting takes $\mathcal{O}(|\mathcal{V}|\log|\mathcal{V}|)$. For the first option of merge, we randomly merge one significant neighbor. For the second and third options, they need to iterate over all neighbors of current root node.
Hence, the time complexity would be $\mathcal{O}(k|\mathcal{V}|)$ where $k$ denotes the largest degree of a node in the network. The overall time complexity of step 2 is $\mathcal{O}(k|\mathcal{V}| + |\mathcal{V}|\log |\mathcal{V}|)$.
For step 3, the time complexity is mainly on the sorting, thus time complexity is $\mathcal{O}(|\mathcal{V}|\log|\mathcal{V}|)$. 
Therefore, the overall time complexity of the Algorithm~\ref{alg:merge} is $\mathcal{O}(k|\mathcal{V}| + |\mathcal{V}|\log |\mathcal{V}|)$.


For CNSS Algorithm~\ref{alg:cnss}, it applies Algorithm~\ref{alg:merge} for each significance threshold $\alpha\in\mathcal{L}$. For obtaining $\alpha'(N,\alpha)$ for a given $N$ and $\alpha$, it requires us to apply Algorithm~\ref{alg:merge} on $K$ replicas of datasets under the null hypothesis, unless the lower bound method is used in place of randomization testing. Therefore, the time complexity is $\mathcal{O}(K|\mathcal{L}|(k|\mathcal{V}| + |\mathcal{V}|\log |\mathcal{V}|))$.  As we note below, these $K$ replicas can be run in parallel, or alternatively, the lower bound approach avoids the need for randomization testing; in either case, the time needed to compute $\alpha'(N,\alpha)$ can be reduced by a factor of $K$.

\subsection{Implementation Details and Reproducibility}
\label{app:implementation}
We performed all experiments on Linux servers with the same hardware configuration (64-bit  machines  with  Intel(R)Xeon(R) CPU E5-2680 v4 @ 2.40GHz and 251GB RAM).
We implemented the CNSS in Python, and the code is accessible via the following link: \url{https://bit.ly/2QTvDzM}.

We set the random seed of each run under the alternative hypothesis and null hypothesis as the run index, and we used significance thresholds $\alpha \in [0.001, 0.002, \cdots, 0.009, 0.01, \cdots, 0.09]$.

The implementation details of each baseline method used in our evaluation are provided in Appendix~\ref{app:baselines}.

\section{Additional Experimental Details}
\label{app:exp}

\subsection{Datasets}
\label{app:datasets}
 Five real-world networks were obtained from the Stanford Network Analysis Project (SNAP)~\footnote{\url{https://snap.stanford.edu/data/}}$^,$\footnote{License information for these datasets is found at: \url{https://snap.stanford.edu/snap/license.html}}, including
 1) \texttt{Twitter}: a social network in where every node is a user and every edge represents a relation of follower and followee, where we do not consider the edge direction and thus treat the dataset as an undirected graph; 
 2) \texttt{DBLP}: a co-authorship network where every author is represented by a node, and two authors are connected if they publish at least one paper together;
 3) \texttt{SlashDot}: a technology-related news social network in where every node is an user and every link represents the friendship between two users; 
 4) \texttt{CondMat}: Arxiv COND-MAT (Condensed Matter Physics) collaboration network is from the e-print arXiv and covers scientific collaborations between authors on papers submitted to the Condensed Matter category. If an author $i$ co-authored a paper with author $j$, the graph contains a undirected edge between $i$ and $j$. If the paper is co-authored by $k$ authors, the co-authorship graph contains a completely connected subgraph on these $k$ nodes; and 
 5) \texttt{WikiVote}: the network contains all the Wikipedia voting data from the inception of Wikipedia until January 2008. Nodes in the network represent Wikipedia users and a directed edge from node $i$ to node $j$ represents that user $i$ voted on user $j$. In our experiments, we treat it as an undirected graph. 
 The descriptive statistics of all datasets are described in Table \ref{table:semi-synthetic}. 
 
 We have also tried different sizes of true subgraphs (up to 5\% of $|\mathcal{V}|$), and we only report one of them since they have consistent relative performance of methods as our reported results in the paper.
 
\begin{table*}[!ht]
\caption{Descriptive Statistics of Real-World Networks.}
\label{table:semi-synthetic}
\begin{adjustbox}{center}
\resizebox{0.9\linewidth}{!}{
\centering
\begin{tabular}{c|c|c|c|c|c|c}
\toprule
Dataset & \multicolumn{1}{c|}{Vertices $|\mathcal{V}|$} & \multicolumn{1}{c|}{Edges $|\mathcal{E}|$} & \multicolumn{1}{c|}{Density} & \multicolumn{1}{c|}{Core Vertices $|\mathcal{V}_C|$} &\multicolumn{1}{c|}{Core Density} &\multicolumn{1}{c}{True Nodes $|\mathcal{S}|$} \\ \hline
WikiVote & 7,066 & 100,736 & 0.00403 & 1,823 & 0.0425 & 100  \\
CondMat & 21,363 & 91,286 & 0.0004 & 2,513  & 0.00487 & 200  \\
Twitter & 81,309 & 1,342,296 & 0.000406 & 17,337 & 0.0041 & 1,000 \\
SlashDot & 82,168 & 504,230 & 0.000149 & 10,599 & 0.0046 & 1,000 \\
DBLP & 317,080 & 1,049,866 & 0.0000208 & 22,354 & 0.00054 & 1,000
 \\
\bottomrule
\end{tabular}
}
\end{adjustbox}
\end{table*}

\subsection{Details of Comparison Methods}
\label{app:baselines}
We compare our proposed algorithm with six state of the art methods for event detection and anomalous subgraph detection.  These six methods are commonly used as baselines for detection of anomalous subgraphs, and include:
\begin{itemize}
    \item Linear Time Subset Scanning (\texttt{LTSS}) \cite{neill2012fast} is an efficient event detection algorithm in massive data sets, where the event detection problem could be viewed as a problem of finding the subset which maximizes some score function. For score functions satisfying the LTSS property (e.g., the Berk-Jones scan statistic), the subset of data records which maximizes $F(\mathcal{S})$ can be found by ordering the records according to some ``priority'' function and searching over groups consisting of the top-$k$ highest priority records, requiring a linear rather than exponential number of subsets to be evaluated~\cite{neill2012fast}. The time complexity of \texttt{LTSS} is 
    $O(|\mathcal{V}|\log |\mathcal{V}|)$. However, \texttt{LTSS} does not enforce the graph connectivity constraint and may produce a disconnected subset of graph nodes.
    Thus we use the largest connected component in the detected subset of nodes as the detected subgraph $S$.
    We obtained the code from the authors, and we use the BJ scan statistic as the objective to maximize. We iterate over the list of significance thresholds $\alpha \in \{0.001, \cdots, 0.009, 0.01, \cdots, 0.09\}$ to find the maximum BJ score.
    
    \item \texttt{EventTree}~\cite{rozenshtein2014event}: is an event detection algorithm, which defines an event to be a connected subgraph of nodes in the network that are close to each other and have high activity levels.  Unlike the other methods considered here, the objective function of \texttt{EventTree} is not a log-likelihood ratio statistic. 
    Rather, the objective is to maximize $Q(\mathcal{S})=\lambda W(\mathcal{S})-D(\mathcal{S})$, where $W(\mathcal{S})=\sum_{v\in \mathcal{S}} w(v)$ measures the total node weight value of a subgraph $\mathcal{S}$, $D(\mathcal{S})=\frac{1}{2}\sum_{u\in \mathcal{S}} \sum_{v\in \mathcal{S}}d(u,v)$ measures the distance value of subgraph $\mathcal{S}$, and $\lambda$ is a normalization coefficient. 
    We optimized the code provided by the authors with a more efficient PCST solver\footnote{Hegde, Chinmay, Piotr Indyk, and Ludwig Schmidt. ``A fast, adaptive variant of the Goemans-Williamson scheme for the prize-collecting Steiner tree problem.'' Workshop of the 11th DIMACS Implementation Challenge. Vol. 2, 2014.} that reduces the time complexity from $\mathcal{O}(|\mathcal{V}|^2\log|\mathcal{V}|)$ to $\mathcal{O}(|\mathcal{E}|\log|\mathcal{V}|)$ but has identical detection performance.
    Since we use the simulated p-values for other methods, we use the reciprocal of the p-value as the node weight for \texttt{EventTree}. The parameter $\lambda$ also controls the granularity of the detected event, and we tuned it over the list $[0.001, \cdots, 0.009, 0.01, \cdots, 0.09]$.
    
    \item \texttt{ColorCoding}~\cite{cadena2019near}: is an unified framework for optimizing a large class of parametric and non-parametric scan statistics for networks with connectivity constraints. It is the only baseline method other than \texttt{DFGS} that provides a rigorous solution guarantee. The time complexity of \texttt{ColorCoding} is $\mathcal{O}(2^k \cdot e^k |\mathcal{E}|\log(\frac{|\mathcal{V}|}{\epsilon}))$ for a $(1-\epsilon)$ approximate solution, where $k$ is the effective solution size~\cite{cadena2019near}. It is extremely expensive when $k$ is large. However, it provides additional, heuristic preprocessing steps to reduce the graph size such that, empirically, $k<10$ is sufficient to find good solutions.  We used the code provided by the authors and implemented the approximation refinement based on suggestions from the authors. We tuned the parameter refinement coefficient $\beta$ over $[0.1, \cdots, 0.9]$ for different signal strengths, and ended up with $\beta=0.9$ for $\mu=5$, $\beta=0.8$ for $\mu=4$, $\beta=0.7$ for $\mu=3$, $\beta=0.6$ for $\mu=2$, and $\beta=0.5$ for $\mu=1.5$ which achieve the best performance. In addition, we set $k=5$ with $300$ iterations as suggested by the authors. 
    
    \item Non-parametric Heterogeneous Graph Scan (\texttt{NPHGS}) \cite{chen2014non}: optimizes the original Berk-Jones nonparametric scan statistic over connected subgraphs, using a greedy growth heuristic.   The nonparametric scan statistics are free of distributional assumptions and can be applied to anomalous connected subgraph detection in heterogeneous graph data, in contrast to traditional parametric scan statistics (e.g., the Kulldorff statistic). The time complexity of \texttt{NPHGS} is 
    $O(|\mathcal{V}|^2\log |\mathcal{V}|)$.
    We used the code provided by the authors, and chose the BJ scan statistic as the objective score to optimize with parameter $\alpha \in \{0.001, \cdots, 0.009, 0.01, \cdots, 0.09\}$.

    \item Additive Graph Scan (\texttt{AdditiveScan})~\cite{speakman2013dynamic}: was proposed as an efficient heuristic alternative to \texttt{DFGS} which can be used to identify the high-scoring (most positive) connected subsets in a given graph structure with real-valued weights at each node. This method stems from two facts that 1) additive functions satisfy the LTSS property, which could leverage \texttt{DFGS}, however, 2) computation time of \texttt{DFGS} is exponential in the graph size. The time complexity of \texttt{AdditiveScan} is $O(|\mathcal{V}|^2\sqrt{|\mathcal{V}|})$.
    We adopted the implementation by the authors, and chose the BJ scan statistic as the objective score to optimize with parameter $\alpha \in \{0.001, \cdots, 0.009, 0.01, \cdots, 0.09\}$.
    
    \item Depth First Graph Scan (\texttt{DFGS}) \cite{speakman2015scalable}: is a graph scan method that is guaranteed to find the exact solution, however, the worst case complexity of \texttt{DFGS} is exponential in the neighborhood size $k$. If no pruning was performed, \texttt{DFGS} would evaluate all connected subsets, requiring $O(2^k)$ run time; however, it is able to rule out many connected subsets as provably suboptimal, reducing complexity to $O(q^k)$ for some constant $1 <q< 2$, where $q$ is dependent on the proportion of subsets that are pruned. 
    We adopted the implementation by the authors directly with suggested hyperparameters, and chose the BJ scan statistic as the objective score to optimize with parameter $\alpha \in \{0.001, \cdots, 0.009, 0.01, \cdots, 0.09\}$.

\end{itemize}

\subsection{Experimental Setup and Evaluation Metrics}
\label{app:metrics}
For each of the five real-world graph structures enumerated above, we simulate $200$ runs of p-values generated under the null hypothesis and $50$ runs of p-values generated under each of five different alternative hypotheses with $\mu\in[1.5, 2, 3, 4, 5]$. We run all algorithms on these $2250$ graphs and record the detected subgraphs and their corresponding scores, and then performance evaluations in terms of detection power, precision, recall, and F-score are conducted. Let $\mathcal{R}$ be the ground truth nodes in the anomalous subgraph and let $\mathcal{S}$ be the detected subgraph.
\begin{itemize}
    \item Detection power measures the ability of a method to distinguish between graphs with or without an affected subgraph. It is computed based on the following steps: 1) compute BJ score for each detected subgraph; 2) for each alternative run, we conduct a hypothesis test with significance level $\alpha=0.05$ by setting p-value as the proportion of 
    null runs that have higher BJ score than the alternative run; 3) compute the proportion of hypothesis tests (for each method, for each real-world graph, for each signal strength $\mu$) that reject the null hypothesis.
    \item Precision is the ratio of the number of detected true positive nodes divided by the total number of detected nodes, that is, 
    \begin{equation}
        \texttt{Precision} = \frac{|\mathcal{R} \cap \mathcal{S}|}{|\mathcal{S}|}.
    \end{equation}
    \item Recall is the ratio of the number of detected true positive nodes divided by the number of nodes in true subgraph, that is, 
    \begin{equation}
        \texttt{Recall} = \frac{|\mathcal{R} \cap \mathcal{S}|}{|\mathcal{R}|}.
    \end{equation}
    \item F-score is the harmonic mean of precision and recall, that is,
    \begin{equation}
        \texttt{F-score} = \frac{2\cdot \texttt{Precision} \cdot \texttt{Recall}}{\texttt{Precision} + \texttt{Recall}}.
    \end{equation}
    We note that precision, recall, and F-score are each averaged over the 50 graphs created for a given signal strength $\mu$.
\end{itemize}

\subsection{Additional Experimental Results}
\label{app:results}

We now present a detailed comparison of the \texttt{CNSS} and baseline methods with respect to run time, detection power, and detection performance (precision, recall, and F-measure). Note that, for the larger datasets, we do not report the run time and performance of some baseline methods due to extremely long clock run time (over 2 weeks on 250 CPUs) to finish all experiments over 50 replicas of datasets under the alternative hypothesis.

\textbf{Run time.} As shown in Table~\ref{table:run_time}, not including the pre-processing time needed to compute the distribution of $\alpha'(N,\alpha)$ for a given graph structure, our method has competitive run time that is faster than \texttt{NPHGS}, \texttt{AdditiveScan}, and \texttt{DFGS}, which is aligned with the time complexity analysis.  We also observe substantial speedups (ranging from 2.6x for \texttt{WikiVote} to 28x for \texttt{CondMat}) for \texttt{CNSS+CoreTree} as compared to \texttt{CNSS} without core-tree decomposition.
These timing results are not impacted by the approach used to compute $\alpha'(N,\alpha)$, i.e., calibration by randomization tests versus calibration by lower bounds versus no calibration.

However, as shown in Table~\ref{table:run_time_calibration}, the total pre-processing time needed to perform calibration by randomization testing is large because the same search must be performed on a large number $K$ of replica datasets generated under the null hypothesis $\mathcal{H}_0$, thus multiplying the run time by $K$.  We used $K = 200$ for our experiments, thus substantially increasing run time.  However, we note that these null runs are entirely independent and thus can be easily parallelized. Moreover, this preprocessing step must only be performed once for a given graph structure, and can be reused if the graph structure remains constant, for example, for daily monitoring of disease cases with a graph structure defined by zip code adjacency.
Additionally, we observe that the use of core-tree decomposition resulted in similar speedups, ranging from 2.4x for \texttt{WikiVote} to 28x on \texttt{CondMat}, on the replica datasets.  Most critically, the use of lower bounds in place of randomization testing eliminates the need to generate and search over the large number of replica datasets, resulting in huge savings in preprocessing time. These speedups ranged from 300x to 2000x as compared to randomization testing with core-tree decomposition, assuming $K=200$ and no parallelization for the randomization tests.

\begin{table*}[!t]
\caption{Run Time on All Datasets. The run time of our method is reported for 18 different $\alpha$ values using a single processor, which could be parallelized to speed up the run time. We implement all discussed  algorithms and perform the   experiments on Linux servers  with the same hardware configuration  (64-bit  machines with Intel(R)Xeon(R) CPU E5-2680 v4 @ 2.40GHz and 251GB memory).  Note that the run times for \texttt{CNSS} and \texttt{CNSS+CoreTree} do not include the pre-processing time required to estimate $\alpha'(N,\alpha)$, and thus are independent of the calibration approach (randomization test versus lower bounds versus no calibration).}
\begin{adjustbox}{center}
\centering
\resizebox{0.9\textwidth}{!}{
\begin{tabular}{|c|c|c|c|c|c|}
\hline  
\multirow{2}{*}{Methods} & \multicolumn{1}{c|}{WikiVote}& \multicolumn{1}{c|}{CondMat} & \multicolumn{1}{c|}{Twitter}  & \multicolumn{1}{c|}{SlashDot} & \multicolumn{1}{c|}{DBLP} \\ \cline{2-6}
 & \multicolumn{1}{c|}{Run Time (sec.)} & \multicolumn{1}{c|}{Run Time (sec.)} &\multicolumn{1}{c|}{Run Time (sec.)} &\multicolumn{1}{c|}{Run Time (sec.)} &\multicolumn{1}{c|}{Run Time (sec.)}
 \\ \hline
LTSS &21 &24 &619 & 243 &1425 \\
EventTree &23 &25 &179 & 186 & 1019  \\
ColorCoding &5220 &8295 & 66690 & 29790 &124956\\
NPHGS &8912 &52046 &998624 &496587 & $\times$ \\
AdditiveScan & 17950 & 123100 & $\times$ & $\times$ & $\times$\\
DFGS &22791 &$\times$  & $\times$ & $\times$ & $\times$ \\
\hline
CNSS & 1771 & 43325 & 489624 & 447800 & $\times$ \\
CNSS+CoreTree & 685 & 1544 & 128812 & 45208 & 185053 \\
\hline
\end{tabular}
}
\end{adjustbox}
\label{table:run_time}
\end{table*}

\begin{table*}[!t]
\caption{Preprocessing Time Comparison for the Computation of $\alpha^\prime(N, \alpha)$. The run time of randomization testing is the run time of a single \texttt{CNSS} run under $\mathcal{H}_0$ times the number of replica datasets $K$, and we used $K=200$ runs for our experiments.}
\begin{adjustbox}{center}
\centering
\resizebox{1.0\textwidth}{!}{
\begin{tabular}{|c|c|c|c|c|c|}
\hline
\multirow{2}{*}{Methods} & \multicolumn{1}{c|}{WikiVote}& \multicolumn{1}{c|}{CondMat} & \multicolumn{1}{c|}{Twitter}  & \multicolumn{1}{c|}{SlashDot} & \multicolumn{1}{c|}{DBLP} \\ \cline{2-6}
 & \multicolumn{1}{c|}{Run Time (sec.)} & \multicolumn{1}{c|}{Run Time (sec.)} &\multicolumn{1}{c|}{Run Time (sec.)} &\multicolumn{1}{c|}{Run Time (sec.)} &\multicolumn{1}{c|}{Run Time (sec.)}
 \\ \hline
RandomizationTest  &$1602\times K$ & $28341\times K$ &$299349\times K$ &$375999\times K$ & $\times$  \\
RandomizationTest+CoreTree &$660\times K$ &$1026\times K$ & $107192\times K$ &$40124\times K$ & $147086\times K$ \\
LowerBounds &$59$ &$504$ &$16094$ &$9073$ &$87832$\\
\hline
\end{tabular}
}
\end{adjustbox}
\label{table:run_time_calibration}
\end{table*}
 
\textbf{Detection power.}
Table~\ref{table:detection_power_real_world_results} compares the detection power (i.e., the proportion of signals detected, at a fixed false positive rate of 0.05) for \texttt{CNSS} and baseline methods across the five real-world datasets and varying signal strengths. Note that several of the slower methods were not run for the largest graphs due to excessive run times needed to perform detection on the null graphs.  

We observe that all three of the calibrated CNSS methods (\texttt{CNSS}, \texttt{CNSS+CoreTree}, and \texttt{CNSS+LowerBound}) achieve perfect detection power across all of the real-world graphs and signal strengths considered.  In contrast, the uncalibrated CNSS methods and the baseline methods have substantially reduced detection power for low signal strengths, particularly on the smaller graphs (\texttt{WikiVote} and \texttt{CondMatter}) which also had a smaller number of true nodes generated under the alternative hypothesis $\mathcal{H}_1$.  Interestingly, even the uncalibrated CNSS methods outperformed the baseline methods with respect to detection power; among the baseline methods, \texttt{EventTree} and \texttt{ColorCoding} tended to outperform \texttt{LTSS}, \texttt{NPHGS}, and \texttt{AdditiveScan}.

\begin{table*}[!htbp]
\caption{Detection Power Comparison on Five Real World Datasets with Gaussian Signals.}
\label{table:detection_power_real_world_results}
\begin{adjustbox}{center}
\centering
\resizebox{1.0\textwidth}{!}{
\begin{tabular}{c|c|c|c|c|c}
\hline
\toprule
\multirow{2}{*}{Methods} & \multicolumn{1}{c|}{WikiVote $(\mu=1.5)$} & \multicolumn{1}{c|}{WikiVote $(\mu=2)$}  & \multicolumn{1}{c|}{WikiVote $(\mu=3)$} & \multicolumn{1}{c|}{WikiVote $(\mu=4)$} & \multicolumn{1}{c}{WikiVote $(\mu=5)$}\\ \cline{2-6}
 & \multicolumn{1}{c|}{Detection\ Power} &  \multicolumn{1}{c|}{Detection\ Power} &   \multicolumn{1}{c|}{Detection\ Power}&  \multicolumn{1}{c|}{Detection\ Power} &  \multicolumn{1}{c}{Detection\ Power} 
 \\ \hline
LTSS & 0.28  &0.4  &0.96  &1.0 &1.0\\
EventTree & 0.6&0.94  &1.0  &1.0  &1.0  \\
ColorCoding & 0.8&1.0 &1.0 &1.0  &1.0\\
NPHGS  & 0.0 &0.0  &0.0  &0.0 &0.0 \\
AdditiveScan &0.0  &0.0 &0.0 &0.0 &0.0 \\
\hline
CNSS+NoCalib & 0.94 &1.0 &1.0 &1.0  &1.0 \\
CNSS+CoreTree+NoCalib  & 0.86 &0.98 &0.98 &1.0  &1.0 \\
CNSS+CoreTree &1.0  &1.0 &1.0  &1.0&1.0  \\
CNSS+LowerBound & 1.0 &1.0 &1.0 &1.0 &1.0 \\
CNSS & 1.0 &1.0  &1.0 &1.0 &1.0 \\
\bottomrule
\hline\hline
\toprule
\multirow{2}{*}{Methods} & \multicolumn{1}{c|}{CondMat $(\mu=1.5)$} & \multicolumn{1}{c|}{CondMat $(\mu=2)$}  & \multicolumn{1}{c|}{CondMat $(\mu=3)$} & \multicolumn{1}{c|}{CondMat $(\mu=4)$} & \multicolumn{1}{c}{CondMat $(\mu=5)$}\\ \cline{2-6}
 & \multicolumn{1}{c|}{Detection\ Power} &  \multicolumn{1}{c|}{Detection\ Power} &   \multicolumn{1}{c|}{Detection\ Power}&  \multicolumn{1}{c|}{Detection\ Power} &  \multicolumn{1}{c}{Detection\ Power} 
 \\ \hline
LTSS &0.3 &0.68  &1.0  &1.0  &1.0  \\
EventTree& 0.66 &0.94  &1.0  &1.0  &1.0 \\
ColorCoding &0.0 &0.8  &1.0  &1.0  &1.0  \\
NPHGS &  0.0 &0.1  &0.72  &0.96  &1.0 \\
AdditiveScan & 0.32 &0.36  &0.36  &0.36  &0.38\\
\hline
CNSS+NoCalib &  0.96 &1.0  &1.0  &1.0  &1.0 \\
CNSS+CoreTree+NoCalib &  0.86 &1.0  &1.0  &1.0  &1.0 \\
CNSS+CoreTree &1.0 &1.0  &1.0  &1.0  &1.0  \\
CNSS+LowerBound & 1.0 &1.0  &1.0  &1.0  &1.0\\
CNSS &1.0 &1.0  &1.0  &1.0  &1.0 \\
\bottomrule
\hline\hline
\toprule
\multirow{2}{*}{Methods} & \multicolumn{1}{c|}{Twitter $(\mu=1.5)$} & \multicolumn{1}{c|}{Twitter $(\mu=2)$}  & \multicolumn{1}{c|}{Twitter $(\mu=3)$} & \multicolumn{1}{c|}{Twitter $(\mu=4)$} & \multicolumn{1}{c}{Twitter $(\mu=5)$}\\ \cline{2-6}
 & \multicolumn{1}{c|}{Detection\ Power} &  \multicolumn{1}{c|}{Detection\ Power} &   \multicolumn{1}{c|}{Detection\ Power}&  \multicolumn{1}{c|}{Detection\ Power} &  \multicolumn{1}{c}{Detection\ Power} 
 \\ \hline
LTSS &  0.0 &0.0  &0.0  &1.0  &1.0\\
EventTree & 1.0 &1.0  &1.0  &1.0  &1.0 \\
ColorCoding &1.0 &1.0  &1.0  &1.0  &1.0 \\
NPHGS &0.0 &0.0  &0.0  &0.02  &0.34 \\
\hline
CNSS+NoCalib & 1.0 &1.0  &1.0  &1.0  &1.0\\
CNSS+CoreTree+NoCalib & 1.0 &1.0  &1.0  &1.0  &1.0 \\
CNSS+CoreTree &  1.0 &1.0  &1.0  &1.0  &1.0 \\
CNSS+LowerBound & 1.0 &1.0  &1.0  &1.0  &1.0\\
CNSS & 1.0 &1.0  &1.0  &1.0  &1.0\\
\bottomrule
\hline\hline
\toprule
\multirow{2}{*}{Methods} & \multicolumn{1}{c|}{SlashDot $(\mu=1.5)$} & \multicolumn{1}{c|}{SlashDot $(\mu=2)$}  & \multicolumn{1}{c|}{SlashDot $(\mu=3)$} & \multicolumn{1}{c|}{SlashDot $(\mu=4)$} & \multicolumn{1}{c}{SlashDot $(\mu=5)$}\\ \cline{2-6}
 & \multicolumn{1}{c|}{Detection\ Power} &  \multicolumn{1}{c|}{Detection\ Power} &   \multicolumn{1}{c|}{Detection\ Power}&  \multicolumn{1}{c|}{Detection\ Power} &  \multicolumn{1}{c}{Detection\ Power} 
 \\ \hline
LTSS &1.0 &1.0  &1.0  &1.0  &1.0 \\
EventTree &1.0 &1.0  &1.0  &1.0  &1.0  \\
ColorCoding &1.0 &1.0  &1.0  &1.0  &1.0\\
NPHGS & 0.0 &0.0  &0.0  &0.04  &0.38 \\
\hline
CNSS+NoCalib. &1.0 &1.0  &1.0  &1.0  &1.0\\
CNSS+CoreTree+NoCalib &1.0 &1.0  &1.0  &1.0  &1.0 \\
CNSS+CoreTree &1.0 &1.0  &1.0  &1.0  &1.0 \\
CNSS+LowerBound &1.0 &1.0  &1.0  &1.0  &1.0 \\
CNSS &1.0 &1.0  &1.0  &1.0  &1.0 \\
\bottomrule
\hline\hline
\toprule
\multirow{2}{*}{Methods} & \multicolumn{1}{c|}{DBLP $(\mu=1.5)$} & \multicolumn{1}{c|}{DBLP $(\mu=2)$}  & \multicolumn{1}{c|}{DBLP $(\mu=3)$} & \multicolumn{1}{c|}{DBLP $(\mu=4)$} & \multicolumn{1}{c}{DBLP $(\mu=5)$}\\ \cline{2-6}
 & \multicolumn{1}{c|}{Detection\ Power} &  \multicolumn{1}{c|}{Detection\ Power} &   \multicolumn{1}{c|}{Detection\ Power}&  \multicolumn{1}{c|}{Detection\ Power} &  \multicolumn{1}{c}{Detection\ Power} 
 \\ \hline
LTSS & 1.0 &1.0  &1.0  &1.0  &1.0  \\
EventTree & 0.98 &1.0  &1.0  &1.0  &1.0\\
ColorCoding & 1.0 &1.0  &1.0  &1.0  &1.0 \\
\hline
CNSS+CoreTree &1.0 &1.0  &1.0  &1.0  &1.0\\
\bottomrule
\hline
\end{tabular}
}
\end{adjustbox}
\end{table*}

\textbf{Detection performance.}
The detection performances are shown in Table \ref{table:real_world_results}. The bold number indicates that method is significantly better than other methods. 
Overall, our proposed \texttt{CNSS} outperforms baseline methods under different signal strengths $\mu$ on the various network structures. 
Specifically, the calibrated BJ score helps to precisely pinpoint the true affected subgraph as the strength of signal increases. 
The use of core-tree decomposition and lower bounds do not have substantial effects on detection performance for these five real-world datasets, while significantly reducing run time.
On the other hand, the baseline methods do not have consistent performance over different values of $\mu$ with different network structures. 
\texttt{LTSS} has the worst detection performance, because it does not enforce the graph connectivity constraints and thus picks out a subset of disconnected, individually anomalous nodes that do not accurately reflect the true affected subgraph.
\texttt{EventTree} has relatively good performance on \texttt{Twitter} and \texttt{SlashDot} datasets when the event signal is not strong. 
However, when event signal is strong, the subgraph detected by \texttt{EventTree} includes many noisy nodes around true nodes to reach a higher score, thus dramatically harming precision; this pattern is also observed for the other baseline methods, while our proposed calibrated scan approach converges to both high precision and high recall as the signal strength increases. \texttt{DFGS} has relatively good performance when the graph is small, such as \texttt{CondMat} data with a strong event signal, but does not perform well on \texttt{WikiVote} and quickly becomes computationally infeasible for the larger graphs. \texttt{ColorCoding}, \texttt{AdditiveScan}, and \texttt{NPHGS} have similar performance: all of them suffer from the similar issue as \texttt{EventTree} in that, when the event signal is strong, they include many individually significant nodes that are not part of the true affected subgraph, thus reducing precision and F-score.

In addition, we show the average performance (F-score) over various signal strengths and network structures in Figure~\ref{fig:avg_f_scores}. We can see that \texttt{CNSS}, \texttt{CNSS+CoreTree}, and \texttt{CNSS+LowerBound} have much better average performance than all baselines, while CNSS without calibration (\texttt{CNSS+NoCalib}) performs poorly.

These results demonstrate the advantage of the novel calibration approach proposed here for precisely identifying the true affected subgraph, as well as the utility of our core-tree decomposition and lower bound approaches for enabling scalability and computational feasibility for large graphs.

\begin{table*}[!htbp]
\caption{Detection Performance Results (Average Precision, Recall, and F-score) on Five Real World Datasets with Gaussian Signals. The bold number indicates that method has a significantly higher F-score than the other methods.  Statistical significance is computed using paired t-tests ($p<0.05$).}
\label{table:real_world_results}
\begin{adjustbox}{center}
\centering
\resizebox{1.0\textwidth}{!}{
\begin{tabular}{c|c|c|c|c|c|c|c|c|c|c|c|c|c|c|c}
\hline
\toprule
\multirow{2}{*}{Methods} & \multicolumn{3}{c|}{WikiVote $(\mu=1.5)$} & \multicolumn{3}{c|}{WikiVote $(\mu=2)$}  & \multicolumn{3}{c|}{WikiVote $(\mu=3)$} & \multicolumn{3}{c|}{WikiVote $(\mu=4)$} & \multicolumn{3}{c}{WikiVote $(\mu=5)$}\\ \cline{2-16}
 & \multicolumn{1}{c|}{Prec.} & \multicolumn{1}{c|}{Rec.} & F-Score & \multicolumn{1}{c|}{Prec.} & \multicolumn{1}{c|}{Rec.} & F-Score & \multicolumn{1}{c|}{Prec.} & \multicolumn{1}{c|}{Rec.} & F-Score &
 \multicolumn{1}{c|}{Prec.} & \multicolumn{1}{c|}{Rec.} & F-Score & \multicolumn{1}{c|}{Prec.} & \multicolumn{1}{c|}{Rec.} & F-Score 
 \\ \hline
LTSS & 0.023&	0.144&	0.039 & 0.024&	0.154&	0.042&0.022&	0.145&	0.039& 0.024&	0.157&	0.042& 0.024&	0.155&	0.041\\
EventTree & 0.035 & 0.041 & 0.037 & 0.041&	0.060& 0.049 & 0.036&	0.075&	0.049& 0.034&	0.083&	0.048& 0.026	&0.244&	0.047\\
ColorCoding &0.074	&0.664	&0.132 &0.111	&0.801	&0.195 &0.145	&0.953	&0.252 &0.174	&0.997	&0.297 & 0.244	&0.920	&0.376\\
NPHGS & 0.144 & 0.547 & 0.227 &0.185 & 0.751 & 0.297& 0.216 & 0.946 & 0.351& 0.225 & 0.997 & 0.367& 0.346	&0.949	&0.432  \\
AdditiveScan & 0.143&	0.547&	0.227& 0.185&	0.751&	0.296 & 0.216&	0.946&	0.351 &  0.225&	0.997&	0.367& 0.226&	1.000&	0.368\\
DFGS & 0.154&	0.523&	0.235& 0.204&	0.703&	0.311& 0.206&	0.804&	0.325& 0.219&	0.836&	0.343& 0.203&	0.800&	0.321  \\
\hline
CNSS+NoCalib & 0.068 & 0.628 & 0.123& 0.087 & 0.802 & 0.156& 0.102 & 0.959 & 0.184 & 0.106 & 0.998 & 0.191& 0.106 & 1.000 & 0.192 \\
CNSS+CoreTree+NoCalib  & 0.073 & 0.652 & 0.132 &0.091 & 0.813 & 0.163 & 0.105 & 0.960 & 0.189 & 0.109 & 0.998 & 0.197 & 0.110 & 1.000 & 0.198\\
CNSS+CoreTree & 0.233&	0.400&	\textbf{0.265}& 0.285&	0.641&	\textbf{0.373}& 0.752&	0.578&	0.601 & 0.923&	0.803&	\textbf{0.858}& 0.968&	0.965&	\textbf{0.966}\\
CNSS+LowerBound &0.213	&0.233	&0.218 & 0.360	&0.379	&0.361 & 0.737	&0.567	&\textbf{0.630} &0.891	&0.810	&0.847 &0.951 &0.967	&0.958\\
CNSS & 0.232&	0.401&	\textbf{0.257}& 0.289&	0.645&	\textbf{0.372}& 0.706&	0.604&	0.583& 0.921&	0.803&	\textbf{0.858} & 0.965&	0.965&	\textbf{0.965} \\
\bottomrule
\hline\hline
\toprule
\multirow{2}{*}{Methods} & \multicolumn{3}{c|}{CondMat  ($\mu=1.5$)} & \multicolumn{3}{c|}{CondMat $(\mu=2)$}  & \multicolumn{3}{c|}{CondMat $(\mu=3)$} & \multicolumn{3}{c|}{CondMat $(\mu=4)$} & \multicolumn{3}{c}{CondMat $(\mu=5)$}\\ \cline{2-16}
 & \multicolumn{1}{c|}{Prec.} & \multicolumn{1}{c|}{Rec.} & F-Score & \multicolumn{1}{c|}{Prec.} & \multicolumn{1}{c|}{Rec.} & F-Score & \multicolumn{1}{c|}{Prec.} & \multicolumn{1}{c|}{Rec.} & F-Score &
 \multicolumn{1}{c|}{Prec.} & \multicolumn{1}{c|}{Rec.} & F-Score & \multicolumn{1}{c|}{Prec.} & \multicolumn{1}{c|}{Rec.} & F-Score 
 \\ \hline
LTSS & 0.017&	0.083&	0.029& 0.018&	0.090&	0.030 & 0.018&	0.094&	0.031& 0.019&	0.096&	0.032& 0.017 &	0.088&	0.029 \\
EventTree& 0.014&	0.204&	0.027& 0.015&	0.216&	0.028& 0.014&	0.208&	0.026& 0.014&	0.209&	0.026& 0.014&	0.214&	0.027  \\
ColorCoding & 0.352	&0.074	&0.094 &0.169	&0.564	&0.228 &0.254	&0.899	&0.379 &0.255	&0.894	&0.388 &0.313	&1.000	&0.470\\
NPHGS &0.203 & 0.416 & 0.272 & 0.245 & 0.626 & 0.351 & 0.304 & 0.939 & 0.459 & 0.315 & 0.995 & 0.478& 0.345 & 0.999 & 0.501 \\
AdditiveScan & 0.202&	0.478&	\textbf{0.283}& 0.247&	0.675&	0.361& 0.286&	0.945&	0.439& 0.306&	0.995&	0.467& 0.304&	1.000&	0.467    \\
DFGS & 0.459&	0.078&	0.132& 0.616&	0.193&	0.290& 0.819& 0.656&	\textbf{0.725}& 0.861&	0.937&	\textbf{0.897}& 0.866&	0.995&	0.926\\
\hline
CNSS+NoCalib & 0.045 & 0.629 & 0.084 & 0.056 & 0.779 & 0.104 & 0.068 & 0.957 & 0.126 & 0.070 & 0.997 & 0.132 & 0.070 & 1.000 & 0.132 \\
CNSS+CoreTree+NoCalib & 0.057 & 0.620 & 0.105& 0.071 & 0.779 & 0.130 & 0.086 & 0.958 & 0.158 & 0.089 & 0.995 & 0.163& 0.089 & 0.999 & 0.164\\
CNSS+CoreTree & 0.235 & 0.326 & 0.248 & 0.334 & 0.484 & 0.383& 0.533 & 0.654 & 0.560& 0.883 & 0.782 & 0.824& 0.932 & 0.890 & 0.905  \\
CNSS+LowerBound &0.078	&0.451	&0.132 &0.115	&0.609	&0.193 &0.267	&0.825	&0.394 & 0.883	&0.833	&0.857 &0.909	&0.975	&0.941\\
CNSS & 0.258 & 0.361 & \textbf{0.278} & 0.340 & 0.563 & \textbf{0.408} & 0.735 & 0.685 & 0.687 & 0.916 & 0.821 & 0.866 & 0.985 & 0.974 & \textbf{0.979} \\
\bottomrule
\hline\hline
\toprule
\multirow{2}{*}{Methods} & \multicolumn{3}{c|}{ Twitter ($\mu=1.5$)} & \multicolumn{3}{c|}{Twitter $(\mu=2)$}  & \multicolumn{3}{c|}{Twitter $(\mu=3)$} & \multicolumn{3}{c|}{Twitter $(\mu=4)$} & \multicolumn{3}{c}{Twitter $(\mu=5)$}\\ \cline{2-16}
 & \multicolumn{1}{c|}{Prec.} & \multicolumn{1}{c|}{Rec.} & F-Score & \multicolumn{1}{c|}{Prec.} & \multicolumn{1}{c|}{Rec.} & F-Score & \multicolumn{1}{c|}{Prec.} & \multicolumn{1}{c|}{Rec.} & F-Score &
 \multicolumn{1}{c|}{Prec.} & \multicolumn{1}{c|}{Rec.} & F-Score & \multicolumn{1}{c|}{Prec.} & \multicolumn{1}{c|}{Rec.} & F-Score 
 \\ \hline
LTSS & 0.075 & 0.726 & 0.136 & 0.084 &  0.829 & 0.152 & 0.096 & 0.974 & 0.175 &0.098 & 0.998 & 0.178 & 0.098 & 0.999 & 0.179\\
EventTree & 0.222 & 0.283 & \textbf{0.249} & 0.318 & 0.449 & \textbf{0.371} & 0.454 & 0.790 & 0.577 & 0.502 &0.962 & 0.660 & 0.510 & 0.997 & 0.675 \\
ColorCoding &0.060	&0.661	&0.109 & 0.084	&0.783	&0.153 &0.114	&0.957	&0.204 &0.136	&0.997	&0.239 &0.153	&1.000	&0.265 \\
NPHGS & 0.110 & 0.553 & 0.183 & 0.136 & 0.737 & 0.230 & 0.166 & 0.951 & 0.282 & 0.172 & 0.996 & 0.294 & 0.172 & 1.000 & 0.293  \\
\hline
CNSS+NoCalib & 0.063 & 0.601 & 0.114 & 0.079 & 0.764 & 0.144 & 0.097 & 0.954 & 0.177 & 0.101 & 0.996 & 0.184 & 0.102 & 0.999 & 0.185  \\
CNSS+CoreTree+NoCalib & 0.071 &0.617 &0.127 & 0.088 & 0.774 & 0.158 & 0.106 & 0.957 & 0.192 & 0.111 & 0.996 & 0.199 & 0.111 & 0.999 & 0.200  \\
CNSS+CoreTree & 0.137 & 0.496 & 0.215 & 0.198 & 0.662 & 0.305 &0.788 & 0.502 & \textbf{0.613} & 0.895 & 0.821 & \textbf{0.856} & 0.923 & 0.972 & \textbf{0.947} \\
CNSS+LowerBound &0.150 &0.264 &0.181 &0.260 &0.378 &0.302 &0.725 &0.515 &0.599 &0.909 &0.819 &0.861 &0.957 &0.972 &0.964\\
CNSS & 0.137 & 0.491 & 0.211 & 0.167 & 0.701 & 0.267 & 0.728 & 0.508 & 0.579 & 0.880 & 0.820 & \textbf{0.849} & 0.910 & 0.972 & \textbf{0.940}  \\
\bottomrule
\hline\hline
\toprule
\multirow{2}{*}{Methods} & \multicolumn{3}{c|}{ Slashdot ($\mu=1.5$)} & \multicolumn{3}{c|}{Slashdot $(\mu=2)$}  & \multicolumn{3}{c|}{Slashdot $(\mu=3)$} & \multicolumn{3}{c|}{Slashdot $(\mu=4)$} & \multicolumn{3}{c}{Slashdot $(\mu=5)$}\\ \cline{2-16}
 & \multicolumn{1}{c|}{Prec.} & \multicolumn{1}{c|}{Rec.} & F-Score & \multicolumn{1}{c|}{Prec.} & \multicolumn{1}{c|}{Rec.} & F-Score & \multicolumn{1}{c|}{Prec.} & \multicolumn{1}{c|}{Rec.} & F-Score &
 \multicolumn{1}{c|}{Prec.} & \multicolumn{1}{c|}{Rec.} & F-Score & \multicolumn{1}{c|}{Prec.} & \multicolumn{1}{c|}{Rec.} & F-Score 
 \\ \hline
LTSS & 0.099 & 0.663 & 0.173 & 0.116 & 0.823 & 0.203 & 0.130 & 0.973 & 0.229 & 0.133 & 0.998 & 0.234 & 0.132 & 0.999 & 0.234 \\
EventTree & 0.264 & 0.312 & \textbf{0.285} & 0.456 & 0.482 & \textbf{0.410} & 0.476 & 0.810 & 0.599 & 0.527 & 0.967 & 0.682 &0.526 & 0.996 & 0.689 \\
ColorCoding & 0.077&	0.713	&0.140 &0.112	&0.797	&0.197 &0.145	&0.959	&0.251 & 0.172	&0.997	&0.293  & 0.196	&1.000	&0.327\\
NPHGS  &0.153 & 0.539 & 0.238 &0.186 & 0.731 & 0.296 & 0.215 & 0.949 & 0.351 & 0.223 & 0.996 & 0.364 & 0.222 & 1.000 & 0.363\\
\hline
CNSS+NoCalib. & 0.063 & 0.693 & 0.116 & 0.074 & 0.822 & 0.136 & 0.086 & 0.965 & 0.159 & 0.089 & 0.997 & 0.164 & 0.089 & 0.999 & 0.165 \\
CNSS+CoreTree+NoCalib & 0.078 & 0.709 & 0.141 & 0.091 & 0.831 & 0.164 & 0.104 & 0.968 & 0.189 & 0.108 & 0.997 & 0.195 & 0.108 & 0.999 & 0.195  \\
CNSS+CoreTree & 0.157 & 0.529 &0.242  & 0.192 & 0.720 &0.303  & 0.591 & 0.672 &\textbf{0.629} & 0.886 & 0.829 &\textbf{0.857}  & 0.905 & 0.971 & \textbf{0.937} \\
CNSS+LowerBound &0.070 &0.655 &0.127 &0.086 &0.786 &0.155 &0.716 &0.541 &0.609 &0.873 &0.829 &0.850 &0.891 &0.972 &0.930\\
CNSS &0.159 & 0.528 &0.245 &0.193 &0.722 & 0.304 & 0.587 & 0.678 & \textbf{0.629} & 0.872 & 0.829 & \textbf{0.850} & 0.901 & 0.971 & \textbf{0.935}\\
\bottomrule
\hline\hline
\toprule
\multirow{2}{*}{Methods} & \multicolumn{3}{c|}{ DBLP ($\mu=1.5$)} & \multicolumn{3}{c|}{DBLP $(\mu=2)$}  & \multicolumn{3}{c|}{DBLP $(\mu=3)$} & \multicolumn{3}{c|}{DBLP $(\mu=4)$} & \multicolumn{3}{c}{DBLP $(\mu=5)$}\\ \cline{2-16}
 & \multicolumn{1}{c|}{Prec.} & \multicolumn{1}{c|}{Rec.} & F-Score & \multicolumn{1}{c|}{Prec.} & \multicolumn{1}{c|}{Rec.} & F-Score & \multicolumn{1}{c|}{Prec.} & \multicolumn{1}{c|}{Rec.} & F-Score &
 \multicolumn{1}{c|}{Prec.} & \multicolumn{1}{c|}{Rec.} & F-Score & \multicolumn{1}{c|}{Prec.} & \multicolumn{1}{c|}{Rec.} & F-Score 
 \\ \hline
LTSS & 0.055 & 0.565 & 0.100 & 0.073 & 0.777 & 0.134 & 0.087 & 0.968 & 0.159 & 0.089 & 0.998 & 0.164 & 0.089 & 1.000 & 0.164 \\
EventTree & 0.081 & 0.357 & 0.132 & 0.119 & 0.538 & 0.194 & 0.174 & 0.834 & 0.288 & 0.198 & 0.972 & 0.329 & 0.203 & 0.998 & 0.340\\
ColorCoding &0.056	&0.435	&0.100 &0.104 &0.656	&0.180 &0.152 &0.943 &0.261 &0.179 &0.996	&0.304 &0.205 &1.000 &0.340  \\
\hline
CNSS+CoreTree+NoCalib & 0.022&0.623&0.043& 0.028&0.775&0.054 &0.034&0.955&0.066&0.035&0.994&0.068&0.035&0.995&0.068 \\
CNSS+CoreTree & 0.135 & 0.265 & \textbf{0.179} & 0.190 &0.448 &\textbf{0.267} & 0.355 & 0.560 &\textbf{0.435} & 0.550 & 0.821 & \textbf{0.659} & 0.831 & 0.949 & \textbf{0.886}\\
\bottomrule
\hline
\end{tabular}
}
\end{adjustbox}
\end{table*}

\begin{figure*}[!ht]
    \centering
    \includegraphics[width=0.7\linewidth]{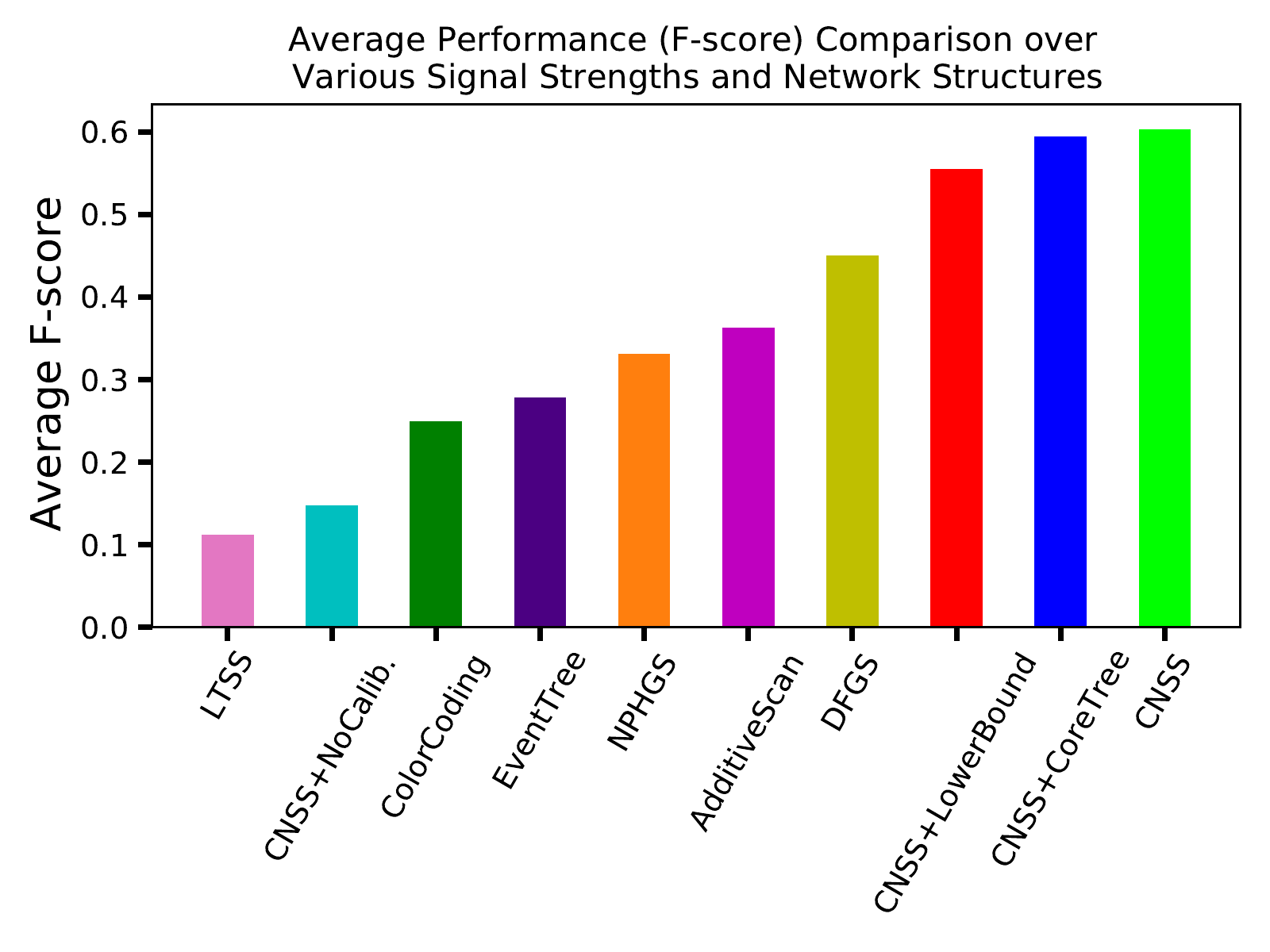}
    \caption{We compare the average performance (F-score) over various signal strengths and network structures to mimic the various real-world scenarios. We performed two-sample paired t-tests between each baseline and \texttt{CNSS}, and we find that all baselines have significantly lower performance than \texttt{CNSS} with p-value $< 0.05$.} \label{fig:avg_f_scores}
\end{figure*}

\subsection{Results with piecewise constant p-values}
\label{app:piecewise}

Though the nonparametric scan statistics (calibrated or uncalibrated) do not make any assumptions of Gaussianity, we used Gaussian signals in our simulation experiments to show that the calibrated NPSS formulation can achieve high detection performance even when the injected signal does not necessarily obey our specific modeling assumptions. Gaussian mean-shift signals are simple, have a natural way of measuring signal strength, and are frequently used in the literature, e.g., by~\citet{reyna2021} and~\citet{chitra2021}.  However, one might ask what happens when the Berk-Jones NPSS modeling assumptions are precisely correct, and the resulting p-values are piecewise constant under $\mathcal{H}_1(\mathcal{S})$.  Does the uncalibrated Berk-Jones statistic still fail to detect these signals due to miscalibration, and does the calibrated BJ statistic still outperform competing approaches by a wide margin?  

To explore these questions, we performed additional simulations using the \texttt{WikiVote} and \texttt{CondMat} datasets, comparing the calibrated scans (\texttt{CNSS}, \texttt{CNSS+CoreTree}, and \texttt{CNSS+LowerBounds}) with the uncalibrated scan (\texttt{CNSS+NoCalib}) and the various baselines (\texttt{LTSS}, \texttt{EventTree}, \texttt{NPHGS}, \texttt{AdditiveScan}, \texttt{DFGS}, and \texttt{ColorCoding}) with respect to detection power, 
precision, recall, and F-score.  As in the main evaluation, we simulated the true subgraph $\mathcal{S}$ using a random walk with size roughly $0.01|\mathcal{V}|$ (see Table~\ref{table:semi-synthetic}), and reported the average performance over 50 runs of simulations of true subgraphs and p-values, for each signal strength, on each network structure.  However, for these runs, we assumed piecewise constant p-values, where each p-value $p_i \in \mathcal{S}$ is drawn from \texttt{Uniform}[0, 0.01] with probability $q\cdot 0.01$, and from \texttt{Uniform}[0.01, 1] with probability $1-q \cdot 0.01$.  Signal strengths $q \in \{10,25,50,75,100\}$ were used for these simulations, and p-values outside subset $S$ were drawn from \texttt{Uniform}[0,1] as usual. 
The results of these simulations are shown in Tables~\ref{table:detection_power_real_world_results_piecewise_uniform} and~\ref{table:real_world_results_piecewise_uniform}.  

We observe that the results for piecewise constant p-values are highly consistent with those for Gaussian signals, demonstrating that it is miscalibration (not the shape of the signal) that is causing the uncalibrated methods to perform poorly.  As we observed for the Gaussian signals, our proposed \texttt{CNSS} and \texttt{CNSS+CoreTree} outperformed all baselines and the uncalibrated \texttt{CNSS+NoCalib} by a wide margin in terms of detection accuracy and detection power, while the uncalibrated methods suffered from low detection power for low signal strengths, and low precision (and therefore low F-score) across all signal strengths.  \texttt{CNSS+LowerBounds} consistently outperformed the uncalibrated scan and baseline methods for \texttt{WikiVote} across all signal strengths, and for \texttt{CondMat} for high signal strengths.  For low signal strengths on \texttt{CondMat}, \texttt{CNSS+LowerBounds} achieved higher detection power and recall than the uncalibrated scan and baseline methods, but had lower precision and F-score.  

Finally, we examined the mean and standard deviation of the selected value of $\alpha$ for each method across the 50 runs for each dataset and signal strength, noting that the signal was injected with a true $\alpha$ value of 0.01.  These results are shown in Table~\ref{table:chosen_alpha_piecewise_uniform}.  We observe that the uncalibrated scans and baseline NPSS methods fail to identify the true $\alpha$ value, instead consistently selecting the largest $\alpha$ value considered, i.e., $\alpha=0.09$.  In contrast, as the signal strength increases, \texttt{CNSS}, \texttt{CNSS+CoreTree}, and \texttt{CNSS+LowerBounds} are all able to reliably identify the value, $\alpha=0.01$, corresponding to the true injected signal.

\begin{table*}[!htbp]
\caption{Detection Power Comparison on \texttt{WikiVote} and \texttt{CondMat} datasets, assuming piecewise constant p-values.}
\label{table:detection_power_real_world_results_piecewise_uniform}
\begin{adjustbox}{center}
\centering
\resizebox{1.0\textwidth}{!}{
\begin{tabular}{c|c|c|c|c|c}
\hline
\toprule
\multirow{2}{*}{Methods} & \multicolumn{1}{c|}{WikiVote $(q=10)$} & \multicolumn{1}{c|}{WikiVote $(q=25)$}  & \multicolumn{1}{c|}{WikiVote $(q=50)$} & \multicolumn{1}{c|}{WikiVote $(q=75)$} & \multicolumn{1}{c}{WikiVote $(q=100)$}\\ \cline{2-6}
 & \multicolumn{1}{c|}{Detection\ Power} &  \multicolumn{1}{c|}{Detection\ Power} &   \multicolumn{1}{c|}{Detection\ Power}&  \multicolumn{1}{c|}{Detection\ Power} &  \multicolumn{1}{c}{Detection\ Power} 
 \\ \hline
LTSS & 0.12 &0.18  &0.4  &0.64  &0.86\\
EventTree & 0.12 &0.2  &0.24  &0.42  &0.4 \\
ColorCoding & 0.02 &0.12  &0.98  &1.0  &1.0 \\
NPHGS  & 0.0 &0.0  &0.0  &0.0  &0.0  \\
AdditiveScan & 0.04 &0.12  &0.52  &0.98  &1.0\\
\hline
CNSS+NoCalib & 0.12 &0.44  &0.86  &0.98  &1.0 \\
CNSS+CoreTree+NoCalib & 0.08 &0.4  &0.8  &0.98  &1.0 \\
CNSS+CoreTree & 1.0 &1.0  &1.0  &1.0  &1.0  \\
CNSS+LowerBound & 1.0 &1.0  &1.0  &1.0  &1.0\\
CNSS & 1.0 &1.0  &1.0  &1.0  &1.0\\
\bottomrule
\hline\hline
\toprule
\multirow{2}{*}{Methods} & \multicolumn{1}{c|}{CondMat $(q=10)$} & \multicolumn{1}{c|}{CondMat $(q=25)$}  & \multicolumn{1}{c|}{CondMat $(q=50)$} & \multicolumn{1}{c|}{CondMat $(q=75)$} & \multicolumn{1}{c}{CondMat $(q=100)$}\\ \cline{2-6}
 & \multicolumn{1}{c|}{Detection\ Power} &  \multicolumn{1}{c|}{Detection\ Power} &   \multicolumn{1}{c|}{Detection\ Power}&  \multicolumn{1}{c|}{Detection\ Power} &  \multicolumn{1}{c}{Detection\ Power} 
 \\ \hline
LTSS & 0.06 &0.14  &0.42  &0.76  &0.92\\
EventTree & 0.06 &0.1  &0.28  &0.38  &0.54 \\
ColorCoding & 0.08 &0.64  &1.0  &1.0  &1.0 \\
NPHGS & 0.0 &0.0  &0.0  &0.1  &0.86\\
AdditiveScan & 0.04 &0.34  &0.88  &1.0  &1.0\\
\hline
CNSS+NoCalib & 0.12 &0.36  &0.84  &0.98  &1.0\\
CNSS+CoreTree+NoCalib  & 0.44 &0.72  &0.94  &1.0  &1.0 \\
CNSS+CoreTree & 1.0 &1.0  &1.0  &1.0  &1.0  \\
CNSS+LowerBound & 1.0 &1.0  &1.0  &1.0  &1.0 \\
CNSS & 1.0 &1.0  &1.0  &1.0  &1.0\\
\bottomrule
\hline
\end{tabular}
}
\end{adjustbox}
\end{table*}

\begin{table*}[!htbp]
\caption{Detection Performance Results (Average Precision, Recall, and F-score) on \texttt{WikiVote} and \texttt{CondMat} datasets, assuming piecewise constant p-values. The bold number indicates that method has a significantly higher F-score than the other methods.  Statistical significance is computed using paired t-tests ($p<0.05$).}
\label{table:real_world_results_piecewise_uniform}
\begin{adjustbox}{center}
\centering
\resizebox{1.0\textwidth}{!}{
\begin{tabular}{c|c|c|c|c|c|c|c|c|c|c|c|c|c|c|c}
\hline
\toprule
\multirow{2}{*}{Methods} & \multicolumn{3}{c|}{WikiVote  ($q=10$)} & \multicolumn{3}{c|}{WikiVote $(q=25)$}  & \multicolumn{3}{c|}{WikiVote $(q=50)$} & \multicolumn{3}{c|}{WikiVote $(q=75)$} & \multicolumn{3}{c}{WikiVote $(q=100)$}\\ \cline{2-16}
 & \multicolumn{1}{c|}{Prec.} & \multicolumn{1}{c|}{Rec.} & F-Score & \multicolumn{1}{c|}{Prec.} & \multicolumn{1}{c|}{Rec.} & F-Score & \multicolumn{1}{c|}{Prec.} & \multicolumn{1}{c|}{Rec.} & F-Score &
 \multicolumn{1}{c|}{Prec.} & \multicolumn{1}{c|}{Rec.} & F-Score & \multicolumn{1}{c|}{Prec.} & \multicolumn{1}{c|}{Rec.} & F-Score 
 \\ \hline
LTSS & 0.016 & 0.100 & 0.027 & 0.015	&0.101	&0.027 &0.015	&0.104	&0.027 & 0.015	&0.107	&0.027 & 0.015	&0.110	&0.027\\
EventTree & 0.058	&0.006	&0.010 &0.083	&0.008	&0.015 & 0.064	&0.010	&0.017 & 0.060	&0.010	&0.017 & 0.049	&0.010	&0.016\\
ColorCoding & 0.045	&0.346	&0.080 & 0.068	&0.408	&0.117 & 0.105	&0.586	&0.178 &0.148	&0.779	&0.249 & 0.200	&1.000	&0.333\\
NPHGS & 0.051	&0.159	&0.077 & 0.089	&0.299	&0.137 &0.142	&0.528	&0.223 &0.187	&0.761	&0.300 & 0.227	&1.000	&0.369\\
AdditiveScan & 0.051	&0.160	&0.078 & 0.089	&0.299	&0.137 &0.142	&0.528	&0.223 & 0.187	&0.761	&0.300 & 0.226	&1.000	&0.369 \\
DFGS & 0.052	&0.159	&0.078 &0.092	&0.292	&0.139 & 0.150	&0.501	&0.228 & 0.184	&0.684	&0.288 &0.209	&0.820	&0.329\\
\hline
CNSS+NoCalib & 0.034	&0.303	&0.061 & 0.046	&0.422	&0.084 & 0.066	&0.612	&0.120 & 0.086	& 0.805	& 0.155 &0.105	&1.000	&0.190\\
CNSS+CoreTree+NoCalib  & 0.035	&0.301	&0.063 & 0.049	&0.423	&0.088 & 0.070	&0.614	&0.125 & 0.090	&0.806	&0.162 & 0.110	&1.000	&0.198\\
CNSS+CoreTree & 0.169	&0.164	&\textbf{0.102} & 0.467	&0.211	&0.275 &0.616	&0.471	&\textbf{0.522} &0.714	&0.730	&\textbf{0.718} & 0.651	&0.992	&\textbf{0.773}\\
CNSS+LowerBound & 0.068	& 0.174	& 0.096 & 0.127	 &0.312	 &0.177 & 0.274	 &0.534	 &0.355 & 0.404	& 0.760	 & 0.526 & 0.485	&1.000	& 0.653\\
CNSS & 0.163	&0.246	&\textbf{0.107} & 0.464	&0.218	&\textbf{0.283} & 0.608	&0.473	&\textbf{0.521} &0.679	&0.736	&0.695 &0.595	&0.992	&0.730\\
\bottomrule
\hline\hline
\toprule
\multirow{2}{*}{Methods} & \multicolumn{3}{c|}{CondMat  ($q=10$)} & \multicolumn{3}{c|}{CondMat $(q=25)$}  & \multicolumn{3}{c|}{CondMat $(q=50)$} & \multicolumn{3}{c|}{CondMat $(q=75)$} & \multicolumn{3}{c}{CondMat $(q=100)$}\\ \cline{2-16}
 & \multicolumn{1}{c|}{Prec.} & \multicolumn{1}{c|}{Rec.} & F-Score & \multicolumn{1}{c|}{Prec.} & \multicolumn{1}{c|}{Rec.} & F-Score & \multicolumn{1}{c|}{Prec.} & \multicolumn{1}{c|}{Rec.} & F-Score &
 \multicolumn{1}{c|}{Prec.} & \multicolumn{1}{c|}{Rec.} & F-Score & \multicolumn{1}{c|}{Prec.} & \multicolumn{1}{c|}{Rec.} & F-Score 
 \\ \hline
LTSS & 0.009	&0.092	&0.017 &0.009	&0.093	&0.017 &0.009	&0.095	&0.017 &0.009	&0.097	&0.017 &0.009	&0.099	&0.017 \\
EventTree& 0.055	&0.007	&0.012 & 0.052	&0.008	&0.014 &0.051	&0.010	&0.016 &0.053	&0.012	&0.019 & 0.053	&0.014	&0.021\\
ColorCoding & 0.036	&0.120	&0.055 & 0.076	&0.228	&0.113 &0.139	&0.460	&0.213 & 0.204	&0.730	&0.318 & 0.273	&1.000	&0.429\\
NPHGS & 0.052	&0.061	&0.055 & 0.103	&0.154	&0.122 &0.193	&0.398	&0.258 &0.261	&0.697	&0.379 &0.316	&1.000	&0.480\\
AdditiveScan & 0.048	&0.076	&0.059 &0.100	&0.182	&0.128 &0.184	&0.433	&0.257 &0.247	&0.715	&0.367 &0.298	&1.000	&0.459\\
DFGS &  0.046	&0.076	&0.057 & 0.093	&0.185	&0.123  &0.174	&0.425	&0.246 & 0.240	&0.709	&0.358 &0.294	&1.000	&0.454\\
\hline
CNSS+NoCalib & 0.021	&0.289	&0.039 &0.030	&0.418	&0.056 & 0.044	&0.614	&0.081 & 0.056	&0.799	&0.105 & 0.070	&1.000	&0.131\\
CNSS+CoreTree+NoCalib & 0.025	&0.275	&0.046 &0.036	&0.398	&0.066 & 0.053	& 0.592	& 0.097 & 0.069	& 0.790	& 0.127 & 0.086	&1.000	&0.159\\
CNSS+CoreTree & 0.049	&0.235	&0.052 &0.294	&0.213	&0.130 &0.584	&0.314	&0.350 &0.703	&0.530	&0.577 & 0.836	&0.836	&0.829\\
CNSS+LowerBound & 0.021	&0.286	&0.040 &0.031	&0.416	&0.057 &0.057	&0.609	&0.102 &0.218	&0.783	&0.315 &0.422	&1.000	&0.591\\
CNSS & 0.072	&0.300	&\textbf{0.064} & 0.372	&0.200	&\textbf{0.192} & 0.524	&0.394	&\textbf{0.390} &0.766	&0.631	&\textbf{0.677} & 0.893	&1.000	&\textbf{0.943}\\
\bottomrule
\hline
\end{tabular}
}
\end{adjustbox}
\end{table*}

\begin{table*}[!htbp]
\caption{Chosen $\alpha$ values for each method on \texttt{WikiVote} and \texttt{CondMat} datasets, assuming piecewise constant p-values.  Note that the \texttt{EventTree} method, unlike the nonparametric scan approaches, does not optimize over $\alpha$.}
\label{table:chosen_alpha_piecewise_uniform}
\begin{adjustbox}{center}
\centering
\resizebox{1.0\textwidth}{!}{
\begin{tabular}{c|c|c|c|c|c|c|c|c|c|c}
\hline
\toprule
\multirow{2}{*}{Methods} & \multicolumn{2}{c|}{WikiVote  ($q=10$)} & \multicolumn{2}{c|}{WikiVote $(q=25)$}  & \multicolumn{2}{c|}{WikiVote $(q=50)$} & \multicolumn{2}{c|}{WikiVote $(q=75)$} & \multicolumn{2}{c}{WikiVote $(q=100)$}\\ \cline{2-11}
 & Mean & SD  & Mean & SD & Mean & SD & Mean & SD &Mean & SD\\ \hline
LTSS & 0.09 & 0 & 0.09 & 0 & 0.09 & 0 & 0.09 & 0 & 0.09 & 0 \\
EventTree & - & - & - & - & - & - & - & - & - & - \\
ColorCoding & 0.09 & 0 & 0.09 & 0 & 0.09 & 0 & 0.09 & 0 & 0.09 & 0 \\
NPHGS & 0.09 & 0 & 0.09 & 0 & 0.09 & 0 & 0.09 & 0 & 0.09 & 0 \\
AdditiveScan & 0.09 & 0 & 0.09 & 0 & 0.09 & 0 & 0.09 & 0 & 0.09 & 0 \\
DFGS & 0.09 & 0 & 0.09 & 0 & 0.09 & 0 & 0.09 & 0 & 0.09 & 0 \\
\hline
CNSS+NoCalib & 0.09 & 0 & 0.09 & 0 & 0.09 & 0 & 0.09 & 0 & 0.09 & 0\\
CNSS+CoreTree+NoCalib & 0.09 & 0 & 0.09 & 0 & 0.09 & 0 & 0.09 & 0 & 0.09 & 0\\
CNSS+CoreTree & 0.040 & 0.038 & 0.014 & 0.010 & 0.014 & 0.012 & 0.011 & 0.006 & 0.010 & 0.001 \\
CNSS+LowerBound & 0.022 & 0.006 &0.021 & 0.007 & 0.014 & 0.007 & 0.011 & 0.003 & 0.01 & 0 \\
CNSS & 0.040 & 0.038 & 0.014 & 0.010 & 0.014 & 0.012 & 0.011 & 0.006 & 0.01 & 0 \\
\bottomrule
\hline\hline
\toprule
\multirow{2}{*}{Methods} & \multicolumn{2}{c|}{CondMat  ($q=10$)} & \multicolumn{2}{c|}{CondMat $(q=25)$}  & \multicolumn{2}{c|}{CondMat $(q=50)$} & \multicolumn{2}{c|}{CondMat $(q=75)$} & \multicolumn{2}{c}{CondMat $(q=100)$}\\ \cline{2-11}
 & Mean & SD  & Mean & SD & Mean & SD & Mean & SD &Mean & SD\\ \hline
LTSS & 0.09 & 0 & 0.09 & 0 & 0.09 & 0 & 0.09 & 0 & 0.09 & 0 \\
EventTree & - & - & - & - & - & - & - & - & - & - \\
ColorCoding & 0.09 & 0 & 0.09 & 0 & 0.09 & 0 & 0.09 & 0 & 0.09 & 0 \\
NPHGS & 0.09 & 0 & 0.09 & 0 & 0.09 & 0 & 0.09 & 0 & 0.09 & 0 \\
AdditiveScan & 0.09 & 0 & 0.09 & 0 & 0.09 & 0 & 0.09 & 0 & 0.09 & 0\\
DFGS & 0.09 & 0 & 0.09 & 0 & 0.09 & 0 & 0.09 & 0 & 0.09 & 0 \\
\hline
CNSS+NoCalib & 0.09 & 0 & 0.09 & 0 & 0.09 & 0 & 0.09 & 0 & 0.09 & 0\\
CNSS+CoreTree+NoCalib  & 0.09 & 0 & 0.09 & 0 & 0.09 & 0 & 0.09 & 0 & 0.09 & 0\\
CNSS+CoreTree & 0.077 &0.025 &0.044 & 0.034 & 0.019 &0.018 & 0.018 & 0.017 & 0.011 & 0.004 \\
CNSS+LowerBound & 0.087 & 0.006 & 0.087 & 0.006 &0.076 &0.022 &0.039 & 0.036 & 0.011 & 0.004 \\
CNSS & 0.055 & 0.037 & 0.02 & 0.02 & 0.021   & 0.019 & 0.019 & 0.017 & 0.01 & 0\\
\bottomrule
\hline
\end{tabular}
}
\end{adjustbox}
\end{table*}

\section{Case Studies}
\label{app:case}

\subsection{Black Lives Matter Event Detection in Twitter}
\label{app:blm_case}

\begin{figure*}[!ht]
    \centering
    \includegraphics[width=0.9\linewidth]{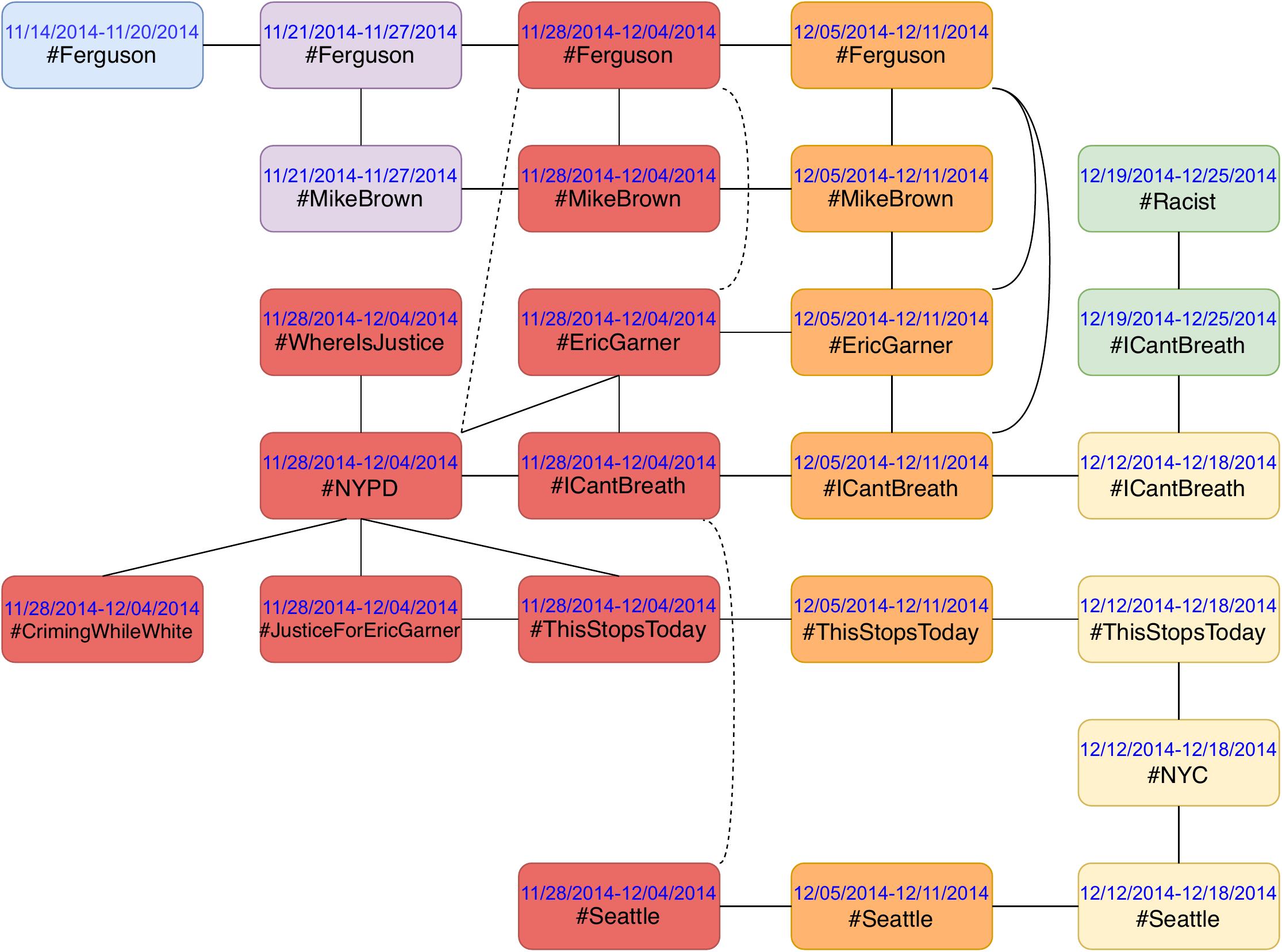}
    \caption{Detected Subgraph in \#BlackLivesMatter Tweets. Different colors indicate different weeks. Dashed lines indicate crossing edges.}
    \label{fig:cnss_blm}
\end{figure*}

In addition to the COVID-19 case study described in the main paper, we also conduct another case study using tweets with hashtag \textit{\#BlackLivesMatter}\footnote{\url{https://dataverse.harvard.edu/dataset.xhtml?persistentId=doi:10.7910/DVN/IQ525U&version=1.0}} collected from August 8th, 2014 to August 31st, 2015 to discover events 
related to this social movement for racial justice during these 56 weeks. There are 442,077 unique tweets and 42,898 unique hashtags in total. 
This Twitter dataset contains account user names which may reveal personally identifiable information. However, we pre-processed the data to remove the user names, and only used the hashtags in each tweet and corresponding creation timestamps in our research.
We generate a temporal graph with 100,123 nodes and 257,641 edges based on the mentioned hashtags in tweets during these 56 weeks, in which each node represents a mentioned hashtag in a particular week. 
We connect two nodes (two hashtags in a week) if they were co-mentioned in at least one tweet that week. 
We also add temporal edges between node $v_{i,t}$ (hashtag $i$ in week $t$) and node $v_{i,t+1}$ (hashtag $i$ in week $t+1$) if both nodes have non-zero mentioned counts in each week.
In addition, we also add dummy nodes and edges to smooth the temporal transition. For example, a dummy node $v_{i,t}$  is added into the graph if the hashtag $i$ is not mentioned in week $t$ but is mentioned in both week $t-1$ and week $t+1$. The temporal edges $(v_{i,t-1}, v_{i, t})$ and $(v_{i,t}, v_{i,t+1})$ are also added into the graph.

Some co-mentioned hashtags may not be relevant to each other. In order to detect a event with relevant hashtags, we remove the edges between any two hashtags in a week if the overlap coefficient $\rho_{ij}^{t}$ of the edge $(v_{i,t}, v_{j,t})$ is smaller than a constant (i.e., $\rho_{ij}^{t} < 0.1$).  Overlap coefficient is defined as 
\begin{equation}
    \rho_{ij}^t = \frac{\text{\#co-mention(tag\_$i$, tag\_$j$) in week $t$}}{(\text{\#tag\_$i$ + \#tag\_$j$ - \#co-mention(tag\_$i$, tag\_$j$)) in week $t$}}.
\end{equation}
The processed graph includes 100,123 nodes and 137,984 edges. The p-value of each node in the graph is computed based on the rank of the expectation-based Poisson (EBP) statistic~\cite{neill2012fast} divided by the total number of nodes. For each node, we compute the EBP score as:
\begin{equation}
    \texttt{EBP} = C \log\frac{C}{B} + B - C,
\end{equation}
if $C>B$, and $\texttt{EBP} = 0$ otherwise, where $C$ is the observed count (number of mentions of hashtag $h$ in week $w$) and $B$ is a baseline assuming independence of hashtag counts and time, i.e.,
\begin{equation}
B = \frac{(\text{\# tweets in week $w$})(\text{\# tweets mentioning hashtag $h$})}{\text{\# total tweets}}.
\end{equation}

We apply our \texttt{CNSS} method on this processed graph and discover one subgraph that consists of 3,294 nodes. 
By observing the hashtags, we find that it is a large subgraph connecting multiple events related to the Black Lives Matter social movement. 
One reason for this seems to be that hashtags related to certain events (such as the police-involved killings of Mike Brown in Ferguson, MO and Eric Garner in New York City)
are used by the BLM movement not just at the time those events occurred, but as a rallying cry throughout the temporal duration of the data, perhaps to emphasize that these abuses have persisted throughout time and are all tied to the same underlying phenomena of societal injustice, inequity, and discrimination.

In order to narrow the focus of our detection method to a specific event and validate our algorithm, we decrease the value of $\alpha_{\max}$ from $0.09$ to a smaller value $0.008$ empirically, in where we could view the choice of $\alpha_{\max}$ as influencing the granularity of a detected event. 
In the end, we are able to obtain a significant event with much smaller hashtag cluster as shown in Figure \ref{fig:cnss_blm}.
To be fair to the competing methods, we also attempt to shrink the granularity of detection by reducing $\alpha_{\max}$ for these methods as well. 
For \texttt{LTSS}, we shrink $\alpha_{\max}$ from $0.09$ to $0.008$, since it detects 6,024 nodes that maximize the BJ score with $\alpha_{\max}=0.09$. 
With $\alpha_{\max}=0.008$, \texttt{LTSS} detects $676$ nodes that spread over $54$ weeks of data. 
Therefore, we shrink the $\alpha_{\max}$ to $0.001$ for \texttt{LTSS} and still detect $88$ nodes that cross over $52$ weeks and do not represent any single, specific event.
For \texttt{EventTree}, the granularity is controlled by the normalization coefficient $\lambda$. 
We set $\lambda \in \{0.001, \cdots, 0.009,0.01, \cdots, 0.09\}$. 
The smallest detected subgraph or event has $130$ nodes with $\lambda=0.001$ that spread over $40$ weeks, again failing to identify a single event that is localized in time.


As shown in Figure \ref{fig:cnss_blm}, the hashtags detected by \texttt{CNSS} correspond to the widespread protests related to two closely occurring events: the grand juries' decisions not to indict the police officers responsible for the deaths of Eric Garner and Mike Brown. 
On July 17, 2014, Eric Garner died in the New York City borough of Staten Island after Daniel Pantaleo, a New York Police Department (NYPD) officer, put him in a prohibited chokehold while arresting him.
On August 9, 2014, Mike Brown, an 18-year-old Black man, was fatally shot by a white Ferguson police officer in the city of Ferguson, Missouri. 
On November 24, 2014, the St. Louis County grand jury decided not to indict the police officer, and on December 4, 2014, a Richmond County grand jury decided not to indict Pantaleo. 
These decisions stirred massive public protests and rallies in Ferguson, New York City, and Seattle in the following weeks.  As we can see in Figure \ref{fig:cnss_blm}, our detected subgraph clearly captures the emergence, the peak, and the end of this event using hashtags of tweets, while the lower volumes of hashtag mentions from  continued references after these events are not included in the detected subgraph.

In contrast, as shown in Table \ref{table:cnss_blm} and Figure \ref{fig:blm}, \texttt{EventTree} and \texttt{LTSS} detect multiple small events indicated by multiple peaks in the Figure \ref{fig:blm} across long periods. 
We annotate four peaks for \texttt{EventTree}. 
The first event is corresponding to the same event detected by \texttt{CNSS} in Figure \ref{fig:cnss_blm}. 
The second peak is around the Martin Luther King Jr. Day on January 19, 2015. 
The third event is about the death of Freddie Gray in Baltimore on April 19, 2015. 
The fourth event is about the death of Sandra Bland in Texas on July 13, 2015.
We also annotate five peaks for \texttt{LTSS}. 
The first event is corresponding to the same event detected by \texttt{CNSS} in Figure \ref{fig:cnss_blm}. 
The second peak contains multiple unrelated hashtags, such as \textit{\#Nigeria}, \textit{\#Grammys}, \textit{\#Oscars}, and \textit{\#ReclaimMLK}. 
The third peak corresponds to the shooting of Tony Robinson on March 6, 2015, while the fourth and fifth peaks correspond to the deaths of Freddie Gray and Sandra Bland respectively.

\begin{table*}[!ht]
\caption{BlackLivesMatter: Statistics of Detected Subgraphs by Different Methods}\label{table:cnss_blm}
\begin{adjustbox}{center}
\centering
\resizebox{0.8\textwidth}{!}{
\begin{tabular}{c|c|c|c}
\toprule
 & EventTree & LTSS & CNSS \\ 
\hline
periods &11/21/2014-08/20/2015 &08/29/2014-08/13/2015& 11/14/2014-12/25/2014\\
\hline
\# of weeks &40 &52 &6  \\ 
\hline
\# of nodes &130 &88 &25 \\
\hline
\# of hashtags &51 &73 &12  \\
\bottomrule
\end{tabular}
}
\end{adjustbox}
\end{table*}

\begin{figure*}
    \includegraphics[width=0.33\linewidth]{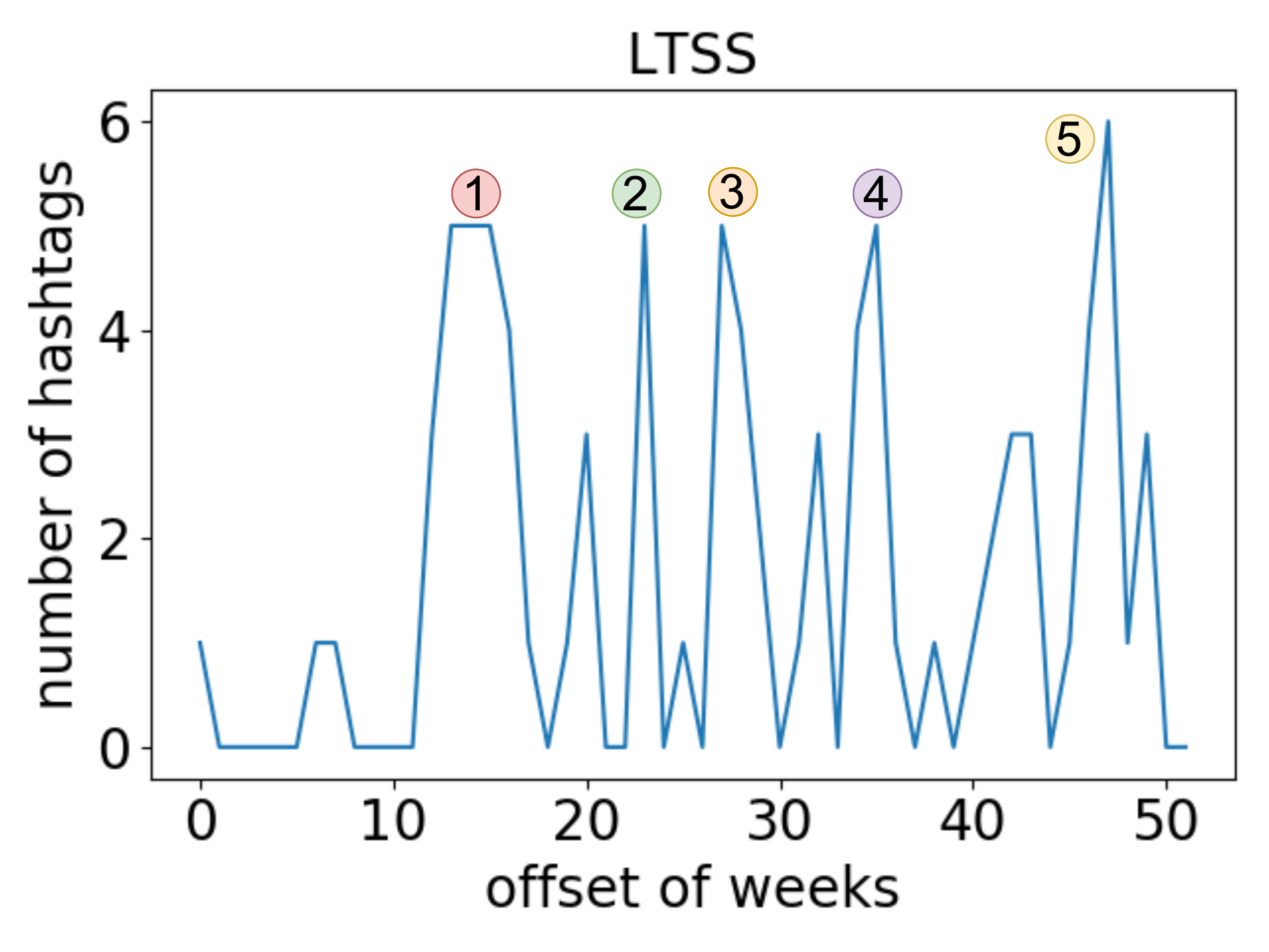}
    \includegraphics[width=0.33\linewidth]{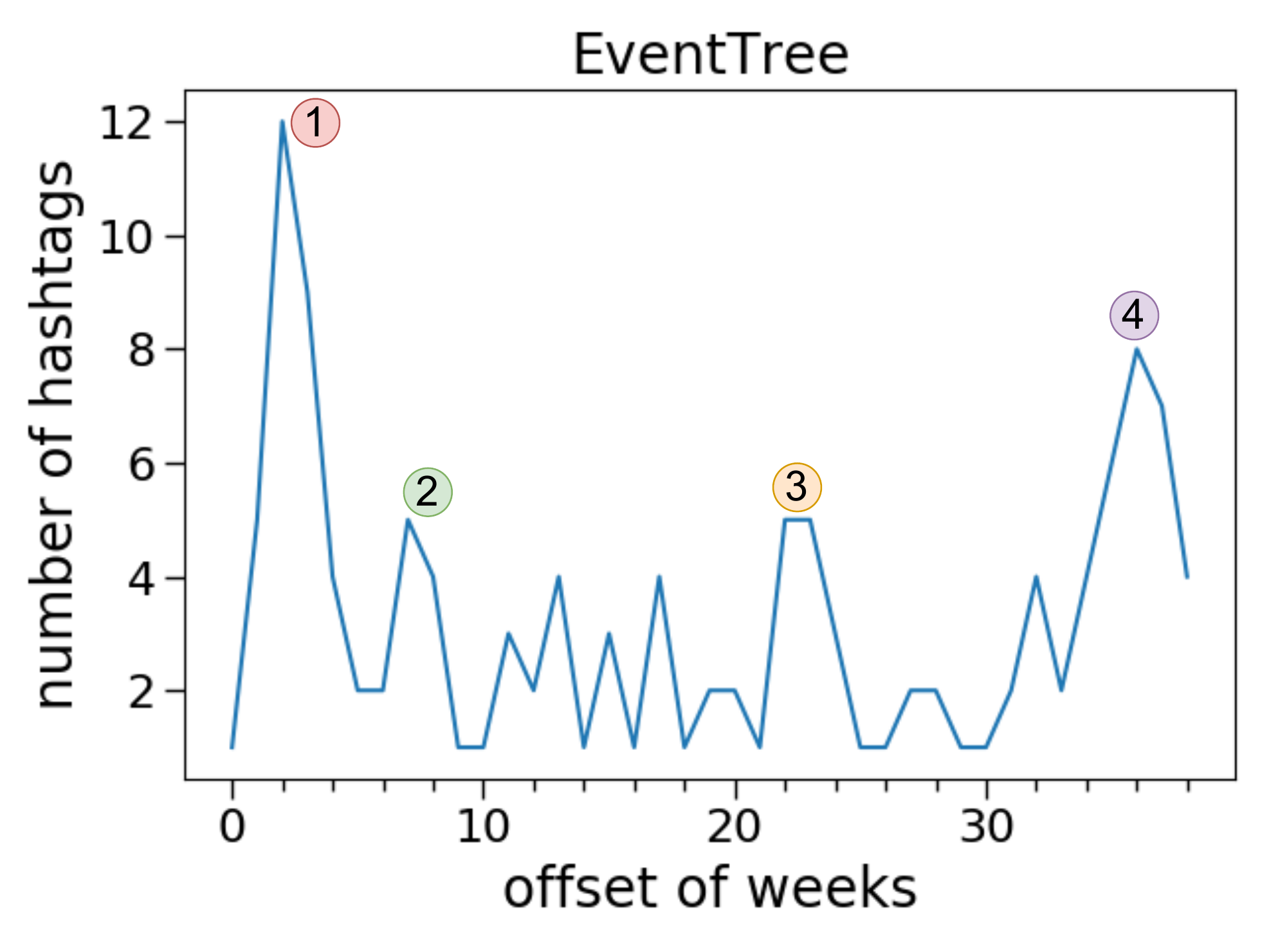}
    \includegraphics[width=0.33\linewidth]{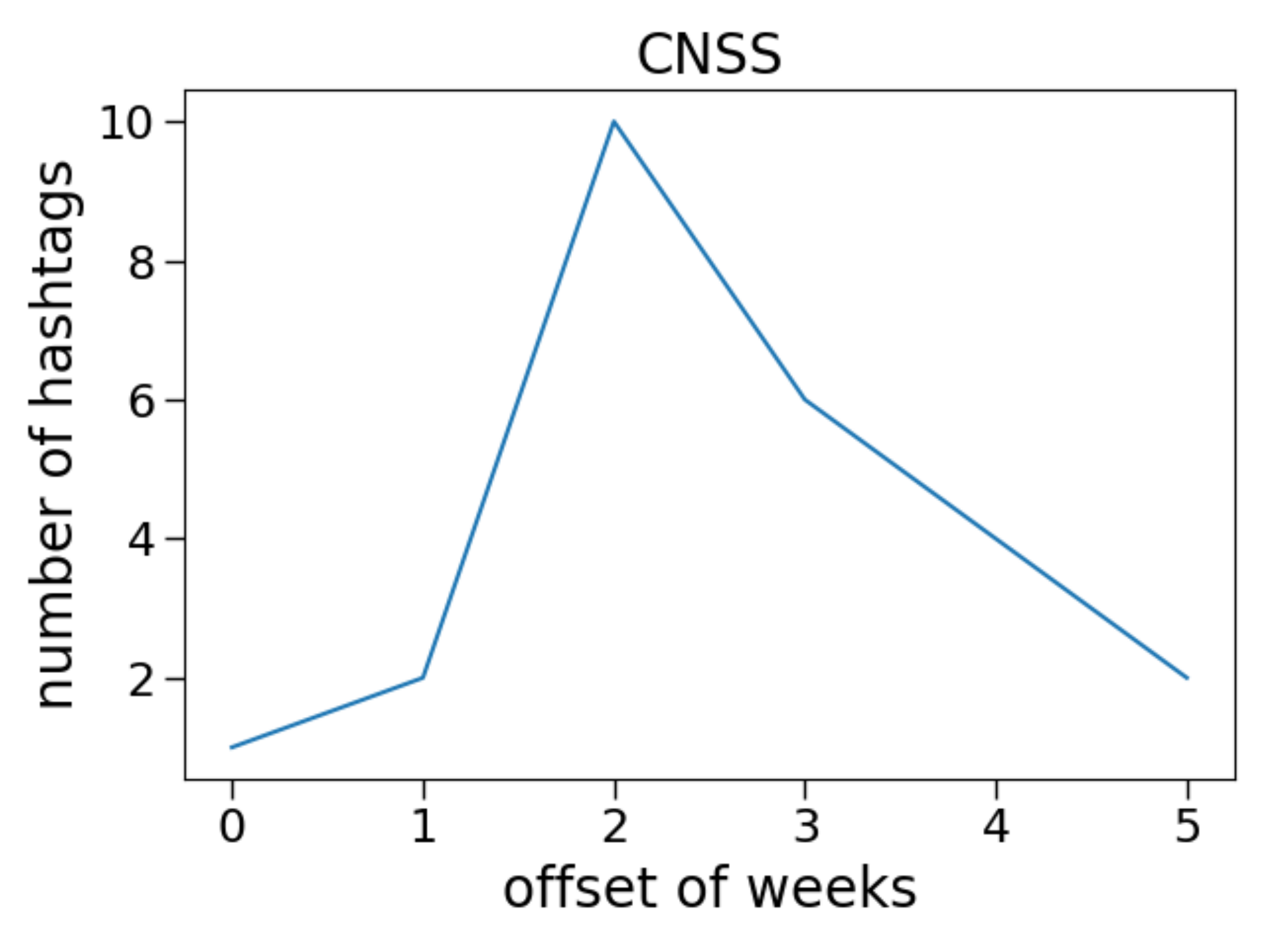}
\caption{Detected Hashtags Distributions Over Time for \texttt{LTSS}, \texttt{EventTree}, and \texttt{CNSS}.}
\label{fig:blm}
\end{figure*}

\begin{table*}[!ht]
\caption{COVID-19 Case Study: Statistics of Top-3 \texttt{CNSS} Detected Subgraphs. The calibrated BJ scores of these 3 subgraphs are higher than all $100$ calibrated BJ scores under $\mathcal{H}_0$.}\label{table:cnss_covid19}
\centering
\resizebox{0.6\textwidth}{!}{
\begin{tabular}{c|c|c|c|c|c}
\toprule
 &$N$ &$\alpha$ &$N_{\alpha}$ &$\alpha'$ & Calibrated BJ-Score\\ 
 \hline
1st Subgraph & 4707 & 0.09 & 4702 & 0.843 & 774.249 \\
\hline
2nd Subgraph & 910 & 0.09 & 898 & 0.897 & 61.187\\
\hline
3rd Subgraph & 100 & 0.03 & 100 & 0.708 & 34.595\\
\bottomrule
\end{tabular}
}
\end{table*}

These results demonstrate that, even when adjusting the significance threshold $\alpha$ for finer event granularity, the competing methods identify multiple individually anomalous hashtag-week combinations and fail to detect a coherent subgraph corresponding to a single event of interest.

\subsection{COVID-19 Case Study}
\label{app:covid19_case}

We now show visualizations of the top-1 detected subgraph for \texttt{CNSS} and two competing methods (\texttt{LTSS} and \texttt{EventTree}) in Figures \ref{fig:covid19_cnss_detected_subgraph}, \ref{fig:covid19_ltss_detected_subgraph}, and \ref{fig:covid19_eventtree_detected_subgraph}, respectively. We note that the overall spatial-temporal subgraphs identified by \texttt{CNSS} and \texttt{EventTree} are connected, though the set of spatial locations for any given time slice may not be connected. LTSS does not enforce any connectivity constraints.

The connected subgraph identified by \texttt{CNSS} clearly demonstrates the initial progression of the COVID-19 outbreak across the eastern United States between March-June 2020, with initial peaks in New York City and the northeastern U.S.~that gradually spread into the southeastern U.S. and Texas.  In contrast, the connected subgraph identified by \texttt{EventTree} and the subset identified by \texttt{LTSS} are dispersed over the entire country and do not clearly show the progression of the outbreak's peak.  

We used the following county adjacency data to build the spatial temporal network for the COVID-19 case study: 
\url{https://www.census.gov/geographies/reference-files/2010/geo/county-adjacency.html}, and statistics of the top-3 subgraphs detected by \texttt{CNSS} are presented in Table~\ref{table:cnss_covid19}.

\clearpage

\begin{figure*}[!ht]
      \begin{subfigure}{0.26\textwidth}
          \centering
         \includegraphics[width=1.\textwidth]{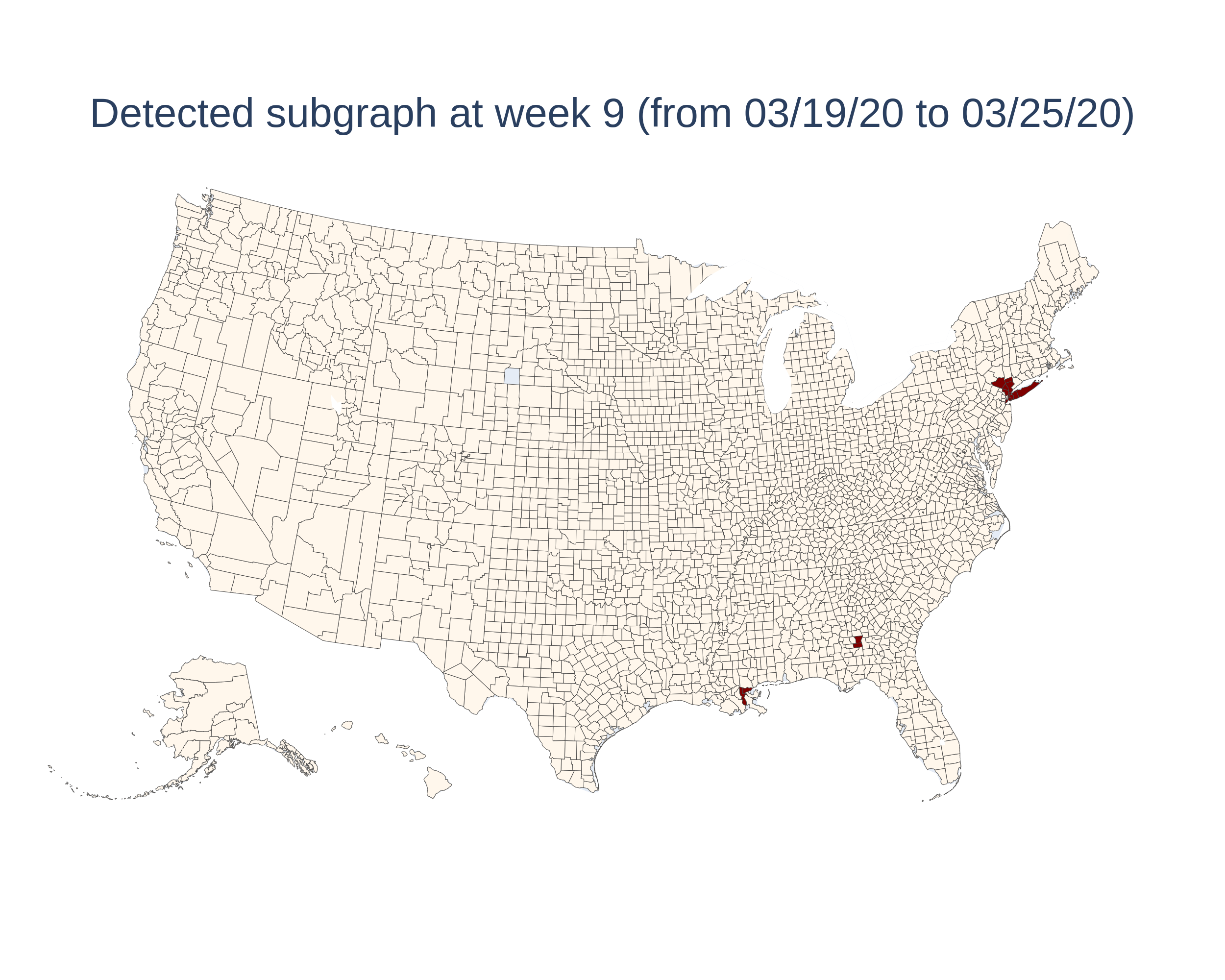}
      \end{subfigure}
      \begin{subfigure}{0.26\textwidth}
          \centering
          \includegraphics[width=1.\textwidth]{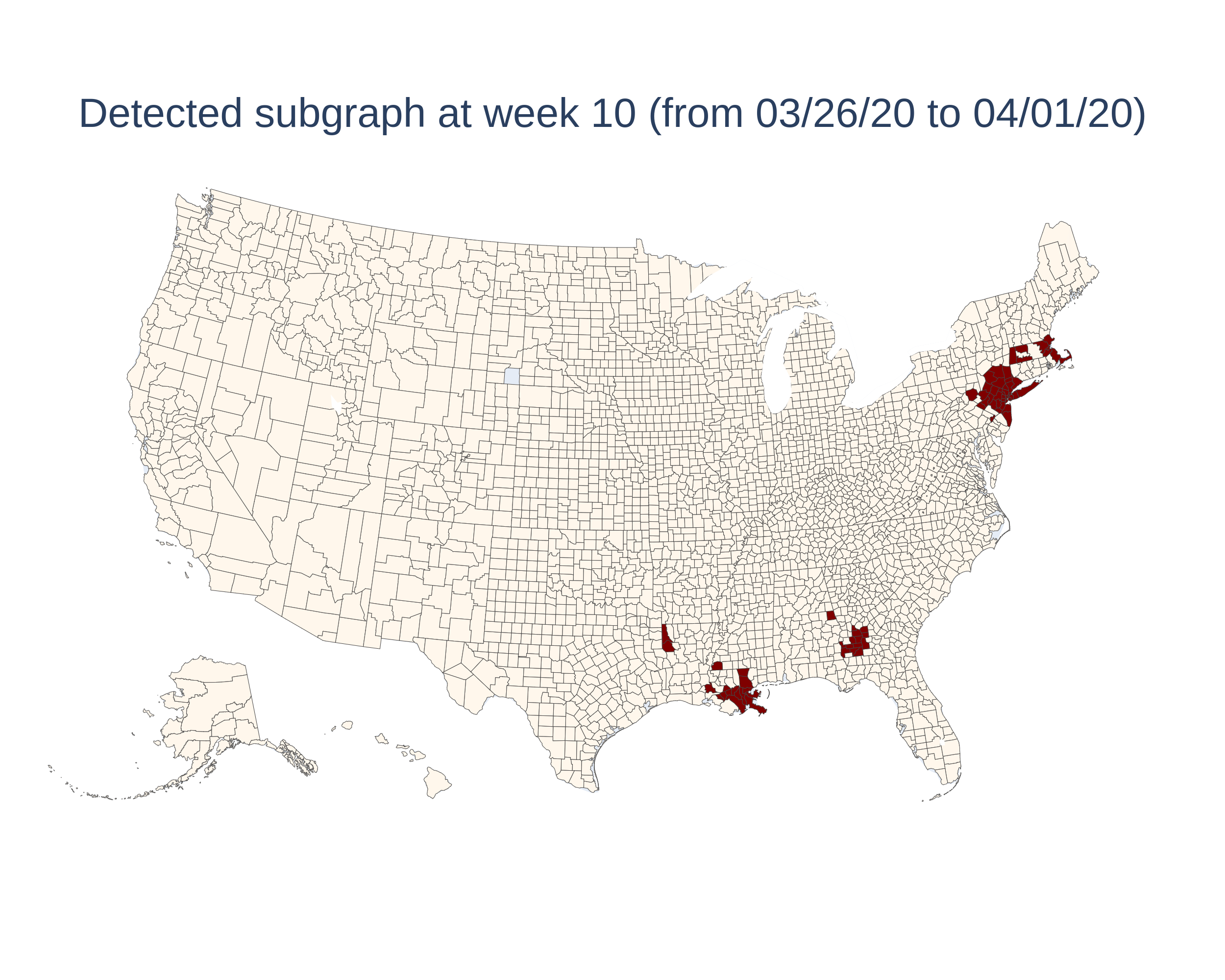}
      \end{subfigure}
      \begin{subfigure}{0.26\textwidth}
          \centering
          \includegraphics[width=1.\textwidth]{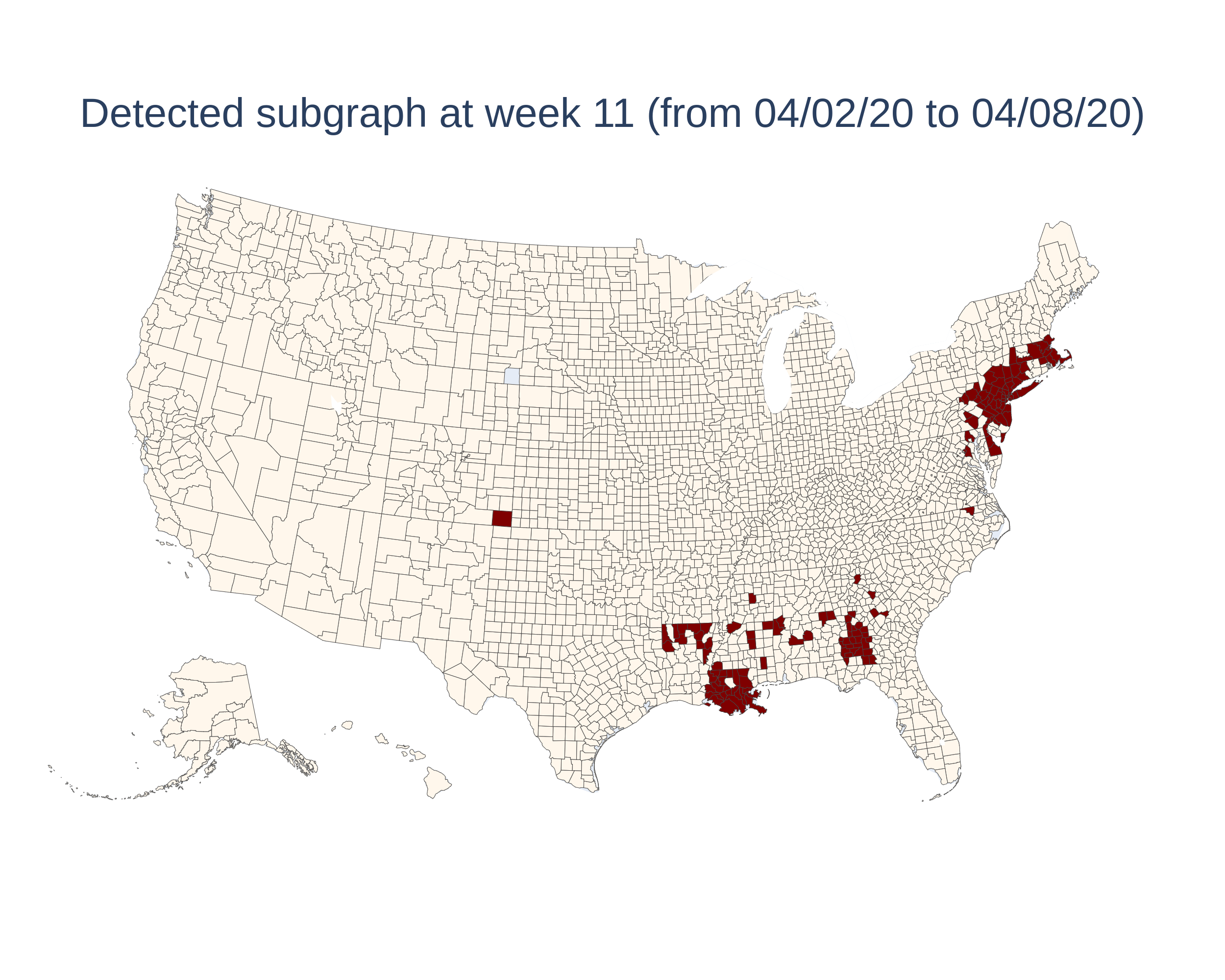}
     \end{subfigure}
      \newline
      \begin{subfigure}{0.26\textwidth}
          \centering
          \includegraphics[width=1.\textwidth]{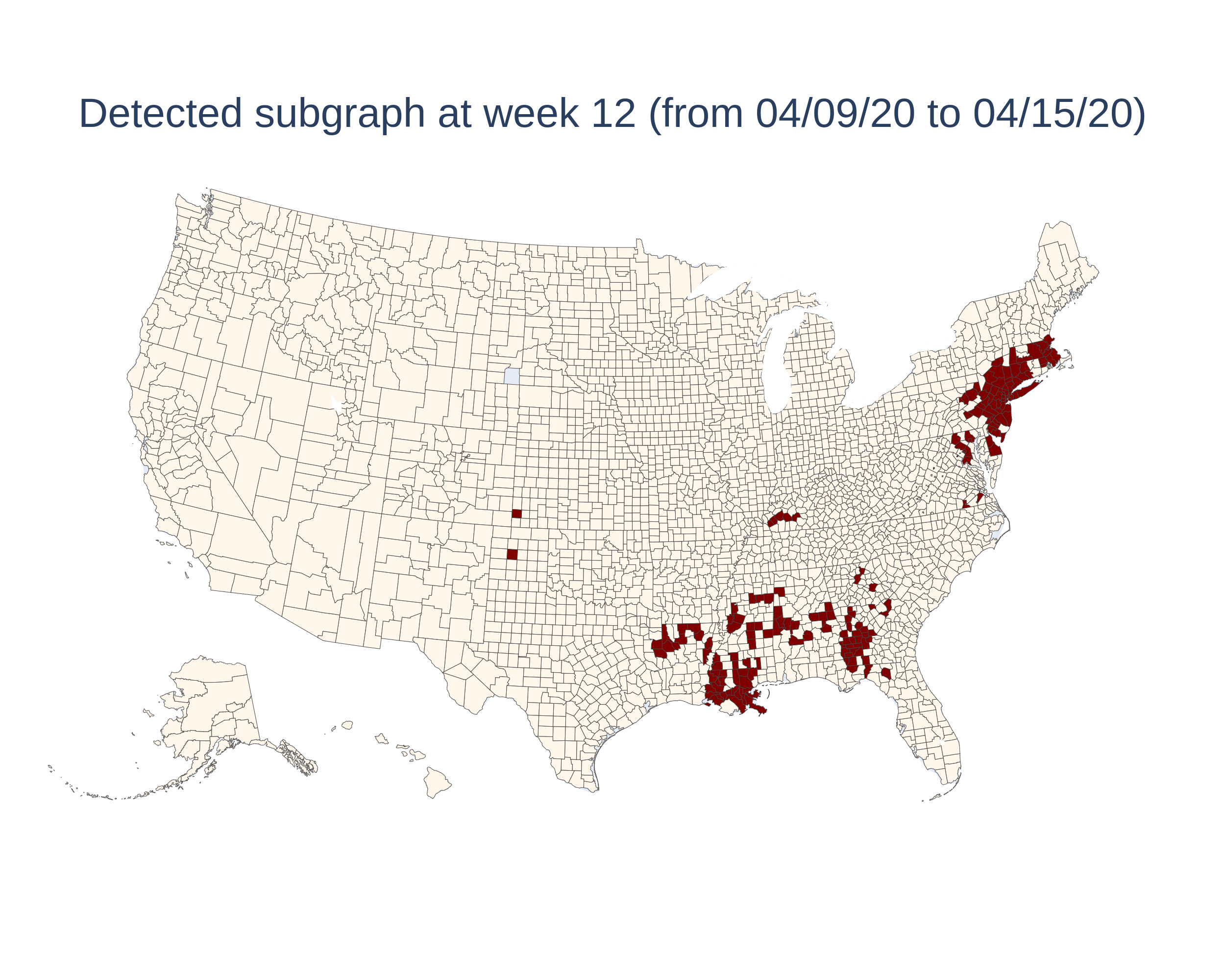}
     \end{subfigure}
     \begin{subfigure}{0.26\textwidth}
          \centering
          \includegraphics[width=1.\textwidth]{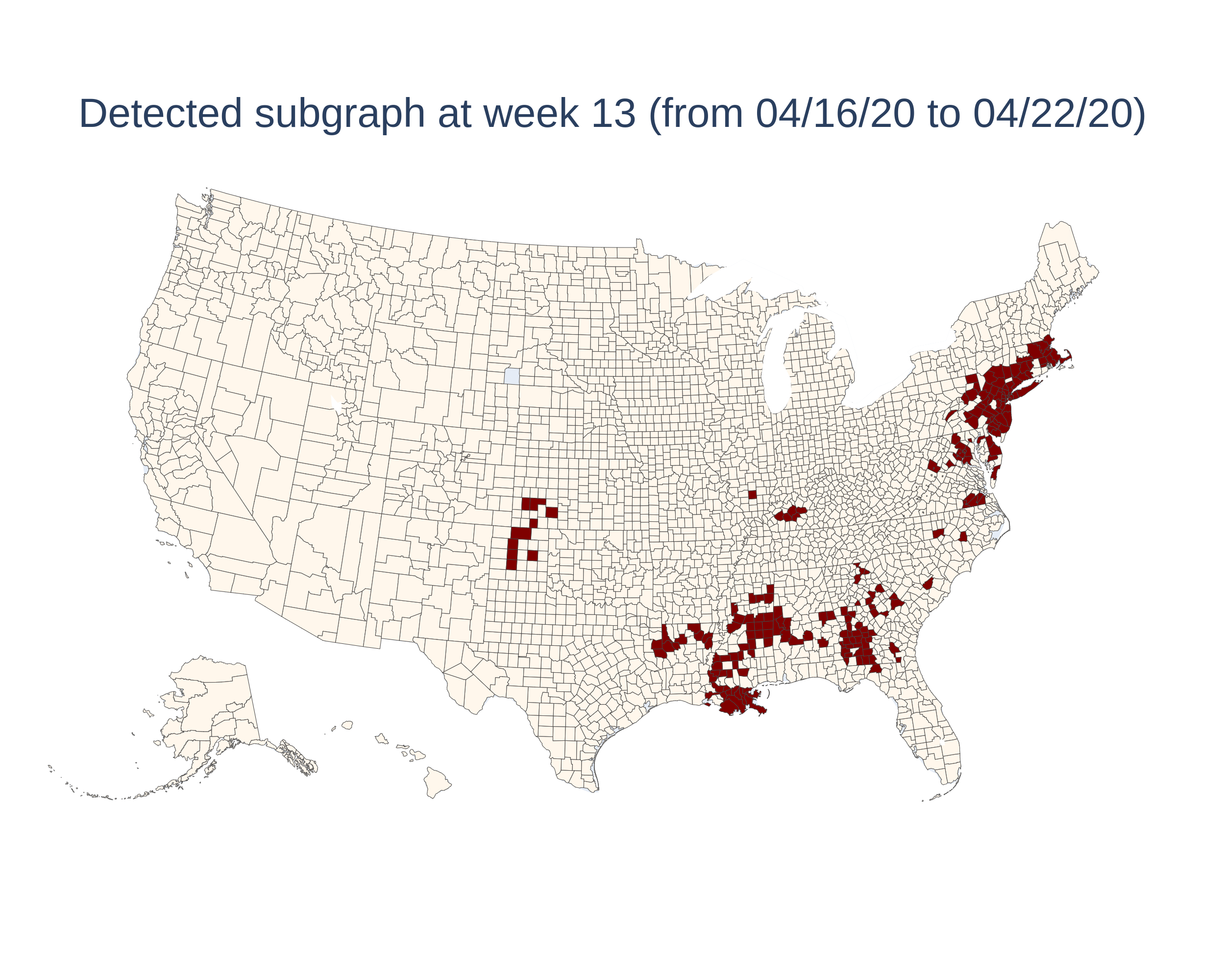}
      \end{subfigure}
      \begin{subfigure}{0.26\textwidth}
          \centering
          \includegraphics[width=1.\textwidth]{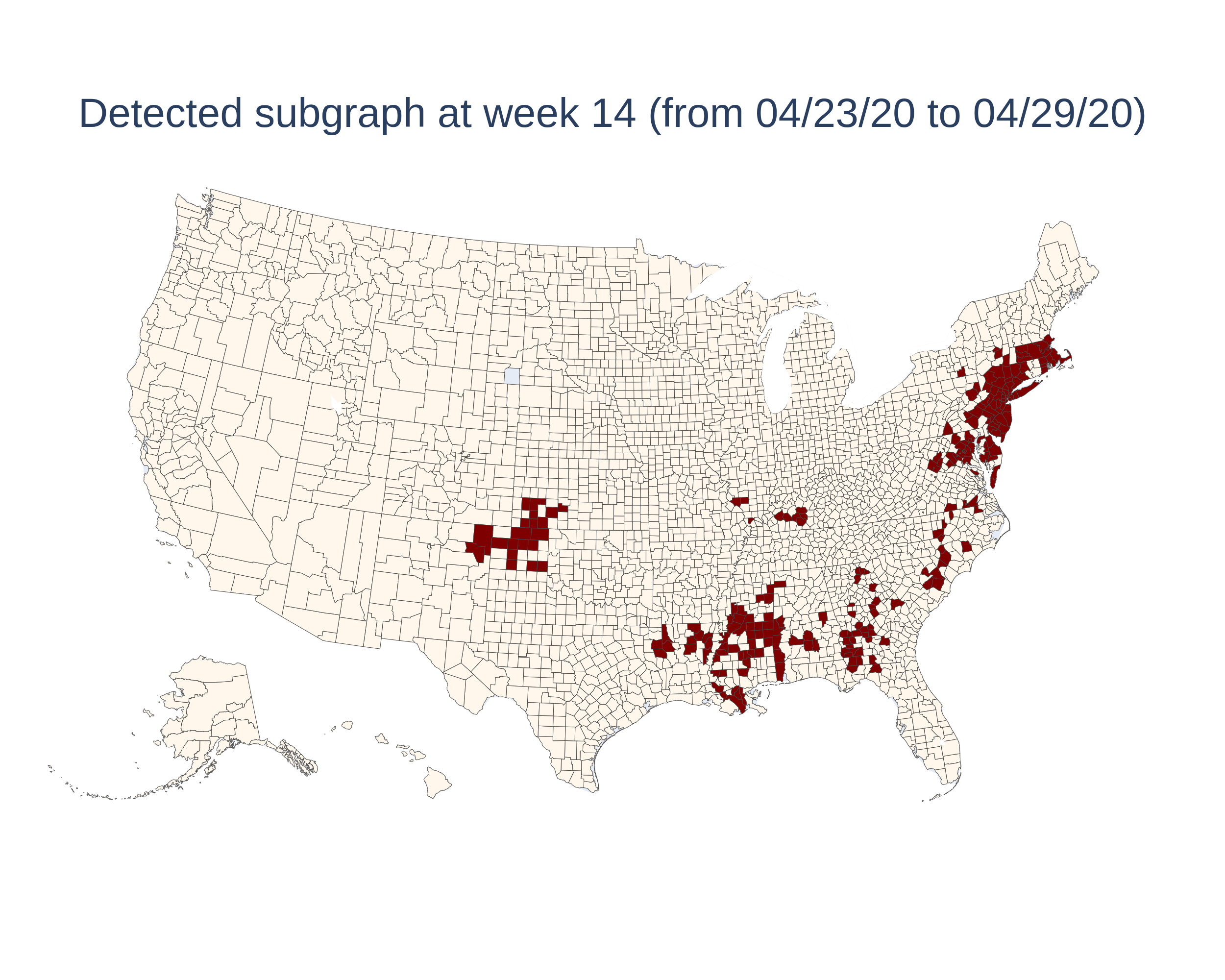}
     \end{subfigure}
      \newline
      \begin{subfigure}{0.26\textwidth}
          \centering
          \includegraphics[width=1.\textwidth]{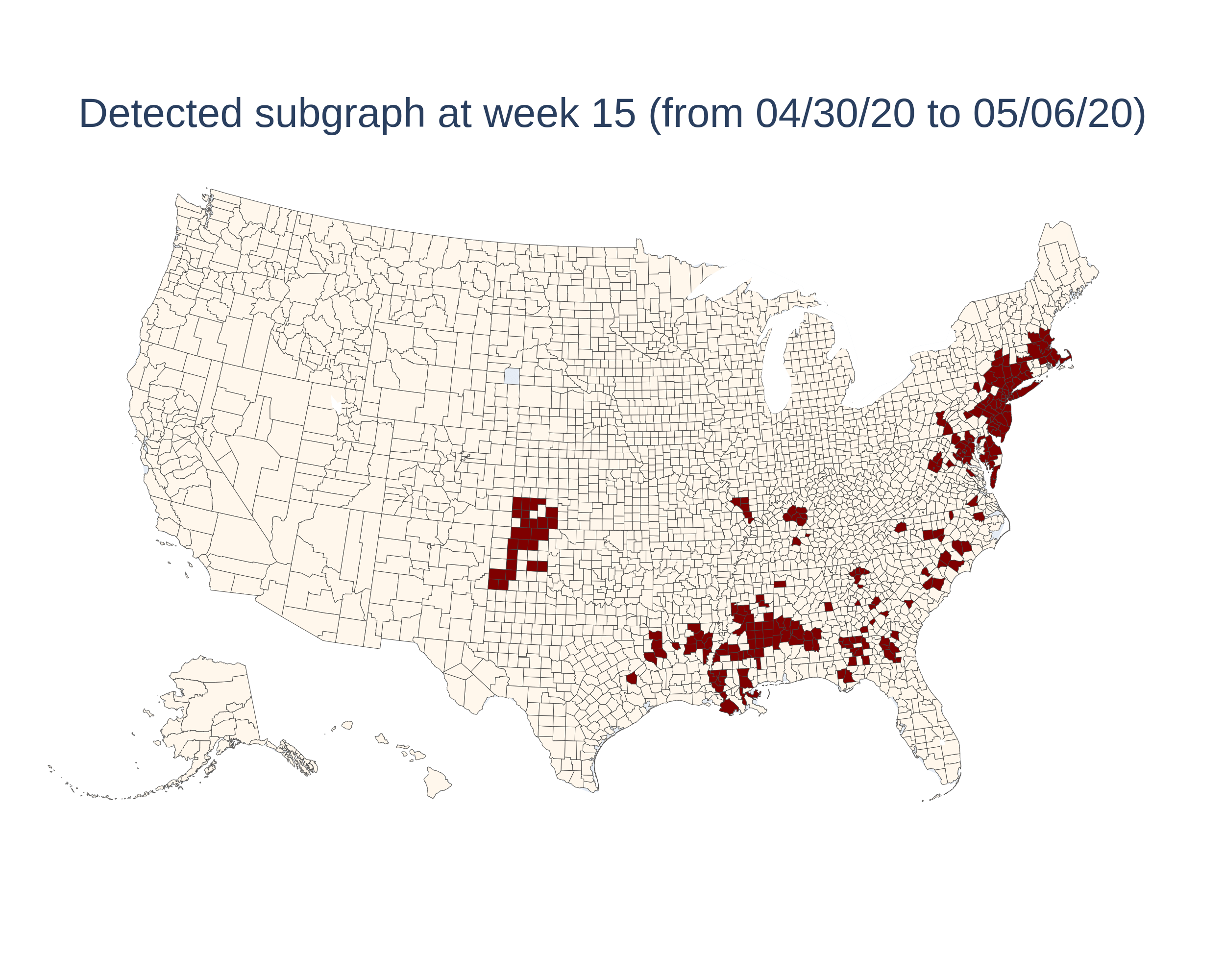}
      \end{subfigure}
      \begin{subfigure}{0.26\textwidth}
          \centering
          \includegraphics[width=1.\textwidth]{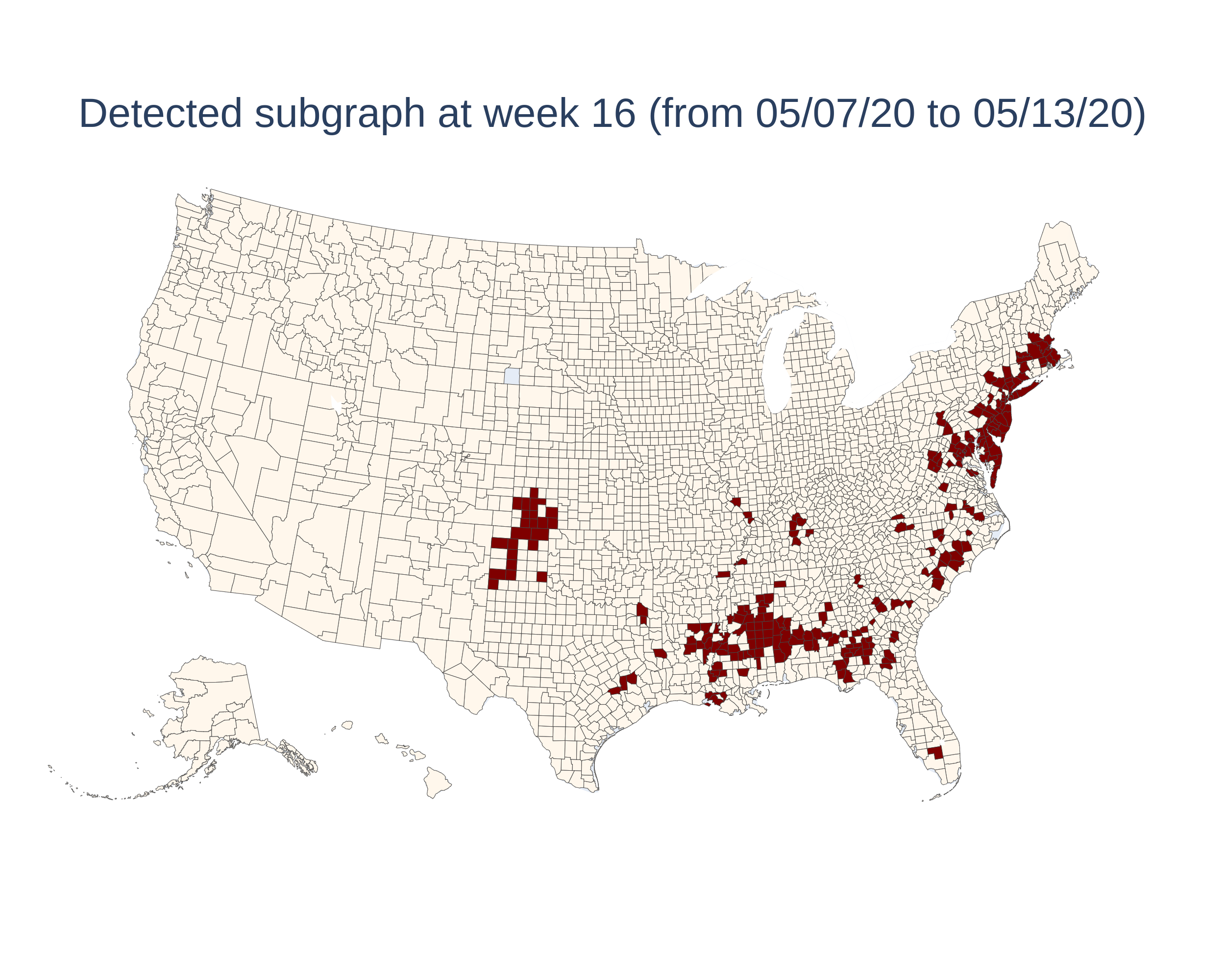}
      \end{subfigure}
      \begin{subfigure}{0.26\textwidth}
          \centering
          \includegraphics[width=1.\textwidth]{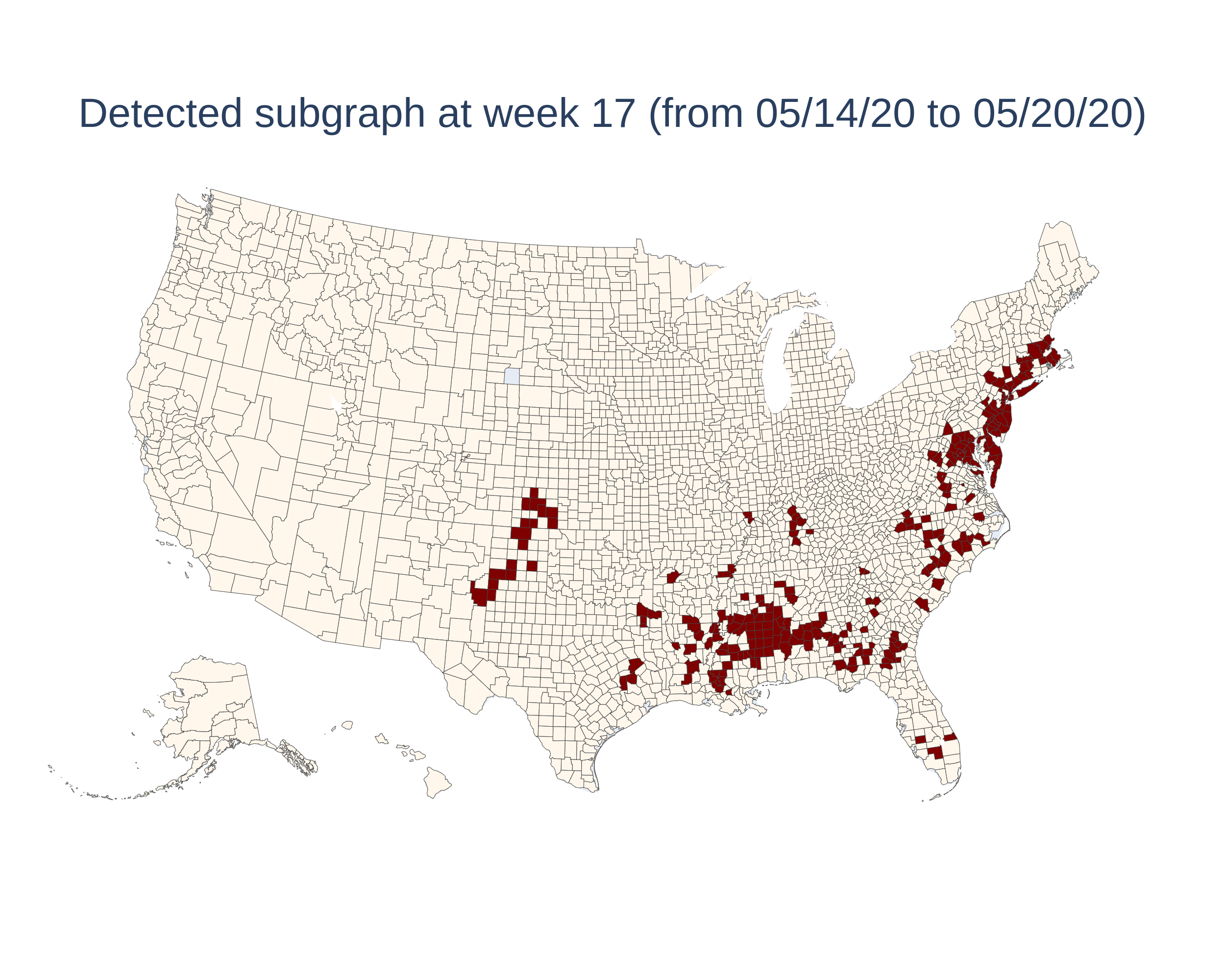}
      \end{subfigure}
      \newline
      \begin{subfigure}{0.26\textwidth}
          \centering
          \includegraphics[width=1.\textwidth]{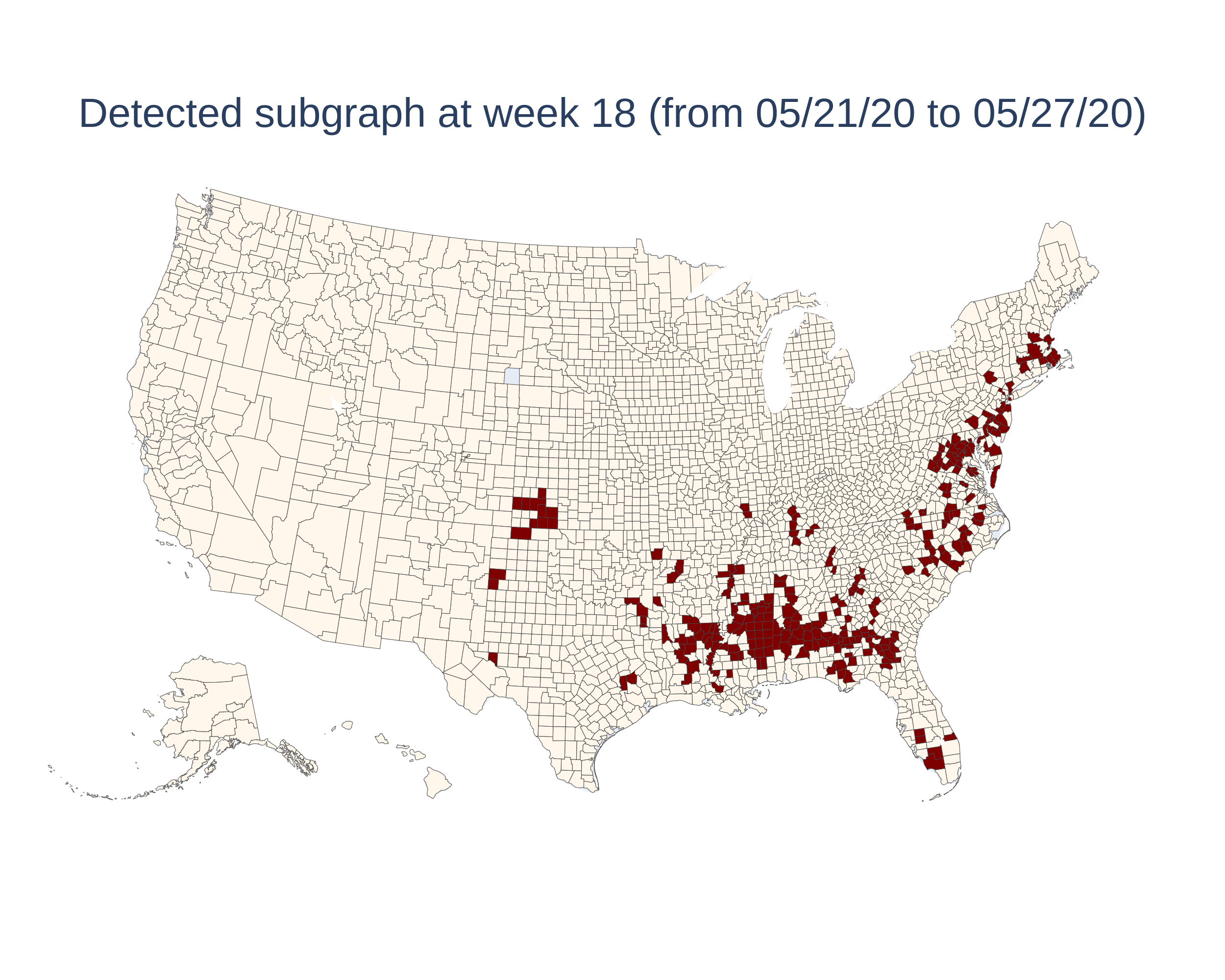}
      \end{subfigure}
      \begin{subfigure}{0.26\textwidth}
          \centering
          \includegraphics[width=1.\textwidth]{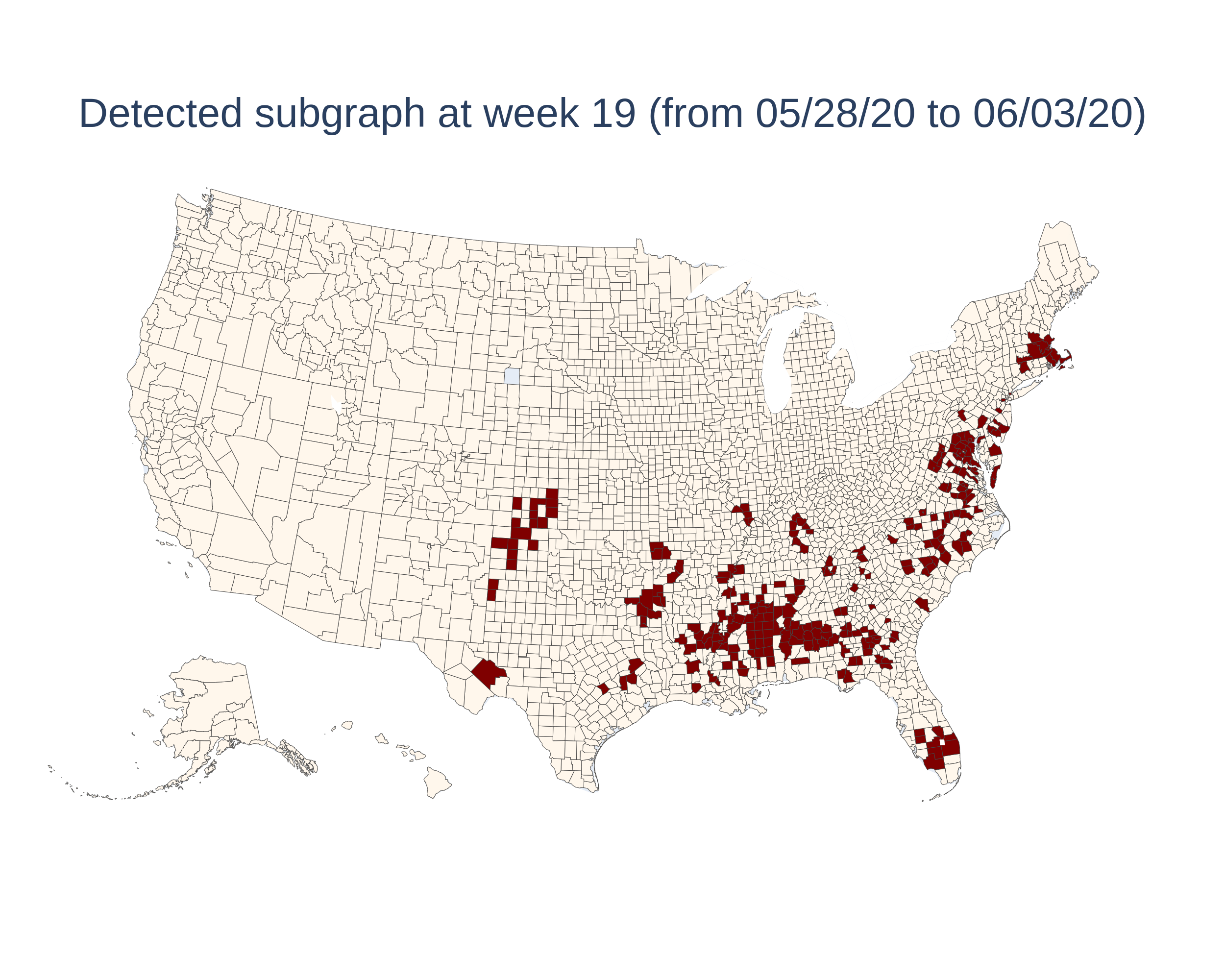}
      \end{subfigure}
      \begin{subfigure}{0.26\textwidth}
         \centering
          \includegraphics[width=1.\textwidth]{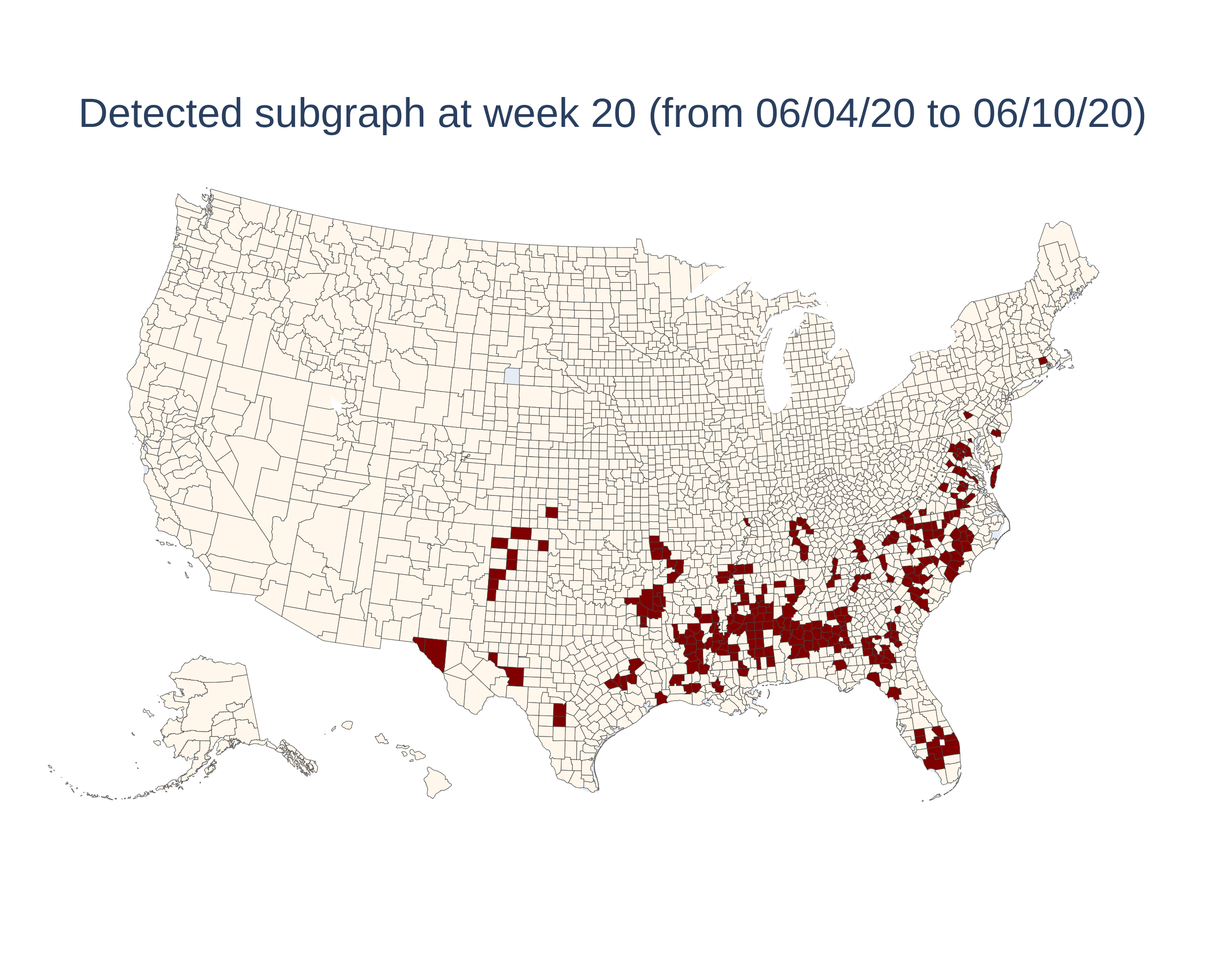}
     \end{subfigure}
     \newline
    
      \begin{subfigure}{0.26\textwidth}
          \centering
          \includegraphics[width=1.\textwidth]{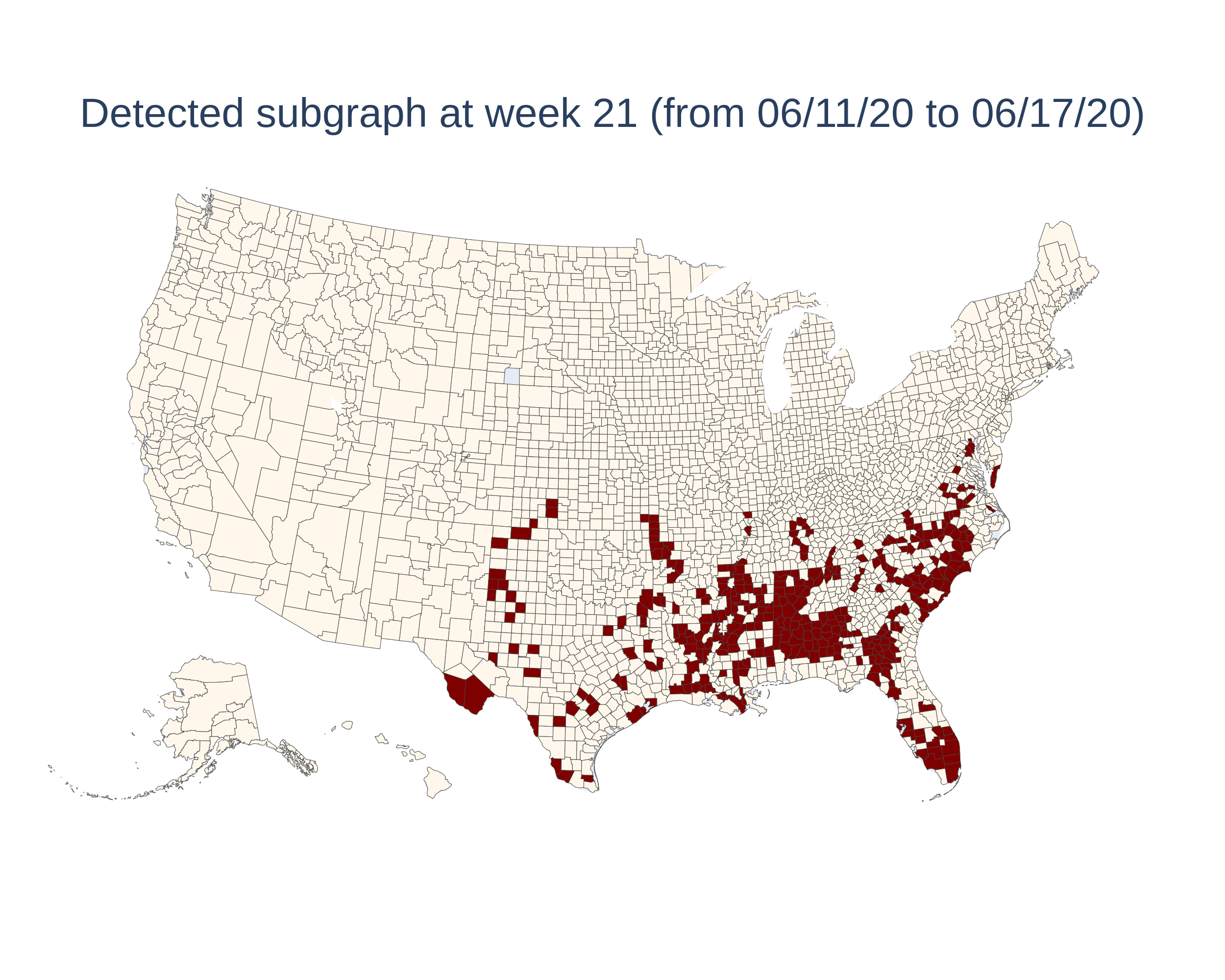}
     \end{subfigure}
      \begin{subfigure}{0.26\textwidth}
          \centering
          \includegraphics[width=1.\textwidth]{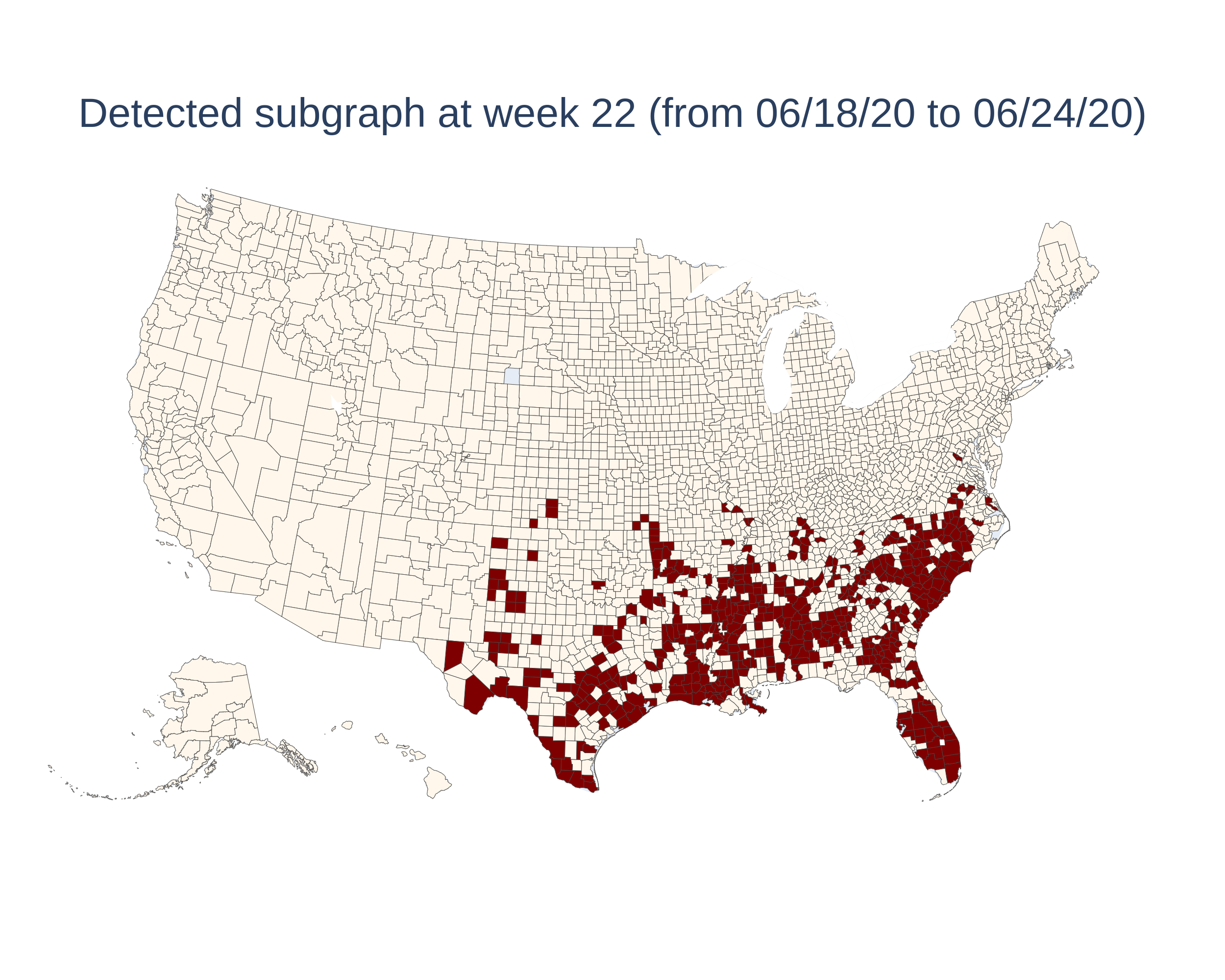}
     \end{subfigure}
      \begin{subfigure}{0.26\textwidth}
         \centering
          \includegraphics[width=1.\textwidth]{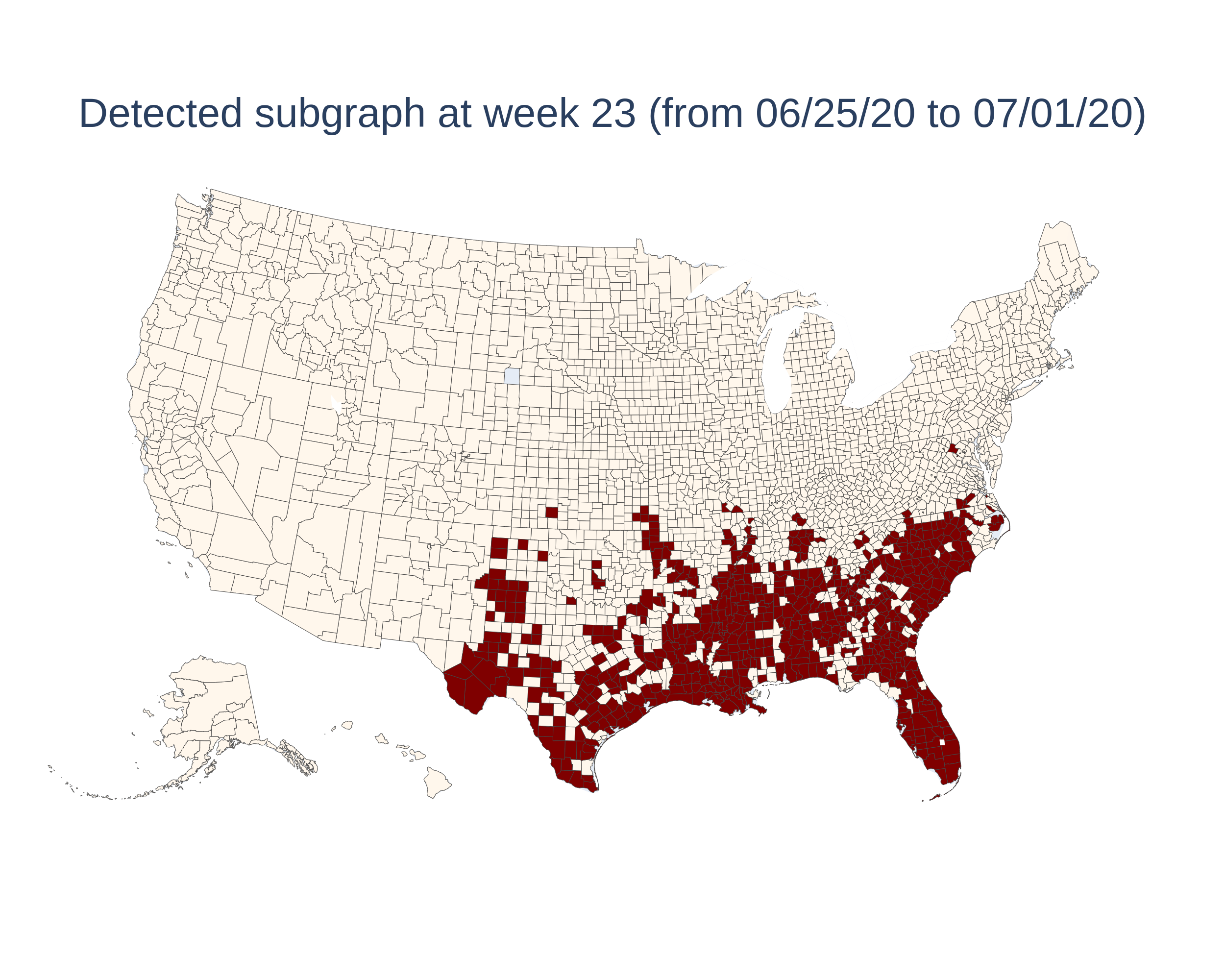}
     \end{subfigure}
      \newline
      \begin{subfigure}{0.26\textwidth}
         \centering
          \includegraphics[width=1.\textwidth]{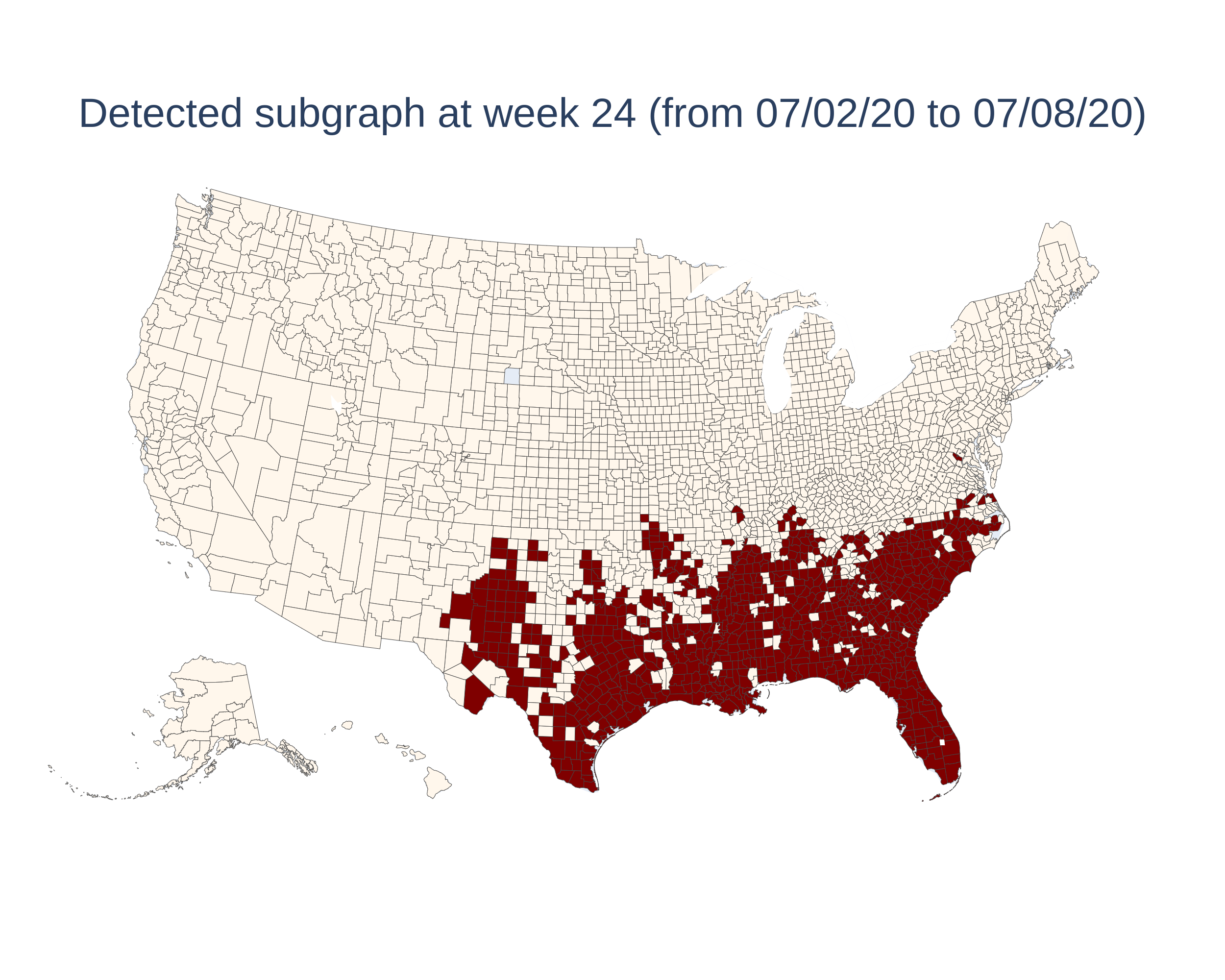}
      \end{subfigure}
      
 \caption{\texttt{CNSS} Top-1 Detected Spatial-Temporal Connected Subgraph on \texttt{COVID-19} Dataset}
\label{fig:covid19_cnss_detected_subgraph}
 \end{figure*}

\clearpage
 \begin{figure*}[!ht]
      \begin{subfigure}{0.26\textwidth}
          \centering
         \includegraphics[width=1.\textwidth]{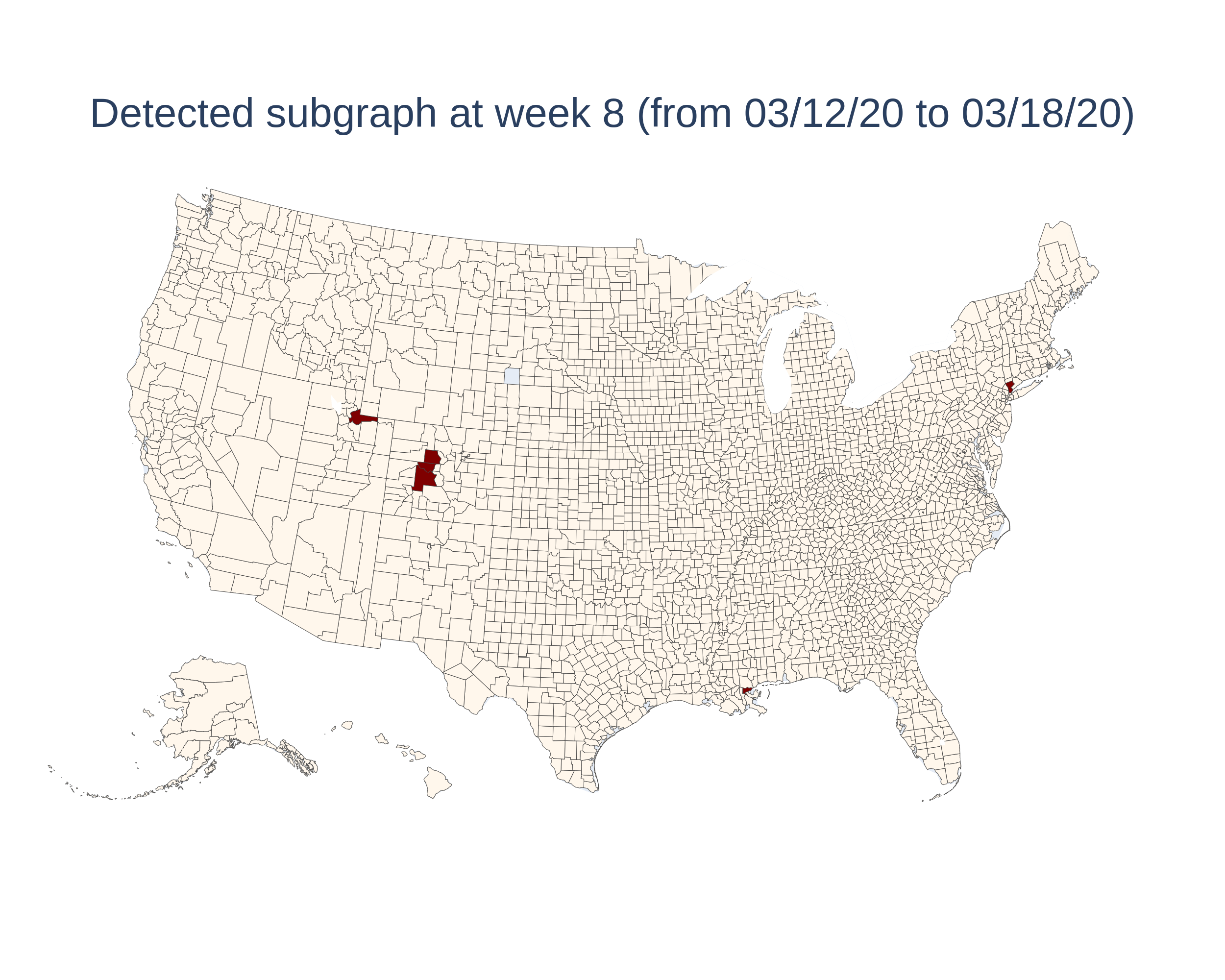}
      \end{subfigure}
      \begin{subfigure}{0.26\textwidth}
          \centering
         \includegraphics[width=1.\textwidth]{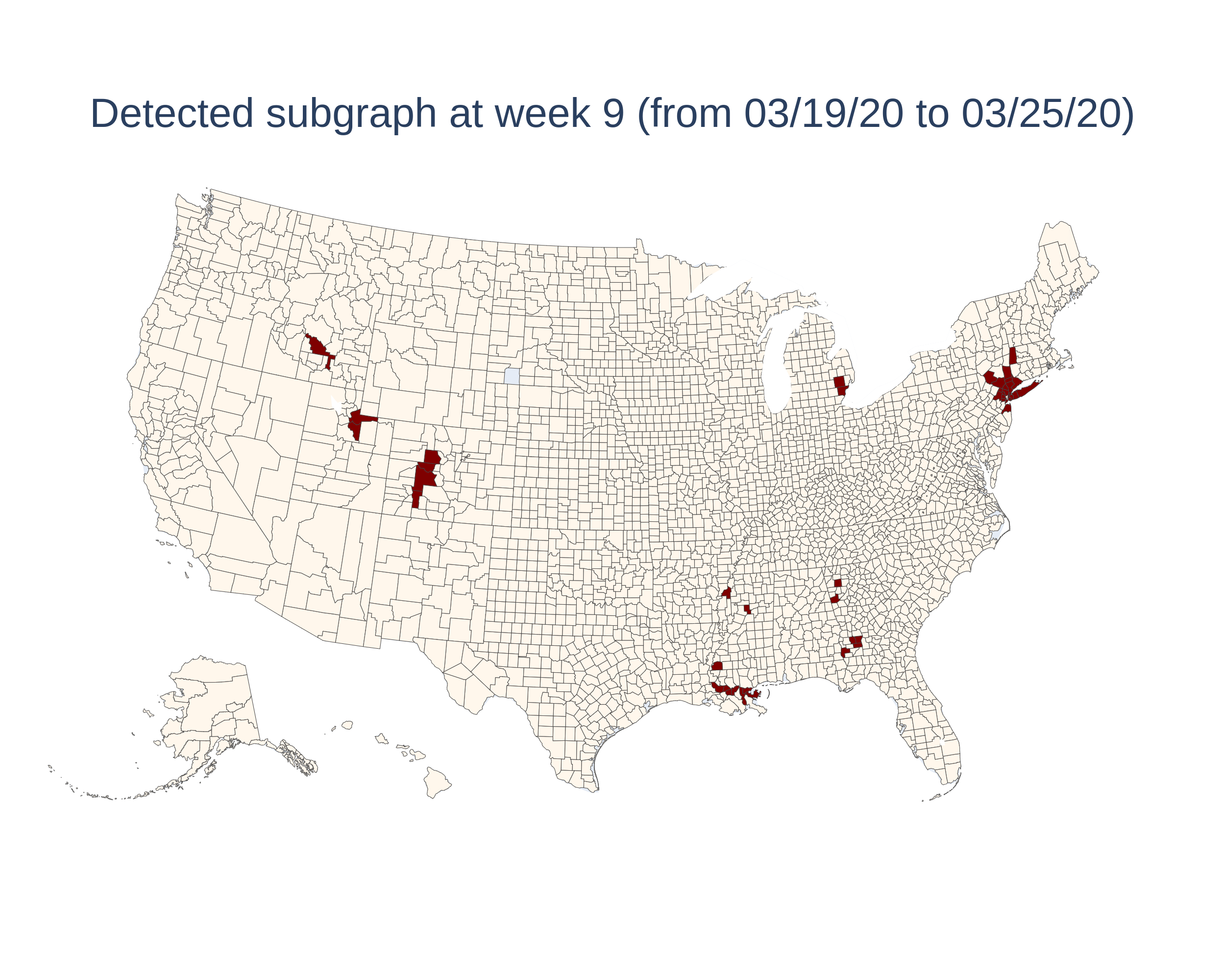}
      \end{subfigure}
      \begin{subfigure}{0.26\textwidth}
          \centering
         \includegraphics[width=1.\textwidth]{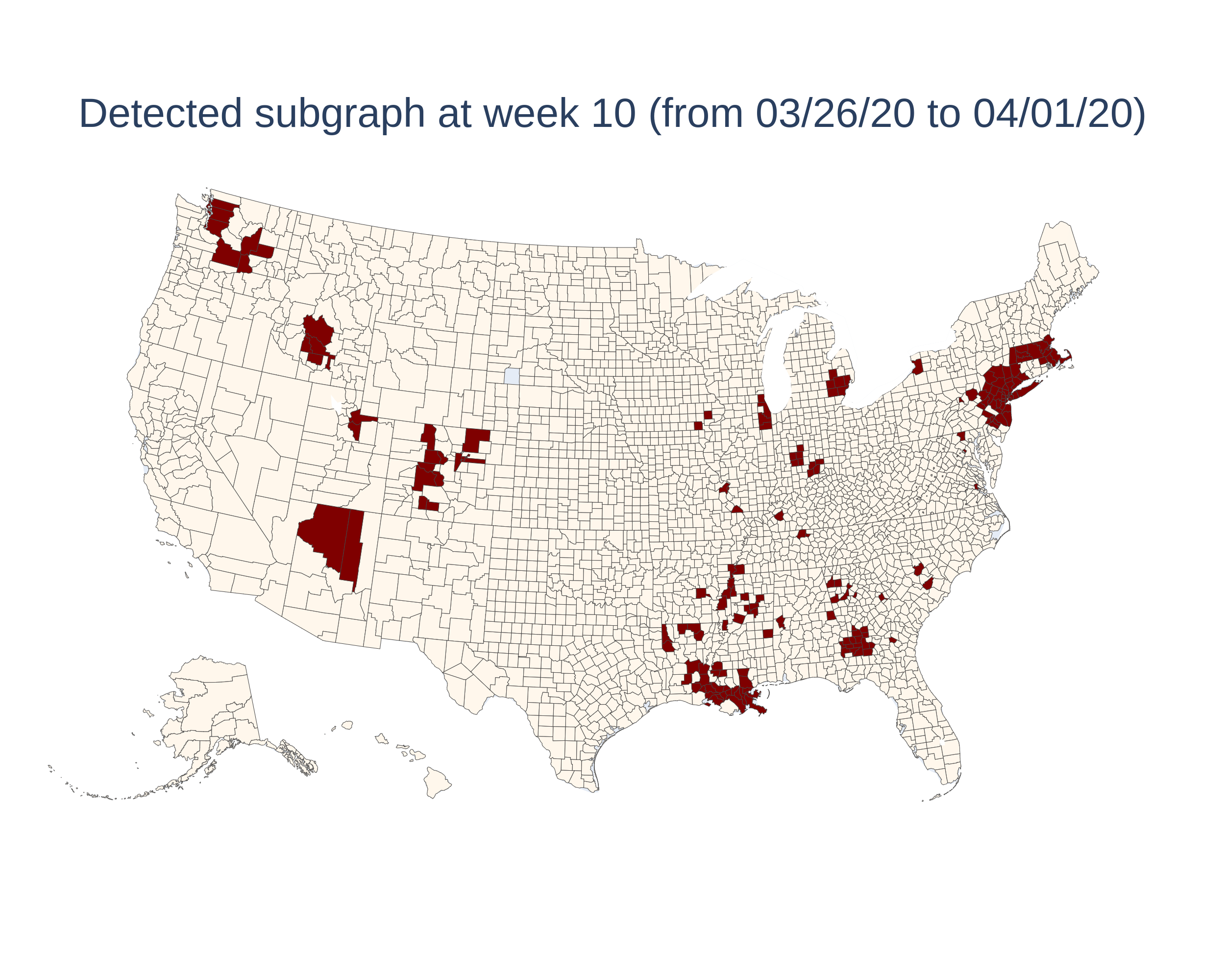}
     \end{subfigure}
      \newline
    
     \begin{subfigure}{0.26\textwidth}
          \centering
          \includegraphics[width=1.\textwidth]{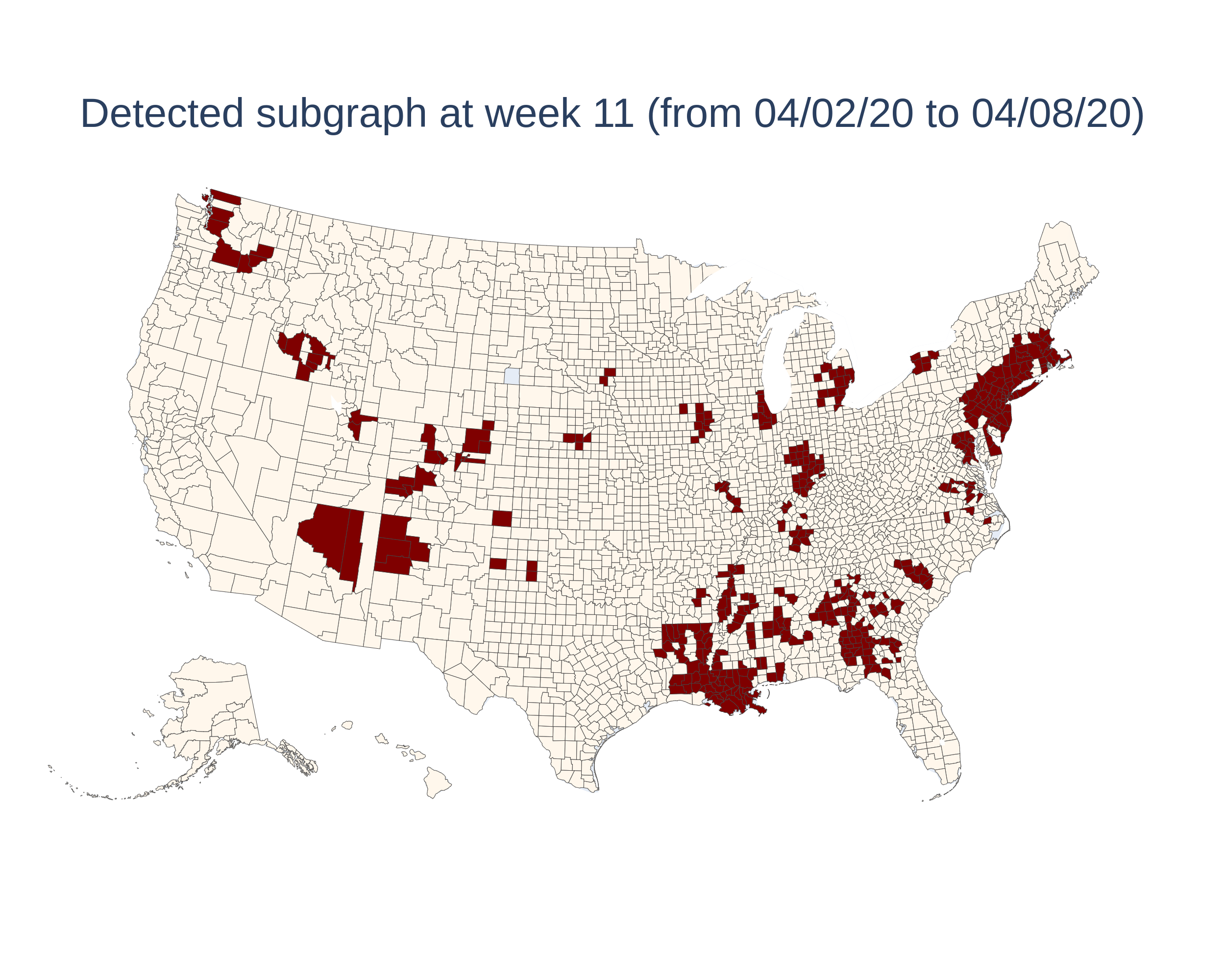}
      \end{subfigure}
      \begin{subfigure}{0.26\textwidth}
          \centering
         \includegraphics[width=1.\textwidth]{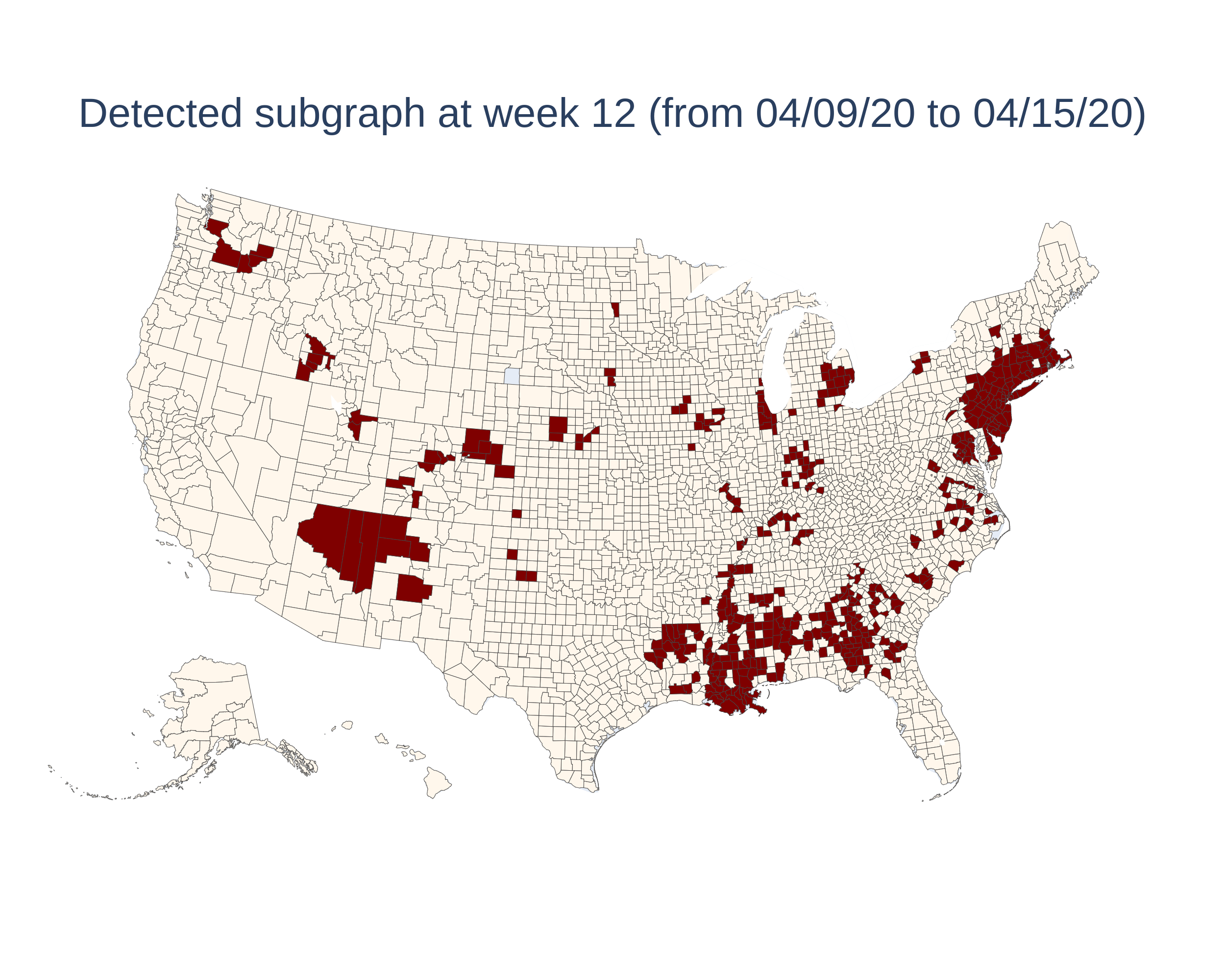}
      \end{subfigure}
      \begin{subfigure}{0.26\textwidth}
         \centering
         \includegraphics[width=1.\textwidth]{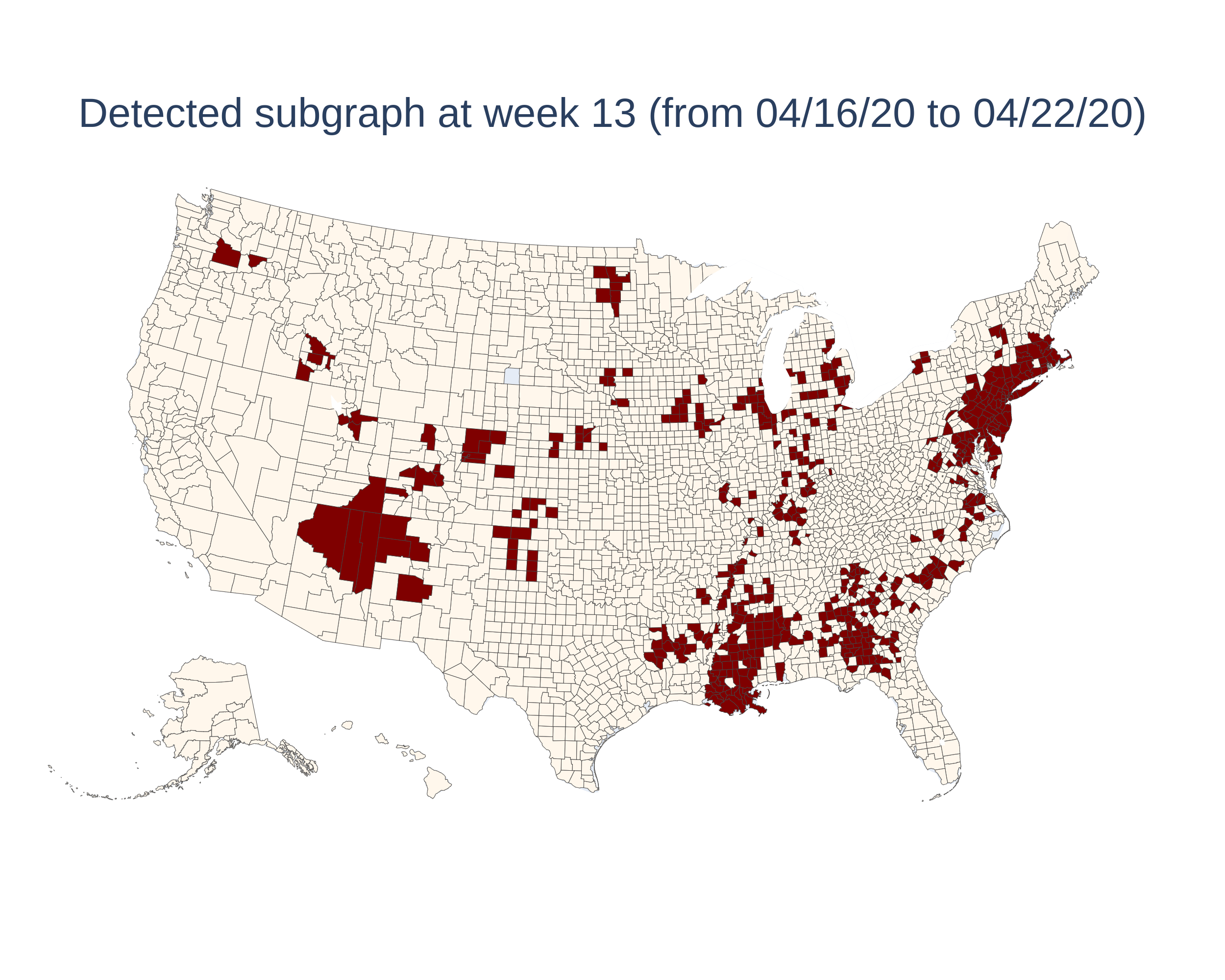}
      \end{subfigure}
      \newline
    
      \begin{subfigure}{0.26\textwidth}
          \centering
          \includegraphics[width=1.\textwidth]{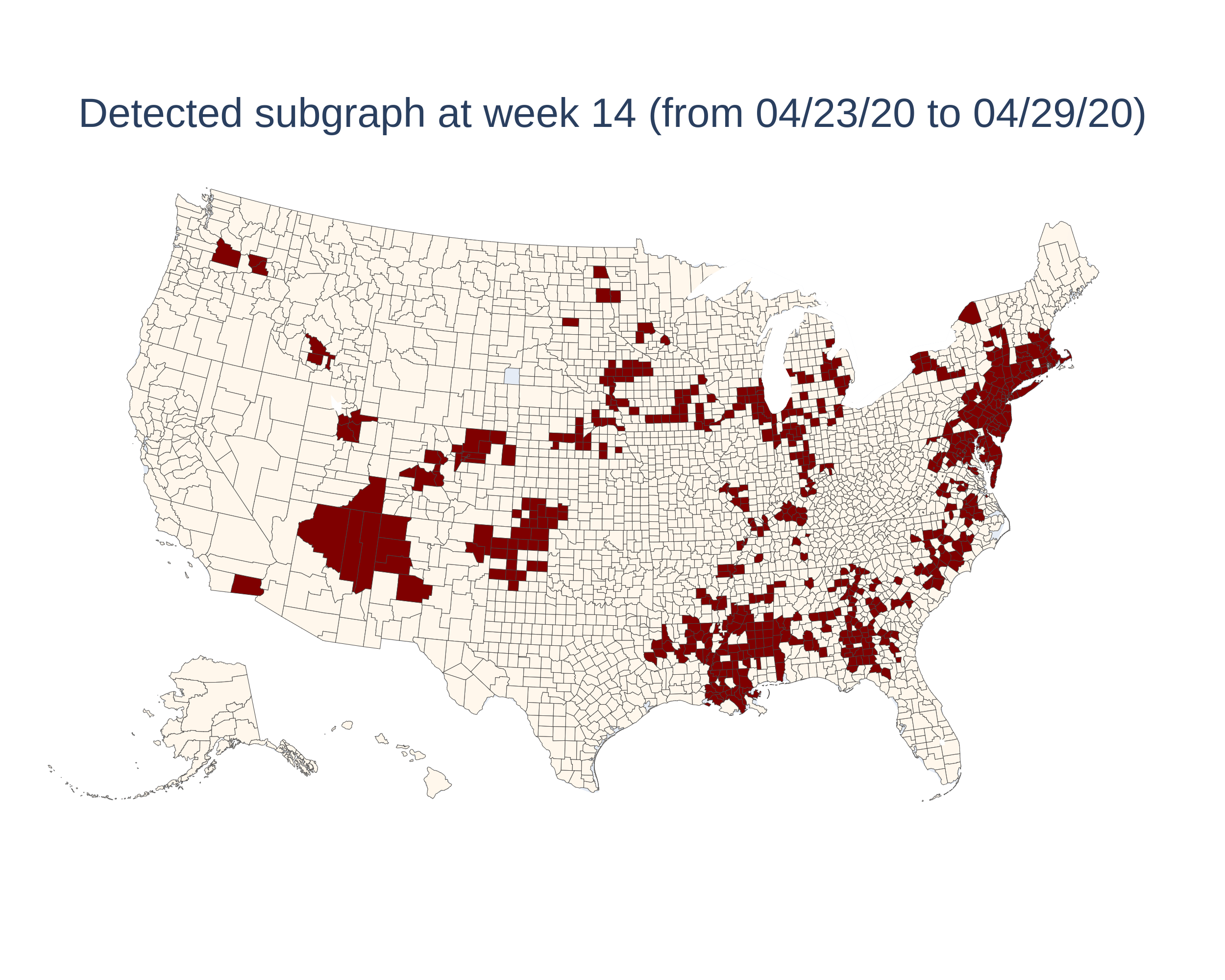}
      \end{subfigure}
      \begin{subfigure}{0.26\textwidth}
         \centering
         \includegraphics[width=1.\textwidth]{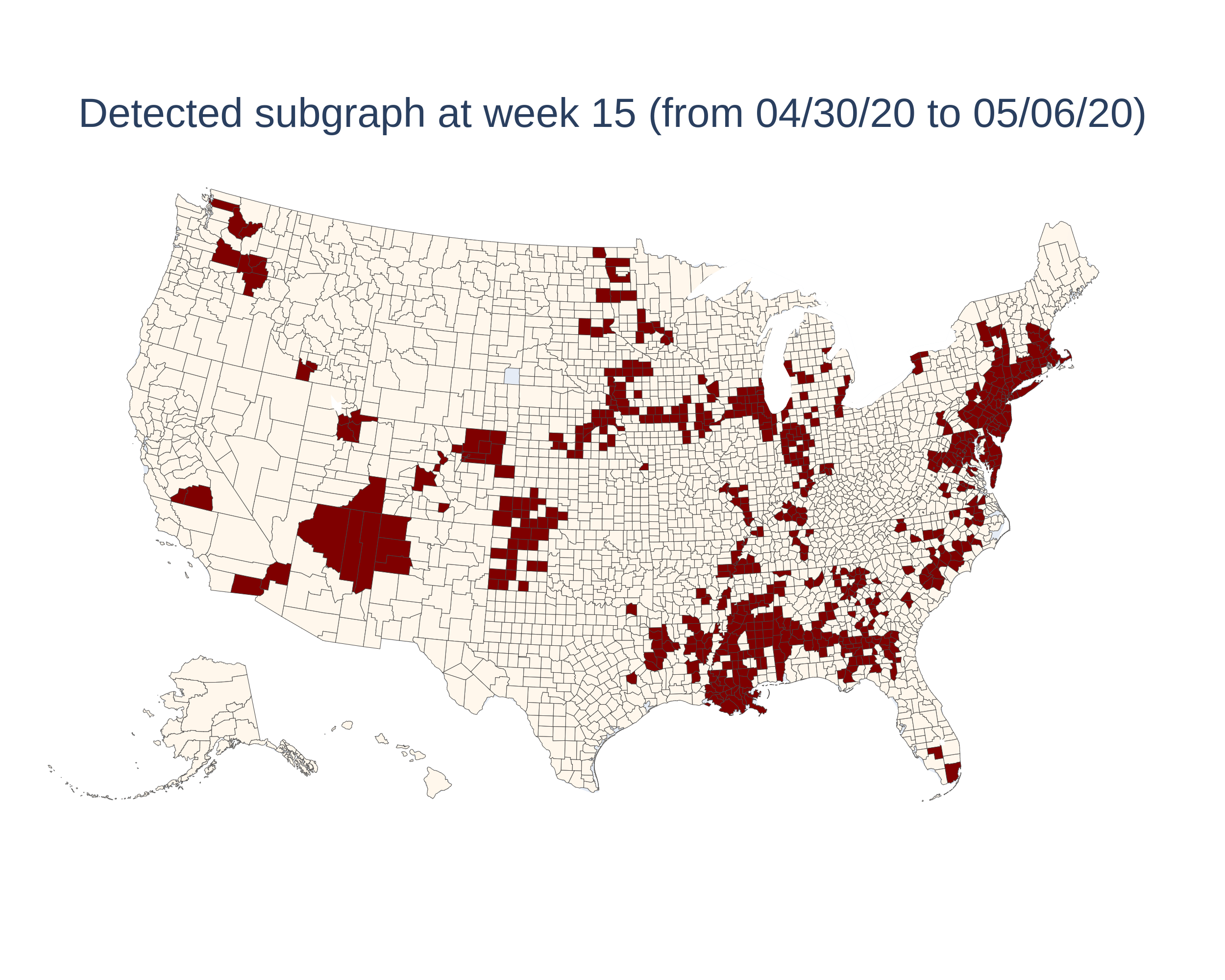}
      \end{subfigure}
      \begin{subfigure}{0.26\textwidth}
          \centering
          \includegraphics[width=1.\textwidth]{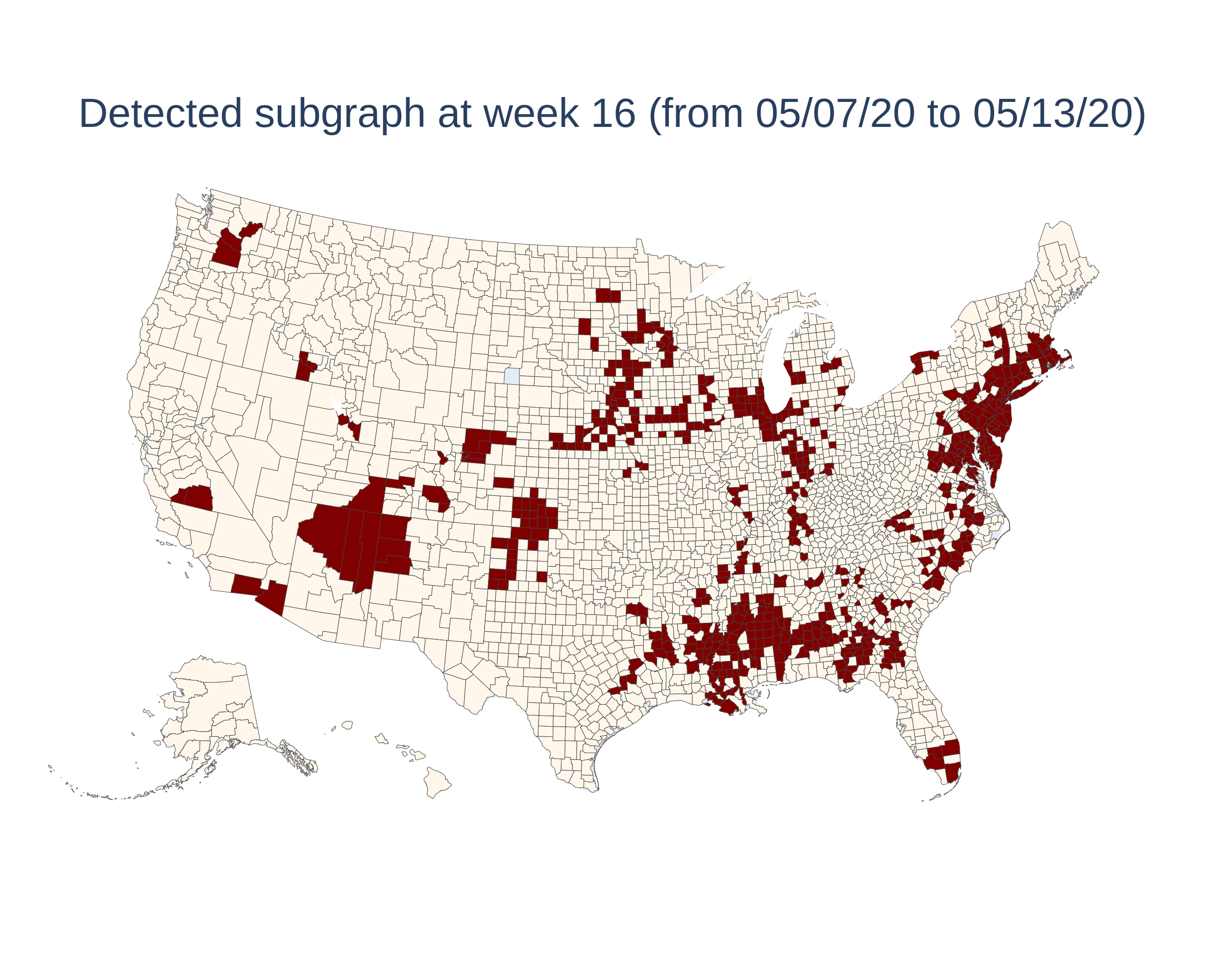}
      \end{subfigure}
     \newline
    
      \begin{subfigure}{0.26\textwidth}
         \centering
          \includegraphics[width=1.\textwidth]{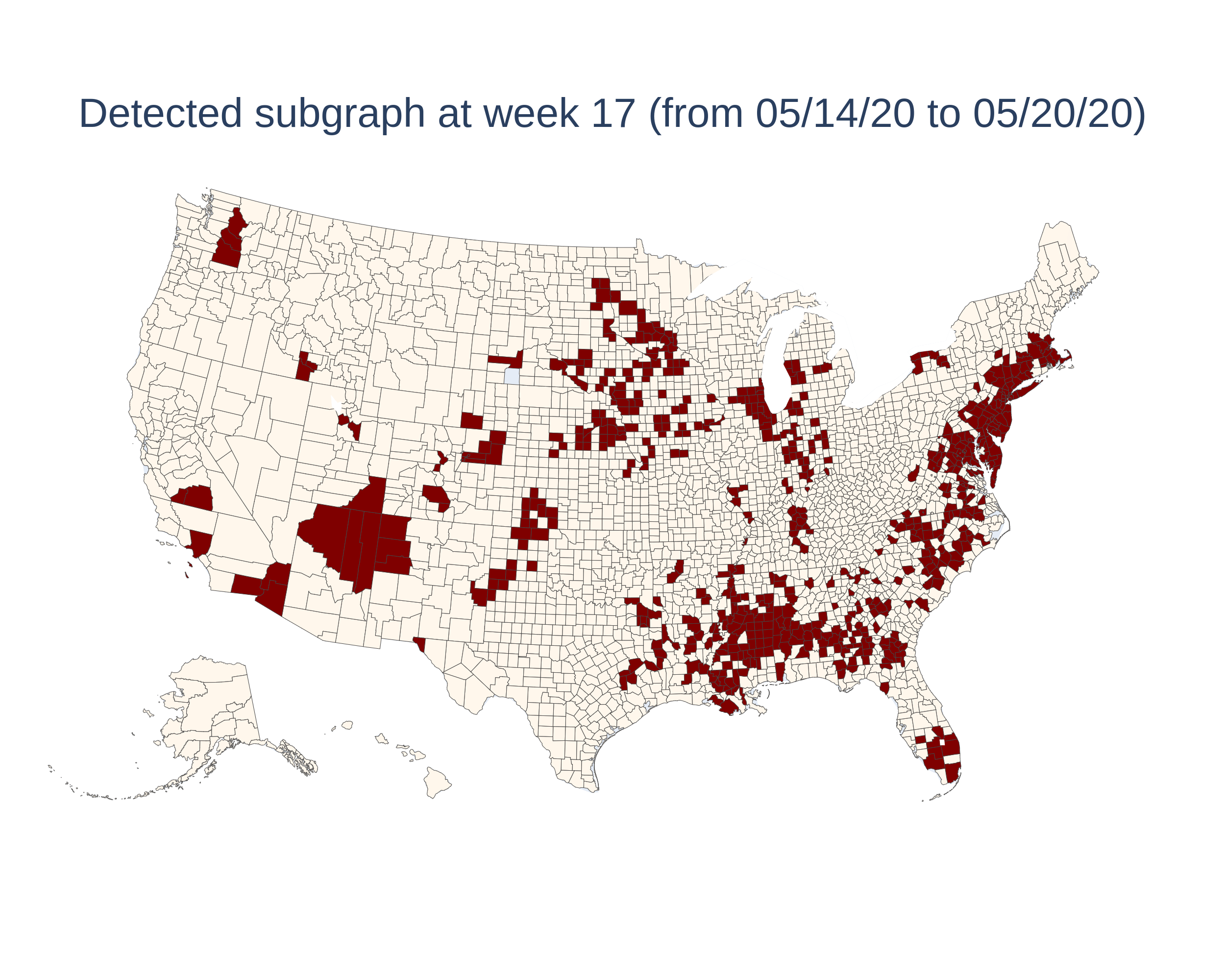}
      \end{subfigure}
      \begin{subfigure}{0.26\textwidth}
         \centering
          \includegraphics[width=1.\textwidth]{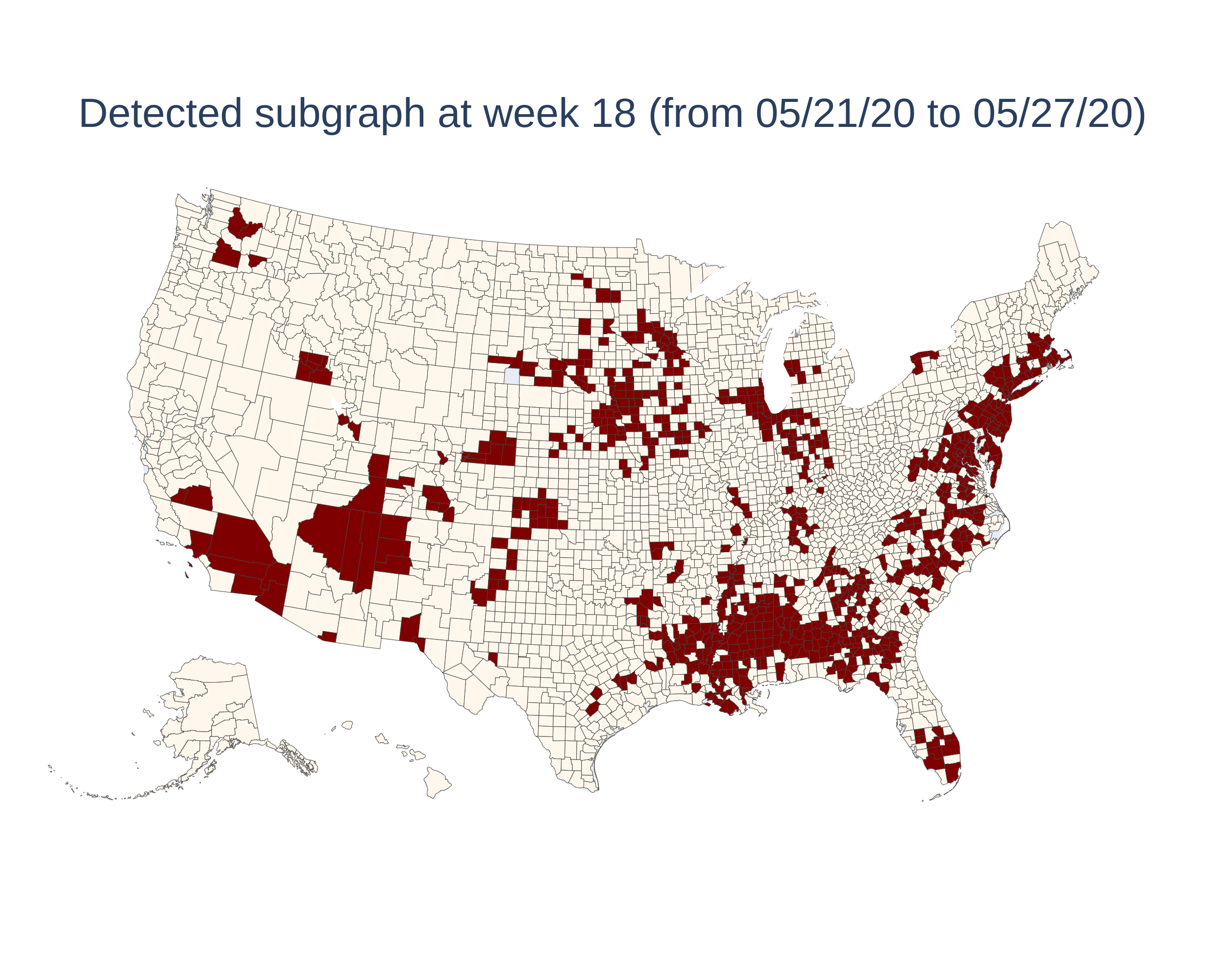}
     \end{subfigure}
      \begin{subfigure}{0.26\textwidth}
         \centering
          \includegraphics[width=1.\textwidth]{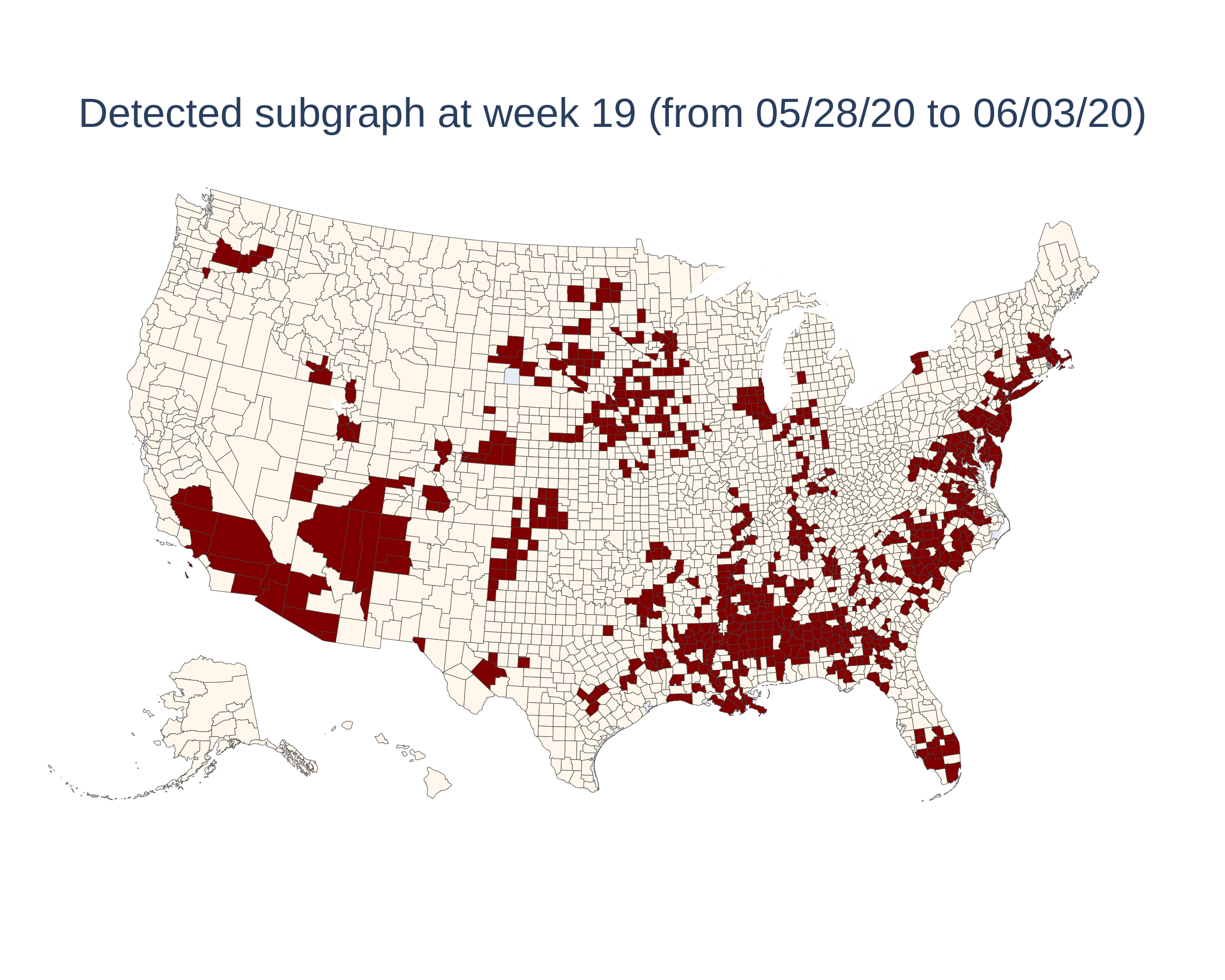}
      \end{subfigure}
      \newline
          \begin{subfigure}{0.26\textwidth}
     \centering
          \includegraphics[width=1.\textwidth]{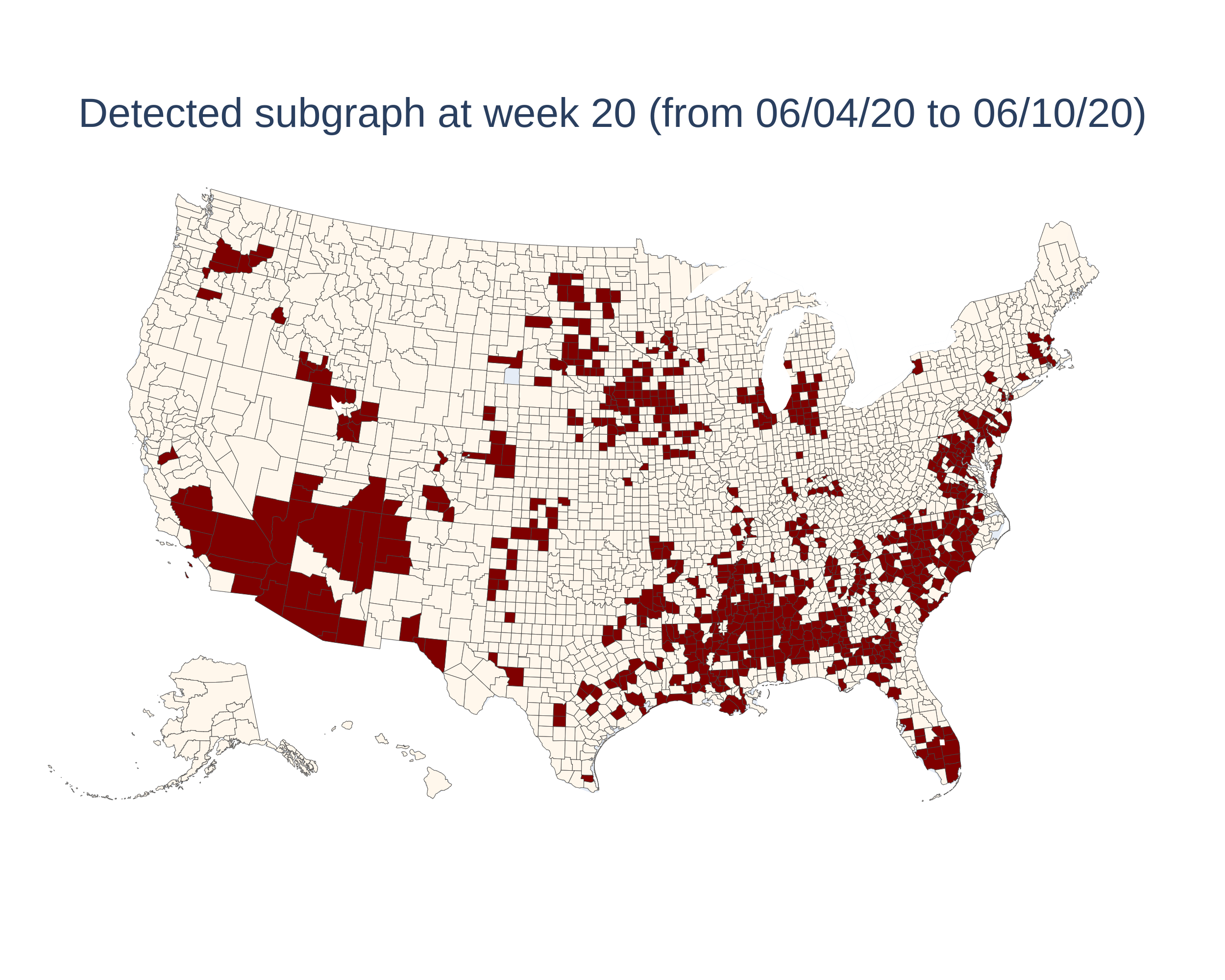}
      \end{subfigure}
      \begin{subfigure}{0.26\textwidth}
         \centering
          \includegraphics[width=1.\textwidth]{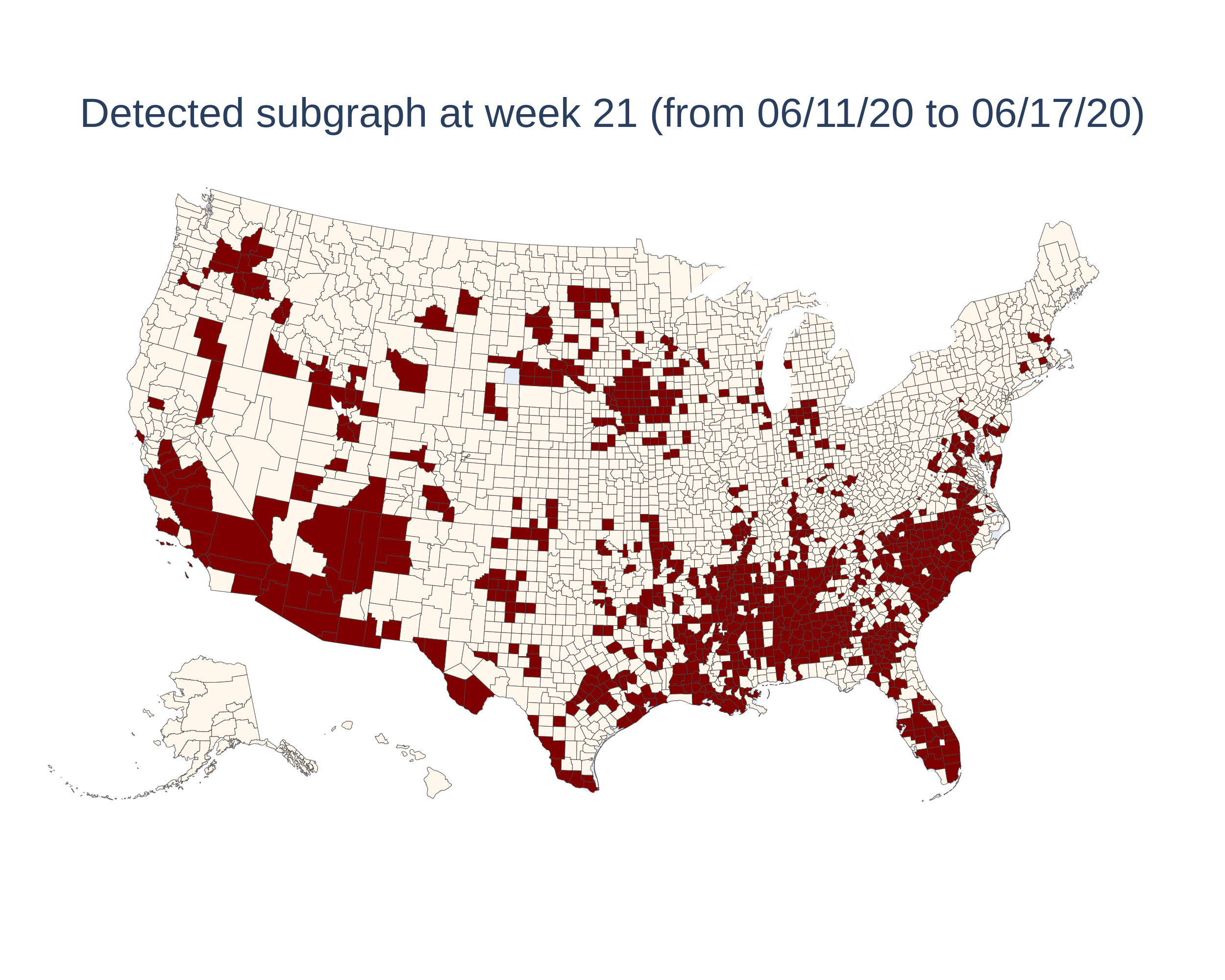}
      \end{subfigure}
      \begin{subfigure}{0.26\textwidth}
          \centering
         \includegraphics[width=1.\textwidth]{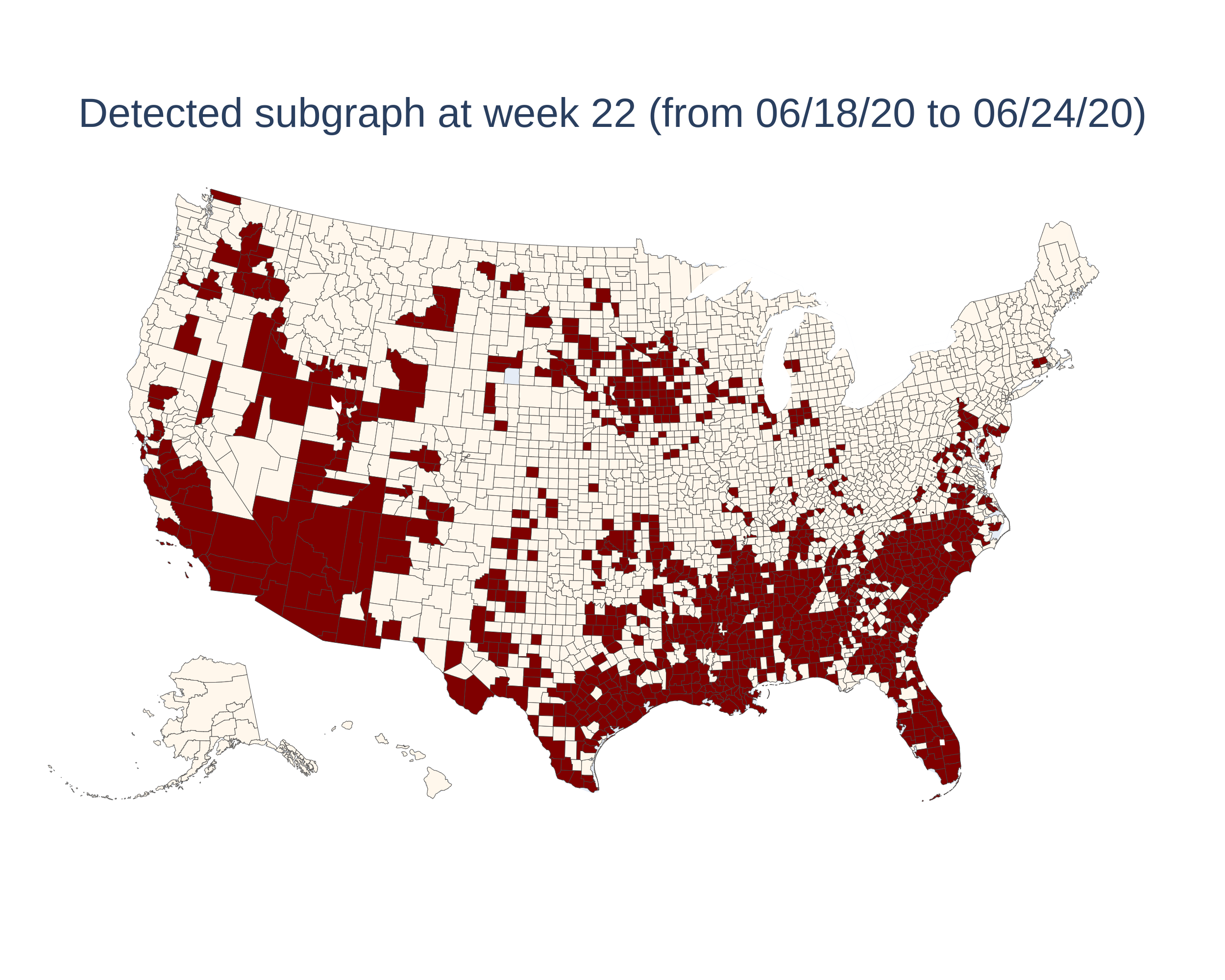}
      \end{subfigure}
      \newline
    
      \begin{subfigure}{0.26\textwidth}
          \centering
          \includegraphics[width=1.\textwidth]{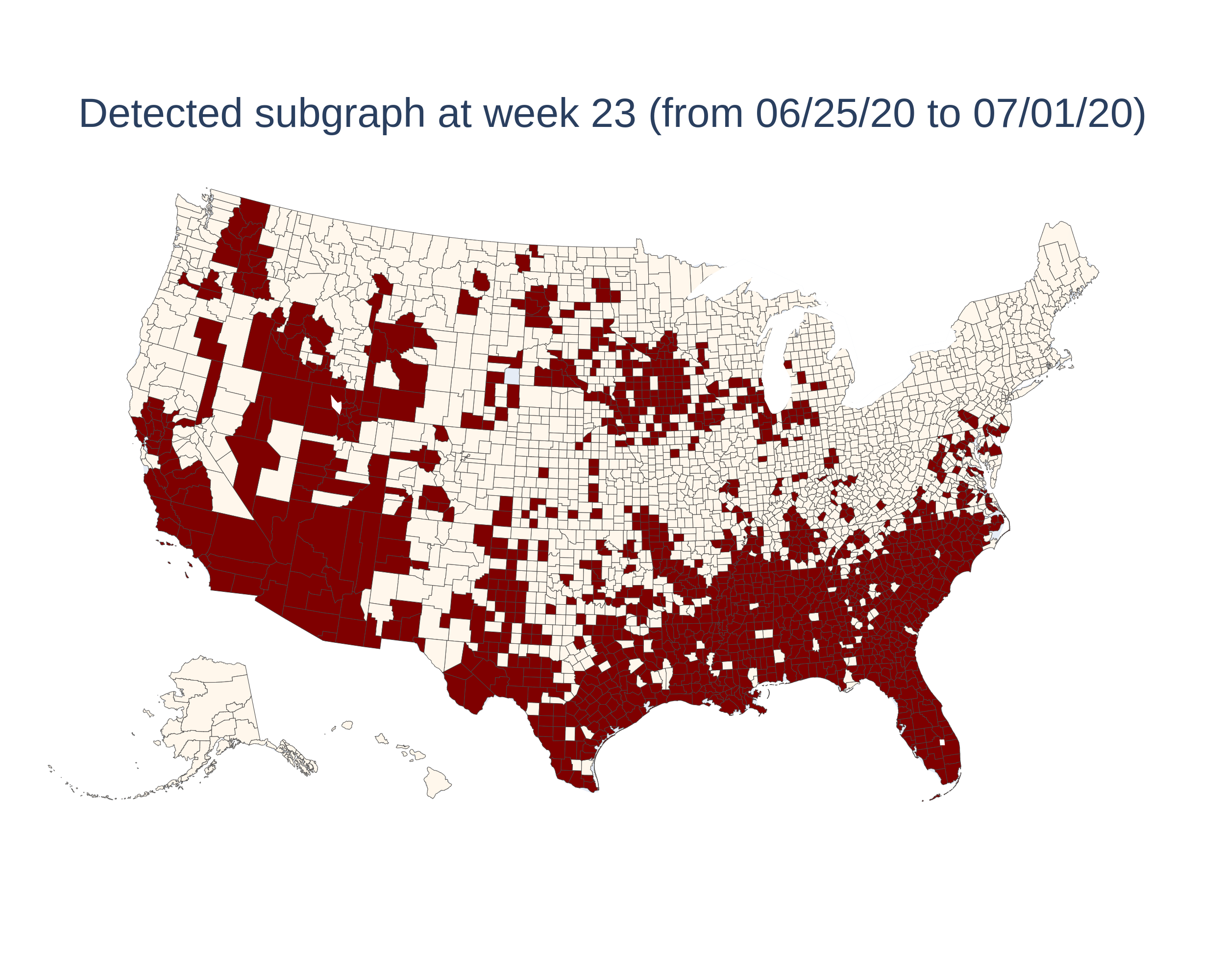}
      \end{subfigure}
      \begin{subfigure}{0.26\textwidth}
          \centering
          \includegraphics[width=1.\textwidth]{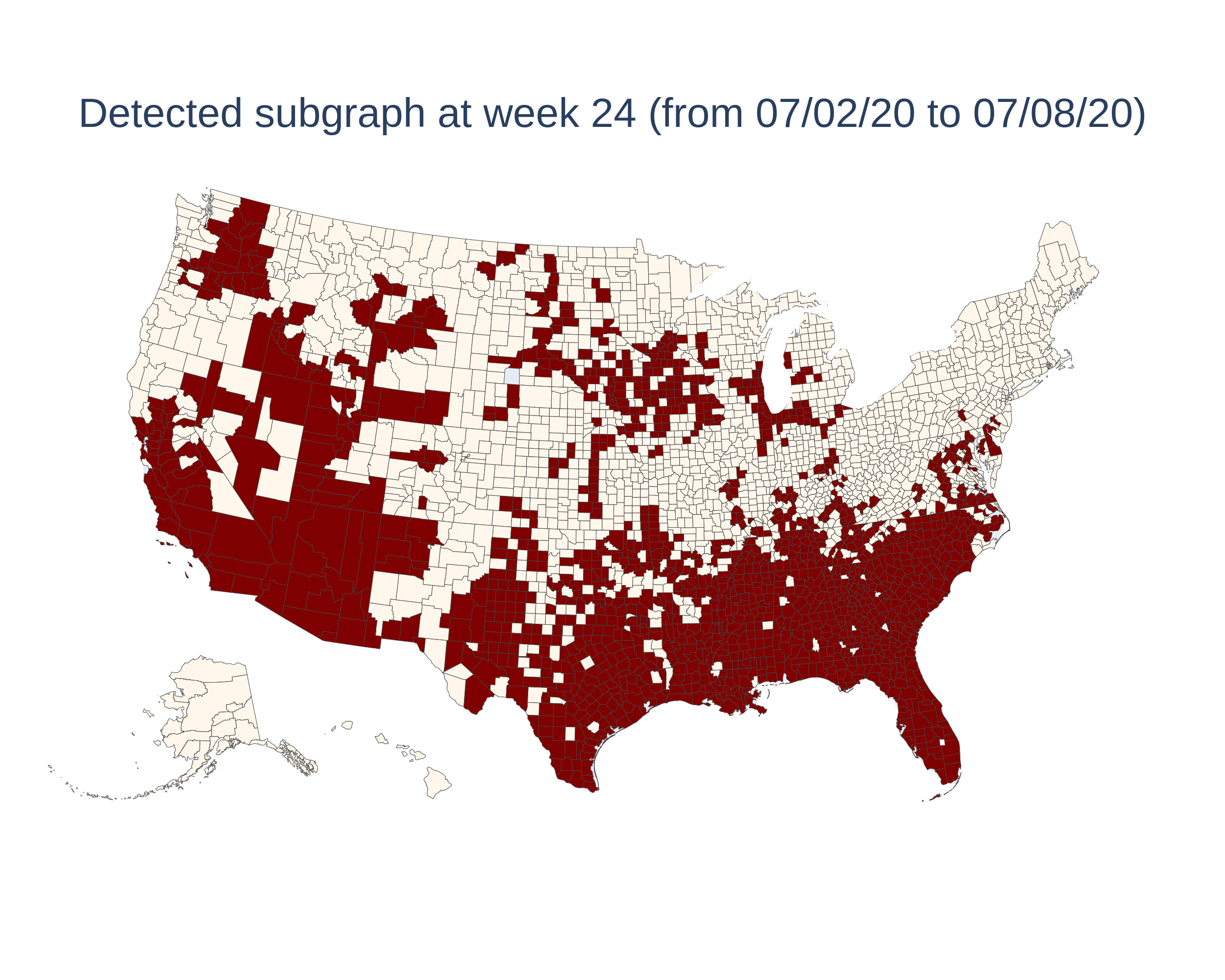}
      \end{subfigure}
      \caption{\texttt{LTSS} Top-1 Detected Spatial-Temporal Connected Subgraph on \texttt{COVID-19} Dataset}
\label{fig:covid19_ltss_detected_subgraph}
 \end{figure*}

\clearpage
\begin{figure*}[!ht]
      \begin{subfigure}{0.28\textwidth}
          \centering
         \includegraphics[width=1.\textwidth]{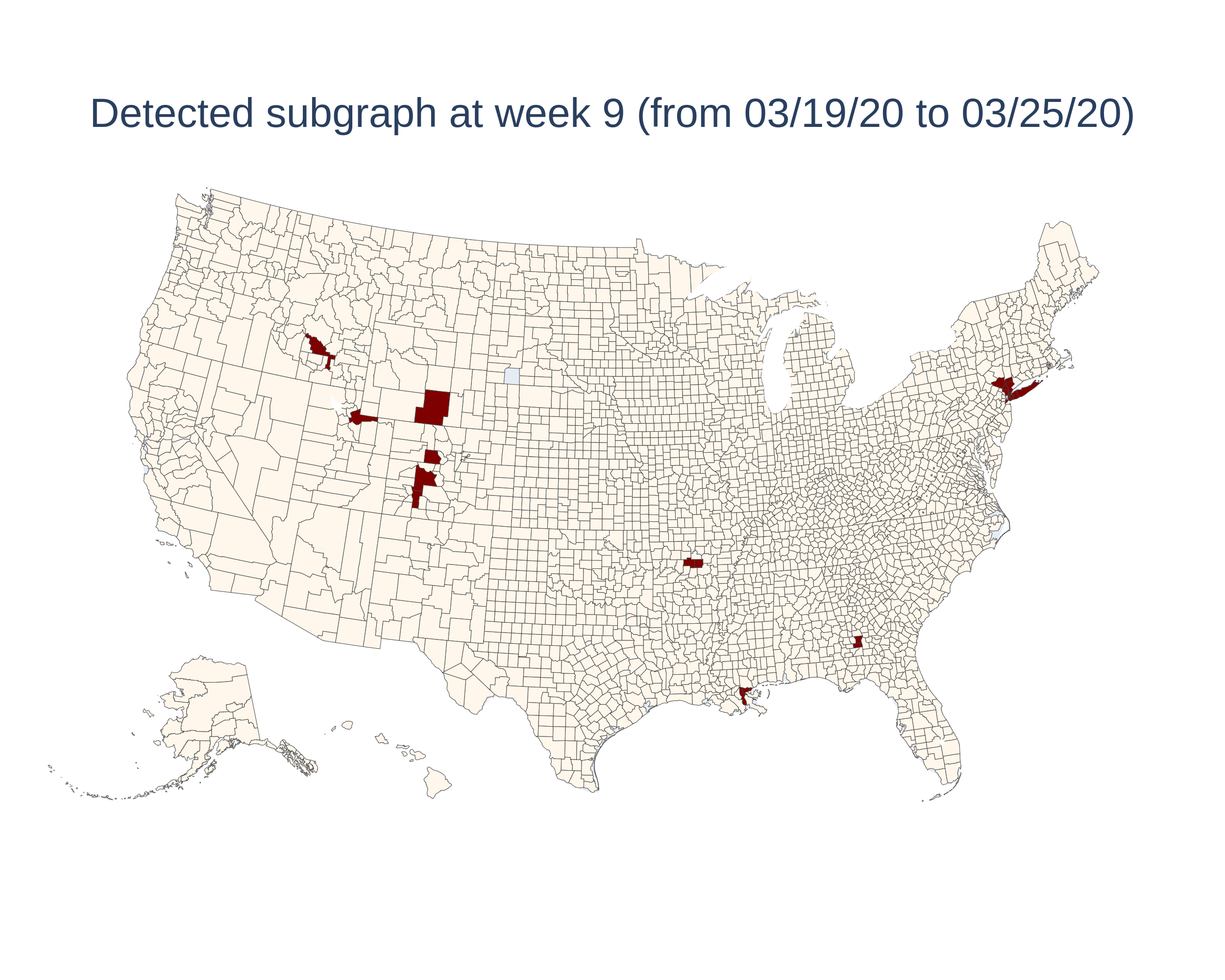}
      \end{subfigure}
      \begin{subfigure}{0.28\textwidth}
          \centering
         \includegraphics[width=1.\textwidth]{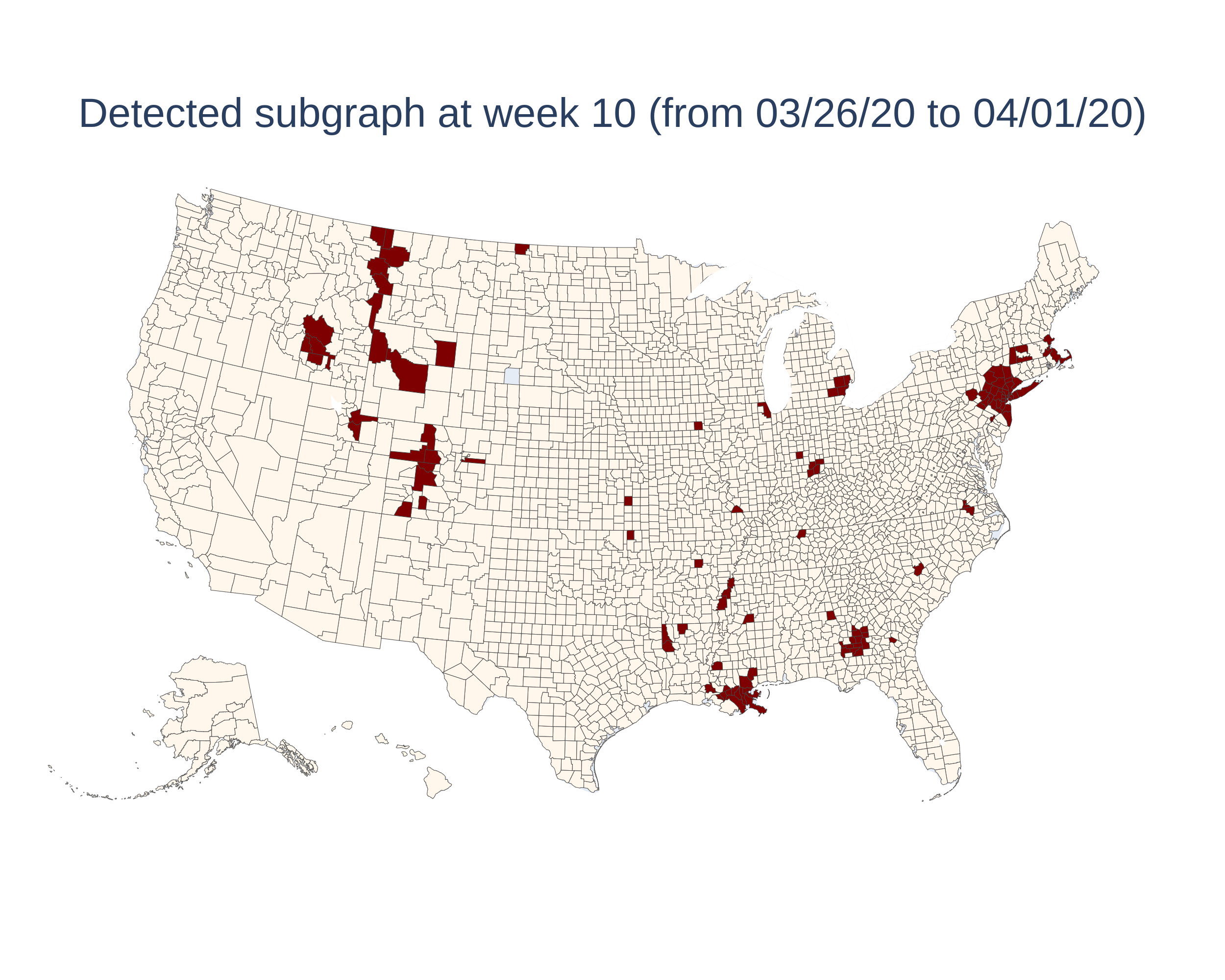}
      \end{subfigure}
      \begin{subfigure}{0.28\textwidth}
          \centering
         \includegraphics[width=1.\textwidth]{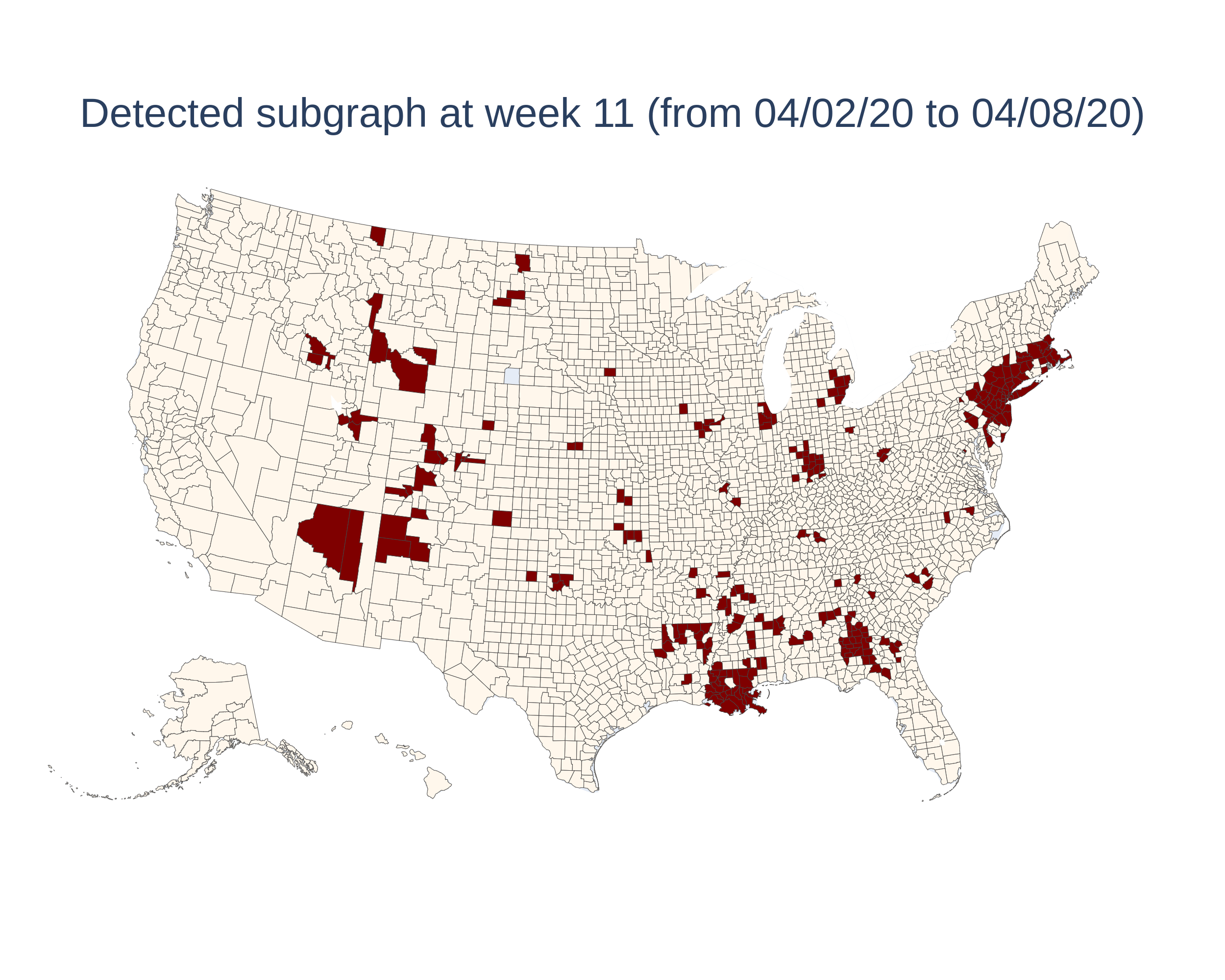}
      \end{subfigure}
      \newline
    
      \begin{subfigure}{0.28\textwidth}
          \centering
         \includegraphics[width=1.\textwidth]{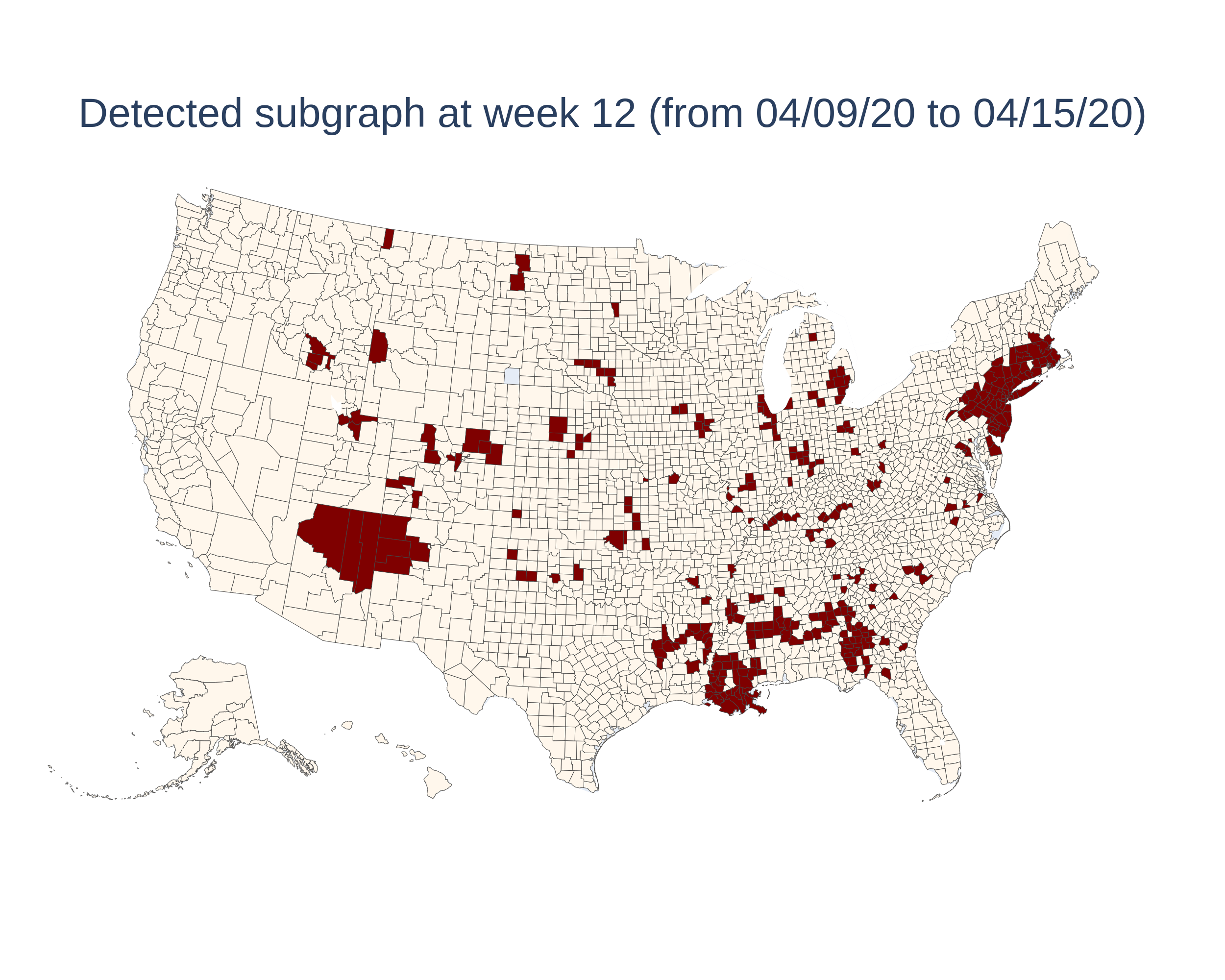}
      \end{subfigure}
      \begin{subfigure}{0.28\textwidth}
          \centering
         \includegraphics[width=1.\textwidth]{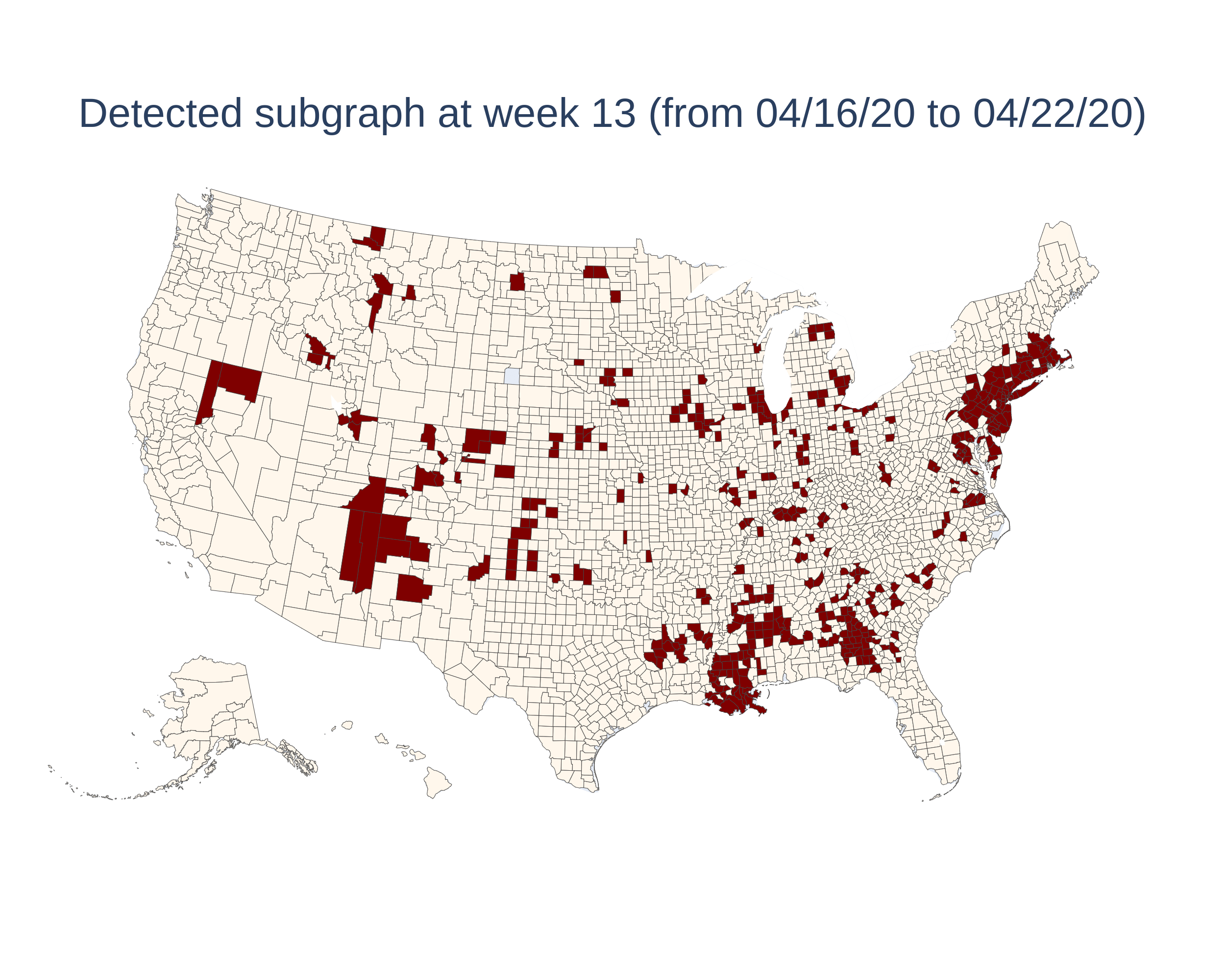}
      \end{subfigure}
      \begin{subfigure}{0.28\textwidth}
          \centering
         \includegraphics[width=1.\textwidth]{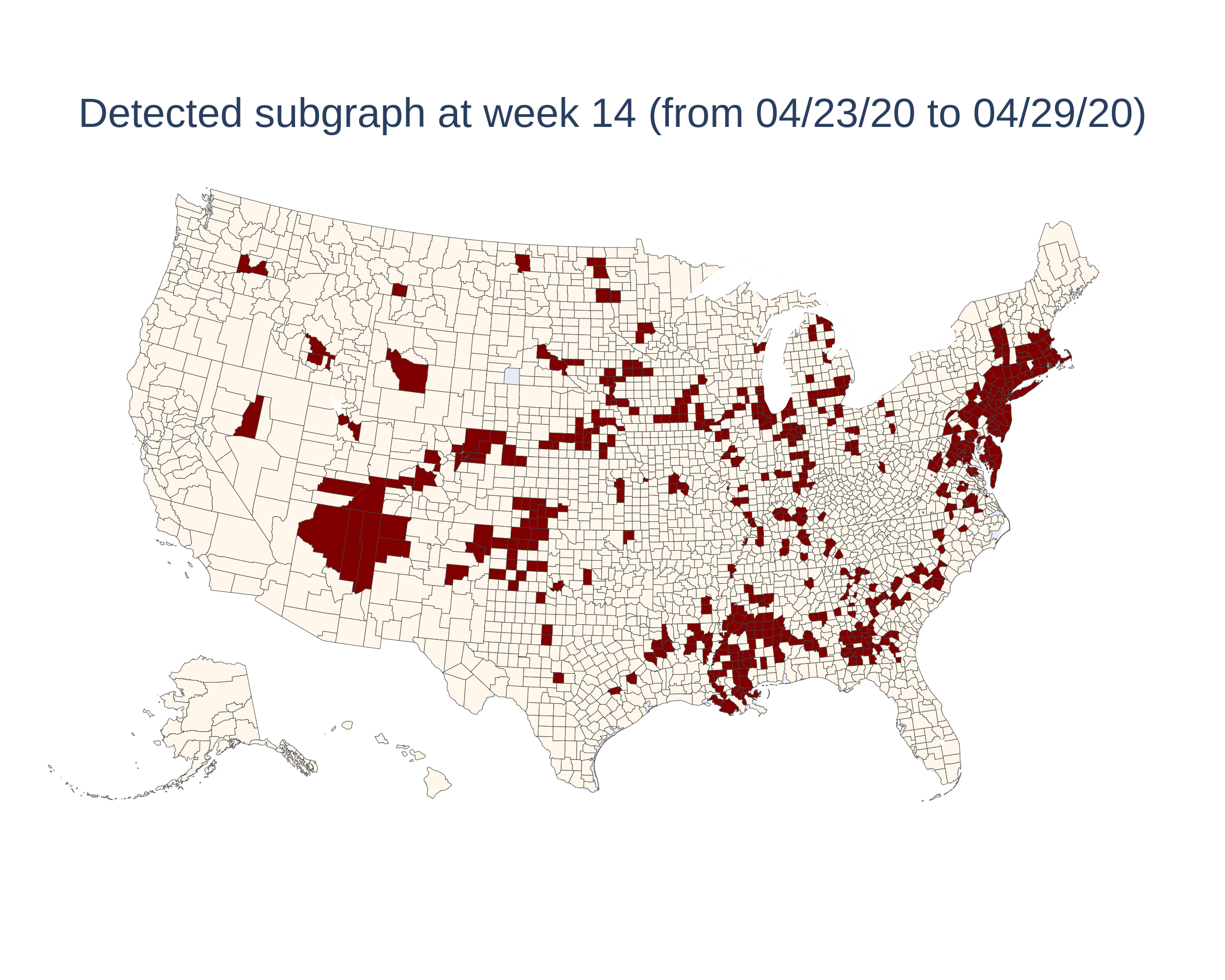}
      \end{subfigure}
      \newline
    
      \begin{subfigure}{0.28\textwidth}
          \centering
         \includegraphics[width=1.\textwidth]{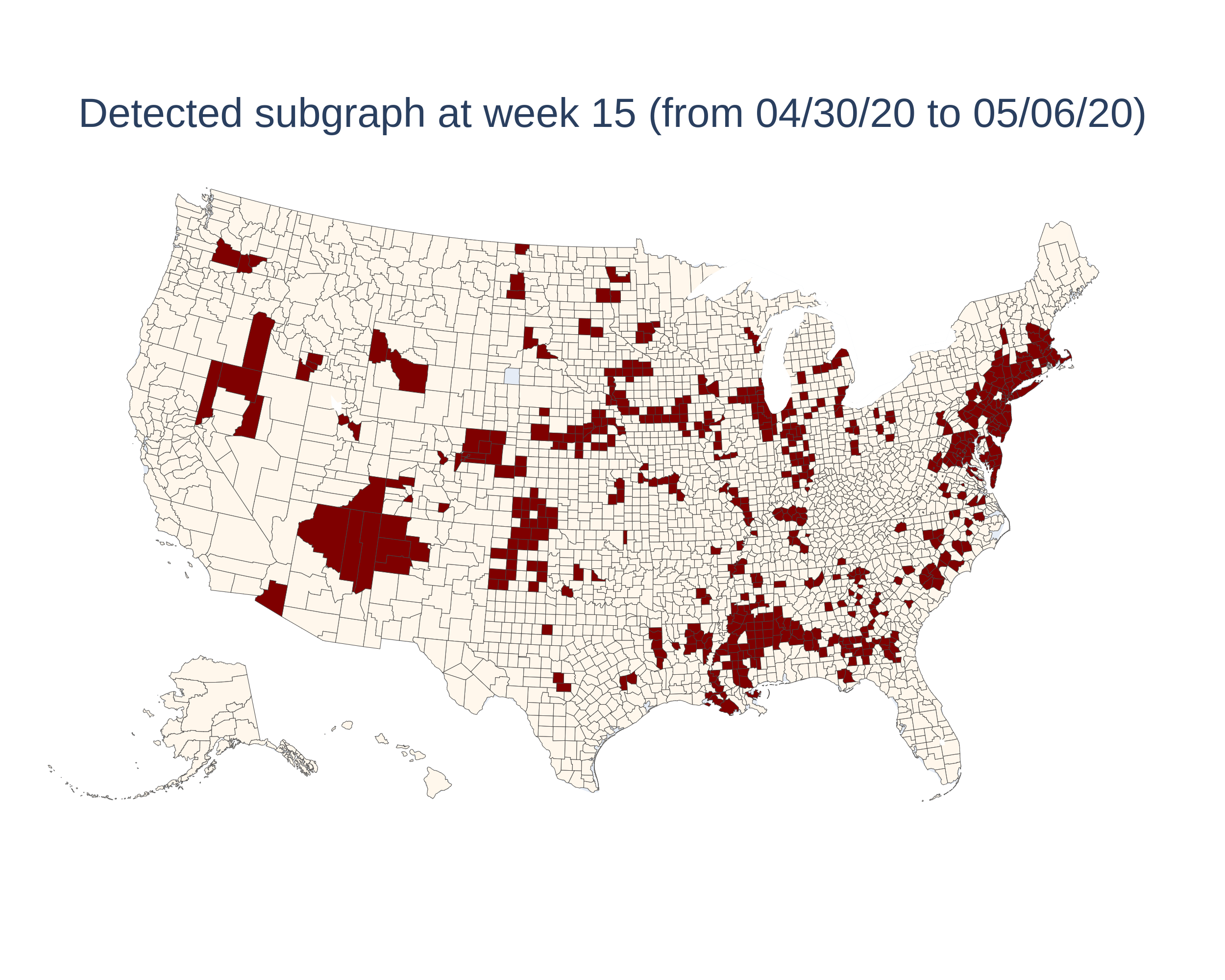}
      \end{subfigure}
      \begin{subfigure}{0.28\textwidth}
          \centering
         \includegraphics[width=1.\textwidth]{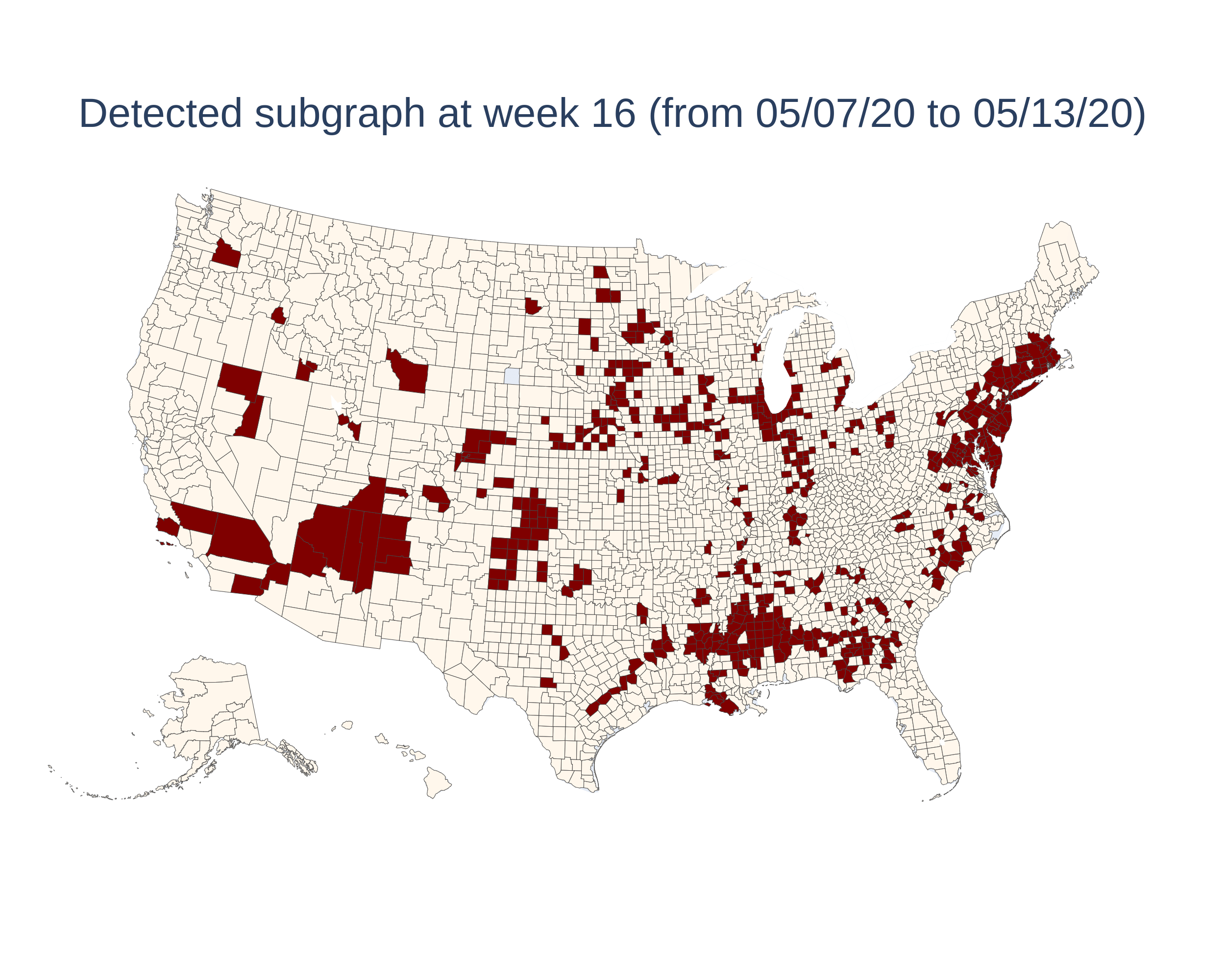}
      \end{subfigure}
      \begin{subfigure}{0.28\textwidth}
          \centering
         \includegraphics[width=1.\textwidth]{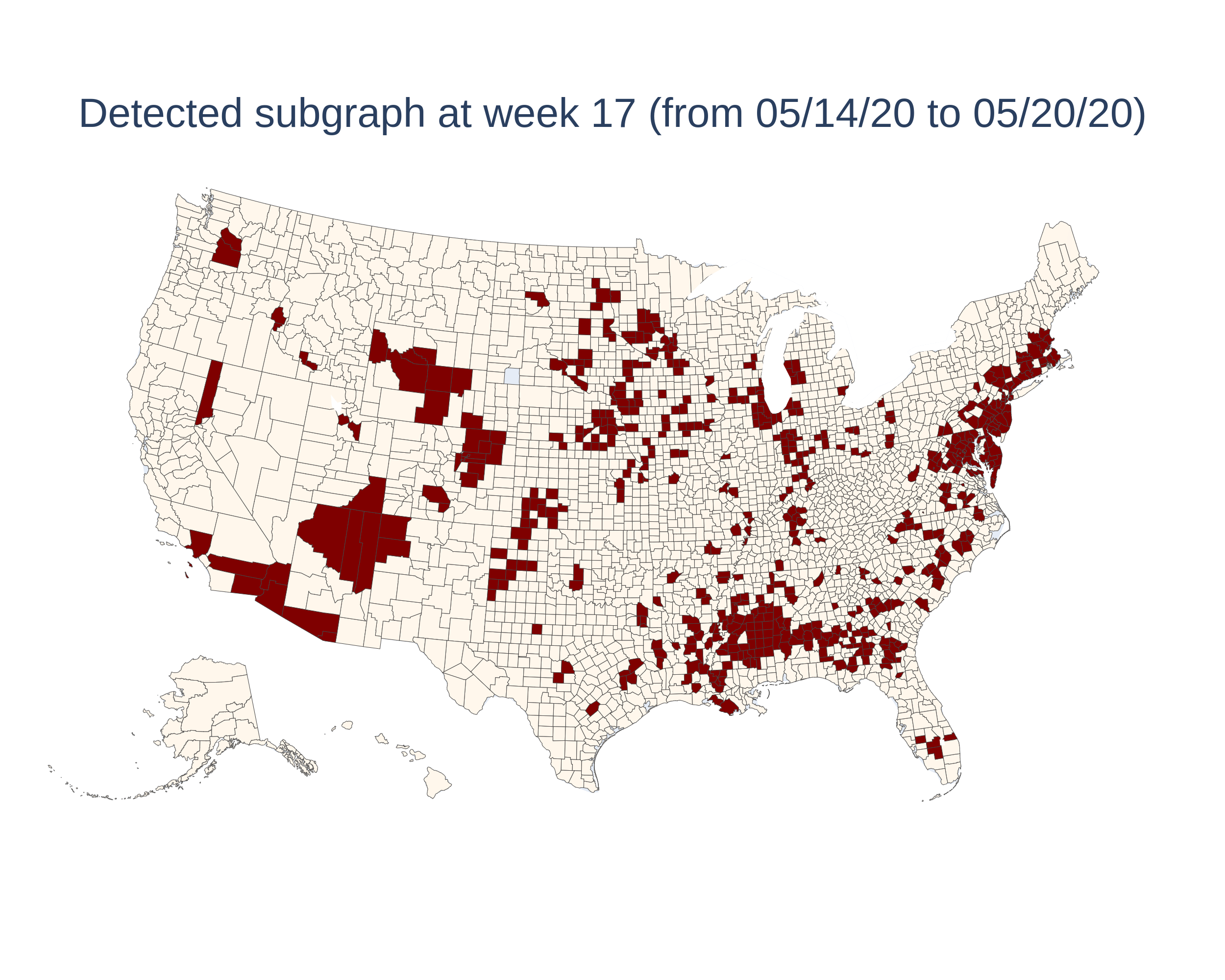}
      \end{subfigure}
      \newline
    
      \begin{subfigure}{0.28\textwidth}
          \centering
         \includegraphics[width=1.\textwidth]{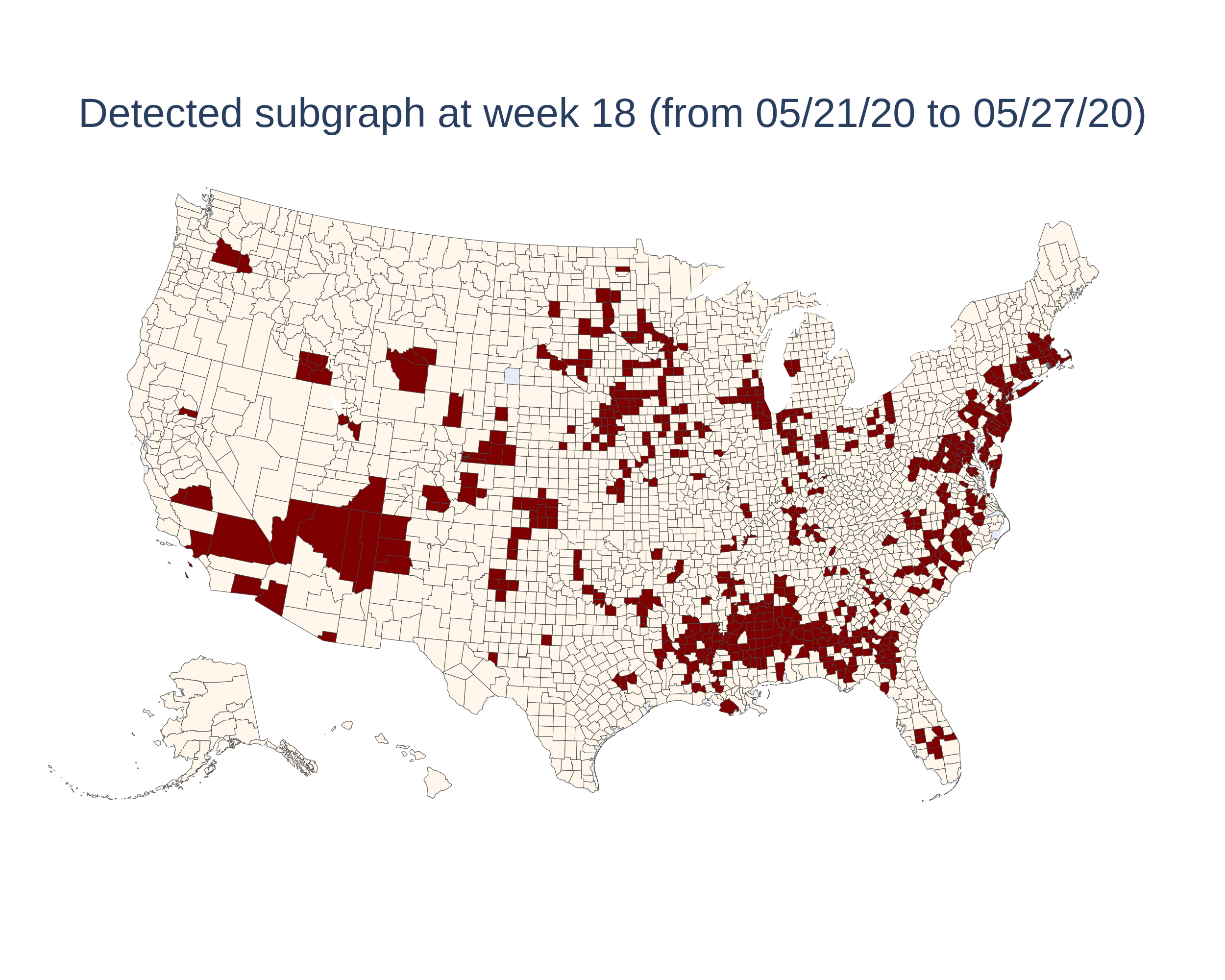}
      \end{subfigure}
      \begin{subfigure}{0.28\textwidth}
          \centering
         \includegraphics[width=1.\textwidth]{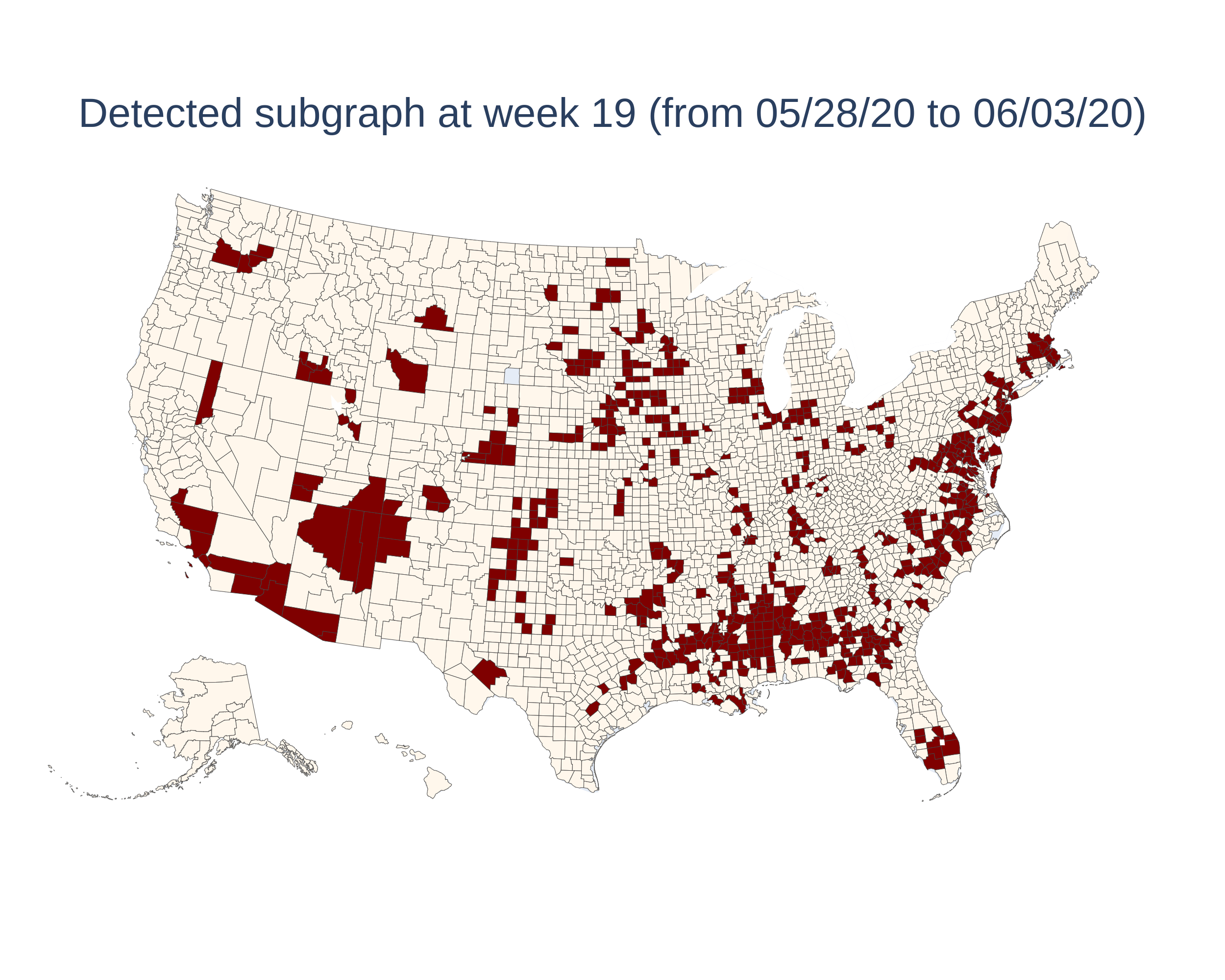}
      \end{subfigure}
      \begin{subfigure}{0.28\textwidth}
          \centering
         \includegraphics[width=1.\textwidth]{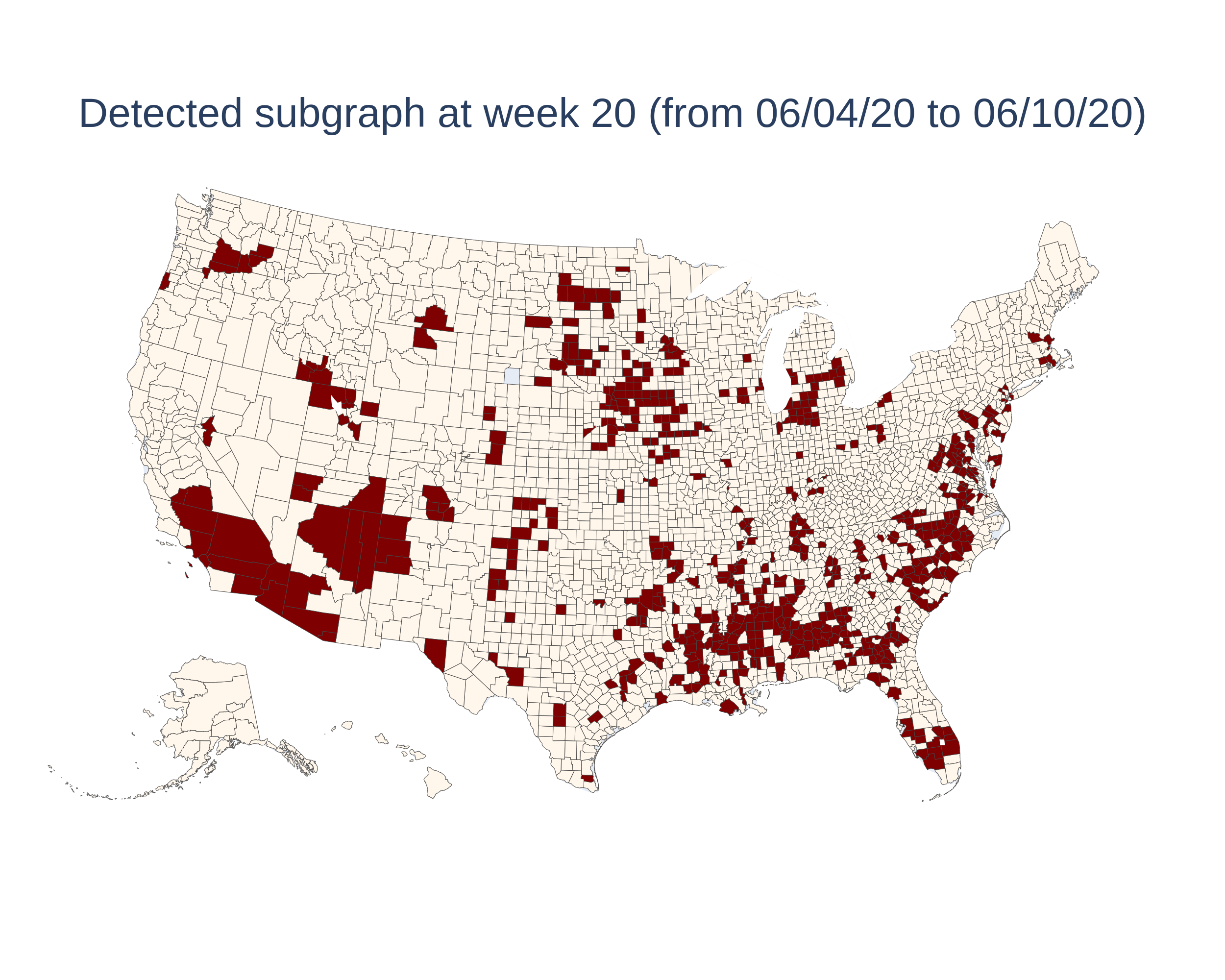}
      \end{subfigure}
      \newline
    
      \begin{subfigure}{0.28\textwidth}
          \centering
         \includegraphics[width=1.\textwidth]{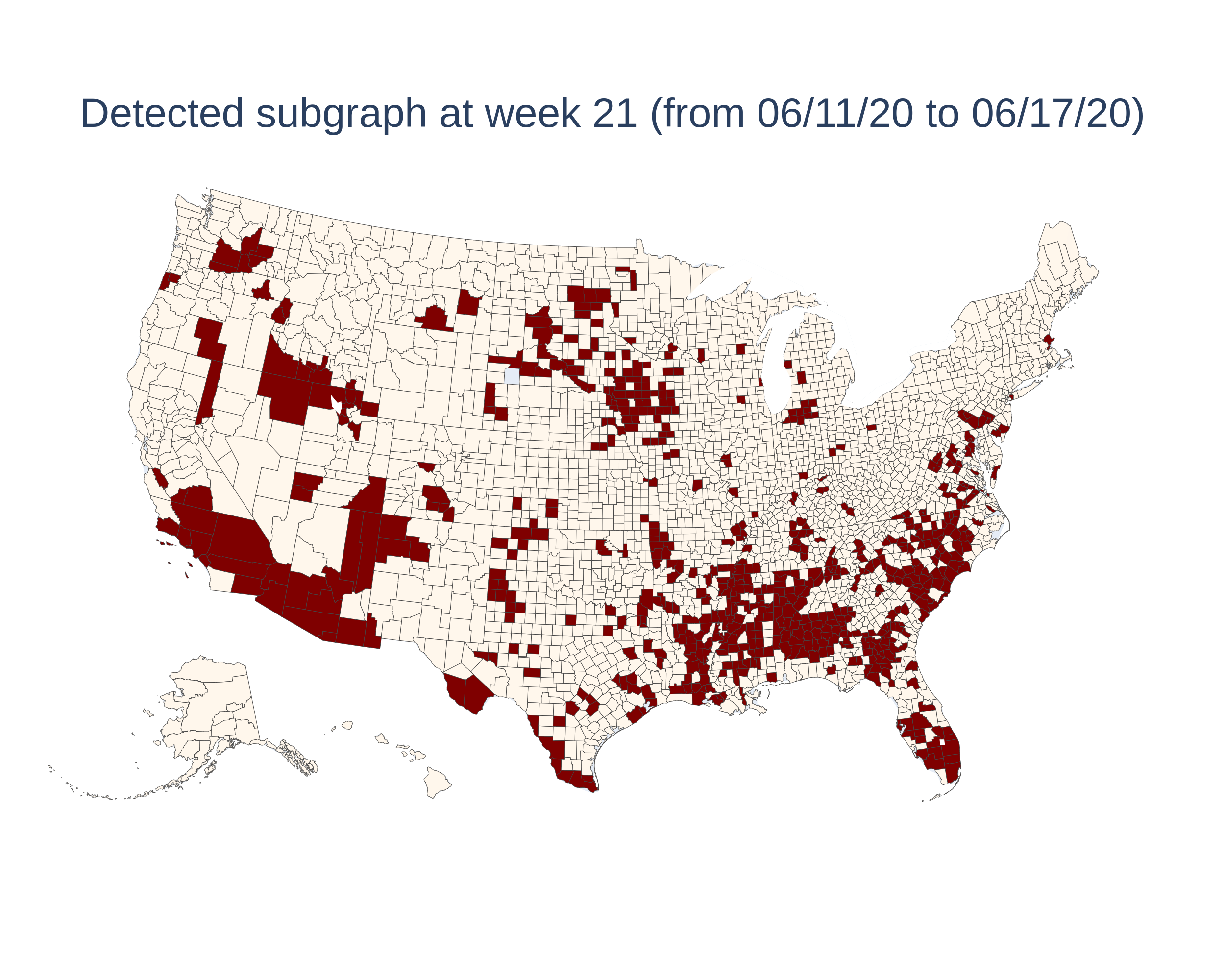}
      \end{subfigure}
      \begin{subfigure}{0.28\textwidth}
          \centering
         \includegraphics[width=1.\textwidth]{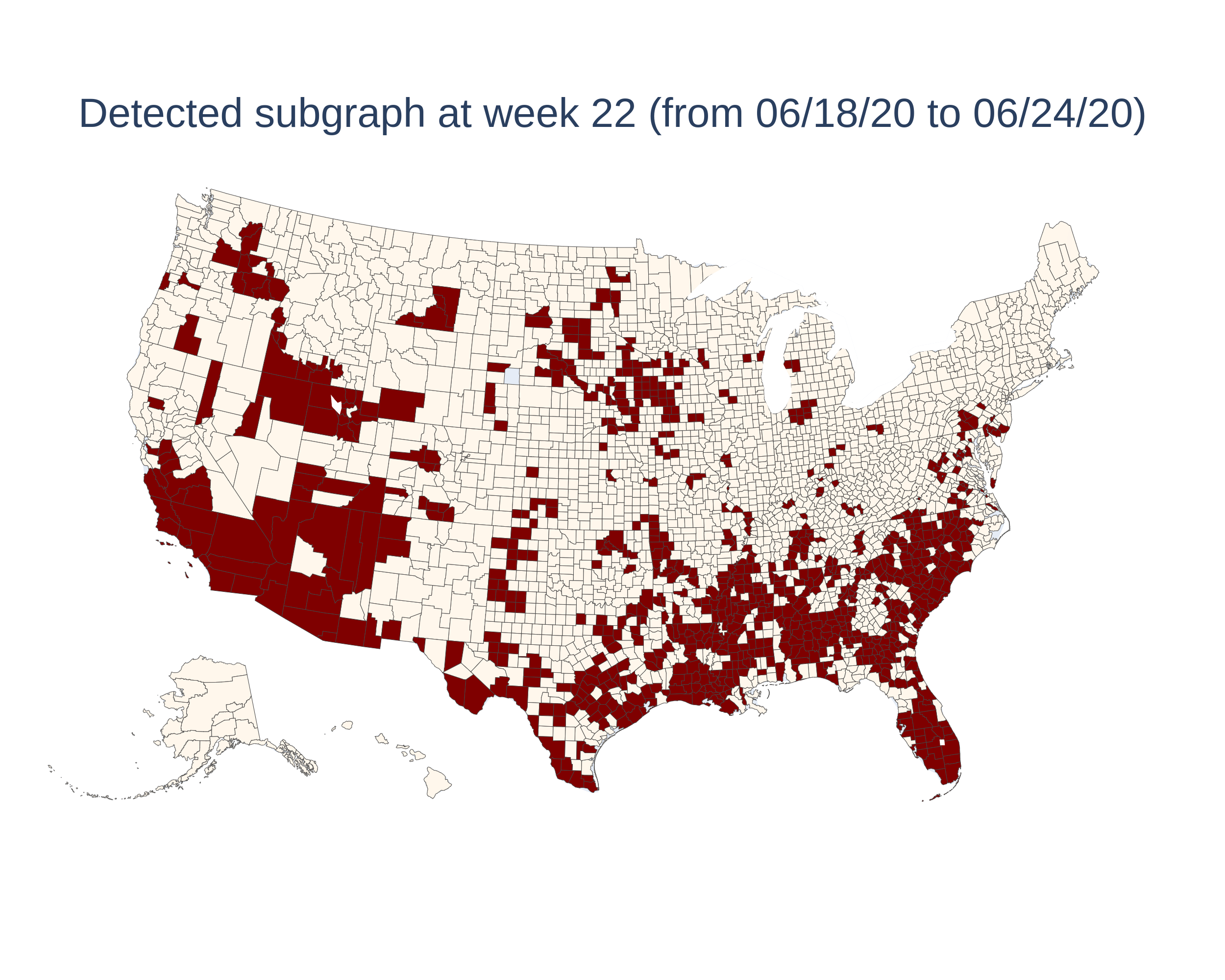}
      \end{subfigure}
      \begin{subfigure}{0.28\textwidth}
          \centering
         \includegraphics[width=1.\textwidth]{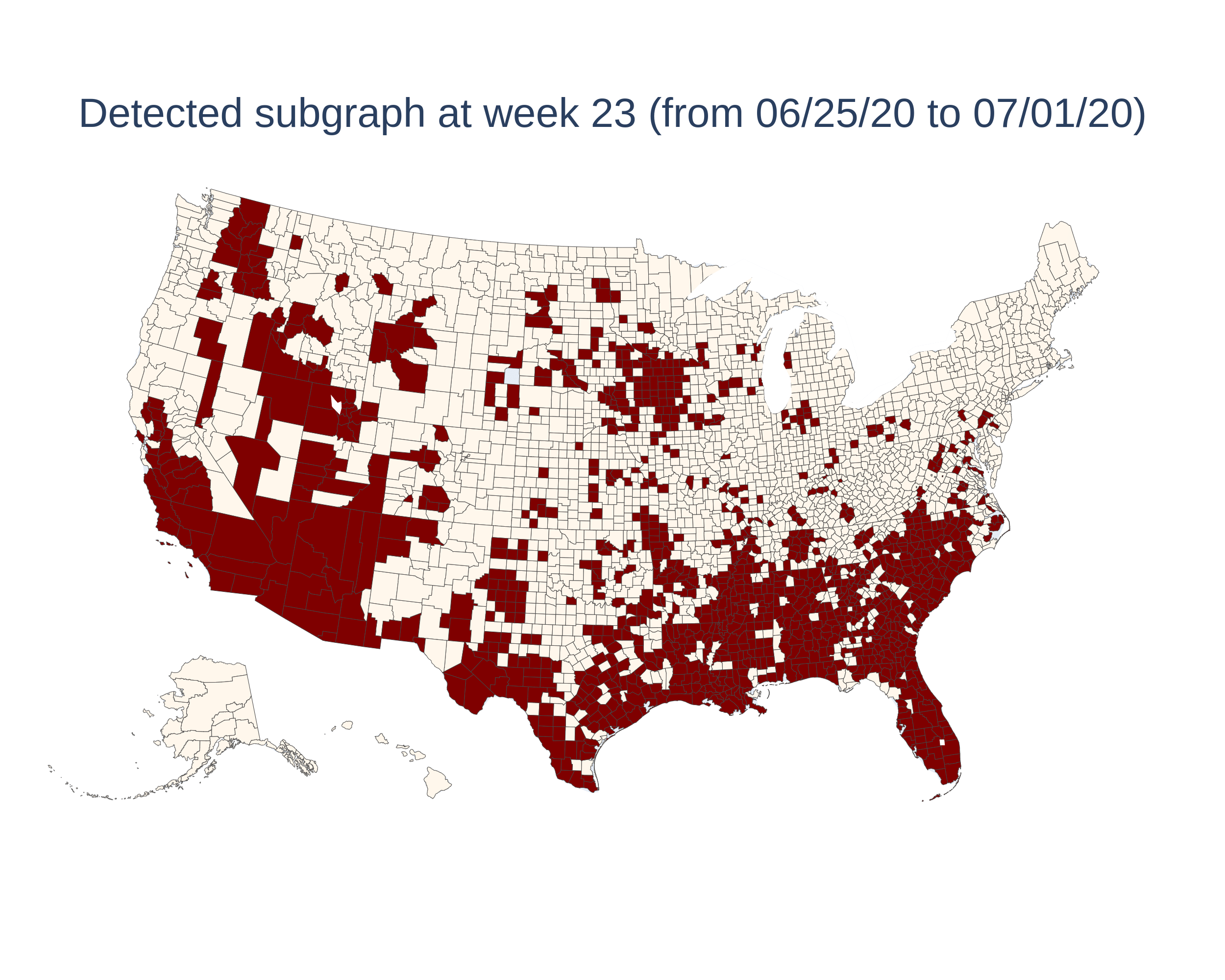}
      \end{subfigure}
      \caption{\texttt{EventTree} Top-1 Detected Spatial-Temporal Connected Subgraph on \texttt{COVID-19} Dataset}
\label{fig:covid19_eventtree_detected_subgraph}
 \end{figure*}

\end{document}